\DeclareSymbolFontAlphabet{\mathbbl}{bbold}
\tikzstyle{mypetristyle}=[
\def\N{\mathbb{N}}
\def\Z{\mathbb{Z}}
\def\projection#1#2{\mathchoice
              {\setbox1\hbox{${\displaystyle #1}_{\scriptstyle #2}$}
              \projectionaux{#1}{#2}}
              {\setbox1\hbox{${\textstyle #1}_{\scriptstyle #2}$}
              \projectionaux{#1}{#2}}
              {\setbox1\hbox{${\scriptstyle #1}_{\scriptscriptstyle #2}$}
              \projectionaux{#1}{#2}}
              {\setbox1\hbox{${\scriptscriptstyle #1}_{\scriptscriptstyle #2}$}
              \projectionaux{#1}{#2}}}
\def\projectionaux#1#2{{#1\,\smash{\vrule height .8\ht1 depth .85\dp1}}_{\,#2}}
\tikzstyle{ltsNode}=[circle,fill=black, inner sep=0, minimum width=4pt] % custom LTS node style
\tikzstyle{ltsnst}=[circle,draw=black,fill=black!10, very thick, minimum width=1mm]%%, pattern=horizontal lines] % custom LTS node style
\tikzstyle{ltsNodePattern}=[circle,fill=black!10, very thick, minimum width=1mm]%%, pattern=north west lines] % custom LTS node style with pattern
\tikzstyle{petriNode}=[place,minimum size=3mm,very thick,fill=black!10]%,pattern=horizontal lines] % custom LTS node style
\tikzset{every picture/.style={mypetristyle}}
\tikzset{every label/.style={black!90}}
\newcommand{\ie}{i.e.\ }
\newcommand{\WMGineq}{\texorpdfstring{WMG$_\le$}}
\newcommand{\MGineq}{\texorpdfstring{MG$_\le$}} 
\newcommand{\PCMGineq}{\texorpdfstring{PCMG$_\le$}}
\newcommand{\es}{\emptyset}
\newcommand{\pminus}{
\mbox{\textrm{$-$}\!\!\!\!\!\!\!\:\,\,\raisebox{1.5mm}{$\scriptstyle \bullet$}}\:}
\newcommand{\dt}{\bullet}
\newcommand{\lbul}{^\bullet}
\newcommand{\emptyseq}{\varepsilon}
\newcommand{\support}{\mathit{supp}}
\newcommand{\Parikh}{{\mathbf P}}
\newcommand{\is}{\iota}
\newcommand{\one}{\mathbbl{1}}
\newcommand{\zero}{\mathbbl{0}}
\begin{document}
\setcounter{page}{1001}
\issue{XXI~(2001)}

\title{Checking marking reachability with the state equation in Petri net subclasses}

%\titlerunning{}

\author{Thomas Hujsa\thanks{Supported by the STAE foundation/project DAEDALUS, Toulouse, France.}\\  
LAAS-CNRS\\Universit\'e de Toulouse, CNRS, INSA\\Toulouse, France
({thomas.hujsa@laas.fr})
\and Bernard Berthomieu\\
LAAS-CNRS\\Universit\'e de Toulouse, CNRS, INSA\\Toulouse, France
({bernard.berthomieu@laas.fr})
\and Silvano Dal Zilio\\
LAAS-CNRS\\Universit\'e de Toulouse, CNRS, INSA\\Toulouse, France
({silvano.dalzilio@laas.fr})
\and Didier Le Botlan\\
LAAS-CNRS\\Universit\'e de Toulouse, CNRS, INSA\\Toulouse, France
({didier.le.botlan@laas.fr})
}

\maketitle

\runninghead{T. Hujsa, B. Berthomieu, S. Dal Zilio, D. Le Botlan}{Checking Petri Nets Properties Using the State Equation}

%\vspace*{-1cm}

\begin{abstract}
Although decidable, the marking reachability problem for Petri nets is well-known to be intractable in general,
and a non-elementary lower bound has been recently uncovered.
In order to alleviate this difficulty,
various structural and behavioral restrictions have been considered, 
allowing to relate reachability to properties that are easier to check.
For a given initial marking,
the set of potentially reachable markings is described by the state equation solutions
and over-approximates the set of reachable markings.

In this paper, we delineate several subclasses of weighted Petri nets in which 
the set of reachable markings equals the set of potentially reachable ones, a property we call the PR-R equality.
When fulfilled, this property allows to use linear algebra to answer the reachability questions,
avoiding a brute-force analysis of the state space.
Notably, we provide conditions under which this equality holds in classes much more expressive than marked graphs, 
adding places with several ingoing and outgoing transitions, which allows to model real applications with shared buffers.
To achieve it, we investigate the relationship between liveness, reversibility, boundedness and potential reachability in Petri nets.
We also show that this equality does not hold in classes with close modeling capability when the conditions are relaxed.
\end{abstract}

\begin{keywords}
Weighted Petri net, State equation, Potential reachability, PR-R equality, Efficient analysis, Reverse net, Liveness, Reversibility, Augmented marked graph, Refinement, Place merging, T-net. 
\end{keywords}

\section{Introduction}\label{intro.sec}

Petri nets, or equivalently vector addition systems (VASs), have proved useful to model numerous artificial and natural systems.
Their weighted version allows weights (multiplicities) on arcs,
making possible the bulk consumption or production of tokens, 
hence a more compact representation of the systems.

For many fundamental Petri net properties, the problem of their checking is decidable although intractable.
Given a bounded Petri net, a naive analysis can be performed by constructing its finite reachability graph,
whose size may be considerably larger than the net size. 
To avoid such a costly computation, subclasses are often considered, allowing to derive efficiently their behavior from their structure only.
This approach has led to various polynomial-time checking methods dedicated to several subclasses,
the latter being defined by structural restrictions in many cases~\cite{DesEsp,STECS,LAT98,HDM2016,HD2017,HD2018}.\\

\noindent {\bf The reachability problem.}
Given a Petri net system $S$, 
the problem is to determine if a given marking 
is reachable in $S$.
In weighted Petri nets, this question was known to be EXPSPACE-hard~\cite{Lipton76};
recently, a non-elementary lower bound has been obtained~\cite{ReachNonElementary2019}.
Reachability reduces to various well-known model-checking problems~\cite{ReachNonElementary2019}, which thus inherit this lower bound.\\

\noindent {\bf Relating reachability to the state equation.} 
Consider a system $S=(N,M_0)$ with its incidence matrix $I$, where $N=(P,T,W)$ is the underlying net
($P$ being the set of places, $T$ the set of transitions and $W$ the weighting function).

The {\em state equation} associated to $S$ is expressed as $M = M_0 + I \cdot Y$,
where the variable $M$ takes its value in the set of markings 
and the variable $Y$ ranges over the set of vectors whose components are non-negative integers.

The set of markings {\em potentially reachable} in $S$ is defined as 
$PR(S) = \{ M \in \N^{\mid P\mid} \mid \exists Y \in \N^{\mid T\mid}, M = M_0 + I \cdot Y \}$;
this set is called the \emph{linearized reachability set} of $S$ in~\cite{LAT98}.
%
%a marking $M$ is potentially reachable in $S$
%if a Parikh vector $Y$ exists such that $M = M_0 + I \cdot Y$.
%
Potential reachability is a necessary condition for reachability, but it is not sufficient in general;
a Petri net satisfies the {\em PR-R equality} if its reachable markings are its potentially reachable ones.
Thus, in the subclasses that are known to fulfill the PR-R equality, 
solving the reachability problem amounts to check the existence of a solution
to an integer linear program (ILP) of polynomial size, trimming down its complexity to~NP.\\

\noindent {\bf Petri net subclasses, applications and previous studies.}
In this work, we study conditions for the PR-R equality to hold in weighted Petri nets and several of their subclasses, notably:\\
$-$ Weighted Marked Graphs with relaxed place constraint (\WMGineq{} for short), 
which force each place to have at most one input and one output, studied e.g.\ in~\cite{DH18,ArxivSSP20};\\
$-$ Augmented Marked Graphs (AMG), which are unit-weighted and allow the addition of several shared\footnote{A place is shared if it has at least two outgoing transitions.} 
places to a (unit-weighted) Marked Graph under some restrictions~\cite{ChuXie97};\\
$-$ H$1$S-\WMGineq{}, i.e.\ homogeneous\footnote{Homogeneity means that, for each shared place $p$, all the output weights of $p$ are equal.} nets having at most one shared place, the deletion of which yields a \WMGineq{}~\cite{ArxivSSP20}; this class thus contains the \WMGineq{};\\
$-$ Place-Composed Marked Graphs with relaxed place constraints (\PCMGineq{}), obtained from a given undirected graph by replacing each edge with a marked graph 
and each vertex with a shared place, which is a kind of synthesis.

As far as we know, \PCMGineq{} have not been studied until now, while the previous works on AMG and H$1$S-\WMGineq{} did not focus on the PR-R equality.
These net classes, although very restricted, can already model numerous real-world applications. Let us present some of them:

\noindent $-$ \WMGineq{} generalize the Marked Graphs (MG)/Weighted Marked Graphs (WMG)/Weighted Event Graphs (WEG)/Weighted T-systems (WTS)~\cite{MDG71,Sauer2003,March09,DesEsp,WTS92}, in which each place has exactly one input and one output.
They are a special case of persistent systems~\cite{ErgoMonoT1991}, in which no transition firing can disable any other transition.
They can model Synchronous DataFlow graphs~\cite{LeeM87}, which have been fruitfully used to design and analyze many real-world systems
such as embedded applications, notably Digital Signal Processing (DSP) applications \cite{LM87a,Emb09,Pino95}.
Various analysis and synthesis methods have been developed for \WMGineq{}~\cite{WTS92,March09,EHW18,ATAEDDEH18,DH18,DH19FI,ATAEDDEH19,ToPNoCDEH19,ArXivDEHToPNoC1920} 
and larger classes~\cite{STECS,HDM2016,ArxivSSP20},
dealing notably with reachability, liveness, boundedness and reversibility.
In the same studies, some relationships between the behavior and the state equation are also provided.

\noindent $-$ AMG, although unit-weighted, can model various manufacturing systems~\cite{ChuXie97,cheung2008augmented}, 
the {\em dining philosophers} problem~\cite{cheung2008augmented},
and
several use-cases of the Model Checking Contest\footnote{\url{https://mcc.lip6.fr/models.php}} (MCC)
such as the {\em Swimming pool} protocol, the {\em Robot manipulation} system, the {\em Client and server} protocol 
and the process-management method {\em Kanban} (under some conditions to be fulfilled by the initial marking).
AMG benefit from results on their liveness, reversibility and reachability~\cite{ChuXie97}.

\noindent $-$ H$1$S-\WMGineq{} extend both $1$S-AMG (i.e. AMG with at most one shared place) and \WMGineq{}, 
making more flexible the modeling of applications with \WMGineq{}.
The Swimming pool protocol can also be modeled with an H$1$S-\WMGineq{}~\cite{ArxivSSP20}.

\noindent $-$ The class of \PCMGineq{}, which is not included in the classes above nor contains them,
allows to define a system first in terms of its topology of shared buffers, before refining the processes that connect and use these buffers.
This model can be used for system synthesis under structural and behavioral constraints.
Previous works, described in~\cite{Jiao2004}, have proposed conditions for merging sets of places into shared places
while preserving various structural and behavioral properties, notably in unit-weighted asymmetric-choice Petri nets.
However, the subclasses studied in~\cite{Jiao2004} do not contain the \PCMGineq{}.\\

\noindent {\bf Contributions.}
In this paper, we propose new sufficient conditions ensuring the PR-R equality:
a general condition applying to all weighted Petri nets,
and other ones dedicated to the subclasses mentioned above.
When such conditions are known to be fulfilled, 
checking reachability then reduces to solving the state equation with linear algebra over the integers, trimming down the complexity to NP.
So as to obtain these results, we exploit the next notions:\\ 
$-$ {\em directedness}: a property stating the existence, for any two potentially reachable markings $M_1$ and $M_2$, 
of a marking reachable from both $M_1$ and $M_2$; \\
$-$ {\em initial directedness}: a property stating the existence, for each potentially reachable marking $M_1$,
of a marking reachable from both $M_1$ and the initial marking $M_0$;\\
$-$ {\em the reverse net}: obtained by reversing all the arcs;\\
$-$ {\em liveness}: a property stating the possibility, from each reachable marking, to fire some sequence containing all transitions;\\
$-$ {\em boundedness}: a property stating the existence of an upper bound on the number of tokens of each place over all reachable markings;\\
$-$ {\em reversibility\footnote{The different notion of \emph{reversible computation} has been investigated in~\cite{RevCompVSRev2016,RevCompVSRev2018}:
contrarily to the global property of reversibility, 
reversible computation is a local mechanism that a system can use to undo some of the executed actions~\cite{RevCompVSRev2016,RevCompVSRev2018}.}}: a property stating the possibility to reach the initial marking from each reachable marking, 
meaning the strong connectedness of the reachability graph;\\
$-$ the property $\mathcal{R}$: we introduce this property for any Petri net system, stating reversibility of both the system and its reverse.

More precisely, we exploit these notions as follows.

First, we show that combining property $\mathcal{R}$ with initial directedness is sufficient to ensure the PR-R equality in any weighted Petri net.
This new condition is not necessary is general, but we show its tightness for live and bounded \WMGineq{}.
We also present new results on liveness and deadlockability in \WMGineq{}, which help checking the precondition of liveness.

Then, we improve our understanding of the relationship between the state equation solutions 
(potential reachability), reachability, liveness and reversibility
in the mentioned generalizations of marked graphs with shared places:
in AMG, H$1$S-\WMGineq{} and \PCMGineq{}, we provide new sufficient conditions ensuring that the PR-R equality is fulfilled.
We highlight the sharpness of all conditions by providing counter-examples when only few assumptions are relaxed.
We also propose methods to check the various conditions and give insight on their complexity in the subclasses mentioned
and sometimes in larger ones.
Notably, in a subclass of \PCMGineq{}, we provide a variant of Commoner's theorem and of the Home Marking theorem
which were developed for free-choice nets to characterize the live and reversible markings in polynomial-time~\cite{DesEsp}.

This work is the sequel to our previous paper~\cite{ArxivSSP20}, in which we provided conditions for checking reachability,
liveness and reversibility more efficiently in some subclasses.
Since we often use liveness and reversibility as preconditions for the PR-R equality to hold,
these previous results can be exploited to reduce their checking complexity.\\

\noindent {\bf Organization of the paper.}
In Section~\ref{SectionDefs}, we introduce general definitions, notations and properties.
In Section~\ref{SectionSubclasses}, we define the main subclasses studied in this paper and compare their expressiveness.

In Section~\ref{PotentialReachDirected}, we define the main notions related to {\em directedness} and recall related properties.
We also recall known classes of the literature that fulfill directedness.

In Section~\ref{PRRrevDir}, we give new properties of nets and their reverse;
we provide notably a general sufficient condition for the PR-R equality,
using reversibility, initial directedness and reverse nets.
We apply this result to ensure the PR-R equality in the class of live homogeneous free-choice (HFC) nets, 
a weighted generalization of free-choice nets.

In Section~\ref{ReachWMG}, we show that live \WMGineq{} fulfill the PR-R equality,
we propose new characterizations of liveness and a new property about reachable deadlocks in this class, based on the state equation.
We also discuss methods to check the behavioral properties of interest in \WMGineq{}.

In Section~\ref{UnreachforSharedPlaces}, we construct new examples of systems that do not fulfill the PR-R equality.
 They belong to the $2$S-\WMGineq{} subclass, i.e.\ the class of nets with at most $2$ shared places, the deletion of which yields a \WMGineq{}.
On these examples, we emphasize possible causes of roadblocks to the PR-R equality.

In Sections~\ref{Sec1HEWMG}, \ref{SecAMG} and \ref{SecPCMG},
we study several subclasses of S-\WMGineq{}, namely H$1$S-\WMGineq{}, AMG and \PCMGineq{}.
We provide for them conditions that ensure the PR-R equality, exploiting the examples of Section~\ref{UnreachforSharedPlaces}.
We also discuss methods for checking these conditions in the classes studied.

In Section~\ref{SecRelatedWork}, we further discuss related works.

Finally, Section~\ref{SecConclu} presents our conclusion with perspectives.

\section{General Definitions, Notations and Properties}\label{SectionDefs}

In the following, we define formally Petri nets, related notions and properties.\\

\noindent {\bf Petri nets, incidence matrices, pre- and post-sets, shared places.}
A {\em (Petri) net} is a tuple $N=(P,T,W)$ such that 
$P$ is a finite set of {\em places}, 
$T$ is a finite set of {\em transitions}, with $P\cap T=\es$, 
and $W$ is a weight function $W\colon((P\times T)\cup(T\times P))\to\N$ 
setting the weights on the arcs.
A {\em marking} of the net $N$ 
is a mapping from $P$ to $\N$, i.e. a member of $\N^P$, 
defining the number of tokens in each place of~$N$.

A {\em (Petri net) system}
is a tuple $S=(N,M_0)$
where $N$ is a net 
and
$M_0$ is a marking, often called {\em initial marking}. 
The {\em incidence matrix} $I$ of $N$ (and $S$) is the integer place-transition matrix 
with components $I(p,t)=W(t,p)-W(p,t)$,
for each place $p$ and each transition~$t$.

The {\em post-set} $n\lbul$ and {\em pre-set} $\lbul n$ of a node $n \in P \cup T$ are defined
as $n\lbul=\{n'\in P\cup T\mid W(n,n'){>}0\}$ and $\lbul n=\{n'\in P\cup T \mid W(n',n){>}0\}$.

A place $p$ is {\em shared} if it has at least two outputs, i.e. $|p\lbul| \ge 2$.

Some of these notions are illustrated in Figure \ref{ex1.fig}.\\

\noindent {\bf Firings and reachability in Petri nets.}
Consider a system $S=(N,M_0)$ with $N=(P,T,W)$.
A transition $t$ is {\em enabled} at $M_0$ (i.e. in $S$) 
if for each $p$ in ${\lbul t}$, $M_0(p) \ge W(p,t)$, 
in which case $t$ is {\em feasible} or {\em fireable} from $M_0$. 
The firing of $t$ from $M_0$ leads to the marking $M = M_0 + I[P,t]$ 
where $I[P,t]$ is the column of $I$ associated to $t$: this is denoted by $M_0[t\rangle M$.

A finite {\em (firing) sequence} $\sigma$ of length $n \ge 0$ on the set $T$,
denoted by $\sigma = t_{1} \ldots t_{n}$ with $t_{1} \ldots t_{n} \in T$,
is a mapping $\{1, \ldots, n\} \to T$.
Infinite sequences are defined similarly as mappings $\N\setminus\{0\} \to T$.
A sequence $\sigma$ of length $n$ is {\em enabled} (or {\em feasible}, {\em fireable}) in $S$ 
if the successive states obtained,
$M_0 [t_1\rangle M_1 \ldots [t_n\rangle M_n$,
satisfy 
$M_{k-1}[t_k\rangle M_k$, for each $k$ in $\{1,\ldots, n\}$,
in which case $M_n$ is said to be {\em reachable} from $M_0$: we denote this by $M_0 [\sigma\rangle M_n$.
If $n=0$, $\sigma$ is the {\em empty sequence} $\epsilon$, implying $M_0 [\epsilon\rangle M_0$.
The set of markings reachable from $M_0$ is denoted by $R(S)$ or $[S\rangle$;
when it is clear from the context, it is also denoted by $R(M_0)$ or $[M_0\rangle$.

The {\em reachability graph} of $S$, denoted by $RG(S)$, is the rooted directed graph $(V,A,\is)$
where $V$ represents the set of vertices labeled bijectively with the markings $[M_0\rangle$,
$A$ is the set of arcs labeled with transitions of $T$
such that the arc $M \xrightarrow[]{t} M'$
belongs to $A$ if and only if
$M [t\rangle M'$ and $M \in [M_0\rangle$, and $\is$ is the root, labeled with $M_0$.\\

In Figure \ref{ex1.fig}, a weighted system is pictured on the left.
Its reachability graph is pictured on the right, where $v^T$ denotes the transpose of vector $v$.\\

\begin{figure}[htb] 
\centering
\begin{tikzpicture}[scale=0.95]

\node (p1) at (1.35,0) [place,label=below:$p_3$,tokens=4] {};
\node (p2) at (2.65,0) [place,label=below:$p_4$,tokens=3] {};
\node (p3) at (1.35,2) [place,label=above:$p_1$] {};
\node (p4) at (2.65,2) [place,label=above:$p_2$] {};

\node (t1) at (0.7,1) [transition] {};
\node (t2) at (2,1) [transition] {};
\node (t3) at (3.3,1) [transition] {};

\draw [->,thick, bend left=0] (t2) to node [below left] {$2$} (p3);
\draw [->,thick, bend left=0] (t2) to node [below right] {$1$} (p4);

\draw [->,thick, bend left=0] (p1) to node [above left] {$4$} (t2);
\draw [->,thick, bend left=0] (p2) to node [above right] {$3$} (t2);

\draw [->,thick, bend left=0] (p3) to node [above left] {$1$} (t1);
\draw [->,thick, bend left=0] (p4) to node [above right] {$1$} (t3);

\draw [->,thick, bend left=0] (t1) to node [below left] {$2$} (p1);
\draw [->,thick, bend left=0] (t3) to node [below right] {$3$} (p2);

 \node [anchor=east] at (t1.west) {$t_1$};
 \node [anchor=south] at (t2.north) {$t_2$};
 \node [anchor=west] at (t3.east) {$t_3$};

\end{tikzpicture}
\hspace*{0.35cm}
\begin{tikzpicture}[scale=0.94]

\node[](s5)at(1.2,1.5)[]{\footnotesize $(0,1,4,0)^T$};
\node[fill=black!15](s0)at(3,2.75)[]{{\footnotesize $(0,0,4,3)^T$}}; 
\node[](s1)at(3,1.5)[]{\footnotesize $(2,1,0,0)^T$}; 
\node[](s3)at(3,0.35)[]{\footnotesize $(1,1,2,0)^T$};
\node[](s2)at(5.35,1.5)[]{\footnotesize $(2,0,0,3)^T$};
\node[](s4)at(7.7,1.5)[]{\footnotesize $(1,0,2,3)^T$};

\draw(s0)[]edge[-latex,bend left=0]node[ellipse,right,inner sep=2pt,pos=0.5]{$t_2$}(s1);
\draw(s1)[]edge[-latex,bend left=0]node[ellipse,right,inner sep=2pt,pos=0.5]{$t_1$}(s3);
\draw(s1)[]edge[-latex,bend left=0]node[ellipse,above,inner sep=2pt,pos=0.5]{$t_3$}(s2);
\draw(s2)[]edge[-latex,bend left=0]node[ellipse,above,inner sep=2pt,pos=0.5]{$t_1$}(s4);
\draw(s4)[]edge[-latex,bend right=20]node[ellipse,above,inner sep=2pt,pos=0.5]{$t_1$}(s0);
\draw(s5)[]edge[-latex,bend left=0]node[ellipse,above left,inner sep=2pt,pos=0.5]{$t_3$}(s0);
\draw(s3)[]edge[-latex,bend left=0]node[ellipse,below left,inner sep=2pt,pos=0.5]{$t_1$}(s5);
\draw(s3)[]edge[-latex,bend right=20]node[ellipse,above,inner sep=2pt,pos=0.5]{$t_3$}(s4);

\end{tikzpicture}

\vspace*{5mm}

\caption{ 
A system $S=(N,M_0)$ is pictured on the left. 
The pre-set $^\bullet t_2$ of $t_2$ is $\{p_3, p_4\}$
and
the post-set $t_2^\bullet$ of $t_2$ is $\{p_1, p_2\}$.
There is no shared place.
The reachability graph $RG(S)$ of $S$ is pictured on the right.
The initial marking is the grey state in $RG(S)$.
The firing sequence $\sigma = t_2\, t_1\, t_3$ is feasible from $M_0$ and reaches the marking $(1,0,2,3)^T$.
} 
\label{ex1.fig}
\end{figure}
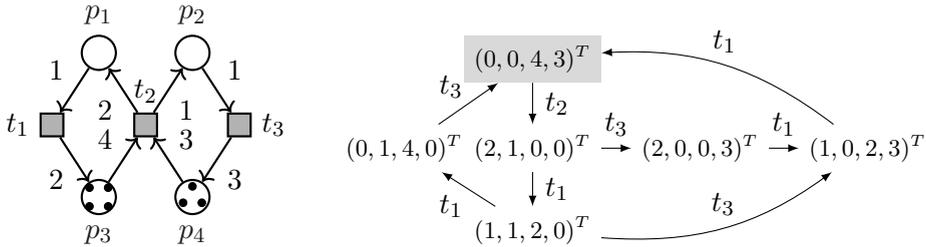

\noindent{\bf Subnets and subsystems.} 
Let $N=(P,T,W)$ and $N'=(P',T',W')$ be two nets.
$N'$ is a {\em subnet} of $N$ if $P'$ is a subset of $P$, $T'$ is a subset of $T$, 
and $W'$ is the restriction of $W$ to $(P'\times T') \cup(T'\times P')$.
$S'=(N',M_0')$ is a {\em subsystem} of $S=(N,M_0)$ if $N'$ is a subnet of $N$ 
and its initial marking $M_0'$ is the restriction of $M_0$ to $P'$, denoted by $M_0'= \projection{M_0}{P'}$.

$N'$ is a {\em P-subnet} of $N$ if $N'$ is a subnet of $N$
and $T'=\mathit{^\bullet P'} \cup P'^\bullet$,
the pre- and post-sets being taken in $N$. 
$S'=(N',M_0')$ is a {\em P-subsystem} of $S=(N,M_0)$ if $N'$ is a P-subnet of $N$ 
and $S'$ is a subsystem of $S$.
We say that $N'$ and $S'$ are {\em induced} by the subset $P'$.

Similarly, $N'$ is a {\em T-subnet} of $N$ if $N'$ is a subnet of $N$ 
and $P'=\mathit{^\bullet T'} \cup T'^\bullet$, the pre- and post-sets being taken in $N$. 
$S'=(N',M_0')$ is a {\em T-subsystem} of $S=(N,M_0)$ if $N'$ is a T-subnet of $N$ 
and $S'$ is a subsystem of $S$.
We say that $N'$ and $S'$ are {\em induced} by the subset $T'$.

Subsystems play a fundamental role in the analysis of Petri nets,
typically leading to characterizations relating the system's behavior to properties of its subsystems;
this approach yielded polynomial-time checking methods in various subclasses, e.g.~\cite{March09,HDM2016,HD2018}.
We exploit such subsystems in this paper to obtain some of our new results on reachability.

Examples are given in Figure~\ref{ex2.fig}.\\

\begin{figure}[htb] 
\centering
\begin{tikzpicture}[scale=0.95]

\node (p1) at (1.35,0) [place,label=below:$p_3$,tokens=4] {};
\node (p2) at (2.65,0) [place,label=below:$p_4$,tokens=3] {};
\node (p3) at (1.35,2) [place,label=above:$p_1$] {};
\node (p4) at (2.65,2) [place,label=above:$p_2$] {};

\node (t1) at (0.7,1) [transition] {};
\node (t2) at (2,1) [transition] {};
\node (t3) at (3.3,1) [transition] {};

\draw [->,thick, bend left=0] (t2) to node [below left] {$2$} (p3);
\draw [->,thick, bend left=0] (t2) to node [below right] {$1$} (p4);

\draw [->,thick, bend left=0] (p1) to node [above left] {$4$} (t2);
\draw [->,thick, bend left=0] (p2) to node [above right] {$3$} (t2);

\draw [->,thick, bend left=0] (p3) to node [above left] {$1$} (t1);
\draw [->,thick, bend left=0] (p4) to node [above right] {$1$} (t3);

\draw [->,thick, bend left=0] (t1) to node [below left] {$2$} (p1);
\draw [->,thick, bend left=0] (t3) to node [below right] {$3$} (p2);

 \node [anchor=east] at (t1.west) {$t_1$};
 \node [anchor=south] at (t2.north) {$t_2$};
 \node [anchor=west] at (t3.east) {$t_3$};

\end{tikzpicture}
\hspace*{0.8cm}
\begin{tikzpicture}[scale=0.95]

\node (p1) at (1.35,0) [place,label=below:$p_3$,tokens=4] {};
\node (p2) at (2.65,0) [place,label=below:$p_4$,tokens=3] {};
\node (p3) at (1.35,2) [place,label=above:$p_1$] {};
\node (p4) at (2.65,2) [place,label=above:$p_2$] {};

\node (t2) at (2,1) [transition] {};

\draw [->,thick, bend left=0] (t2) to node [below left] {$2$} (p3);
\draw [->,thick, bend left=0] (t2) to node [below right] {$1$} (p4);

\draw [->,thick, bend left=0] (p1) to node [above left] {$4$} (t2);
\draw [->,thick, bend left=0] (p2) to node [above right] {$3$} (t2);

\node [anchor=south] at (t2.north) {$t_2$};

\end{tikzpicture}
\hspace*{0.8cm}
\begin{tikzpicture}[scale=0.95]

\node (p1) at (1.35,0) [place,label=below:$p_3$,tokens=4] {};
\node (p3) at (1.35,2) [place,label=above:$p_1$] {};

\node (t1) at (0.7,1) [transition] {};
\node (t2) at (2,1) [transition] {};

\draw [->,thick, bend left=0] (t2) to node [below left] {$2$} (p3);

\draw [->,thick, bend left=0] (p1) to node [above left] {$4$} (t2);

\draw [->,thick, bend left=0] (p3) to node [above left] {$1$} (t1);

\draw [->,thick, bend left=0] (t1) to node [below left] {$2$} (p1);

\node [anchor=east] at (t1.west) {$t_1$};
\node [anchor=south] at (t2.north) {$t_2$};

\end{tikzpicture}

%\vspace*{-0.5cm}

\caption{ 
A system $S=(N,M_0)$ is pictured on the left. 
The T-subsystem of $S$ induced by $\{t_2\}$ is pictured in the middle,
and 
the P-subsystem of $S$ induced by $\{p_1, p_3\}$ is pictured on the right.
Notice that, for any T-subsystem $S'$ of $S$, each sequence feasible in $S'$ is also feasible in $S$.
} 
\label{ex2.fig}
\end{figure}
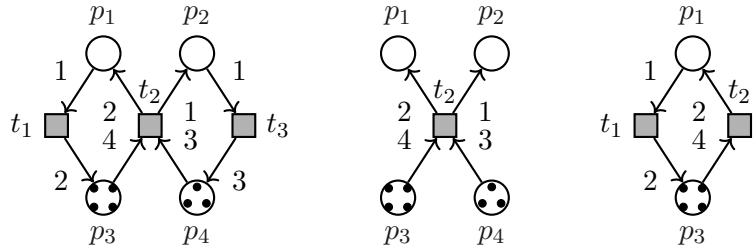

\noindent{\bf Siphons and traps.}
Consider a net $N = (P,T,W)$.
A %non-empty 
subset $D \subseteq P$ of places is a {\em siphon} (sometimes also called a deadlock) 
if $\lbul D \subseteq D \lbul$.
A %non-empty 
subset $Q \subseteq P$ of places is a {\em trap} if $Q\lbul \subseteq {\lbul Q}$.
Siphons and traps are most often assumed to be non-empty;
for the sake of conciseness, we allow emptiness explicitly when it is needed. 

There exist various studies relating the structure to the behavior with the help of siphons and traps.
Intuitively,
insufficiently marked siphons induce P-subsystems that cannot receive new tokens and thus block some transitions irremediably,
while marked traps always keep some token and favor the enabledness of outgoing transitions,
at least in some ordinary subclasses; see e.g.~\cite{DesEsp,STECS}.

A siphon (respectively trap) is {\em minimal} if it does not contain any proper siphon (respectively trap),
i.e. there is no subset of the same type with smaller cardinality.

In Figure~\ref{ex2.fig}:
on the left,
$\{p_1, p_2, p_3, p_4\}$ is both a siphon and a trap,
and includes smaller ones, namely $\{p_1, p_3\}$ and $\{p_2, p_4\}$,
while
$\{p_1\}$ is neither a siphon nor a trap;
in the middle, $\{p_3\}$ is a minimal siphon and is not a trap,
while $\{p_1\}$ is a minimal trap and is not a siphon.\\

\noindent {\bf Vectors, semiflows, conservativeness and consistency.} 
The {\em support} of a vector is the set of the indices of its non-null components.
Consider any net $N=(P,T,W)$ with its incidence matrix~$I$.

A {\em T-vector} (respectively P-vector) is an element of $\N^{T}$ (respectively $\N^{P}$); it is called {\em prime} if 
the greatest common divisor of its components is one 
(i.e. its components do not have a common non-unit factor). 
It is called {\em minimal} when it is prime and its support is not a proper superset of the support of any other T-vector.
The {\em cardinality} of a T-vector is the sum of its components; for instance, the cardinality of $(1,0,2,5)$ is $8$. 

The {\em Parikh vector} $\Parikh(\sigma)$ of a finite sequence $\sigma$ of transitions is the T-vector
counting the number of occurrences of each  transition in $\sigma$,
and the {\em support} of $\sigma$ is the support of its Parikh vector, 
i.e.
$\support(\sigma)=\support(\Parikh(\sigma))=\{t\in T\mid\Parikh(\sigma)(t)>0\}$.

We denote by $\zero^n$ (respectively $\one^n$) the column vector of size $n$ whose components are all equal to~$0$ (respectively $1$).
The exponent $n$ may be omitted when it is clear from the context.

A {\em T-semiflow} (respectively P-semiflow) $Y$ of the net is a non-null T-vector (respectively P-vector)
whose components are only non-negative integers (i.e. $Y\gneqq\zero$)
and such that $I \cdot Y=\zero$ (respectively $Y^T \cdot I = \zero$).

$N$ is {\em conservative}, or {\em invariant}, if a P-semiflow $X \in \N^{|P|}$ exists for $I$
such that
$X \ge \one^{|P|}$, in which case $X$ is called a {\em conservativeness vector}.
In case such a P-vector $X$ exists and, in addition, $X=\one^{|P|}$, $N$ is called {\em $1$-conservative}, or {\em $1$-invariant}.

$N$ is {\em consistent} if a T-semiflow $Y \in \N^{|T|}$ exists for $I$
such that
$Y \ge \one^{|T|}$, in which case $Y$ is called a {\em consistency vector}.

Such vectors are frequently exploited in the structural and behavioral analysis of Petri nets,
see e.g.~\cite{LAT98}.\\

\noindent {\bf State equation, potential reachability and the PR-R equality.}
Consider any system $S=(N,M_0)$ with incidence matrix $I$.
The {\em state equation} associated to $S$ is expressed as $M = M_0 + I \cdot Y$,
whose solutions are described by the variables $M$ and $Y$, denoting respectively markings and T-vectors.
The set of markings {\em potentially reachable} in $S$ 
is defined as $PR(S) = \{ M \in \N^{\mid P\mid} \mid \exists Y \in \N^{\mid T\mid}, M = M_0 + I \cdot Y \}$.
We denote by $PRG(S)$ the {\em potential reachability graph} of $S$, defined as the rooted directed graph $(V,A,\is)$
where $V$ represents the set of vertices $PR(S)$,
$A$ is the set of arcs labeled with transitions of $S$
such that, for each transition $t$, the arc $M \xrightarrow[]{t} M'$ belongs to $A$ if and only if
$M [t\rangle M'$ 
and 
$M \in PR(S)$,
and
$\is = M_0$ is the root.

A Petri net system $S$ fulfills the {\em PR-R equality} if $R(S) = PR(S)$.\\

\noindent {\bf Deadlockability, liveness, boundedness and reversibility.}
Consider any system $S=(N,M_0)$.
A transition $t$ is {\em dead} in $S$ if no marking of $[M_0\rangle$ enables $t$.
A {\em deadlock}, or {\em dead marking}, is a marking enabling no transition.
$S$ is {\em deadlock-free} if no deadlock belongs to $[M_0\rangle$; otherwise it is {\em deadlockable}. 
%
%The net $N$ is structurally deadlockable if, for every marking $M$, $(N,M)$ is deadlockable.

A transition $t$ is {\em live} in $S$ if for every marking $M$ in $[M_0\rangle$,
there is a marking $M' \in [M\rangle$ enabling~$t$.
$S$ is {\em live} if every transition is live in $S$.
$N$ is {\em structurally live} if a marking $M$ exists such that $(N,M)$ is live.

A marking $M$ is a {\em home state} of $S$ if it can be reached from every marking in $[M_0\rangle$. 
$S$ is {\em reversible} if its initial marking is a home state, meaning that $RG(S)$ is strongly connected.

$S$ is {\em $k$-bounded} (or {\em $k$-safe}) if an integer $k$ exists such that: for each $M$ in $[M_0\rangle$, for each place $p$, $M(p) \le k$.
It is {\em bounded} if an integer $k$ exists such that $S$ is $k$-bounded. 
$N$ is {\em structurally bounded} if $(N,M)$ is bounded for each $M$.

$N$ is {\em well-formed} if it is structurally bounded and structurally live.

The underlying net $N$ in Figure~\ref{ex1.fig} is structurally live and bounded, hence well-formed.
In the same figure, the system $S=(N,M_0)$ is live, $4$-bounded and reversible, thus non-deadlockable,
which can be checked on its finite reachability graph.

\section{Petri net subclasses}\label{SectionSubclasses}

In this section,
we define the subclasses of Petri nets studied in this paper.

\subsection{Classical restrictions on the structure}

\vspace*{2mm}

Let us define subclasses from restrictions on the structure of any net $N = (P,T,W)$.\\

\noindent {$-$ Subclasses defined by restrictions on the weights.}
%
%\vspace*{1mm}
%
$N$ is {\it ordinary} (or {\it plain}, {\it unit-weighted}) if no arc weight exceeds $1$; 
$N$ is {\em homogeneous} if for each place $p$, all outgoing weights of $p$ are equal.
In particular, ordinary nets are homogeneous.
In this paper, for any class of nets C, we denote by HC the homogeneous subclass of C.
Examples are pictured in Figures~\ref{ExSubclasses1} and~\ref{ExSubclasses2}.\\

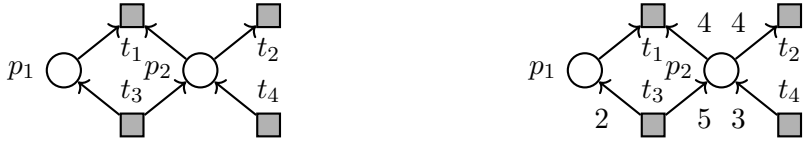
\begin{figure}[!ht]
 
\begin{tabular}{cc}
 
\begin{minipage}{0.4\linewidth}

\centering

\begin{tikzpicture}[mypetristyle,scale=0.6]

\node (p1) at (0,0) [place,thick] {};
\node (p2) at (3,0) [place,thick] {};

\node [anchor=east] at (p1.west) {$p_1$};
\node [anchor=east] at (p2.west) {$p_2$};

\node (t1) at (1.5,1.2) [transition,thick] {};
\node (t2) at (4.5,1.2) [transition,thick] {};
\node (t3) at (1.5,-1.2) [transition,thick] {};
\node (t4) at (4.5,-1.2) [transition,thick] {};

\node [anchor=north] at (t1.south) {$t_1$};
\node [anchor=north] at (t2.south) {$t_2$};
\node [anchor=south] at (t3.north) {$t_3$};
\node [anchor=south] at (t4.north) {$t_4$};

\draw [->,thick] (p1) to node [above left] {} (t1);
\draw [->,thick] (p2) to node [above right] {} (t1);
\draw [->,thick] (p2) to node [above left] {} (t2);
\draw [->,thick] (t3) to node [below left] {} (p1);
\draw [->,thick] (t3) to node [below right] {} (p2);
\draw [->,thick] (t4) to node [below left] {} (p2);

\end{tikzpicture}

\end{minipage}
&
\begin{minipage}{0.4\linewidth}

\centering

\begin{tikzpicture}[mypetristyle,scale=0.6]

\node (p1) at (0,0) [place,thick] {};
\node (p2) at (3,0) [place,thick] {};

\node [anchor=east] at (p1.west) {$p_1$};
\node [anchor=east] at (p2.west) {$p_2$};

\node (t1) at (1.5,1.2) [transition,thick] {};
\node (t2) at (4.5,1.2) [transition,thick] {};
\node (t3) at (1.5,-1.2) [transition,thick] {};
\node (t4) at (4.5,-1.2) [transition,thick] {};

\node [anchor=north] at (t1.south) {$t_1$};
\node [anchor=north] at (t2.south) {$t_2$};
\node [anchor=south] at (t3.north) {$t_3$};
\node [anchor=south] at (t4.north) {$t_4$};

\draw [->,thick] (p1) to node [above left] {} (t1);
\draw [->,thick] (p2) to node [above right] {$4$} (t1);
\draw [->,thick] (p2) to node [above left] {$4$} (t2);
\draw [->,thick] (t3) to node [below left] {$2$} (p1);
\draw [->,thick] (t3) to node [below right] {$5$} (p2);
\draw [->,thick] (t4) to node [below left] {$3$} (p2);

\end{tikzpicture}

\end{minipage}

\end{tabular}

\vspace*{4mm}

\caption{The net on the left is ordinary (i.e. unit-weighted, plain), thus in particular it is homogeneous.
The net on the right is homogeneous and is not ordinary.
Both nets have a single shared place, which is $p_2$.
}

\label{ExSubclasses1}

\end{figure}

\noindent {$-$ Subclasses without shared places.}
$N$ is {\it choice-free} (CF, also called place-output-nonbranching)
%\cite{crespi-mandrioli-75,tcs97} (also called  {\it place-output-nonbranching} in \cite{besdev-ccc25})
if each place has at most one output,
i.e. $\forall p\in P$, $|p^\dt|\leq 1$;
%%
%{\it join-free} (JF) \cite{STECS} (also called {\it communication-free}, {\it synchronization-free})
%if $\forall t \in P$, $|^\dt t|\leq 1$;
%%
%{\it fork-attribution} (FA) \cite{tcs97} if it is CF and JF;
%%, in addition,
%%$\forall t \in T$, $|^\dt t| \leq 1$;
%
it is a {\em weighted marked graph with relaxed place constraints} (\WMGineq{})
if it is choice-free and, in addition, each place has at most one input,
i.e.
$\forall p\in P$, $|{}^\dt p|\leq 1$ and $|p{}^\dt|\leq 1$.
\WMGineq{} contain the {\em weighted T-systems} (WTS) of~\cite{WTS92}, also known as {\em weighted event graphs} (WEG) in~\cite{March09}
and {\em weighted marked graphs} (WMG), 
in which $\forall p\in P$, $|{}^\dt p| = 1$ and $|p{}^\dt| = 1$.
The nets of Figure~\ref{ex2.fig} are \WMGineq{}.
We denote by \MGineq{} the unit-weighted \WMGineq{}.
Well-studied ordinary subclasses are {\em marked graphs}~\cite{MDG71}, also known as {\em T-nets}~\cite{DesEsp},
which fulfill $|p^\dt|=1$ and $|{}^\dt p|=1$ for each place $p$.\\ %TH
%and 
%{\em T-nets} \cite{DesEsp}, which fulfill $|p^\dt| = 1$ and $|{}^\dt p| = 1$ for each place $p$.\\

\noindent {$-$ Subclasses with shared places.}
$N$ is {\em asymmetric-choice} (AC) if it satisfies 
the following condition for any two input places $p_1$, $p_2$
of each synchronization~$t$,
$p_1^\bullet \subseteq p_2^\bullet$
or
$p_2^\bullet \subseteq p_1^\bullet$.
It is {\em free-choice} (FC) if for any two input places $p_1$, $p_2$ of each synchronization $t$,
$p_1^\bullet = p_2^\bullet$.
Thus, FC nets form a subclass of AC nets.
It is a {\em state machine} if it is ordinary 
and each transition has exactly one input and one output.\\

\begin{figure}[!ht]
 
\begin{tabular}{cccc}
 
\begin{minipage}{0.189\linewidth}

\centering

\begin{tikzpicture}[mypetristyle,scale=1]

\node (p1) at (0,0) [place,thick] {};
\node (p2) at (1.25,0) [place,thick] {};

\node [anchor=east] at (p1.west) {$p_1$};
\node [anchor=west] at (p2.east) {$p_2$};

\node (t1) at (0,1.2) [transition,thick] {};
\node (t2) at (1.25,1.2) [transition,thick] {};

\node [anchor=east] at (t1.west) {$t_0$};
\node [anchor=west] at (t2.east) {$t_1$};

\draw [->,thick] (p1) to node [left] {$2$} (t1);
\draw [->,thick] (p1) to node [below,near start] {$2$} (t2.south west);
\draw [->,thick] (p2) to node [below,near start] {$3$} (t1.south east);
\draw [->,thick] (p2) to node [right] {$3$} (t2);

\end{tikzpicture}

\end{minipage}
&
\begin{minipage}{0.3\linewidth}

\centering

\begin{tikzpicture}[mypetristyle,scale=1]

\node (p1) at (0.5,0) [place,thick] {};
\node (p2) at (2,0) [place,thick] {};

\node [anchor=east] at (p1.west) {$p_1$};
\node [anchor=west] at (p2.east) {$p_2$};

\node (t0) at (0,1.2) [transition,thick] {};
\node (t1) at (1,1.2) [transition,thick] {};
\node (t2) at (2,1.2) [transition,thick] {};
\node (t3) at (3,1.2) [transition,thick] {};

\node [anchor=west] at (t1.east) {$t_1$};
\node [anchor=west] at (t2.east) {$t_2$};
\node [anchor=east] at (t0.west) {$t_0$};
\node [anchor=west] at (t3.east) {$t_3$};

\draw [->,thick] (p1) to node [below right, near start] {$2$} (t1);
\draw [->,thick] (p2) to node [above, near start] {$3$} (t1.south east);
\draw [->,thick] (p2) to node [right] {$3$} (t2);
\draw [->,thick] (p2) to node [below right, near end] {$3$} (t3);

\draw [->,thick] (p1) to node [left] {$2$} (t0);
\draw [->,thick] (p2) to node [below,near start] {$3$} (t0.east);

\end{tikzpicture}

\end{minipage}
&
\begin{minipage}{0.225\linewidth}

\centering

\begin{tikzpicture}[mypetristyle,scale=1]

\node (p1) at (0,0) [place,thick] {};
\node (p2) at (2.5,0) [place,thick] {};

\node [anchor=west] at (p1.east) {$p_1$};
\node [anchor=east] at (p2.west) {$p_2$};

\node (t0) at (0,1.2) [transition,thick] {};
\node (t1) at (1.25,1.2) [transition,thick] {};
\node (t2) at (2.5,1.2) [transition,thick] {};

\node [anchor=north] at (t1.south) {$t_1$};
\node [anchor=east] at (t2.west) {$t_2$};
\node [anchor=west] at (t0.east) {$t_0$};

\draw [->,thick] (p1) to node [above, near start] {$2$} (t1);
\draw [->,thick] (p2) to node [above, near start] {$3$} (t1);
\draw [->,thick] (p2) to node [below right, very near end] {$3$} (t2);
\draw [->,thick] (p1) to node [below left, very near end] {$2$} (t0);

\end{tikzpicture}

\end{minipage}
&
\begin{minipage}{0.165\linewidth}

\centering

\begin{tikzpicture}[mypetristyle,scale=1]

\node (p) at (0.5,0) [place,thick] {};

\node [anchor=east] at (p.west) {$p$};

\node (t0) at (-0.2,1.2) [transition,thick] {};
\node (t1) at (1.2,1.2) [transition,thick] {};

\node [anchor=west] at (t0.east) {$t_0$};
\node [anchor=east] at (t1.west) {$t_1$};

\draw [->,thick,bend right=15] (p) to node [below left, near end] {} (t0);
\draw [->,thick,bend left=15] (p) to node [below right, near end] {} (t1);
\draw [->,thick,bend right=15] (t0) to node [below left, near end] {} (p);
\draw [->,thick,bend left=15] (t1) to node [below right, near end] {} (p);

\end{tikzpicture}

\end{minipage}
\end{tabular}

\vspace*{4mm}

\caption{The net on the left is HFC, the second one is HAC.
The third net is homogeneous,
non-AC since $\mathit{\lbul t_1} = \{p_1,p_2\}$,
while 
$\mathit{p_1\lbul} \not\subseteq \mathit{p_2\lbul}$
and
$\mathit{p_2\lbul} \not\subseteq \mathit{p_1\lbul}$.
The fourth net is a state machine.
None of these nets is CF.
}

\label{ExSubclasses2}

\end{figure}
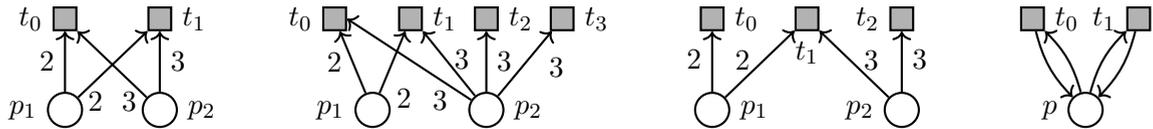

\subsection{S-\WMGineq{}}

%\vspace*{3mm}

We introduce the new classes of $k$S-\WMGineq{} and S-\WMGineq{}.

\begin{definition}[$k$S-\WMGineq{} and S-\WMGineq{}]
A net (or system) is a $k$S-\WMGineq{}  
if it has at most $k$ shared places and if the deletion of all its shared places yields a \WMGineq{}.
A net (or system) is a S-\WMGineq{} if it is a $k$S-\WMGineq{} for some positive integer~$k$.
\end{definition}

Figure~\ref{OneChoiceWMG} pictures an homogeneous $1$S-\WMGineq{} (H$1$S-\WMGineq{}) on the left.

\begin{figure}[!ht]
\centering
\begin{tikzpicture}[scale=0.75,mypetristyle]

\node (p) at (2.5,1.5) [place,tokens=3] {};
\node [anchor=north east] at (p.south west) {$p$};

\node (p1) at (1,3) [place] {};
\node [anchor=south east] at (p1.north west) {$p_1$};

\node (p2) at (4,3) [place,tokens=1] {};
\node [anchor=south] at (p2.north) {$p_2$};

\node (p3) at (5.5,1.5) [place,tokens=2] {};
\node [anchor=east] at (p3.west) {$p_3$};

\node (p4) at (1,0) [place,tokens=1] {};
\node [anchor=north east] at (p4.south west) {$p_4$};

\node (p5) at (4,0) [place] {};
\node [anchor=north] at (p5.south) {$p_5$};

\node (p6) at (6.2,1.5) [place] {};
\node [anchor=west] at (p6.east) {$p_6$};

\node (t1) at (2.5,3) [transition,thick] {};
\node (t2) at (5.5,3) [transition,thick] {};
\node (t3) at (1,1.5) [transition,thick] {};
\node (t4) at (2.5,0) [transition,thick] {};
\node (t5) at (5.5,0) [transition,thick] {};

\node [anchor=east] at (t3.west) {$t_3$};
\node [anchor=south] at (t1.north) {$t_1$};
\node [anchor=north west] at (t5.south east) {$t_5$};
\node [anchor=north] at (t4.south) {$t_4$};
\node [anchor=south west] at (t2.north east) {$t_2$};

\draw [->,thick] (t2.south east) to node [right] {$2$} (p6);
\draw [->,thick] (p6) to node [right] {$5$} (t5.north east);
\draw [->,thick] (t1) to node [left] {} (p);
\draw [->,thick] (p) to node [right] {$2$} (t4);
\draw [->,thick] (p1) to node [left] {$2$} (t3);
\draw [->,thick] (t3) to node [left] {$3$} (p4);
\draw [->,thick] (p4) to node [] {} (t4);
\draw [->,thick] (t4) to node [] {} (p5);
\draw [->,thick] (p5) to node [below] {$2$} (t5);
\draw [->,thick] (t5) to node [left] {$5$} (p3);
\draw [->,thick] (p3) to node [left] {$2$} (t2);
\draw [->,thick, bend right=20] (p) to node [above right, near start] {$2$} (t5.north west);
\draw [->,thick, bend left=20] (p) to node [below right, near start] {$2$} (t2.south west);
\draw [->,thick] (t2) to node [] {} (p2);
\draw [->,thick] (p2) to node [above] {$3$} (t1);
\draw [->,thick] (t1) to node [above] {$5$} (p1);
\draw [->,thick] (t3) to node [above] {$4$} (p);
\end{tikzpicture}
\hspace*{5mm}
\begin{tikzpicture}[scale=0.75,mypetristyle]

%\node (p) at (2.5,1.5) [place,tokens=3] {};
%\node [anchor=north east] at (p.south west) {$p$};

\node (p1) at (1,3) [place] {};
\node [anchor=south east] at (p1.north west) {$p_1$};

\node (p2) at (4,3) [place,tokens=1] {};
\node [anchor=south] at (p2.north) {$p_2$};

\node (p3) at (5.5,1.5) [place,tokens=2] {};
\node [anchor=east] at (p3.west) {$p_3$};

\node (p4) at (1,0) [place,tokens=1] {};
\node [anchor=north east] at (p4.south west) {$p_4$};

\node (p5) at (4,0) [place] {};
\node [anchor=north] at (p5.south) {$p_5$};

\node (p6) at (6.2,1.5) [place] {};
\node [anchor=west] at (p6.east) {$p_6$};

\node (t1) at (2.5,3) [transition,thick] {};
\node (t2) at (5.5,3) [transition,thick] {};
\node (t3) at (1,1.5) [transition,thick] {};
\node (t4) at (2.5,0) [transition,thick] {};
\node (t5) at (5.5,0) [transition,thick] {};

\node [anchor=east] at (t3.west) {$t_3$};
\node [anchor=south] at (t1.north) {$t_1$};
\node [anchor=north west] at (t5.south east) {$t_5$};
\node [anchor=north] at (t4.south) {$t_4$};
\node [anchor=south west] at (t2.north east) {$t_2$};

\draw [->,thick] (t2.south east) to node [right] {$2$} (p6);
\draw [->,thick] (p6) to node [right] {$5$} (t5.north east);
%
%\draw [->,thick] (t1) to node [left] {} (p);
%\draw [->,thick] (p) to node [left] {} (t4);
%
\draw [->,thick] (p1) to node [left] {$2$} (t3);
\draw [->,thick] (t3) to node [left] {$3$} (p4);
\draw [->,thick] (p4) to node [] {} (t4);
\draw [->,thick] (t4) to node [] {} (p5);
\draw [->,thick] (p5) to node [below] {$2$} (t5);
\draw [->,thick] (t5) to node [left] {$5$} (p3);
\draw [->,thick] (p3) to node [left] {$2$} (t2);
%
%\draw [->,thick] (p) to node [below left, near start] {$2$} (t5.north west);
%\draw [->,thick] (p) to node [above left, near start] {$3$} (t2.south west);
\draw [->,thick] (t2) to node [] {} (p2);
\draw [->,thick] (p2) to node [above] {$3$} (t1);
\draw [->,thick] (t1) to node [above] {$5$} (p1);
%\draw [->,thick] (t3) to node [above] {$2$} (p);
%
\end{tikzpicture}

\vspace*{4mm}

\caption{
Deleting place $p$ in the H$1$S-\WMGineq{} on the left yields the \WMGineq{} on the right.
}
\label{OneChoiceWMG}
\end{figure}

Notice that each Petri net system with $k$ places can be transformed into an S-\WMGineq{} by inserting, 
for each place $p$, a new transition $t_p$ with only input $p$ and only output $p$.
This transformation preserves numerous behavioral properties of the original system 
(e.g.\ liveness, boundedness, reversibility, the reachable markings),
hence there is no hope of reducing the checking complexity of such properties in S-\WMGineq{}.
However, as we will highlight, several intractable problems can be alleviated when $k=1$ (in the H$1$S-\WMGineq{} class).
We will show that the same methods do not work anymore when $k=2$.

\subsection{Augmented Marked Graphs (AMG)}

\vspace*{3mm}

Augmented Marked Graphs (AMG) extend marked graphs with shared places having the same number of inputs and outputs,
in addition to other constraints such as the existence of 
elementary paths connecting outputs to inputs in the underlying marked graph 
and 
restrictions on the initial marking. They form a proper subclass of S-\WMGineq{}.
We recall their most general definition next, as introduced in~\cite{ChuXie97}:

\begin{definition}[AMG \cite{ChuXie97}]
An augmented marked graph is a ordinary Petri net system composed of two distinct sets $P$ and $R$ of places 
($R$ for {\em resources}, denoting the set of shared places; $P$ denoting the other places)
and a set $T$ of transitions satisfying the following conditions:\\
$-$ (H1) The net $G$ obtained by removing the places of $R$ is a marked graph;\\
$-$ (H2) For each place $r \in R$, there exist an integer $k \ge 2$ and $k$ pairs of transitions 
described by the set
$D^r = \{(a_{r_1}, b_{r_1}), \ldots, (a_{r_k}, b_{r_k})\}$
such that
$r^\bullet = \{a_{r_1}, \ldots, a_{r_k}\}$,
$^\bullet r = \{b_{r_1}, \ldots, b_{r_k}\}$,
$a_{r_i} \neq a_{r_j}$,
$b_{r_i} \neq b_{r_j}$, $\forall i \neq j \in \{1, \ldots, k\}$,
and
for each pair $(a_{r_i}, b_{r_i}) \in D^r$ such that $a_{r_i} \neq b_{r_i}$,
there exists an elementary path %$O_{r_i}$ 
in $G$ from $a_{r_i}$ to $b_{r_i}$;\\
$-$ (H3) Each elementary circuit in $G$ is marked by $M_0$;\\
$-$ (H4) Each place in $R$ is marked by $M_0$, and for each pair $(a_{r_i}, b_{r_i})$ in $D^r$ such that $a_{r_i} \neq b_{r_i}$,
there is an elementary path $O_{r_i}$ in $G$ 
from $a_{r_i}$ to $b_{r_i}$ that is unmarked by $M_0$.
%
%\qed
%
\end{definition}

This definition allows the existence of marked elementary paths in $G$ from $a_{r_i}$ to $b_{r_i}$ for any $r_i$,
as well as the existence of other sets of pairs $D'^r$ 
containing a pair $(a_{r_i}', b_{r_i}')$ such that no unmarked elementary path exists in $G$ from $a_{r_i}'$ to $b_{r_i}'$.
Examples are provided in Figure~\ref{ExAMG} with only one resource place (i.e. a shared place),
but an arbitrary number of resource places is allowed in general (contrarily to H$1$S-\WMGineq{}).
More restricted definitions of AMG exist,
adding notably the constraint of $1$-boundedness (also called $1$-safeness, or safeness)~\cite{huang2003property}.\\

\begin{figure}[!ht]
 
\begin{tabular}{ccc}
 
\begin{minipage}{0.3\linewidth}

\centering

\begin{tikzpicture}[mypetristyle,scale=1]

\node (p1) at (3,3) [place,thick] {};
\node (p2) at (3,2) [place,thick,tokens=3] {};
\node (p3) at (1,1) [place,thick] {};
\node (p4) at (2,1) [place,thick,tokens=1] {};
\node (p5) at (3,1) [place,thick,tokens=2] {};
\node (p6) at (4,1) [place,thick] {};

\node [anchor=south] at (p1.north) {$p_1$};
\node [anchor=south] at (p2.north) {$p_2$};
\node [anchor=south] at (p3.north) {$p_3$};
\node [anchor=east] at (p4.west) {$p_4$};
\node [anchor=south] at (p5.north) {$p_5$};
\node [anchor=west] at (p6.east) {$p_6$};

\node (t1) at (2,2) [transition,thick] {};
\node (t2) at (4,2) [transition,thick] {};
\node (t3) at (2,0) [transition,thick] {};
\node (t4) at (4,0) [transition,thick] {};

\node [anchor=south] at (t1.north) {$t_1$};
\node [anchor=south] at (t2.north) {$t_2$};
\node [anchor=west] at (t3.east) {$t_3$};
\node [anchor=east] at (t4.west) {$t_4$};

\draw [->,thick] (p1) to node [above left] {} (t2);
\draw [->,thick] (t2) to node [above right] {} (p2);
\draw [->,thick] (p2) to node [above right] {} (t1);
\draw [->,thick] (t1) to node [above right] {} (p1);
\draw [->,thick] (p5) to node [above left] {} (t1);
\draw [->,thick] (p5) to node [below left] {} (t2);
\draw [->,thick] (t3) to node [below right] {} (p5);
\draw [->,thick] (t4) to node [below left] {} (p5);
\draw [->,thick] (t1) to node [above right] {} (p3);
\draw [->,thick] (t1) to node [above right] {} (p4);
\draw [->,thick] (p3) to node [above right] {} (t3);
\draw [->,thick] (p4) to node [above right] {} (t3);
\draw [->,thick] (t2) to node [above right] {} (p6);
\draw [->,thick] (p6) to node [above right] {} (t4);

\end{tikzpicture}

\end{minipage}
&
\begin{minipage}{0.3\linewidth}

\centering

\begin{tikzpicture}[mypetristyle,scale=1]

\node (p1) at (3,3) [place,thick] {};
\node (p2) at (3,2) [place,thick,tokens=3] {};
\node (p3) at (1,1) [place,thick] {};
\node (p4) at (2,1) [place,thick,tokens=1] {};
%\node (p5) at (3,1) [place,thick,tokens=2] {};
\node (p6) at (4,1) [place,thick] {};

\node [anchor=south] at (p1.north) {$p_1$};
\node [anchor=south] at (p2.north) {$p_2$};
\node [anchor=south] at (p3.north) {$p_3$};
\node [anchor=east] at (p4.west) {$p_4$};
%\node [anchor=south] at (p5.north) {$p_5$};
\node [anchor=west] at (p6.east) {$p_6$};

\node (t1) at (2,2) [transition,thick] {};
\node (t2) at (4,2) [transition,thick] {};
\node (t3) at (2,0) [transition,thick] {};
\node (t4) at (4,0) [transition,thick] {};

\node [anchor=south] at (t1.north) {$t_1$};
\node [anchor=south] at (t2.north) {$t_2$};
\node [anchor=west] at (t3.east) {$t_3$};
\node [anchor=east] at (t4.west) {$t_4$};

\draw [->,thick] (p1) to node [above left] {} (t2);
\draw [->,thick] (t2) to node [above right] {} (p2);
\draw [->,thick] (p2) to node [above right] {} (t1);
\draw [->,thick] (t1) to node [above right] {} (p1);
%
%\draw [->,thick] (p5) to node [above left] {} (t1);
%\draw [->,thick] (p5) to node [below left] {} (t2);
%\draw [->,thick] (t3) to node [below right] {} (p5);
%\draw [->,thick] (t4) to node [below left] {} (p5);
%
\draw [->,thick] (t1) to node [above right] {} (p3);
\draw [->,thick] (t1) to node [above right] {} (p4);
\draw [->,thick] (p3) to node [above right] {} (t3);
\draw [->,thick] (p4) to node [above right] {} (t3);
\draw [->,thick] (t2) to node [above right] {} (p6);
\draw [->,thick] (p6) to node [above right] {} (t4);

\end{tikzpicture}

\end{minipage}
&
\begin{minipage}{0.3\linewidth}

\centering

\begin{tikzpicture}[mypetristyle,scale=1]

\node (p1) at (3,3) [place,thick] {};
\node (p2) at (3,2) [place,thick,tokens=3] {};
\node (p3) at (1,1) [place,thick,tokens=1] {};
\node (p4) at (2,1) [place,thick,tokens=1] {};
\node (p5) at (3,1) [place,thick,tokens=2] {};
\node (p6) at (4,1) [place,thick] {};

\node [anchor=south] at (p1.north) {$p_1$};
\node [anchor=south] at (p2.north) {$p_2$};
\node [anchor=south] at (p3.north) {$p_3$};
\node [anchor=east] at (p4.west) {$p_4$};
\node [anchor=south] at (p5.north) {$p_5$};
\node [anchor=west] at (p6.east) {$p_6$};

\node (t1) at (2,2) [transition,thick] {};
\node (t2) at (4,2) [transition,thick] {};
\node (t3) at (2,0) [transition,thick] {};
\node (t4) at (4,0) [transition,thick] {};

\node [anchor=south] at (t1.north) {$t_1$};
\node [anchor=south] at (t2.north) {$t_2$};
\node [anchor=west] at (t3.east) {$t_3$};
\node [anchor=east] at (t4.west) {$t_4$};

\draw [->,thick] (p1) to node [above left] {} (t2);
\draw [->,thick] (t2) to node [above right] {} (p2);
\draw [->,thick] (p2) to node [above right] {} (t1);
\draw [->,thick] (t1) to node [above right] {} (p1);
\draw [->,thick] (p5) to node [above left] {} (t1);
\draw [->,thick] (p5) to node [below left] {} (t2);
\draw [->,thick] (t3) to node [below right] {} (p5);
\draw [->,thick] (t4) to node [below left] {} (p5);
\draw [->,thick] (t1) to node [above right] {} (p3);
\draw [->,thick] (t1) to node [above right] {} (p4);
\draw [->,thick] (p3) to node [above right] {} (t3);
\draw [->,thick] (p4) to node [above right] {} (t3);
\draw [->,thick] (t2) to node [above right] {} (p6);
\draw [->,thick] (p6) to node [above right] {} (t4);

\end{tikzpicture}

\end{minipage}

\end{tabular}

\vspace*{4mm}

\caption{The system on the left is an AMG, with set of resource places $R=\{p_5\}$.
Let us choose $D^r = \{(t_1,t_3),(t_2,t_4)\}$,
$O_{r_1}$ the unmarked elementary path $t_1 p_3 t_3$
and
$O_{r_2}$ the unmarked elementary path $t_2 p_6 t_4$.
Notice that choosing $D'^r = \{(t_1,t_4),(t_2,t_3)\}$ does not permit to find an unmarked elementary path 
from~$t_2$ to~$t_3$ in $G$ since each path must visit $p_2$, which is marked.
Its underlying marked graph $G$ is pictured in the middle.
The system on the right is not an AMG. None of these systems is $1$-bounded.
}

\label{ExAMG}
\end{figure}
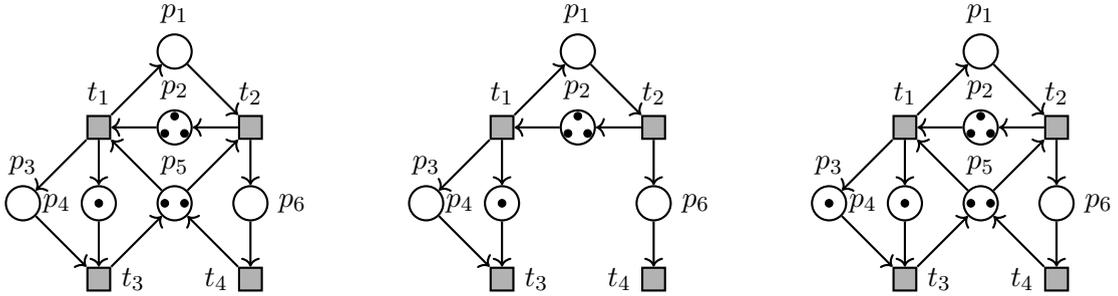

\subsection{Place-Composed Marked Graphs with relaxed place constraints (\PCMGineq{})}

\vspace*{2mm}

We introduce the new class of \PCMGineq{} and a dedicated notion of {\em well-structuredness}.
For that purpose, we need to define a place-merging operation on Petri nets.

\begin{definition}[Place-merging]
Consider a net $N=(P,T,W)$ and a subset $A$ of $2^P$ whose elements are mutually disjoint.
The net $N'=(P',T',W')$ obtained by place-merging $A$ is defined as follows:\\
$-$ $T' = T$;\\
$-$ for each element $x=\{p_1, \ldots, p_k\}$ in $A$, a place $p_x$ belongs to $P'$ such that,
for each transition $t$, $W'(p_x,t) = \sum_{p \in x} W(p,t)$ and $W'(t,p_x) = \sum_{p \in x} W(t,p)$;\\
$-$ denoting by $P''$ the set of places in $P$ that do not appear in $A$, $P' = \bigcup_{x \in A} p_x \cup P''$,
and for each place $p$ in $P''$ and each transition $t$, 
$W'(p,t) = W(p,t)$ and $W'(t,p) = W(t,p)$.\\
\end{definition}

%\vspace*{1mm}

\begin{definition}[\PCMGineq{}]
%
%whose structure stems from an undirected graph.
%
Consider any connected, undirected graph $G=(V,E)$,
where $V=\{v_1, \ldots, v_x\}$ is a finite set of vertices
and $E=\{e_1, \ldots, e_y\}$ is a finite set of edges connecting distinct vertices of $V$.
A place-composed marked graph with relaxed place constraints (\PCMGineq{}) $N=(P,T,W)$ 
is obtained from~$G$ by refining vertices with places and edges with \MGineq{} as follows:
%
%%by composing marked graphs
%%through a set of places $P_v$ according to the structure of $G$, as follows:
%
\begin{itemize}
\item first, define a \MGineq{} $N'=(P',T',W')$ containing $y$ maximal connected components $C_1, \ldots, C_y$,
each of which contains at least two places;
denote by $C$ the set of these components;

\item then, define a bijective mapping $\beta: E \mapsto C$ that associates to each edge a component,
and a mapping $\gamma: E \mapsto P' \times P'$ that associates to each edge $e_i=\{v_a,v_b\}$, $a<b$, 
a pair of distinct places $(p_{i,a}, p_{i,b})$ in the component $\beta(e_i)$;

\item finally, for each vertex $v_j$ of $V$, denote by $A_j$ the set of all the places associated to $v_j$ through~$\gamma$;
denote by $A$ the set $\{A_j \mid v_j \in V\}$;

\item $N$ is obtained by place-merging $A$.

\end{itemize}
\end{definition}

Examples are pictured in Figure \ref{Fig-PCMG}.

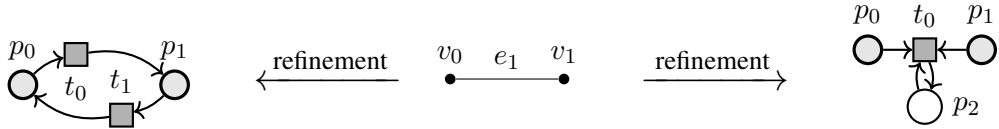
\begin{figure}[!ht]
\centering
%\begin{tabular}{ccccc}
\begin{minipage}{0.25\linewidth}
\centering
\raisebox{1mm}{
\begin{tikzpicture}[mypetristyle]

\node (p0) at (0,0) [petriNode] {};
\node (p1) at (2,0) [petriNode] {};

\node [anchor=south] at (p0.north) {$p_0$};
\node [anchor=south] at (p1.north) {$p_1$};

\node (t0) at (0.7,0.4) [transition,thick] {};
\node (t1) at (1.3,-0.4) [transition,thick] {};

\node [anchor=north] at (t0.south) {$t_0$};
\node [anchor=south] at (t1.north) {$t_1$};

\draw [->,thick, bend left=15] (p0) to node [] {} (t0);
\draw [->,thick, bend left=25] (t0) to node [] {} (p1);
\draw [->,thick, bend left=15] (p1) to node [] {} (t1);
\draw [->,thick, bend left=25] (t1) to node [] {} (p0);

\end{tikzpicture}
}
\end{minipage}
%&
{\Large $\xleftarrow{\textrm{~refinement~}}{}{}$}
%&
\begin{minipage}{0.175\linewidth}
%\centering
\raisebox{6mm}{
\begin{tikzpicture}[scale=1,mypetristyle]
\node[ltsNode](n0)at(0,0){};
\node[ltsNode](n1)at(1.5,0){};
\draw[-](n0)to node[auto,swap,above]{$e_1$}(n1);
\node [anchor=south] at (n0.north) {$v_0$};
\node [anchor=south] at (n1.north) {$v_1$};
\end{tikzpicture}
}
%\hspace*{1cm}
\end{minipage}
%&
{\Large $\,\xrightarrow{\textrm{~refinement~}}{}{}$}
%&
\begin{minipage}{0.2\linewidth}
\centering
\raisebox{6mm}{
\begin{tikzpicture}[mypetristyle]

\node (p0) at (0,0) [petriNode] {};
\node (p1) at (1.5,0) [petriNode] {};
\node (p2) at (0.75,-0.75) [place] {};

\node [anchor=south] at (p0.north) {$p_0$};
\node [anchor=south] at (p1.north) {$p_1$};
\node [anchor=west] at (p2.east) {$p_2$};

\node (t0) at (0.75,0) [transition,thick] {};

\node [anchor=south] at (t0.north) {$t_0$};

\draw [->,thick, bend left=0] (p0) to node [] {} (t0);
\draw [->,thick, bend left=0] (p1) to node [] {} (t0);
\draw [->,thick, bend left=20] (t0) to node [] {} (p2);
\draw [->,thick, bend left=20] (p2) to node [] {} (t0);

\end{tikzpicture}
}
\end{minipage}

%\end{tabular}

%\vspace*{-5mm}

\caption{
In the middle, a simple undirected graph $G=(V,E)$ is pictured, where $V=\{v_0,v_1\}$ and $E=~\{e_1\}$.
The \MGineq{} on the left is obtained from $G$ by identifying place $p_0$ to vertex $v_0$ and place $p_1$ to vertex $v_1$,
i.e. $\beta(e_1) = C_1$, $\gamma(e_1) = (p_0,p_1)$.
On the right, $e_1$ is replaced by a component $C_1$ formed of a single transition~$t_0$
synchronizing its input places $p_0$, $p_1$ and $p_2$ and writing in $p_2$.
}

\label{Fig-PCMG}

\end{figure}

In the design phase, \PCMGineq{} allow to define first the topology of shared places,
i.e. the communication links between the buffers of the system.
The processes reading and writing the buffers can then be defined through refinement.\\

In the sequel, we focus mainly on well-formed MG subnets, i.e. MG subnets that are structurally live (meaning that a live marking exists)
and structurally bounded (i.e. for each initial marking, the system is bounded).
We define next a notion of well-structuredness for \PCMGineq{}.

\begin{definition}[Well-structured \PCMGineq{}]
A \PCMGineq{} $S=(N,M_0)$, where $N=(P,T,W)$, obtained from a graph $G=(V,E)$, is well-structured if 
each component of $C$ is a strongly connected and well-formed MG.
\end{definition}

The Petri net obtained on the left of Figure \ref{Fig-PCMG} is a well-structured \PCMGineq{},
while the net on the right is not, since the unique MG component is not structurally live.
Another well-structured \PCMGineq{}, with shared places, is given in the middle of Figure~\ref{AMGvsPCMG}.

In the definition of \PCMGineq{}, the undirected graph representing the topology stems from the fundamental behavioral properties
fulfilled by this class. 
Indeed, we provide in Section~\ref{SecPCMG} a characterization of reversibility,
together with a sufficient condition of PR-R equality,
and show they are no more valid when the underlying undirected graph topology of \PCMGineq{} is relaxed.
Thus, \PCMGineq{} are defined so as to benefit from stronger conditions ensuring liveness, boundedness, reversibility
and to fulfill the PR-R equality, while extending the expressiveness of marked graphs and state machines.

\subsection{Expressiveness comparison}

\vspace*{3mm}

%In what follows, we compare the expressiveness of the main classes studied in this work.\\

The AMG class contains all the MG but not all WMG (nor the \WMGineq{}), since the former are unit-weighted and the latter have arbitrary weights.
Since AMG allow shared places, \WMGineq{} do not contain them all either.
AMG and \PCMGineq{} are incomparable (see Figure~\ref{AMGvsPCMG}).
H$1$S-\WMGineq{} are not included in \PCMGineq{} nor in AMG.

\begin{figure}[!ht]
\centering
\begin{minipage}{0.2\linewidth}
\centering
\raisebox{4mm}{
\begin{tikzpicture}[scale=0.8,mypetristyle]
\node[ltsNode](n0)at(1.5,2){};
\node[ltsNode](n1)at(0,0){};
\node[ltsNode](n2)at(3,0){};
\draw[-](n0)to node[auto,swap]{}(n1);
\draw[-](n1)to node[auto,swap]{}(n2);
\draw[-](n2)to node[auto,swap]{}(n0);
\node [anchor=south] at (n0.north) {$v_{p_0}$};
\node [anchor=north] at (n1.south) {$v_{p_1}$};
\node [anchor=north] at (n2.south) {$v_{p_2}$};
\end{tikzpicture}
}
\end{minipage}
\raisebox{4mm}{
$\xrightarrow{\textrm{refinement}}{}{}$
}
\begin{minipage}{0.25\linewidth}
\centering

\begin{tikzpicture}[scale=0.8,mypetristyle]

\node (p0) at (1.5,2) [petriNode,tokens=1] {};
\node (p1) at (0,0) [petriNode,tokens=1] {};
\node (p2) at (3,0) [petriNode,tokens=1] {};

\node [anchor=south] at (p0.north) {$p_0$};
\node [anchor=north] at (p1.south) {$p_1$};
\node [anchor=north] at (p2.south) {$p_2$};

\node (t0) at (0.75,1) [transition,thick] {};
\node (t1) at (1.5,0) [transition,thick] {};
\node (t2) at (2.25,1) [transition,thick] {};
\node (t3) at (0.15,1.45) [transition,thick] {};
\node (t4) at (2.85,1.45) [transition,thick] {};
\node (t5) at (1.5,-0.7) [transition,thick] {};

\node [anchor=west] at (t0.east) {$t_0$};
\node [anchor=south] at (t1.north) {$t_1$};
\node [anchor=east] at (t2.west) {$t_2$};
\node [anchor=south] at (t3.north) {$t_3$};
\node [anchor=south] at (t4.north) {$t_4$};
\node [anchor=north] at (t5.south) {$t_5$};

\draw [->,thick, bend left=0] (p0) to node [] {} (t0);
\draw [->,thick, bend left=0] (t0) to node [] {} (p1);
\draw [->,thick, bend left=0] (p1) to node [] {} (t1);
\draw [->,thick, bend left=0] (t1) to node [] {} (p2);
\draw [->,thick, bend left=0] (p2) to node [] {} (t2);
\draw [->,thick, bend left=0] (t2) to node [] {} (p0);

\draw [->,thick, bend left=25] (p0) to node [] {} (t4);
\draw [->,thick, bend left=25] (t4) to node [] {} (p2);
\draw [->,thick, bend left=25] (p2) to node [] {} (t5);
\draw [->,thick, bend left=25] (t5) to node [] {} (p1);
\draw [->,thick, bend left=25] (p1) to node [] {} (t3);
\draw [->,thick, bend left=25] (t3) to node [] {} (p0);

\end{tikzpicture}

\end{minipage}
\hspace*{10mm}
\begin{minipage}{0.23\linewidth}
\centering
\raisebox{1mm}{
\begin{tikzpicture}[scale=0.8,mypetristyle]

\node (p0) at (1.5,2) [place,tokens=1] {};
\node (p1) at (0,0) [place,tokens=1] {};
\node (p2) at (3,0) [place,tokens=1] {};

\node [anchor=south] at (p0.north) {$p_0$};
\node [anchor=north] at (p1.south) {$p_1$};
\node [anchor=north] at (p2.south) {$p_2$};

\node (t0) at (0.4,1.2) [transition,thick] {};
\node (t1) at (1.5,-0.6) [transition,thick] {};
\node (t2) at (2.6,1.2) [transition,thick] {};

\node [anchor=south] at (t0.north) {$t_0$};
\node [anchor=north] at (t1.south) {$t_1$};
\node [anchor=south] at (t2.north) {$t_2$};

\draw [->,thick, bend left=0] (p0) to node [] {} (t0);
\draw [->,thick, bend left=0] (t0) to node [] {} (p1);
\draw [->,thick, bend left=0] (p1) to node [] {} (t1);
\draw [->,thick, bend right=0] (t1) to node [] {} (p2);
\draw [->,thick, bend left=0] (p2) to node [] {} (t2);
\draw [->,thick, bend left=0] (t2) to node [] {} (p0);

\draw [->,thick, bend left=0] (t0) to node [] {} (p0);
\draw [->,thick, bend left=0] (p1) to node [] {} (t0);
\draw [->,thick, bend left=0] (t1) to node [] {} (p1);
\draw [->,thick, bend left=0] (p2) to node [] {} (t1);
\draw [->,thick, bend left=0] (t2) to node [] {} (p2);
\draw [->,thick, bend left=0] (p0) to node [] {} (t2);

\draw [->,thick, bend left=0] (t0) to node [] {} (p2);
\draw [->,thick, bend left=0] (p2) to node [] {} (t0);

\end{tikzpicture}
}
\end{minipage}

%\end{tabular}

%\vspace*{-2mm}

\caption{
The system in the middle is a \PCMGineq{} but not an AMG: in the underlying marked graph $G$, there is no path connecting the transitions.
The system on the right is an AMG but not a \PCMGineq{}: 
$t_0$ connects $3$ shared places, thus the system cannot be obtained 
from an undirected graph as in the definition of \PCMGineq{}.
}

\label{AMGvsPCMG}

\end{figure}
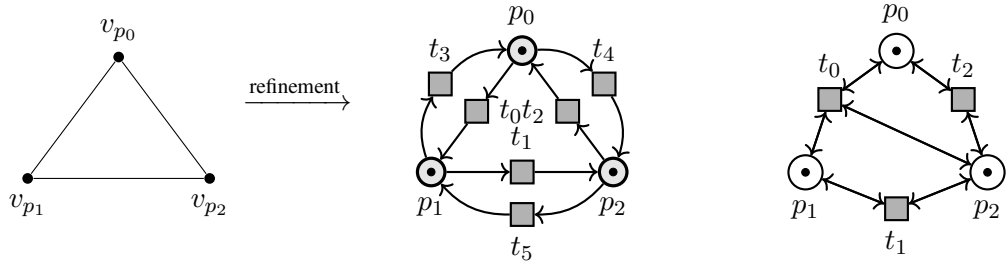

Deterministically synchronized sequential processes (DSSP), introduced in~\cite{DSSP98}, 
aim at modelling several agents that cooperate through asynchronous message passing,
in a modular way, each module representing an agent.
%
%These nets benefit from strong structural and behavioral properties, including the directedness of their reachability graph.
%
We do not study DSSP in this work, whose formal definition, together with examples, can be found in~\cite{DSSP98}.
This unit-weighted class does not contain all state machines %(i.e. the nets in which each transition has at most one input and one output, also called P-nets) 
even with a single shared place,
nor all the AMG, the H$1$S-\WMGineq{} and the \PCMGineq{}.
However, it is worth mentioning that each HFC system can be transformed into a DSSP system
with the same set of feasible sequences, as shown in~\cite{DSSP98}.
%and on the right of Figure~\ref{FigComparison}.

Examples are pictured in Figure~\ref{FigComparison} and the inclusion relation between the main subclasses
studied in this paper is depicted in Figure~\ref{DAGclasses}.\\

\begin{figure}[!h]
 
\begin{tabular}{cccc}
 
\begin{minipage}[b]{0.32\linewidth}

\centering

\begin{tikzpicture}[mypetristyle,scale=0.7]

\node (p1) at (0,0) [place,thick] {};
\node (p2) at (2,0) [place,thick] {};
\node (p3) at (4,0) [place,thick] {};

\node [anchor=west] at (p1.east) {$p_1$};
\node [anchor=north] at (p2.south) {$p_2$};
\node [anchor=east] at (p3.west) {$p_3$};

\node (t1) at (0,1.2) [transition,thick] {};
\node (t2) at (4,1.2) [transition,thick] {};
\node (t3) at (0,-1.2) [transition,thick] {};
\node (t4) at (4,-1.2) [transition,thick] {};

\node [anchor=south] at (t1.north) {$t_1$};
\node [anchor=south] at (t2.north) {$t_2$};
\node [anchor=north] at (t3.south) {$t_3$};
\node [anchor=north] at (t4.south) {$t_4$};

\draw [->,thick] (p1) to node [left] {} (t1);
\draw [->,thick] (p2) to node [above right] {$2$} (t1);
\draw [->,thick] (p2) to node [above left] {$2$} (t2);
\draw [->,thick] (p3) to node [right] {} (t2);

\draw [->,thick] (p3) to node [right] {} (t2);
\draw [->,thick] (t4) to node [left] {$3$} (p3);
\draw [->,thick] (t3) to node [right] {} (p1);
\draw [->,thick] (t3) to node [right] {} (p2);

\end{tikzpicture}\\
\centering
\footnotesize
H$1$S-\WMGineq{}\\
~
\end{minipage}
~
&
~~
\begin{minipage}[b]{0.12\linewidth}

\centering

\begin{tikzpicture}[mypetristyle,scale=0.7]

\node (p) at (1,1) [place,thick] {};

\node (t1) at (0,2) [transition,thick] {};
\node (t2) at (2,2) [transition,thick] {};
\node (t3) at (0,0) [transition,thick] {};
\node (t4) at (2,0) [transition,thick] {};

\node [anchor=south] at (t1.north) {$t_1$};
\node [anchor=south] at (t2.north) {$t_2$};
\node [anchor=north] at (t3.south) {$t_3$};
\node [anchor=north] at (t4.south) {$t_4$};

\draw [->,thick] (p) to node [below, near end] {$2$} (t1);
\draw [->,thick] (p) to node [below, near end] {$2$} (t2);
\draw [->,thick] (t3) to node [left] {} (p);
\draw [->,thick] (t4) to node [right] {} (p);

\end{tikzpicture}\\
\centering
\footnotesize
~~~HFC\\
~
\end{minipage}
&
\begin{minipage}[b]{0.13\linewidth}
\raisebox{1.7cm}{ %$\xrightarrow{\textrm{simulated by}}{}{}$}
\begin{tikzpicture}[decoration={snake,amplitude=.3mm,segment length=2mm,post length=1mm}]
\draw[->,thick,decorate] (0,-1cm) to node [above] {\footnotesize simulated by~\,}  ++(1.8,0);
\end{tikzpicture}
}
\end{minipage}
&
\raisebox{3.5mm}{
\begin{minipage}[b]{0.22\linewidth}

\centering

\begin{tikzpicture}[mypetristyle,scale=0.7]

\node (p) at (1,1) [place,thick] {};
\node (p1) at (1,2) [place,thick,tokens=1] {};
\node (p2) at (-1,0) [place,thick,tokens=1] {};
\node (p3) at (3,0) [place,thick,tokens=1] {};

\node [anchor=north] at (p.south) {$p$};
\node [anchor=south] at (p1.north) {$p_1$};
\node [anchor=north] at (p2.south) {$p_2$};
\node [anchor=north] at (p3.south) {$p_3$};
%\node [anchor=north] at (p4.south) {$p_4$};

\node (t1) at (0,2) [transition,thick] {};
\node (t2) at (2,2) [transition,thick] {};
\node (t3) at (0,0) [transition,thick] {};
\node (t4) at (2,0) [transition,thick] {};

\node [anchor=south] at (t1.north) {$t_1$};
\node [anchor=south] at (t2.north) {$t_2$};
\node [anchor=north] at (t3.south) {$t_3$};
\node [anchor=north] at (t4.south) {$t_4$};

\draw [->,thick] (p) to node [below, near end] {$2$} (t1);
\draw [->,thick] (p) to node [below, near end] {$2$} (t2);
\draw [->,thick] (t3) to node [left] {} (p);
\draw [->,thick] (t4) to node [right] {} (p);

\draw [->,thick,bend left=0] (p1) to node [left] {} (t1);
\draw [->,thick,bend left=0] (t1) to node [left] {} (p1);

\draw [->,thick,bend left=0] (p1) to node [left] {} (t2);
\draw [->,thick,bend left=0] (t2) to node [left] {} (p1);

\draw [->,thick,bend left=0] (p2) to node [left] {} (t3);
\draw [->,thick,bend left=0] (t3) to node [left] {} (p2);

\draw [->,thick,bend left=0] (p3) to node [left] {} (t4);
\draw [->,thick,bend left=0] (t4) to node [left] {} (p3);

\end{tikzpicture}\\
\centering
\footnotesize
~~~~DSSP
\end{minipage}
}

\end{tabular}

\caption{The net on the left is an H$1$S-\WMGineq{}, the one in the middle is an HFC and simulated by the one on the right, which is a DSSP.}

\label{FigComparison}

\end{figure}
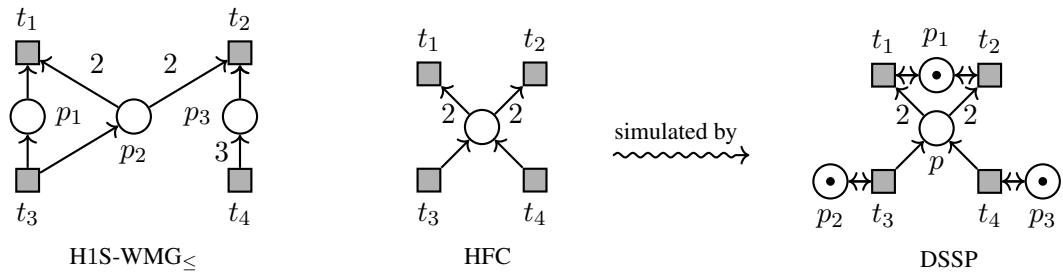

\begin{figure}[htb] 
\centering

\tikzstyle{VertexStyle} = []%[shape = ellipse, minimum width = 5ex, draw]
\tikzstyle{EdgeStyle}   = [->,>=stealth']      

\begin{tikzpicture}[scale=1.35]
%    \SetGraphUnit{2.5} 
    \Vertex[x=3,y=0]{MG}
    { \tikzstyle{VertexStyle} = [shape = rectangle, draw]  
    \Vertex[x=-0.5,y=1.8]{AMG} 
    \Vertex[x=3,y=0.9,L=\WMGineq{}]{WMG}
    \Vertex[x=3,y=1.8,L=H$1$S-\WMGineq{}]{1-HEWMG}
    \Vertex[x=1,y=1.8,L=\PCMGineq{}]{PCMG}
    }
    { \tikzstyle{VertexStyle} = [shape = rectangle, draw, dashed]  
    \Vertex[x=6,y=2.7]{HFC}     
    }
    \Vertex[x=1,y=2.7,L=S-\WMGineq{}]{EWMG} %\Vertex[x=2,y=2]{CF}
    \Vertex[x=7.5,y=2.7]{DSSP}
    \Edges(MG.north west,AMG,EWMG)
    \Edges(MG,PCMG,EWMG)
    \Edges(MG,WMG,1-HEWMG,EWMG) 
    \Edges(WMG,HFC)
    \Edges(MG.north east,DSSP)
    \tikzstyle{EdgeStyle}  = [->, inner sep=3pt, decorate, decoration={snake,amplitude=.3mm,segment length=2mm,post length=1mm}] 
%    \tikzset{LabelStyle/.style= {}}
    \Edges(HFC,DSSP)
    \node[](simby)at(6.65,2.9){\small sim.};
    \node[](simby2)at(6.65,2.5){\small by};
\end{tikzpicture}

\vspace*{3mm}

\caption{
Inclusion of several classes mentioned in this paper.
We focus on the boxed classes, for which we develop new results;
the HFC class is also studied to a smaller extent (dashed box).
Straight arrows represent the inclusion relation: MG are included in AMG, \PCMGineq{}, \WMGineq{} and DSSP,  
\WMGineq{} form a subclass of H$1$S-\WMGineq{} and HFC nets, 
while S-\WMGineq{}, HFC and DSSP nets are incomparable,
and 
AMG, \PCMGineq{} and H$1$S-\WMGineq{} are also incomparable.
%
%while CFs and H$1$S-\WMGineq{} are incomparable.
However,
the wavy arrow represents the possibility of transforming each HFC into a DSSP with the same
set of transitions and feasible sequences. %Otherwise, HFCs and DSSP are incomparable.
We do not depict the transitive closure of the inclusion relation for the sake of readability.
} 
\label{DAGclasses}
\end{figure}

In the following sections, we develop new conditions ensuring the PR-R equality in weighted Petri nets
and some of their subclasses mentioned above, namely H$1$S-\WMGineq{}, AMG and \PCMGineq{};
we also consider the HFC class to a smaller extent.
To achieve it,
we introduce the notion of {\em directedness} of the potential reachability graph, together with variants, in the next section.

Previous works exist that study properties related to the state equation in other classes,
which do not contain our classes or do not tackle the PR-R equality problem,
as summarized in the related work at the end of this paper.

\section{Directedness}\label{PotentialReachDirected}

In this section, we first introduce the notion of {\em directedness} of the potential reachability graph,
with variants, extracted from~\cite{STECS,DSSP98,ArxivSSP20}.
%
%We prove a general property relating directedness to liveness.
%
Then, we present an overview of the classes from the literature that benefit from directedness,
including the persistent class for which a stronger form of directedness exists, embodied by Keller's theorem.

\subsection{Directedness and variants}

\vspace*{3mm}

\begin{definition}[Directedness of the potential reachability graph]
Let us consider any system $S = (N,M_0)$ and its potential reachability graph $PRG(S)$:\\
$-$ $PRG(S)$ is \emph{directed} if every two potentially reachable markings have a common reachable marking.
More formally: $\forall M_1, M_2 \in PR(S)$: $R((N,M_1)) \cap R((N,M_2)) \neq \emptyset$.\\
$-$ $PRG(S)$ is \emph{initially directed} if $\forall M_1 \in PR(S): R(S) \cap R((N,M_1)) \neq \emptyset$.
%
%\qed
\end{definition}

The directedness of $PRG(S)$ is called {\em structural directedness} in~\cite{LenderProc2006}.

We shall also consider the particular case of directedness restricted to the reachability graph,
i.e. when every two reachable markings have a common reachable marking.

Figure~\ref{FigDefDirectedness} illustrates these properties.

\begin{figure}[!ht]
 \centering

\begin{tabular}{ccc}
 
\begin{minipage}{0.3\linewidth}

\centering

\begin{tikzpicture}[scale=0.75]
\node[ltsNode,label=below:$M_0$](n0)at(1,0){};
\node[ltsNode,label=left:$M_1$](n1)at(0,1){};
\node[ltsNode,label=right:$M_2$](n2)at(2,1){};
\node[ltsNode,label=above:$M$](n3)at(1,2){};
\draw[-{>[scale=2.5,length=2,width=2]},dashed](n0)to node[auto,swap]{}(n1);
\draw[-{>[scale=2.5,length=2,width=2]},dashed](n0)to node[auto,swap]{}(n2);
\draw[-{>[scale=2.5,length=2,width=2]}](n1)to node[auto,swap]{}(n3);
\draw[-{>[scale=2.5,length=2,width=2]}](n2)to node[auto,swap]{}(n3);
\end{tikzpicture}\\
\centering
\footnotesize
Directedness of $PRG(S)$
\end{minipage}
&
%\hspace*{5mm}
%
\begin{minipage}{0.3\linewidth}
 \centering

\begin{tikzpicture}[scale=0.75]
\node[ltsNode,label=below:$M_0$](n0)at(1,0){};
\node[ltsNode,label=left:$M_1$](n1)at(0,1){};
\node[ltsNode,label=right:$M_2$](n2)at(2,1){};
\node[ltsNode,label=above:$M$](n3)at(1,2){};
\draw[-{>[scale=2.5,length=2,width=2]}](n0)to node[auto,swap]{}(n1);
\draw[-{>[scale=2.5,length=2,width=2]}](n0)to node[auto,swap]{}(n2);
\draw[-{>[scale=2.5,length=2,width=2]}](n1)to node[auto,swap]{}(n3);
\draw[-{>[scale=2.5,length=2,width=2]}](n2)to node[auto,swap]{}(n3);
\end{tikzpicture}\\
\centering
\footnotesize
Directedness of $RG(S)$
\end{minipage}
&
\begin{minipage}{0.3\linewidth}
% \centering

\begin{tikzpicture}[scale=0.75]
\node[ltsNode,label=below:$M_0$](n0)at(1,0){};
\node[ltsNode,label=left:$M_1$](n1)at(0,1){};
%\node[ltsNode,label=right:$M_2$](n2)at(2,1){};
\node[ltsNode,label=above:$M$](n3)at(1,2){};
\draw[-{>[scale=2.5,length=2,width=2]},dashed](n0)to node[auto,swap]{}(n1);
\draw[-{>[scale=2.5,length=2,width=2]}](n0)to node[auto,swap]{}(n3);
\draw[-{>[scale=2.5,length=2,width=2]}](n1)to node[auto,swap]{}(n3);
\end{tikzpicture}\\
\centering
\footnotesize
Initial directedness of $PRG(S)$
\end{minipage}
\end{tabular}

\vspace*{3mm}

\caption{
Variants of directedness.
}
\label{FigDefDirectedness}
\end{figure}
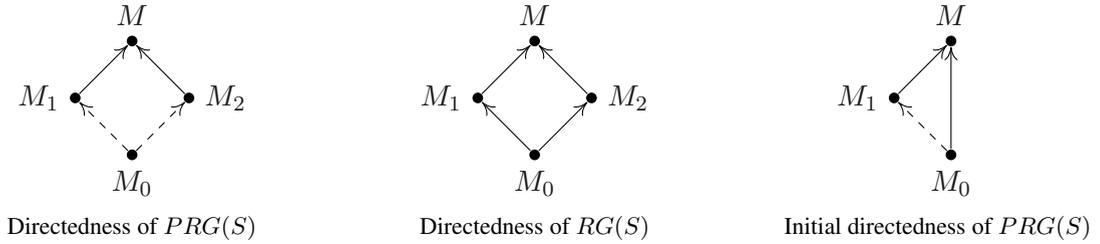

Notice that initial directedness does not imply directedness, as examplified by Figure~\ref{InitDirNotDir}.\\

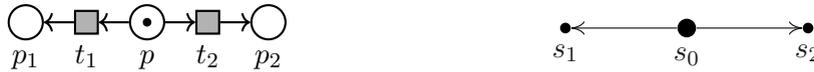
\begin{figure}[!ht]
\centering
\vspace*{1cm}
\begin{minipage}{0.45\linewidth}
\centering
\raisebox{0mm}{
\begin{tikzpicture}[scale=0.8,mypetristyle]

\node (p) at (0,0) [place,tokens=1] {};
\node (p1) at (-2,0) [place] {};
\node (p2) at (2,0) [place] {};

\node [anchor=north] at (p.south) {$p$};
\node [anchor=north] at (p1.south) {$p_1$};
\node [anchor=north] at (p2.south) {$p_2$};

\node (t1) at (-1,0) [transition,thick] {};
\node (t2) at (1,0) [transition,thick] {};

\node [anchor=north] at (t1.south) {$t_1$};
\node [anchor=north] at (t2.south) {$t_2$};

\draw [->,thick, bend left=0] (p) to node [] {} (t1);
\draw [->,thick, bend left=0] (t1) to node [] {} (p1);
\draw [->,thick, bend left=0] (p) to node [] {} (t2);
\draw [->,thick, bend right=0] (t2) to node [] {} (p2);

\end{tikzpicture}
}
\end{minipage}
\begin{minipage}{0.45\linewidth}
\centering

\begin{tikzpicture}[scale=0.8]
\node[ltsNode,minimum width=7pt,label=below:$s_0$](s0)at(0,0){};
\node[ltsNode,label=below:$s_1$](s1)at(-2,0){};
\node[ltsNode,label=below:$s_2$](s2)at(2,0){};
\draw[-{>[scale=2.5,length=2,width=2]}](s0)to node[auto,swap]{}(s1);
\draw[-{>[scale=2.5,length=2,width=2]}](s0)to node[auto,swap]{}(s2);
\end{tikzpicture}

\end{minipage}

%\end{tabular}

\caption{
On the left, a system $S$. The LTS on the right represents both $RG(S)$ and $PRG(S)$.
The latter is initially directed but not directed.
}

\label{InitDirNotDir}

\end{figure}

\subsection{Initial directedness and strong liveness}

\vspace*{3mm}

A system $(N,M_0)$ is {\em strongly live} if, for each potentially reachable marking $M$, $(N,M)$ is live.
We~recall next lemma. 

\begin{lemma}[Strong liveness \cite{ArxivSSP20}]\label{LiveAndDirected}
Consider a live system $S$. 
If $PRG(S)$ is initially directed, then $S$ is strongly live.
\end{lemma}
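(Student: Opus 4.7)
The plan is to show that every transition $t$ is live in $(N, M_1)$ for an arbitrary $M_1 \in PR(S)$, which is the definition of strong liveness. The key auxiliary observation is that $PR(S)$ is closed under actual firings: if $M_1 \in PR(S)$, so $M_1 = M_0 + I \cdot Y_1$, and $M \in R((N, M_1))$, so $M = M_1 + I \cdot \Parikh(\sigma)$ for some firing sequence $\sigma$, then $M = M_0 + I \cdot (Y_1 + \Parikh(\sigma))$, hence $M \in PR(S)$.

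First, I would fix an arbitrary $M_1 \in PR(S)$, an arbitrary $M \in R((N, M_1))$, and an arbitrary transition $t \in T$; the goal is to produce some $M' \in R((N,M))$ that enables $t$. By the closure observation above, $M$ itself belongs to $PR(S)$. Applying initial directedness to $M$, there exists a marking $M^* \in R(S) \cap R((N, M))$. Since $S$ is live, transition $t$ is live in $S$, so from the reachable marking $M^* \in R(S)$ one can reach some marking $M'$ that enables $t$. Because $M^* \in R((N, M))$, transitivity of reachability yields $M' \in R((N, M))$, as desired.

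Since $t$ and $M \in R((N, M_1))$ were arbitrary, $(N, M_1)$ is live; since $M_1 \in PR(S)$ was arbitrary, $S$ is strongly live. The proof is short and essentially mechanical once the closure of $PR(S)$ under firings is noticed. The only conceptual subtlety, and thus the step I would double-check most carefully, is the need to apply initial directedness at the intermediate marking $M$ rather than at $M_1$ itself: applying it at $M_1$ would only give a common successor of $M_1$ and $M_0$, whereas liveness of $(N,M_1)$ requires going through every marking reachable from $M_1$, not just $M_1$. This works because $PR(S)$ is preserved by firings, so $M$ is itself a legitimate input to the initial directedness hypothesis.
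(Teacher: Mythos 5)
Your proof is correct. The key steps all check out: $PR(S)$ is indeed closed under firings (adding the Parikh vector of a feasible sequence to a nonnegative solution $Y_1$ of the state equation yields another nonnegative solution), so initial directedness can legitimately be applied at the intermediate marking $M$, and from the common successor $M^*\in R(S)\cap R((N,M))$ the liveness of $S$ hands you a marking enabling $t$ that is still reachable from $M$. Note that the paper itself only recalls this lemma from~\cite{ArxivSSP20} without reproducing a proof, so there is nothing to compare against; your argument is a valid, self-contained justification, and you correctly identified the one subtle point (applying the hypothesis at $M$ rather than at $M_1$).
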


We know that the converse of Lemma~\ref{LiveAndDirected} does not hold, 
even in the class of ordinary asymmetric-choice Petri nets~\cite{ArxivSSP20}.

\subsection{Known classes with directed reachability graph}

\vspace*{3mm}

The potential reachability graph of live HFC systems is known to be directed (Theorem 12 in \cite{STECS});
the reachability graph of live DSSP is directed as well (Theorem 4 in \cite{DSSP98}).\\

Persistent systems have a strong restriction on their bahavior: no transition firing can disable any other transition.
They do not include all HFC nor all DSSP systems,
and their reachability graph is directed in a stronger form, as expressed by Keller's theorem below.
We first need to recall the notion of residues, on which this theorem is based.

\begin{definition}[(Left) Residue]\label{residues.def}
Let $T$ be a set of labels (typically, transitions)
and 
$\tau,\sigma\in T^*$
two sequences over this set.
The {\em (left) residue of $\tau$ with respect to $\sigma$}, 
denoted by $\tau\pminus\sigma$,
arises from cancelling successively
in $\tau$ the leftmost occurrences of all symbols from $\sigma$, read from left to right.
Inductively:
$\tau\pminus\emptyseq=\tau$;
$\tau\pminus t=\tau$ if $t\notin\support(\tau)$;
$\tau\pminus t$ is the sequence obtained 
by erasing the leftmost $t$ in $\tau$ if $t\in\support(\tau)$;
and 
$\tau\pminus(t\sigma)=(\tau\pminus t)\pminus\sigma$.

For example, 
$acbcacbc\pminus abbcb=cacc$ 
and 
$abbcb\pminus acbcacbc=b$.

Residues naturally extend to T-vectors as follows:
for any sequence $\sigma$ and T-vector $Y$,
$\sigma \pminus Y$ is $\sigma$ in which,
for each transition $t$ in $\support(Y)$,
the $\min\{P(\sigma)(t),Y(t)\}$ leftmost occurrences of $t$ have been removed.
\end{definition}

\begin{theorem}[Keller \cite{keller}]\label{kellersTheorem}
Let $S$ be a persistent system. % deterministic and 
Let $\tau$ and $\sigma$ be two sequences feasible in $S$.
Then $\tau(\sigma\pminus\tau)$ and $\sigma(\tau\pminus\sigma)$ are both feasible in $S$ and lead to the same marking.
\end{theorem}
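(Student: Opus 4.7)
My plan is to prove the statement by induction on $|\tau|$, with the equal-marking conclusion reduced to a Parikh-vector argument.

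Before starting the main induction, I would make two preliminary observations. First, a direct computation on residues shows that both $\Parikh(\tau(\sigma \pminus \tau))$ and $\Parikh(\sigma(\tau \pminus \sigma))$ equal the componentwise maximum $\max(\Parikh(\tau),\Parikh(\sigma))$; hence, as soon as both sequences are feasible from $M_0$, the state equation $M = M_0 + I \cdot Y$ forces them to reach the same marking. It therefore suffices to prove feasibility. Second, I would prove two commutation lemmas using persistence: (i) the diamond property, that if $t_1 \neq t_2$ are both enabled at some marking $M$, then $t_1 t_2$ and $t_2 t_1$ are both feasible and yield the same marking (immediate from persistence); and (ii) by induction on $|\rho|$, that if $t$ is enabled at $M$, $\rho$ is feasible from $M$, and $t \notin \support(\rho)$, then $t\rho$ and $\rho t$ are both feasible from $M$ and reach the same marking.

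The main induction on $|\tau|$ then proceeds as follows. The base case $\tau = \varepsilon$ is trivial since both sides reduce to $\sigma$. For the inductive step, write $\tau = t\tau'$ and let $M_1$ be the marking obtained by firing $t$ from $M_0$. I split into two cases. If $t \notin \support(\sigma)$, commutation lemma (ii) gives $\sigma$ feasible from $M_1$; then the induction hypothesis applied at $M_1$ to $\tau'$ and $\sigma$ yields the feasibility of $\tau'(\sigma\pminus\tau')$ and $\sigma(\tau'\pminus\sigma)$ from $M_1$, which I recombine with the initial $t$ (and use $\sigma\pminus t = \sigma$, $(t\tau')\pminus\sigma = t(\tau'\pminus\sigma)$) to obtain the two desired feasible sequences from $M_0$. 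If $t \in \support(\sigma)$, I write $\sigma = \sigma_1 t \sigma_2$ with $t\notin\support(\sigma_1)$; commutation lemma (ii) applied to $t$ and $\sigma_1$ shows that $t\sigma_1$ is feasible from $M_0$ and reaches the same marking as $\sigma_1 t$, so $\sigma_1\sigma_2$ is feasible from $M_1$. The induction hypothesis at $M_1$ applied to $\tau'$ and $\sigma_1\sigma_2$ then delivers the feasibility of $\tau'((\sigma_1\sigma_2)\pminus\tau')$ and $(\sigma_1\sigma_2)(\tau'\pminus(\sigma_1\sigma_2))$ from $M_1$.

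The final bookkeeping step is to verify the identities of residues
\[ \sigma \pminus (t\tau') = (\sigma_1\sigma_2)\pminus\tau', \qquad (t\tau')\pminus\sigma = \tau' \pminus (\sigma_1\sigma_2), \]
which follow directly from the inductive definition of $\pminus$ and the fact that the leftmost $t$ in both $\sigma$ and $t\tau'$ gets cancelled first; combined with the commutativity $t\sigma_1 \equiv \sigma_1 t$ (same reachable marking), this yields $t\tau'((\sigma_1\sigma_2)\pminus\tau')$ feasible from $M_0$ and $\sigma_1 t \sigma_2 (\tau'\pminus(\sigma_1\sigma_2)) = \sigma((t\tau')\pminus\sigma)$ feasible from $M_0$, completing the inductive step. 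The main obstacle I expect is this last bookkeeping in Case B, where one must carefully align the leftmost-cancellation semantics of $\pminus$ with the commutation of $t$ across $\sigma_1$ so that the induction hypothesis on $\tau'$ can be applied to exactly the right residue; the symmetric roles of $\tau$ and $\sigma$ also help, since once one of the two feasibilities is established the other can be obtained by the same argument.
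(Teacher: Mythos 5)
The paper does not prove this statement: it is imported verbatim from the literature as Theorem~\ref{kellersTheorem} (Keller's theorem, \cite{keller}), so there is no in-paper proof to compare against. Judged on its own, your proof is correct and complete. The reduction of the equal-marking claim to the Parikh identity $\Parikh(\tau(\sigma\pminus\tau))=\max(\Parikh(\tau),\Parikh(\sigma))=\Parikh(\sigma(\tau\pminus\sigma))$ plus the state equation is sound, so only feasibility needs proving; your commutation lemma (ii) follows from the diamond property by induction on $|\rho|$; and the two cases of the main induction on $|\tau|$ check out, including the residue identities $\sigma\pminus(t\tau')=(\sigma_1\sigma_2)\pminus\tau'$ and $(t\tau')\pminus\sigma=\tau'\pminus(\sigma_1\sigma_2)$ in Case~B (the leading $t$ of $t\tau'$ survives $\pminus\sigma_1$ because $t\notin\support(\sigma_1)$ and is then consumed by the $\pminus t$ step, while $\sigma\pminus t=\sigma_1\sigma_2$ because the leftmost $t$ of $\sigma$ is the one separating $\sigma_1$ from $\sigma_2$). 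One point worth making explicit if you write this up in full: the induction hypothesis is invoked at the marking $M_1$, so the statement being inducted on should be quantified over all reachable markings; this is harmless because every marking reachable from $M_1$ is reachable from $M_0$, hence $(N,M_1)$ inherits persistence. This is essentially Keller's original argument.
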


Keller's theorem applies to \WMGineq{} and the larger class of CF nets,
since they are structurally persistent (each place having at most one output).\\

In the next section, we exploit directedness to develop our first general condition ensuring the PR-R equality
in weighted Petri nets.

\section{Reverse nets, properties and the PR-R equality}\label{PRRrevDir}

In order to study the relationship between reachability and potential reachability,
we introduce the notion of reverse nets and sequences.
We also introduce related notation and behavioral properties, and develop new relations between these properties.

Then, we relate reversibility of a system and of its reverse to initial directedness,
yielding a new general sufficient condition of PR-R equality for weighted Petri nets.
We recall methods checking its reversibility assumption in weighted subclasses of Petri nets.

We deduce a sufficient condition of PR-R equality for the live HFC subclass and a polynomial-time variant of it.
Finally, we recall a liveness characterization for CF nets,
which will prove useful in the study of \WMGineq{}.

\subsection{Reverse nets and properties}

\vspace*{2mm}

\begin{definition}[Reverse nets, systems and sequences]\label{rev.def}
The {\em reverse} of a net $N$, denoted by $-N$, is obtained from $N$ by reversing all the arcs while keeping the weights.
The {\em reverse} of a system $S=(N,M_0)$, denoted by $-S$, is the system $(-N,M_0)$.
We denote by $\sigma^{\triangleleft}$ the sequence $\sigma$ followed in reverse order, called its {\em reverse}.
For example, if $\sigma = t_1 t_2 t_2 t_3$, then $\sigma^{\triangleleft} = t_3 t_2 t_2 t_1$. 
\end{definition}

The notation $-N$ stems from the fact that the incidence matrix of the reverse of $N$
is the opposite $-I$ of the incidence matrix $I$ of $N$,
so that $-I + I$ is null.

\begin{definition}[Properties $\mathcal{L}$, $\mathcal{R}$ and $\mathcal{B}$]
A system $S$ fulfills property $\mathcal{L}$ if $S$ and $-S$ are live;
it fulfills property $\mathcal{R}$ if $S$ and $-S$ are reversible;
it fulfills property $\mathcal{B}$ if $S$ and $-S$ are bounded.
\end{definition}

We assume that each Petri net has at least one transition. 
Next lemmas relate properties of a system to the same properties in its reverse,
and will prove useful in the study of subclasses.

\begin{lemma}[Properties $\mathcal{R}$ and $\mathcal{B}$]\label{RevReverseIsBounded}
Let us suppose that a system $S$ fulfills property~$\mathcal{R}$.
Then: $S$ is bounded iff $-S$ is bounded.
\end{lemma}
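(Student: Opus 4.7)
The plan is to show that $\mathcal{R}$ forces the reachability sets of $S$ and $-S$ to coincide, after which the equivalence of boundedness is immediate (since for a Petri net with finitely many places, boundedness is equivalent to the reachability set being finite).

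First, I would establish the \emph{firing bijection between $S$ and $-S$}: for any markings $M, M'$ and any transition $t$,
\[
M [t\rangle M' \text{ in } -S \iff M' [t\rangle M \text{ in } S .
\]
This is a direct computation from Definition~\ref{rev.def}. In $-S$, the weight on the arc from $p$ to $t$ equals $W(t,p)$, so $t$ is enabled at $M$ in $-S$ iff $M(p) \ge W(t,p)$ for every $p$; firing then yields $M + (-I)[P,t] = M - I[P,t] = M'$. On the other side, $t$ is enabled at $M'$ in $S$ iff $M'(p) \ge W(p,t)$ for every $p$, and since $M'(p) = M(p) - W(t,p) + W(p,t)$, this condition is equivalent to $M(p) \ge W(t,p)$, i.e.\ to the enabling condition in~$-S$. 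The computation $M = M' + I[P,t]$ closes the equivalence. By a straightforward induction on length, this extends to sequences:
\[
M_0 [\sigma\rangle M \text{ in } -S \iff M [\sigma^{\triangleleft}\rangle M_0 \text{ in } S .
\]

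Next, I would use this equivalence to derive $R(S) = R(-S)$ from property~$\mathcal{R}$. Suppose $M \in R(S)$. Reversibility of $S$ provides a sequence $\tau$ with $M [\tau\rangle M_0$ in $S$; by the bijection, $M_0 [\tau^{\triangleleft}\rangle M$ in $-S$, so $M \in R(-S)$. Conversely, if $M \in R(-S)$, reversibility of $-S$ gives a sequence $\rho$ with $M [\rho\rangle M_0$ in $-S$; applying the bijection once more (now to $-S$, whose reverse is $S$) yields $M_0 [\rho^{\triangleleft}\rangle M$ in $S$, hence $M \in R(S)$. Thus $R(S) = R(-S)$.

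Finally, since a Petri net with finitely many places is bounded iff its reachability set is finite, the equality $R(S) = R(-S)$ gives the desired equivalence: $S$ is bounded iff $R(S)$ is finite iff $R(-S)$ is finite iff $-S$ is bounded. I do not expect any real obstacle here; the only subtle point is the small enabling computation in the firing bijection, and keeping track of the directions of sequences when invoking the two reversibility hypotheses symmetrically.
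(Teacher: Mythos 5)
Your proof is correct and follows essentially the same route as the paper's: both arguments use reversibility to close any reachable marking into a cycle through $M_0$ and then reverse that cycle to make the same marking reachable in the other net. You package this slightly more explicitly (an explicit firing bijection and the intermediate equality $R(S)=R(-S)$, which is in fact a stronger statement than needed), but the underlying mechanism is identical to the paper's argument.
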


\begin{proof}
If $-S$ is reversible and unbounded, consider an unbounded place $p$ in $-A$:
for each $k$, there exists a sequence $\sigma_k$ in $-A$ that visits a marking $M_k$ such that $M_k(p) \ge k$  
and comes back to $M_0$,
thus $\sigma^{\triangleleft}_k$ is feasible in $A$, visiting the same marking $M_k$.
Thus, $-A$ is bounded.
\end{proof}

We obtain next result.

\begin{lemma}[$LR$ PR markings and property $\mathcal{R}$]\label{LRPRrevReverse}  
Consider a system $S$. Suppose that every potentially reachable marking of $S$ is live and reversible ($LR$).
Then $-S$ is live and reversible. %, so that $S$ fulfills property $\mathcal{R}$.
\end{lemma}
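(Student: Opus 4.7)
The engine is the reversal correspondence implicit in Definition~\ref{rev.def}: a sequence $\sigma$ is fireable in $-S$ with $M_0[\sigma\rangle_{-S} M'$ iff $\sigma^{\triangleleft}$ is fireable in $N$ with $M'[\sigma^{\triangleleft}\rangle_N M_0$; hence $M'\in R(-S)$ iff $M_0\in R_N(M')$, and any $N$-witness of $M_0\in R_N(M')$ reverses to a $-N$-witness of $M_0\in R_{-N}(M')$. Since $M_0 = M_0 + I\cdot\zero$ lies in $PR(S)$, the hypothesis applied to $M_0$ tells us that $S$ itself is live and reversible; the standard construction (for each $t\in T$, fire from $M_0$ a sequence containing $t$ by liveness, then close up to $M_0$ using reversibility, and concatenate over all $t\in T$) produces a sequence $\gamma$ with $M_0[\gamma\rangle_N M_0$ and $\support(\gamma)=T$, so $Y_s := \Parikh(\gamma)$ is a strictly positive T-semiflow of $N$.

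The decisive step is the inclusion $R(-S)\subseteq PR(S)$. Given $M'\in R(-S)$, the reversal correspondence supplies $Y'\ge\zero$ with $M_0 = M' + I\cdot Y'$; choosing $k\in\N$ so that $kY_s\ge Y'$ componentwise and setting $Y := kY_s - Y'\ge\zero$, the identity $I\cdot Y_s = \zero$ yields $M_0 + I\cdot Y = M'$, so $M'\in PR(S)$. The hypothesis therefore makes $(N,M')$ live and reversible. Reversibility of $-S$ follows because $M_0\in R_N(M')$ combined with reversibility of $(N,M')$ gives $M'\in R_N(M_0)$, and the reversal correspondence turns that $N$-witness into a $-N$-sequence from $M'$ to $M_0$. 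Liveness of $-S$ follows by rerunning the concatenation trick inside the live and reversible $(N,M')$: for every transition $t$ we obtain $\gamma_t$ closed at $M'$ in $N$ with $t\in\support(\gamma_t)$, and the reverse $(\gamma_t)^{\triangleleft}$ is fireable in $-N$ from $M'$, closed at $M'$, and fires $t$ at some intermediate marking, witnessing liveness of $t$ at $M'$ in $-S$.

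The principal obstacle is the inclusion $R(-S)\subseteq PR(S)$: reversing a sequence of $-S$ contributes to the state equation on $S$ with the \emph{wrong} sign for potential reachability, and the hypothesis cannot be invoked at $M'$ until $M'$ has first been placed in $PR(S)$. The strictly positive T-semiflow harvested from the liveness and reversibility of $S$ itself is precisely what corrects the sign, and everything else then follows by applying the hypothesis along $R(-S)$ and using the reversal correspondence one last time.
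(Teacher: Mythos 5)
Your proposal is correct and follows essentially the same route as the paper: both arguments place each marking reachable in $-S$ into $PR(S)$ by correcting the sign with the full-support T-semiflow obtained from a closed all-transition sequence of the live and reversible $S$, then invoke the $LR$ hypothesis at that marking and reverse the resulting witness sequences. The only differences are cosmetic (you argue directly where the paper argues by contradiction, and you re-derive liveness of $-S$ at each reachable marking rather than combining reversibility of $-S$ with the single reversed T-sequence $\alpha^{\triangleleft}$).
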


\begin{proof}
Denote by $I$ the incidence matrix of $S$.
Suppose that $-S=(-N,M_0)$ is not reversible: 
consider a marking $M$ reachable in $-S$ with some sequence $\sigma^\triangleleft$
such that $M_0$ is not reachable in $(-N,M)$.
Since $S$ is live and reversible, a sequence $\alpha$ is feasible in $S$
that contains all transitions and leads back to $M_0$, %(such a sequence is called a T-sequence). %TH
hence there exists some positive integer $k$ such that
$M = M_0 + I \cdot Y$ with $Y = k \cdot \Parikh(\alpha) - \Parikh(\sigma) \ge 0$.
Thus, $M$ is potentially reachable in $S$,
so that $(N,M)$ is live and reversible.
Consequently, since $M_0$ is reached from $(N,M)$ by firing $\sigma$,
a sequence $\tau$ is feasible in $(N,M_0)$ that leads to $M$.
We deduce that the sequence $\tau^\triangleleft$ is feasible in $(-N,M)$ and leads to $(-N,M_0)$, contradiction.
Thus $-S$ is reversible. Moreover, $\alpha^\triangleleft$ is feasible in $-S$, which is consequently live.
Hence the claim.
\end{proof}

\subsection{Ensuring the PR-R equality from reversibility and initial directedness}

\vspace*{2mm}

We obtain the next sufficient condition of reachability for the markings in $PR(S)$.
Its proof is illustrated in Figure~\ref{FigProofTh}. 
This new result will be exploited in the sequel to ensure the PR-R equality in Petri net subclasses.

\begin{theorem}[Combining initial directedness with property $\mathcal{R}$]\label{PReachDirect}
Consider a Petri net system $S=(N,M_0)$ satisfying property $\mathcal{R}$
and
such that $PRG(S)$ is initially directed.
Then $R(S) = PR(S)$.
\end{theorem}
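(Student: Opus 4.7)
\medskip

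\noindent\textbf{Proof plan.}

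The inclusion $R(S) \subseteq PR(S)$ is immediate from the definitions, so the plan is to establish the reverse inclusion. I will pick an arbitrary $M \in PR(S)$ and exhibit a firing sequence in $N$ from $M_0$ to $M$, using the two halves of property $\mathcal{R}$ (reversibility of $S$ and reversibility of $-S$) together with initial directedness of $PRG(S)$.

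First I would invoke initial directedness: since $M \in PR(S)$, there exists a marking $M^\star$ that is reachable in $S$ from both $M_0$ and $M$. Fix sequences $\sigma$ and $\tau$ with $M_0 \,[\sigma\rangle\, M^\star$ and $M \,[\tau\rangle\, M^\star$ in $N$. Because $S$ is reversible and $M^\star \in R(S)$, there is a sequence $\rho$ with $M^\star \,[\rho\rangle\, M_0$ in $N$. Concatenating yields the feasible sequence $M \,[\tau\rho\rangle\, M_0$ in $N$; so even though we do not yet know whether $M$ is reachable from $M_0$ in $N$, we do know that $M_0$ is reachable from $M$ in $N$.

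The key bridge is now the duality between sequences in $N$ and sequences in $-N$: whenever $M_1 \,[\alpha\rangle\, M_2$ in $N$, we have $M_2 \,[\alpha^{\triangleleft}\rangle\, M_1$ in $-N$, because reversing all arcs exactly swaps pre- and post-conditions while preserving weights. Applying this to the sequence $\tau\rho$ produced above gives $M_0 \,[(\tau\rho)^{\triangleleft}\rangle\, M$ in $-N$, so $M \in R(-S)$. Now I invoke the second half of property $\mathcal{R}$: reversibility of $-S$ furnishes a sequence $\pi$ with $M \,[\pi\rangle\, M_0$ in $-N$. Applying the duality once more, $M_0 \,[\pi^{\triangleleft}\rangle\, M$ in $-N$'s reverse, which is $N$ itself. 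Hence $M \in R(S)$, completing the proof.

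The only real obstacle is the duality lemma connecting firings in $N$ with firings in $-N$ via sequence reversal; this is a routine verification from the definitions of $-N$, $\sigma^{\triangleleft}$ and of enabling, but it is the single structural fact doing the work, and it is what forces us to use \emph{both} reversibilities. Reversibility of $S$ is needed only to turn the one-way link $M \to M^\star \leftarrow M_0$ into a round trip $M \to M_0$ in $N$; reversibility of $-S$ is then the ingredient that lets us translate the resulting round trip back into a forward path from $M_0$ to $M$ in $N$.
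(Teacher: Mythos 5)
Your proposal is correct and follows essentially the same route as the paper's proof: initial directedness gives the common marking, reversibility of $S$ closes the loop $M \to M^\star \to M_0$ in $N$, reversing that sequence reaches $M$ in $-S$, and reversibility of $-S$ plus one more reversal yields a path from $M_0$ to $M$ in $S$. The only difference is that you state the duality lemma (firing $\alpha$ in $N$ corresponds to firing $\alpha^{\triangleleft}$ in $-N$) explicitly, whereas the paper uses it implicitly.
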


\begin{proof}
The proof is illustrated in Figure~\ref{FigProofTh}.
Consider any marking $M$ potentially reachable from $M_0$.
By initial directedness,
there exists $M' \in R((N,M_0)) \cap R((N,M))$,
with 
feasible sequences $M_0 \xrightarrow[]{\sigma_0} M'$ and $M \xrightarrow[]{\sigma_1} M'$.
The marking $M_0$ is reachable from $M'$ with some sequence $\sigma_2$
since the system is reversible.
Now, let us consider the reverse of these sequences.
In particular, the sequence $\sigma_2^{\triangleleft} \sigma_1^{\triangleleft}$ leads to $M$ in the reverse system $-S$;
since this system is also reversible,
a sequence $\sigma_3$ exists that leads to the initial marking.
In $S$,
$\sigma_3^{\triangleleft}$ leads to $M$, which is thus reachable.
\end{proof}

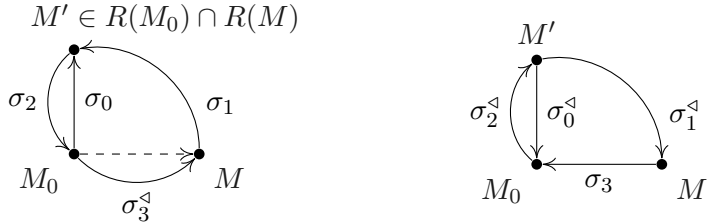
\begin{figure}[!ht]
 \centering
%\begin{tabular}{cc}
 
\begin{minipage}{0.5\linewidth}

\centering

\begin{tikzpicture}[scale=0.55,mypetristyle]
%
%\draw(0.2,2.3)node{$TS_{\theexampleTScounter}\iso RG(\Sigma_{\theexampleTScounter})\iso RG(\Sigma_{\theexampleTScounter}')$};
%\node[minimum size=0cm]()at(0,0){};\node[minimum size=0cm]()at(2,0){};
%\node[minimum size=0cm]()at(2,0){};\node[minimum size=0cm]()at(2,2.5){};
%\node[ltsnst](M0)at(0,1)[label=above left:$M_0$]{};
%\draw (M0) circle (1.5mm);
%
\node[ltsNode,label=below left:$M_0$](n0)at(0,0){};
\node[ltsNode,label=below right:$M$](n1)at(3,0){};
\node[ltsNode,label=above:$~~~~~~~~~~~~~~~~~~~~~~~~~M' \in R(M_0) \cap R(M)$](n2)at(0,2.5){};
\draw[-{>[scale=2.5,length=2,width=2]},dashed](n0)to node[auto,swap]{}(n1);
\draw[-{>[scale=2.5,length=2,width=2]},bend right=45](n0)to node[below]{$\sigma_3^{\triangleleft}$}(n1);
\draw[-{>[scale=2.5,length=2,width=2]}](n0)to node[auto,swap,right]{$\sigma_0$}(n2);
\draw[-{>[scale=2.5,length=2,width=2]},bend right=50](n2)to node[auto,swap,left]{$\sigma_2$}(n0);
\draw[-{>[scale=2.5,length=2,width=2]},bend right=50](n1)to node[right, near start]{~$\sigma_1$}(n2);
\end{tikzpicture}

\end{minipage}
\hspace*{10mm}
\begin{minipage}{0.38\linewidth}

\begin{tikzpicture}[scale=0.55,mypetristyle]
\node[ltsNode,label=below left:$M_0$](n0)at(0,0){};
\node[ltsNode,label=below right:$M$](n1)at(3,0){};
\node[ltsNode,label=above:$M'$](n2)at(0,2.5){};
\draw[-{>[scale=2.5,length=2,width=2]}](n1)to node[auto,swap,below]{$\sigma_3$}(n0);
\draw[-{>[scale=2.5,length=2,width=2]}](n2)to node[auto,swap,right]{$\sigma_0^{\triangleleft}$}(n0);
\draw[-{>[scale=2.5,length=2,width=2]},bend left=50](n0)to node[auto,swap,left]{$\sigma_2^{\triangleleft}$}(n2);
\draw[-{>[scale=2.5,length=2,width=2]},bend left=50](n2)to node[right, near end]{~$\sigma_1^{\triangleleft}$}(n1);
\end{tikzpicture}

\end{minipage}

%\end{tabular}

\vspace*{3mm}

\caption{
Illustration of the proof of Theorem~\ref{PReachDirect}.
Part of the reachability graph of $S$ is depicted on the left.
On the right, sequences in $-S$ are considered.
}
\label{FigProofTh}
\end{figure}

\subsection{Checking reversibility}

\vspace*{2mm}

The reversibility checking problem is PSPACE-hard~\cite{esparza1996decidability}.
However, under the liveness assumption, characterizations of reversibility 
exist for HFC nets that often avoid to explore the reachability graph exhaustively.
Polynomial-time sufficient conditions of liveness and reversibility also exist 
for well-formed HFC nets and join-free (JF) nets (i.e. without synchronizations)~\cite{March09,HDM2016,HD2018}.

We recall the notion of a T-sequence and its importance for reversibility.

\begin{definition}[T-sequence \cite{Hujsa2015,HDM2016}]
Consider a system $S$ whose set of transitions is $T$ and denote by $I$ its incidence matrix.
A firing sequence $\sigma$ of $S$ is a T-sequence 
if it contains all transitions of $T$ (i.e. $\support(\sigma) = T$) and $I \cdot \Parikh(\sigma) = 0$ 
(i.e. $\Parikh(\sigma)$ is a consistency vector).
\end{definition}

In all weighted Petri nets, the existence of a feasible T-sequence is a known necessary condition 
of liveness and reversibility, taken together~\cite{Hujsa2015}.
It has also been proven sufficient for reversibility in live HFC systems, also called Equal-Conflict systems~\cite{HDM2016},
in a proper subclass of the live join-free systems (with an additional constraint on the reachable markings)~\cite{HD2018}
and
in live H$1$S systems~\cite{ArxivSSP20}.

\subsection{Directedness and PR-R equality in HFC systems}\label{SubSecReachHFC}

\vspace*{2mm}

We now consider the special case of the HFC subclass.

\begin{proposition}[Directedness of live HFC systems (Theorem~12 in \cite{STECS})]
Consider any HFC system $S$. If $S$ is live, then $PRG(S)$ is directed.
\end{proposition}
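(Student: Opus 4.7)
Given $M_1, M_2 \in PR(S)$, by definition of potential reachability there exist T-vectors $Y_1, Y_2 \in \N^{|T|}$ with $M_i = M_0 + I \cdot Y_i$. The natural candidate for a common reachable marking is $M^\star = M_0 + I \cdot (Y_1 + Y_2)$, which itself lies in $PR(S)$. My plan is to show that $M^\star$ is reachable both from $M_1$ (via a sequence with Parikh vector $Y_2$) and from $M_2$ (via a sequence with Parikh vector $Y_1$), together with reachability of $M_1$ and $M_2$ themselves; symmetry then yields directedness.

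The first step is to reduce the problem to a scheduling question: given a reachable marking $M$ and a T-vector $Y$ such that $M + I \cdot Z \ge \zero$ for every $Z$ with $\zero \le Z \le Y$, one wishes to build a firing sequence from $M$ with Parikh vector exactly $Y$. I would approach this by induction on the cardinality of $Y$. The inductive step splits into two cases. If some transition $t$ with $Y(t) > 0$ is enabled at $M$, fire it and decrement $Y$. If no demanded transition is enabled, use liveness of $S$: every transition is eventually fireable from $M$, so one can insert an auxiliary subsequence returning the system to a marking where such a $t$ becomes enabled.

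The key difficulty is then to insert this auxiliary subsequence \emph{without disturbing the Parikh target}. This is where the homogeneous free-choice structure intervenes. In HFC systems, two transitions with a common input place have identical input places with identical weights; hence a conflict cluster is enabled as a whole or not at all, and conflicts can be resolved freely. Combined with liveness and Commoner's characterisation (every siphon contains an initially marked trap), this ensures that at each conflict point one can pick the branch prescribed by $Y$ and still keep the system live. The main obstacle will be the bookkeeping required to ensure that over-firings during the auxiliary subsequence can always be ``cancelled'' later within the remaining budget of $Y$, which is the technical core of the argument in \cite{STECS}.

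Finally, to bootstrap the induction one must establish that $M_1$ and $M_2$ are themselves reachable. I would do this by applying the same scheduling argument starting from $M_0$ (which is trivially reachable) with target $Y_1$, respectively $Y_2$, invoking the state-equation constraint $M_0 + I \cdot Z \ge \zero$ along intermediate prefixes via the siphon/trap invariants provided by Commoner's theorem. Once $M_1$ and $M_2$ are known reachable and the scheduling lemma is established, reaching $M^\star$ from each of them is immediate, and directedness of $PRG(S)$ follows.
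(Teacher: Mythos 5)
A preliminary remark: the paper gives no proof of this proposition --- it is quoted as Theorem~12 of \cite{STECS} --- so your sketch must stand on its own, and as written it has gaps that are not mere bookkeeping. The central one is your reduction. The candidate $M^\star=M_0+I\cdot(Y_1+Y_2)$ need not be non-negative: a place holding one token that loses one token under $Y_1$ and one under $Y_2$ gives $M^\star(p)=-1$ while $M_1,M_2\ge\zero$, so $M^\star$ is not a marking. Even in Keller's persistent setting the common successor arises from the componentwise maximum $\Parikh(\sigma)\vee\Parikh(\tau)$, not the sum, and for merely \emph{potentially} reachable $M_1,M_2$ no closed-form target of this kind works: the common successor must in general be built by firing additional transitions (e.g.\ repetitions of T-semiflows), which is precisely the content of the theorem. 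Moreover, reaching $M^\star$ from $M_1$ by a sequence with Parikh vector \emph{exactly} $Y_2$ asserts far more than directedness and is false in general; the ``auxiliary subsequences'' you insert to unblock transitions necessarily perturb the Parikh vector, and a forward-firing net offers no mechanism for ``cancelling'' them afterwards --- this is not a technicality to be deferred, it is where the argument breaks.

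Second, the step ``establish that $M_1$ and $M_2$ are themselves reachable'' is both unnecessary and unprovable. Directedness only requires some marking in $R((N,M_1))\cap R((N,M_2))$; it does not require $M_1,M_2\in R(S)$. And liveness of an HFC system does not imply the PR-R equality: the state equation of live (even bounded, free-choice) systems admits spurious solutions, e.g.\ potentially reachable markings that unmark an initially marked trap \cite{LAT98}; this is exactly why Corollary~\ref{PRRHFC} needs property $\mathcal{R}$ on top of liveness and of this very proposition. The error propagates backwards: your scheduling lemma invokes liveness ``from $M$'', but liveness of $S$ only concerns markings in $R(S)$, so it gives you nothing at the merely potentially reachable $M_1,M_2$ from which you must fire (showing that $(N,M_1)$ remains live is itself a substantial part of the Teruel--Silva argument, cf.\ strong liveness in Lemma~\ref{LiveAndDirected}). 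Finally, the hypothesis of that lemma ($M+I\cdot Z\ge\zero$ for every $\zero\le Z\le Y$) is never verified for $Y_1$, $Y_2$ or the cross-firings, and it does not follow from $M+I\cdot Y\ge\zero$.
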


Applying Theorem~\ref{PReachDirect}, we deduce next result.

\begin{corollary}[PR-R equality in live HFC systems]\label{PRRHFC}
Consider a system $S$ satisfying property $\mathcal{R}$.
If $S$ is live and HFC, then $R(S) = PR(S)$.
\end{corollary}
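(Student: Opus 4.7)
The plan is to combine the two main ingredients that were set up immediately before the statement: on one hand, the proposition (Theorem~12 of \cite{STECS}) asserting that the potential reachability graph of a live HFC system is directed; on the other hand, Theorem~\ref{PReachDirect}, which says that property $\mathcal{R}$ together with initial directedness of $PRG(S)$ implies the PR-R equality.

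First I would note the trivial inclusion $R(S) \subseteq PR(S)$, which holds for every Petri net system since every firing satisfies the state equation. So only the reverse inclusion $PR(S) \subseteq R(S)$ is non-trivial.

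Next, since $S$ is HFC and live, the recalled proposition gives directedness of $PRG(S)$: for every pair $M_1, M_2 \in PR(S)$, the sets $R((N,M_1))$ and $R((N,M_2))$ have a common marking. The key observation, which will give us exactly the hypothesis needed to invoke Theorem~\ref{PReachDirect}, is that directedness implies initial directedness. Indeed, $M_0 \in PR(S)$ (take $Y = \zero$ in the state equation), so specializing the directedness property to $M_2 = M_0$ yields $R(S) \cap R((N,M_1)) \neq \emptyset$ for every $M_1 \in PR(S)$, which is exactly initial directedness of $PRG(S)$.

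Finally, with initial directedness of $PRG(S)$ established and property $\mathcal{R}$ assumed by hypothesis, Theorem~\ref{PReachDirect} applies directly and yields $R(S) = PR(S)$. There is no real obstacle here: the corollary is essentially a packaging of the earlier general theorem with the already-recalled directedness result for live HFC systems, together with the elementary observation that directedness is stronger than initial directedness because $M_0 \in PR(S)$.
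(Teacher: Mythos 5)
Your proof is correct and follows exactly the paper's route: the paper likewise combines the recalled directedness of $PRG(S)$ for live HFC systems with Theorem~\ref{PReachDirect}, leaving implicit the observation (which you spell out) that directedness implies initial directedness because $M_0 \in PR(S)$. No gaps.
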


Consider any well-formed HFC system $S$ whose reverse is also a well-formed HFC.
Theorem $28$ in \cite{STECS} provides a polynomial-time charaterization of well-formedness for HFC nets.
A wide-ranging linear-time sufficient condition of liveness and reversibility 
in well-formed HFC systems is given by Theorem $6.6$ in~\cite{HDM2016}.
Thus, in this subclass, we deduce a polynomial-time sufficient condition of PR-R equality.

\subsection{Liveness of CF systems}

\vspace*{2mm}

We recall the next characterization of liveness for weighted choice-free systems (which form a subclass of the HFC systems)
given as Corollary $4$ in \cite{tcs97}.

\begin{proposition}[Liveness of choice-free systems \cite{tcs97}]\label{LivenessOfCFnets}
Let $(N,M_0)$ be a choice-free system with incidence matrix $I$. 
It is live iff there exist a marking $M \in R((N,M_0))$
and
a firing sequence $\sigma \in L(N,M)$ such that $\Parikh(\sigma) \ge \one$ and $I \cdot \Parikh(\sigma) \ge 0$.
\end{proposition}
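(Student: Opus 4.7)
The plan is to prove both implications by leveraging the persistence of choice-free systems, which allows applying Keller's theorem (Theorem~\ref{kellersTheorem}). The direction from the Parikh condition to liveness relies on iterating the guaranteed sequence, while the converse builds up a suitable sequence by combining transition-reaching sequences and then applying a well-quasi-order argument.

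For the ``only if'' direction, assume $(N,M_0)$ is live. Liveness guarantees, for every transition $t \in T$, a feasible sequence from $M_0$ that contains $t$. Because CF systems are persistent, Keller's theorem lets me merge these per-transition witnesses (via repeated applications to pairs of feasible sequences, each time forming the shuffle $\tau(\rho \pminus \tau)$) into a single feasible sequence $\tau_1$ from $M_0$ with $\Parikh(\tau_1) \ge \one$, reaching some marking $M_1$. Liveness is preserved at $M_1$, so the same construction produces $\tau_2$ feasible from $M_1$ reaching $M_2$ with $\Parikh(\tau_2) \ge \one$, and iterating yields an infinite reachable chain $M_0, M_1, M_2, \ldots$ in $\N^{|P|}$. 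By Dickson's lemma, some pair $i < j$ satisfies $M_i \le M_j$. Setting $M := M_i$ and $\sigma := \tau_{i+1} \cdots \tau_j$ yields a feasible sequence from a reachable marking with $\Parikh(\sigma) \ge \one$ (each $\tau_k$ alone covers all transitions) and $I \cdot \Parikh(\sigma) = M_j - M_i \ge 0$.

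For the ``if'' direction, let $M_0 \xrightarrow{\alpha} M$ witness reachability of $M$, and let $\sigma$ be the given sequence. Since $I \cdot \Parikh(\sigma) \ge 0$, firing $\sigma$ from $M$ leads to a marking $\ge M$, so $\sigma$ is again feasible there; by induction $\sigma^k$ is feasible from $M$ for every $k \ge 1$. Fix any reachable marking $M'$ with $M_0 \xrightarrow{\beta} M'$ and any transition $t \in T$; I aim to exhibit a feasible continuation from $M'$ that fires $t$. Persistence allows Keller's theorem to be applied to $\alpha\sigma^k$ and $\beta$, both feasible from $M_0$, yielding that $\beta(\alpha\sigma^k \pminus \beta)$ is feasible from $M_0$. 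Since $\Parikh(\sigma) \ge \one$, the value $\Parikh(\alpha\sigma^k)(t')$ exceeds $\Parikh(\beta)(t')$ for every $t' \in T$ once $k$ is large enough, so the residue $\alpha\sigma^k \pminus \beta$ still contains every transition, in particular $t$. This residue, fired from $M'$, witnesses liveness of $t$ from $M'$; since $M'$ and $t$ were arbitrary, $S$ is live.

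The main obstacle I expect is the correct execution of Keller-based merging in the ``only if'' direction: combining the per-transition firing sequences into a single one while preserving feasibility from $M_0$ and the Parikh lower bound, and then ensuring that the concatenation $\tau_{i+1}\cdots\tau_j$ extracted via Dickson's lemma remains feasible from $M_i$ with Parikh vector $\ge \one$. The latter is immediate from the stronger property that each $\tau_k$ individually already has Parikh vector $\ge \one$, but it must be made explicit. A secondary delicate point is justifying that ``large enough $k$'' in the ``if'' direction genuinely makes $\alpha\sigma^k \pminus \beta$ cover all transitions, which hinges precisely on $\Parikh(\sigma) \ge \one$ together with the componentwise definition of the residue.
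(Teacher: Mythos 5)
The paper does not prove this proposition: it is imported verbatim as Corollary~4 of~\cite{tcs97}, so there is no in-paper argument to compare yours against. Judged on its own, your proof is correct and self-contained, using only tools the paper makes available. For the ``only if'' direction, the key facts all check out: choice-free nets are structurally persistent (each place has at most one output), so Keller's theorem applies; the merge $\tau(\rho\pminus\tau)$ of two feasible sequences is feasible with Parikh vector $\max(\Parikh(\tau),\Parikh(\rho))$ componentwise, so iterating over per-transition liveness witnesses does yield a feasible $\tau_1$ with $\Parikh(\tau_1)\ge\one$; liveness persists at every reachable marking by definition, so the chain $M_0,M_1,\ldots$ exists; and Dickson's lemma extracts $i<j$ with $M_i\le M_j$, giving $M:=M_i$ reachable, $\sigma:=\tau_{i+1}\cdots\tau_j$ feasible from $M_i$ with $\Parikh(\sigma)\ge\one$ and $I\cdot\Parikh(\sigma)=M_j-M_i\ge 0$. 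For the ``if'' direction, monotonicity gives feasibility of $\sigma^k$ from $M$, and your use of Keller on $\alpha\sigma^k$ and $\beta$ is sound: since $\Parikh(\alpha\sigma^k)(t')\ge k$ for every $t'$, choosing $k>\max_{t'}\Parikh(\beta)(t')$ makes $\Parikh(\alpha\sigma^k\pminus\beta)\ge\one$, and Keller guarantees this residue is feasible from $M'$, establishing liveness of every transition at every reachable marking. The two ``delicate points'' you flag are indeed the right ones to make explicit, and your handling of both is adequate; the argument is a clean alternative to the structural proof in~\cite{tcs97}, trading structure theory for the persistence/Keller/Dickson toolkit already present in the paper.
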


This result will prove useful in the study of \WMGineq{} in the next section.

\section{Reachability properties of \WMGineq{}}\label{ReachWMG}

Live Weighted T-Systems (WTS), in which each place has exactly one input and one output, fulfill the PR-R equality~\cite{WTS92}.
In this section, we extend this result to the live \WMGineq{}, which allow places without inputs and places without output.
With the aim of checking the liveness of a \WMGineq{} (as a precondition),
we also provide new characterizations of liveness for \WMGineq{} and their circuit subclass, as well as properties on their deadlocks.
Finally, we recall several other ways of checking liveness, as well as reversibility and boundedness in \WMGineq{}.

\subsection{Liveness, deadlockability and PR-R equality in \WMGineq{}}

\vspace*{2mm}

To obtain the PR-R equality result, we need the following proposition, which recalls Corollary~$1$ of~\cite{DH18}.

\begin{proposition}[Fireable T-vectors in \WMGineq{} \cite{DH18}]\label{RealizableTvectors}
Let $N = (P,T,W)$ be a \WMGineq{} with incidence matrix $I$. 
Let $M_0$ be any marking and $Y \in \N^T$ be a T-vector such that $M = M_0 + C \cdot Y \ge 0$. 
Let $\sigma$ be a transition sequence such that $Y \le \Parikh(\sigma)$.
Then, if $M_0 [\sigma\rangle$, 
there is a firing sequence $M_0 [\sigma'\rangle M$ such that $\Parikh(\sigma') = Y$.
\end{proposition}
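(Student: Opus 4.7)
The plan is to construct $\sigma'$ explicitly as a subsequence of $\sigma$ and then prove its feasibility directly, leveraging the very restrictive structure of \WMGineq{}: every place has at most one input transition and at most one output transition, so the firing rule is essentially \emph{local} to each place.

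First, I would define $\sigma'$ by scanning $\sigma$ from left to right and, for each transition $t$, retaining its first $Y(t)$ occurrences while discarding the rest. Since $Y(t) \le \Parikh(\sigma)(t)$ for every $t$, this produces $\Parikh(\sigma') = Y$, so \emph{if} $\sigma'$ is fireable from $M_0$ it automatically reaches $M_0 + I \cdot Y = M$. All the remaining work lies in establishing feasibility.

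The core argument, and the main obstacle, is verifying feasibility of $\sigma'$ place by place. Fix a place $p$ with (unique) output $t^{\mathrm{out}}$, and possibly a (unique) input $t^{\mathrm{in}}$; I would check, for each firing of $t^{\mathrm{out}}$ in $\sigma'$, that $p$ carries enough tokens. Say this is the $(k+1)$-th firing of $t^{\mathrm{out}}$ in $\sigma'$, matching the $(k+1)$-th occurrence of $t^{\mathrm{out}}$ in $\sigma$ at position $j$; let $k'_{\mathrm{in}}$ be the number of $t^{\mathrm{in}}$-firings in $\sigma$ strictly before $j$, and $k_{\mathrm{in}} = \min(k'_{\mathrm{in}}, Y(t^{\mathrm{in}}))$ the corresponding count in $\sigma'$. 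Since no other transition affects $p$, the required inequality reduces to $M_0(p) + k_{\mathrm{in}}\,W(t^{\mathrm{in}},p) \ge (k+1)\,W(p,t^{\mathrm{out}})$.

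The decisive case distinction is then the following. If $k'_{\mathrm{in}} \le Y(t^{\mathrm{in}})$, then $k_{\mathrm{in}} = k'_{\mathrm{in}}$ and the inequality is precisely the one guaranteed by the feasibility of $\sigma$ at step $j$. Otherwise $k_{\mathrm{in}} = Y(t^{\mathrm{in}})$, and one invokes $M(p) \ge 0$, which rearranges to $M_0(p) + Y(t^{\mathrm{in}})\,W(t^{\mathrm{in}},p) \ge Y(t^{\mathrm{out}})\,W(p,t^{\mathrm{out}}) \ge (k+1)\,W(p,t^{\mathrm{out}})$, using $k+1 \le Y(t^{\mathrm{out}})$ since $\sigma'$ retains only $Y(t^{\mathrm{out}})$ copies of $t^{\mathrm{out}}$. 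The subsidiary cases — places with no output, no input, neither, or self-loops with $t^{\mathrm{in}} = t^{\mathrm{out}}$ — either are immediate or collapse into the same schema. Recognizing that the global hypothesis $M \ge 0$ is exactly what compensates, in the second subcase, for the $t^{\mathrm{in}}$-firings of $\sigma$ that have been discarded is the key insight; Keller's theorem is not needed, although structural persistence of \WMGineq{} would permit an alternative proof by commutative rearrangement.
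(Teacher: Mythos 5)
Your proof is correct. Note first that the paper does not actually prove this proposition: it is recalled verbatim as Corollary~1 of~\cite{DH18}, so there is no in-paper argument to compare against. Your construction --- take the order-preserving subsequence $\sigma'$ of $\sigma$ retaining the first $Y(t)$ occurrences of each $t$, and verify enabledness place by place --- is sound and self-contained. The decomposition into independent per-place conditions is legitimate precisely because in a \WMGineq{} the token count of $p$ depends only on $M_0(p)$ and the firing counts of its unique input and unique output, and your two subcases are exactly right: when $k'_{\mathrm{in}} \le Y(t^{\mathrm{in}})$ the retained prefix of $\sigma'$ sees the same token history on $p$ as the corresponding prefix of $\sigma$, so feasibility of $\sigma$ transfers; when $k'_{\mathrm{in}} > Y(t^{\mathrm{in}})$ the inequality $M(p)\ge 0$ together with $k+1 \le Y(t^{\mathrm{out}})$ closes the gap. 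The degenerate cases (no input, no output, self-loop) do collapse into the same computation. The route usually taken in the literature for results of this kind (and the one this paper leans on elsewhere, e.g.\ in Theorem~\ref{ReachDeadlockWMG}) goes through structural persistence and Keller's theorem, extracting $\sigma'$ by residue manipulations; your argument trades that machinery for an elementary counting verification that exploits the $|{}^\dt p|\leq 1$, $|p^\dt|\leq 1$ structure directly, which is arguably more transparent, at the cost of not generalizing to choice-free nets where a place may have several input transitions.
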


We deduce that liveness is sufficient for ensuring the PR-R equality in this class.

\begin{corollary}\label{PRRinLiveWMG}
Every live \WMGineq{} fulfills the PR-R equality.
\end{corollary}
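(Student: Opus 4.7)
\textbf{Proof plan for Corollary~\ref{PRRinLiveWMG}.}

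The plan is to reduce the corollary directly to Proposition~\ref{RealizableTvectors}. The inclusion $R(S) \subseteq PR(S)$ is immediate from the state equation, so I only need to establish $PR(S) \subseteq R(S)$. Fix any $M \in PR(S)$; by definition of potential reachability, there exists a T-vector $Y \in \N^T$ such that $M = M_0 + I \cdot Y \ge 0$. The goal is to exhibit a firing sequence $\sigma'$ with $\Parikh(\sigma') = Y$ that is feasible from $M_0$; Proposition~\ref{RealizableTvectors} does exactly this, provided we supply some feasible sequence $\sigma$ from $M_0$ with $\Parikh(\sigma) \ge Y$ (componentwise).

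The core of the argument is therefore step two: building such a $\sigma$ from liveness alone. I would proceed by induction on $\sum_{t \in T} \max(0, Y(t) - k_t)$, where $k_t$ is the current number of occurrences of $t$ in the sequence built so far. Starting from the empty sequence at $M_0$, whenever some transition $t$ still satisfies $k_t < Y(t)$, liveness guarantees that from the current reachable marking one can fire a finite sequence $\tau$ enabling $t$, after which $t$ itself can be fired. Appending $\tau t$ to the prefix increments $k_t$ while keeping the whole sequence feasible from $M_0$. After finitely many iterations (bounded by $\sum_t Y(t)$), we obtain a feasible sequence $\sigma$ from $M_0$ with $\Parikh(\sigma) \ge Y$, as required.

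With $\sigma$ in hand, Proposition~\ref{RealizableTvectors} applied to $N$, $M_0$, $Y$ and $\sigma$ yields a firing sequence $\sigma'$ feasible from $M_0$ with $\Parikh(\sigma') = Y$, and then $M_0 [\sigma'\rangle M$ by the state equation. Hence $M \in R(S)$, completing the proof.

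The main obstacle, as noted, is the construction of $\sigma$: one must be careful that liveness only ensures fireability from \emph{reachable} markings, not from arbitrary ones, which is why the induction is carried out along a sequence that is continually extended (so that the marking reached after each prefix is reachable from $M_0$). Once this subtlety is handled, the rest of the argument is a direct application of the already-established Proposition~\ref{RealizableTvectors}, so no further technical work on the \WMGineq{} structure is needed.
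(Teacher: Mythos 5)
Your proposal is correct and follows exactly the paper's own route: establish a feasible sequence $\sigma$ from $M_0$ with $\Parikh(\sigma) \ge Y$ using liveness, then invoke Proposition~\ref{RealizableTvectors} to realize $Y$ itself. The paper leaves the construction of $\sigma$ implicit ("since the system is live, there exists a feasible sequence..."), whereas you spell out the standard iterative-extension argument; this is a valid elaboration, not a different proof.
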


\begin{proof}
For each solution $(M,Y)$ of the state equation, since the system is live,
there exists a feasible sequence $\sigma$ whose Parikh vector is greater than or equal to $Y$,
hence Proposition~\ref{RealizableTvectors} applies
and 
$M$ is reachable.
\end{proof}

Notice that the live \WMGineq{} form a subclass of live HFC systems, hence their potential reachability graph is directed 
(as recalled for the HFC class in Section~\ref{PotentialReachDirected}).
Moreover, every live and bounded \WMGineq{} is reversible and fulfills liveness, boundedness and reversibility in its reverse~\cite{WTS92};
when boundedness is dropped, \WMGineq{} may be live without being reversible, and their reverse are not always live,
so that Theorem~\ref{PReachDirect} cannot be applied to derive the PR-R equality.
When the liveness assumption is dropped, examples not fulfilling the PR-R equality are easily built, 
as the one in Figure~\ref{DeadWmg}.\\

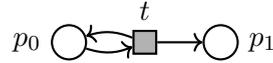
\begin{figure}[!h]
 
%\begin{minipage}[b]{0.15\linewidth}

\centering

\begin{tikzpicture}[mypetristyle,scale=1]

\node (p0) at (0,0) [place,thick] {};
\node (p1) at (2,0) [place,thick] {};

\node [anchor=east] at (p0.west) {$p_0$};
\node [anchor=west] at (p1.east) {$p_1$};

\node (t) at (1,0) [transition,thick] {};
\node [anchor=south] at (t.north) {$t$};

\draw [->,thick,bend right=20] (p0) to node [below] {} (t);
\draw [->,thick,bend right=20] (t) to node [below,bend right=20] {} (p0);
\draw [->,thick] (t) to node [left] {} (p1);

\end{tikzpicture}

%\end{minipage}

\caption{A deadlocked \MGineq{} (i.e. a unit-weighted \WMGineq{}) with marking $(0,0)$. For each integer $k > 0$, the marking $(0,k)$ is potentially reachable but not reachable.
}

\label{DeadWmg}

\end{figure}

So as to check liveness, we introduce several characterizations.
The next characterization of liveness, expressed in terms of the liveness of circuit subsystems, is extracted from~\cite{WTS92}.

\begin{proposition}[Liveness of WTS \cite{WTS92}]\label{LiveWTS}
A WTS $S=(N,M_0)$ is live iff every elementary circuit P-subsystem $C$ of $S$ is live.
\end{proposition}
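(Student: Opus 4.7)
I would prove both implications separately, relying on the hallmark structural property of a WTS: each place $p$ has exactly one input transition, call it $\mathit{in}(p)$, and exactly one output transition, $\mathit{out}(p)$. For an elementary circuit with place set $P'$, the induced P-subsystem $C$ has transition set $T' = \{\mathit{in}(p),\mathit{out}(p) : p \in P'\}$, the restricted arcs, weights inherited from $N$, and initial marking $\projection{M_0}{P'}$. Two invariance facts follow immediately: (a) the firing of any transition outside $T'$ does not modify the marking of any $p \in P'$; (b) the firing of any transition $t \in T'$ has the same effect on $\projection{M}{P'}$ in $S$ as it has on the marking of $C$. Moreover, a transition of $T'$ enabled in $S$ at $M$ is a fortiori enabled in $C$ at $\projection{M}{P'}$, since enabling in $S$ imposes at least as many constraints. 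Consequently, the projection onto $T'$ of any firing sequence of $S$ is a firing sequence of $C$ ending at the projected marking.

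\textbf{Backward direction.} I would argue by contraposition: suppose $S$ is not live, so some transition is permanently dead from a reachable marking $M$. Let $D \subseteq T$ be the nonempty set of transitions that are permanently dead from $M$ onwards. For every $t \in D$, some input place $p$ of $t$ never accumulates $W(p,t)$ tokens after $M$, and its unique input $\mathit{in}(p)$ must also lie in $D$ — otherwise $\mathit{in}(p)$ would eventually fire enough times to supply $p$. Iterating this ``backward unique-input'' rule yields an infinite backward walk through $D$ in the net, which, by finiteness of $N$, revisits some transition and therefore contains an elementary circuit whose transitions all lie in $D$. The P-subsystem of this circuit inherits the deficiency from $S$ via (a)--(b): its transitions remain permanently dead in $C$, contradicting the hypothesis that every elementary circuit P-subsystem is live.

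\textbf{Forward direction.} Fix an elementary circuit P-subsystem $C$ with transition set $T'$ and incidence matrix $I_C$. Since $C$ is a single circuit, its incidence admits a unique (up to positive scaling) minimal T-semiflow $Y_C$ with strictly positive support on $T'$. By liveness of $S$, for any $k \geq 1$ I can build, through repeated applications of liveness to the marking reached so far, a firing sequence $\tau$ of $S$ from $M_0$ with $\Parikh(\tau)(t) \geq k \cdot Y_C(t)$ for every $t \in T'$. Its projection $\tau'$ onto $T'$ is feasible in $C$ from $\projection{M_0}{P'}$ with the same lower bound on its Parikh vector. Since $C$ is a \WMGineq{}, Proposition~\ref{RealizableTvectors} applied inside $C$ to the target vector $k \cdot Y_C \leq \Parikh(\tau')$ yields a sequence $\tau''$ feasible in $C$ with $\Parikh(\tau'') = k \cdot Y_C$; as $I_C \cdot Y_C = \zero$, this $\tau''$ returns $C$ to $\projection{M_0}{P'}$. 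Now let $M'$ be any marking reachable in $C$ via $\sigma'$. Since $C$ is choice-free and therefore persistent, Keller's theorem (Theorem~\ref{kellersTheorem}) ensures that $\sigma'(\tau'' \pminus \sigma')$ is feasible in $C$; choosing $k$ large enough that $k \cdot Y_C(t) > \Parikh(\sigma')(t)$ for every $t \in T'$ forces the suffix $\tau'' \pminus \sigma'$ to contain every transition of $T'$. Hence every transition of $T'$ is firable from $M'$, so $C$ is live.

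\textbf{Main obstacle.} The forward direction is the harder half, as it requires coordinating four ingredients in the correct order: liveness of $S$ to supply enough firings of each $t \in T'$; the projection observation to transfer these firings from $S$ to $C$; Proposition~\ref{RealizableTvectors} inside $C$ to reshape the Parikh vector into an exact positive multiple of $Y_C$; and Keller's theorem to lift the conclusion from $\projection{M_0}{P'}$ to an arbitrary reachable marking of $C$. Ensuring that the residue $\tau'' \pminus \sigma'$ really contains every transition of $T'$ depends on picking $k$ strictly larger than the maximum ratio $\Parikh(\sigma')(t)/Y_C(t)$ over $t \in T'$, which must be done after $\sigma'$ is fixed. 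The backward direction is more routine but hinges on the subtle point that in a WTS the ``backward unique-input'' walk genuinely stays inside $D$ and closes into an elementary circuit — a property that would fail if a place had multiple inputs.
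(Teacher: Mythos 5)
The paper itself gives no proof of this proposition — it is recalled verbatim from \cite{WTS92} — so there is nothing in-paper to compare against; your architecture (projection of firing sequences onto the circuit P-subsystem for one direction, a backward unique-input walk closing into an elementary circuit for the other) is the classical one and is essentially the right route. However, each half contains a concrete defect. In the forward direction, the claim that the incidence matrix of an arbitrary weighted elementary circuit admits a positive minimal T-semiflow $Y_C$ is false: writing the circuit as $t_i \to p_i \to t_{i+1}$, a non-null T-semiflow exists iff $\prod_i W(t_i,p_i) = \prod_i W(p_i,t_{i+1})$, and a live (necessarily unbounded) WTS may contain a ``gaining'' circuit — already $t \to^2 p \to^1 t$ — whose only T-semiflow is $\zero$. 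Luckily your argument never uses the fact that $\tau''$ returns $C$ to $\projection{M_0}{P'}$: applying Theorem~\ref{kellersTheorem} directly to $\sigma'$ and the projected sequence $\tau'$ (with $\Parikh(\tau')(t)\ge k$ for all $t\in T'$ and $k$ exceeding every component of $\Parikh(\sigma')$) already forces $\tau'\pminus\sigma'$ to contain all of $T'$, so the detour through $Y_C$ and Proposition~\ref{RealizableTvectors} should simply be deleted.

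The backward direction has the more serious gap: the step ``$\mathit{in}(p)$ must also lie in $D$ — otherwise $\mathit{in}(p)$ would eventually fire enough times to supply $p$'' conflates \emph{not dead at $M$} with \emph{live at $M$}. If $\mathit{in}(p)\notin D$ you are only guaranteed one further firing of $\mathit{in}(p)$, after which it may itself die, so $p$ need never accumulate $W(p,t)$ tokens. The repair is to first replace $M$ by a marking reachable from it at which the set of dead transitions is maximal: such a marking exists because the dead set is monotone along firings and the reachability set of a persistent system is directed (by Keller's theorem), and at that marking ``not dead'' does imply ``can fire unboundedly often'', so your pumping argument goes through. Persistence is also silently needed one step earlier, to justify that a dead $t$ has a \emph{single} input place that is permanently deficient: each input place of a dead $t$ never loses tokens (its unique output is $t$), so the individual sequences that would fill each input place can be combined via Keller into one that enables $t$, a contradiction. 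With these two repairs the proof is correct; the concluding observation that the projected marking deadlocks the circuit P-subsystem is fine as stated.
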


This proposition readily extends to \WMGineq{} without source places (i.e. places with no input):

\begin{corollary}[Extension of Theorem 4.12 in \cite{WTS92}]\label{CorWTS}
A \WMGineq{} without source places $S=(N,M_0)$ is live iff every elementary circuit P-subsystem $C$ of $S$ is live.
\end{corollary}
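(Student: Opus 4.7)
The plan is to reduce the statement to Proposition~\ref{LiveWTS} (the WTS case) by stripping away all sink places of $S$. Since $S$ has no source places, every place already has exactly one input transition; after deleting the sink places (those $p$ with $p^\bullet=\emptyset$), every remaining place has exactly one input and exactly one output, so the resulting net $N'$ together with $M_0' = \projection{M_0}{P\setminus P_s}$ forms a WTS $S'=(N',M_0')$, where $P_s$ denotes the set of sink places.

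First I would verify that this transformation preserves liveness. A sink place $p$ satisfies $p^\bullet=\emptyset$, so it never constrains the enabling of any transition; consequently the sets of fireable sequences of $S$ and $S'$ coincide (the marking on $P\setminus P_s$ evolves identically in both, and the marking on $P_s$ only accumulates tokens produced by transitions), and $S$ is live iff $S'$ is live.

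Next I would argue that $S$ and $S'$ have literally the same elementary circuit P-subsystems. An elementary circuit alternates places and transitions cyclically, so every place on a circuit has an output transition on that circuit; sink places are therefore excluded from every elementary circuit, and the elementary circuits of $N$ are exactly those of $N'$. For any such circuit with place set $P'$, the induced P-subsystems in $S$ and in $S'$ have the same place set $P'$, the same transition set $T' = \mathit{^\bullet P'}\cup P'^\bullet$ (computed identically in both nets, since $P'\cap P_s=\emptyset$), the same weights (by restriction of $W$ to $(P'\times T')\cup(T'\times P')$), and the same initial marking $\projection{M_0}{P'}$.

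Finally, applying Proposition~\ref{LiveWTS} to the WTS $S'$, liveness of $S'$ is equivalent to liveness of each of its elementary circuit P-subsystems, which by the previous step is equivalent to liveness of each elementary circuit P-subsystem of $S$. Chaining the two equivalences with the liveness equivalence between $S$ and $S'$ yields the corollary. The only point requiring care is the bookkeeping in the second step, namely checking that removing sink places preserves not merely the circuit structure but also the full data of the induced P-subsystems; this is a routine verification of the set equalities above and presents no real obstacle.
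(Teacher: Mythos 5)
Your proposal is correct and matches what the paper intends: the paper gives no explicit proof, merely asserting that Proposition~\ref{LiveWTS} ``readily extends'' to \WMGineq{} without source places, and your reduction --- deleting the sink places (the only obstruction to being a WTS), observing that they never constrain enabling so the fireable sequences and hence liveness are unchanged, and that they lie on no elementary circuit so the circuit P-subsystems are literally the same --- is precisely the routine verification left implicit. No gaps.
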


We now introduce property $E$, which we use to obtain a variant of Corollary~\ref{CorWTS}.

\begin{definition}[Property $E$]
A Petri net system $S=(N,M_0)$ has the property $E$ if, for each solution $(M,Y)$ of its state equation,
$M$ enables at least one transition.
\end{definition}

\begin{theorem}
A \WMGineq{} without source places is live iff property $E$ is true in each elementary circuit P-subsystem.
\end{theorem}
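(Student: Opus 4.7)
The plan is to apply Corollary~\ref{CorWTS} as a first step, reducing the statement to the claim that, for each elementary circuit P-subsystem $C$ of $S$, the subsystem $C$ is live iff property $E$ holds in $C$. Such a P-subsystem is a strongly connected weighted elementary circuit in which each place has exactly one input and one output; hence it is structurally bounded (it admits a positive P-semiflow), and its T-semiflow space is one-dimensional, generated by a positive prime T-semiflow $Y_C$ whose components are all positive integers, so that $Y_C \ge \one$.

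For the "only if" direction, I would assume $C$ is live. Since a live \WMGineq{} fulfils the PR-R equality by Corollary~\ref{PRRinLiveWMG}, we have $PR(C) = R(C)$. Moreover, a live system has no reachable deadlock, so every marking of $R(C)$ enables at least one transition; the identity $PR(C) = R(C)$ then transfers this to every marking of $PR(C)$, which is exactly property $E$.

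For the "if" direction, property $E$ together with $R(C) \subseteq PR(C)$ yields that $C$ is deadlock-free. The remaining task is to argue that deadlock-freeness implies liveness in the weighted elementary circuit $C$. Starting from $M_0^C$, deadlock-freeness provides an infinite firing sequence; by boundedness of $C$ the reachability set is finite, so some marking $M$ is visited infinitely often, and between two consecutive visits a non-empty sequence $\sigma$ with $M[\sigma\rangle M$ is fired. Its Parikh vector satisfies $I_C \cdot \Parikh(\sigma) = 0$ and is non-zero non-negative, hence it is a positive integer multiple of $Y_C$ and therefore satisfies $\Parikh(\sigma) \ge \one$. Since circuits are choice-free, Proposition~\ref{LivenessOfCFnets} then certifies that $C$ is live. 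By Corollary~\ref{CorWTS}, $S$ is live, closing the equivalence.

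The main obstacle is the passage from deadlock-freeness to liveness inside an elementary weighted circuit; it hinges on two structural ingredients, namely the boundedness of $C$ and the one-dimensional positive character of its T-semiflow space, which together guarantee that an infinite firing sequence must cycle through every transition. The forward direction is comparatively routine once the PR-R equality for live \WMGineq{} is invoked.
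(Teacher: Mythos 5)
Your proof is correct and follows essentially the same route as the paper's: both reduce the statement to elementary circuit P-subsystems via Corollary~\ref{CorWTS}, establish the forward direction from the realizability of T-vectors in live \WMGineq{} (the paper cites Proposition~\ref{RealizableTvectors} directly, you invoke the equivalent Corollary~\ref{PRRinLiveWMG}), and obtain the converse from deadlock-freeness of each circuit. The only substantive difference is that the paper simply asserts ``deadlock-free, hence live'' for a weighted elementary circuit, whereas you supply the missing justification (boundedness forces a repeated marking, the cycle's Parikh vector is a positive multiple of the circuit's T-semiflow generator, and Proposition~\ref{LivenessOfCFnets} concludes), which is a welcome tightening of a step the paper leaves implicit.
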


\begin{proof}
We show the first direction $(\Rightarrow)$.
Consider any elementary circuit P-subsystem $C=(\projection{N}{P'},\projection{M_0}{P'})$ of the live \WMGineq{},
where $P'$ is the set of places of $C$:
applying Corollary~\ref{CorWTS}, $C$ is live.
Denote by $I_C$ the incidence matrix of the circuit $C$.
By Proposition~\ref{RealizableTvectors},
each solution $(M_c,Y)$ of the state equation $M_c = I_C \cdot Y + \projection{M_0}{P'}$
corresponds to a sequence feasible and a marking reachable in $C$,
in particular each solution enables at least one transition of $C$.
Thus, each circuit P-subsystem satisfies $E$.

Let us now consider the other direction $(\Leftarrow)$.
If $E$ is true in each elementary circuit P-subsystem $C$, 
then in particular each marking reachable in $C$ enables some transition,
$C$ is thus deadlock-free, hence live.
By Corollary~\ref{CorWTS}, we deduce the \WMGineq{} to be live.
\end{proof}

Let us now characterize the set of reachable markings of the non-live, connected \WMGineq{} without source places.

\begin{theorem}\label{ReachDeadlockWMG}
Consider a connected \WMGineq{} without source places $S=(N,M_0)$ with incidence matrix $I$.
If it is non-live, there exists a unique T-vector $Y_d$ of smallest cardinality such that $M_d = M_0 + I \cdot Y_d$ is a deadlock.
Moreover,
a sequence $\sigma_d$ exists that is feasible in $S$ and leads to $M_d$ such that $\Parikh(\sigma_d) = Y_d$.
\end{theorem}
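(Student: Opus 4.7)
My plan is to prove the theorem in three parts matching its content: exhibit a feasible sequence $\sigma_d$ reaching a deadlock $M_d$ of $S$; show that any two feasible sequences ending at dead markings share the same Parikh vector and end at the same marking; and identify this common Parikh vector $Y_d$ as the unique smallest-cardinality T-vector whose state-equation image is a deadlock.

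For existence, I invoke Corollary~\ref{CorWTS} on the non-live system $S$ to obtain a non-live elementary circuit P-subsystem $C$. A circuit \WMGineq{} is structurally bounded via its canonical P-semiflow, hence its isolated reachable set is finite; being persistent (any CF net is persistent) and non-live, $C$ must reach an internal deadlock in finitely many firings. Combining persistence of the full system $S$, the no-source-places hypothesis (which prevents an unbounded external supply of tokens into the dead part), and Keller's theorem (Theorem~\ref{kellersTheorem}) to commute irrelevant firings, I construct a feasible sequence $\sigma_d$ of $S$ that leaves no transition enabled, reaching a global deadlock $M_d$.

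For uniqueness of the reached deadlock, take two feasible sequences $\sigma_1,\sigma_2$ ending at deadlocks $M_1,M_2$. Since any \WMGineq{} is choice-free and therefore persistent, Keller's theorem yields that $\sigma_1(\sigma_2\pminus\sigma_1)$ and $\sigma_2(\sigma_1\pminus\sigma_2)$ are both feasible from $M_0$ and reach the same marking. The dead marking $M_1$ disables every transition, forcing $\sigma_2\pminus\sigma_1=\emptyseq$, whence $\Parikh(\sigma_2)\le\Parikh(\sigma_1)$; by symmetry the Parikh vectors coincide and $M_1=M_2$. Set $Y_d=\Parikh(\sigma_d)$ and $M_d=M_0+I\cdot Y_d$.

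For the smallest-cardinality part, let $Y'\in\N^T$ satisfy $M_0+I\cdot Y'$ is a deadlock; I aim to prove $Y'\ge Y_d$ componentwise, which gives $|Y'|\ge|Y_d|$ with equality iff $Y'=Y_d$. Put $Z=\min(Y',Y_d)$ componentwise. Since each place in a \WMGineq{} has at most one input and at most one output transition, a per-place case analysis yields $M_0+I\cdot Z\ge\zero$: the place-wise balance at $Z$ dominates the pointwise minimum of the non-negative balances at $Y'$ and $Y_d$. As $Z\le Y_d=\Parikh(\sigma_d)$, Proposition~\ref{RealizableTvectors} produces a feasible $\sigma_Z$ realizing $Z$ and reaching $M_Z=M_0+I\cdot Z$. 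From $M_Z$, the residue $\sigma_d\pminus\sigma_Z$ is feasible by Keller with Parikh $Y_d-Z$ and reaches the dead $M_d$. Assume now for contradiction some $t^{*}$ with $Y'(t^{*})<Y_d(t^{*})$, so that $Z=Y'$ on $\support(Y'-Z)$ and the supports of $Y'-Z$ and $Y_d-Z$ are disjoint; the putative firings needed to realize $Y'-Z$ from $M_Z$ would, by another application of Keller (disjoint supports make both residues trivial), survive after $\sigma_d\pminus\sigma_Z$ and thus be fireable from the deadlock $M_d$, an impossibility, so $Y'\ge Y_d$ and the claim follows. The main obstacle I anticipate is Part~(i), the genuine lifting of the local deadlock of the non-live circuit to a reachable global deadlock of $S$, which leans on a delicate interaction between persistence, Keller's theorem and the absence of source places; the persistence-based contradiction of Part~(iii) for $Y'$ incomparable to $Y_d$ is the second subtle point.
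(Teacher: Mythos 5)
Your decomposition is sensible and several pieces are sound: the uniqueness of the reachable deadlock via Keller's theorem is exactly the paper's opening step; the lattice observation that $M_0+I\cdot Z\ge\zero$ for $Z=\min(Y',Y_d)$ (using that every place has at most one input and one output) is correct; and realizing $Z$ by Proposition~\ref{RealizableTvectors} and then extracting the feasible residue $\sigma_d\pminus\sigma_Z$ from $M_Z$ via Keller is fine. The genuine gap is your final contradiction. You apply Keller's theorem to ``the putative firings needed to realize $Y'-Z$ from $M_Z$'', but Keller's theorem only commutes \emph{feasible} sequences, and nothing guarantees that $Y'-Z$ is realizable by a firing sequence from $M_Z$: the whole point of this theorem is that non-live \WMGineq{} violate the PR-R equality, so a non-negative state-equation solution such as $M'=M_Z+I\cdot(Y'-Z)$ need not be reachable (cf.\ Figure~\ref{DeadWmg}). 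Moreover, when $Y'\lneqq Y_d$ you have $Y'-Z=\zero$, so there are no ``putative firings'' at all and your argument is silent. The contradiction must run in the other direction: if $Y_d-Z\neq\zero$, let $t_1$ be the first transition of the nonempty feasible sequence $\sigma_d\pminus\sigma_Z$; it is enabled at $M_Z$, it lies outside $\support(Y'-Z)$ (the two supports are disjoint), and since each $p\in{\lbul t_1}$ has $t_1$ as its unique output, $(I\cdot(Y'-Z))(p)\ge 0$, so $t_1$ is still enabled at $M'$ --- contradicting that $M'$ is a deadlock. This is precisely the ``structural persistence at the state-equation level'' step the paper uses (its case $Y_d'(t_0)=0$ inside an induction on the length of $\sigma_d$); with your Keller step replaced by it, your non-inductive route via $Z=\min(Y',Y_d)$ does go through and is arguably cleaner than the paper's induction.

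Separately, your part (i) --- lifting the deadlock of a non-live circuit P-subsystem to a reachable global deadlock of $S$ --- is only sketched, and you flag this yourself. To be fair, the paper's own proof is no more explicit: it simply asserts ``there is only one reachable deadlock'' without arguing that one exists. A complete treatment still needs the argument that, once some circuit P-subsystem starves, every transition of the connected net can fire only finitely often (propagating forward along the unique-input structure of places and using the absence of source places), so that every maximal firing sequence is finite and ends in a deadlock.
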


\begin{proof}
Since \WMGineq{} are persistent, Keller's theorem applies (Theorem \ref{kellersTheorem}) and there is only one reachable deadlock, denoted by $M_d$.
Denote by $\sigma_d$ one of the sequences leading to $M_d$ with minimal length: 
there might be several ones with minimal length, but we show below that there is only one.
Denote by~$Y_d$ the Parikh vector of $\sigma_d$.

Suppose there is some T-vector $Y_d'$ defining a potentially reachable deadlock $M_d'$ that is not reachable,
such that $Y_d' \not\gneqq Y_d$, i.e. either $Y_d' \le Y_d$, or both are incomparable.

Let us prove by induction on the length $n$ of $\sigma_d$ that $Y_d' = Y_d$.
If $n = 0$, it is clear since we assumed $Y_d' \not\gneqq Y_d$.
If $n > 0$, let us write $\sigma_d = t_0 \tau$, where $t_0$ leads to a marking $M$.
We have two cases: $Y_d'(t_0) = 0$ and $Y_d'(t_0) > 0$.
In the first case, by structural persistence,
$t_0$ is still enabled at $M_d'$, a contradiction.
In the second case,
given that $M_d' = M + I \cdot (Y_d' - \one_{t_0})$ (where $\one_{t_0}$ denotes the T-vector whose only non-null component equals $1$ 
and has index $t_0$) 
applying the induction hypothesis to $M$, $(Y_d' - \one_{t_0})$ and $\tau$ (whose length is $n-1$ and Parikh vector is $Y_d - \one_{t_0}$)
yields 
$(Y_d - \one_{t_0}) = (Y_d' - \one_{t_0})$,
thus
$Y_d = Y_d'$.

Hence, there is a unique minimal $Y_d$,
and
for each sequence $\sigma_d$ feasible in $S$ that leads to $M_d$, we have $\Parikh(\sigma_d) = Y_d$.
\end{proof}

%La seule difference avec le systeme (V1) est que l'on supprime les
%contraintes de non-negativite sur M et v; on explore ici les solutions
%(M,v)  dans $ Z^{|P|} x Z^{|T|}.$

We denote by $DEAD$ 
the predicate on nets and markings such that $DEAD(N,M) = true$ 
iff the marking $M$ is a deadlock for the net $N=(P,T,W)$.

We obtain next theorem for checking liveness in a weighted circuit,
relaxing the non-negativity constraint on the components of the potentially reachable markings.

\begin{theorem}[Checking liveness of weighted circuits]\label{CheckLivenessCircuit}
A circuit system $S=(N,M_0)$ with incidence matrix $I$ is live 
iff 
the following system has no solution $(M_d,Y) \in \Z^{|P|} \times \N^{|T|}$:
$$\begin{cases}
M_d = M_0 + I \cdot Y \\
DEAD(M_d) \\
\end{cases}$$
%
%%\qed
\end{theorem}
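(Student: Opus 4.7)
The plan is to prove the iff by treating the two implications separately, invoking Theorem~\ref{ReachDeadlockWMG} for $(\Leftarrow)$ and combining a descent argument on $|Y|$ with Corollary~\ref{PRRinLiveWMG} for $(\Rightarrow)$.

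For the reverse direction, I would argue by contradiction: assume the displayed system has no solution, and that $S$ is not live. Since a weighted circuit is a connected \WMGineq{} without source places, Theorem~\ref{ReachDeadlockWMG} supplies a firing sequence $\sigma_d$ from $M_0$ to a reachable deadlock $M_d \in \N^{|P|}$, so that $M_d = M_0 + I \cdot \Parikh(\sigma_d)$. Since $\N^{|P|} \subseteq \Z^{|P|}$, the pair $(M_d, \Parikh(\sigma_d))$ is a valid solution of the displayed system, contradicting the assumption.

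For the forward direction, assume $S$ is live and suppose for contradiction that some $(M_d, Y) \in \Z^{|P|} \times \N^{|T|}$ satisfies $M_d = M_0 + IY$ with $M_d$ a deadlock. Label the transitions $t_1,\ldots,t_n$ cyclically so that $t_j$ has unique input $p_{j-1}$ of weight $b_j$ and unique output $p_j$ of weight $a_j$, with indices taken modulo $n$. The reduction step is the following: whenever some $M_d(p_{j-1}) < 0$, the equation $M_d(p_{j-1}) = M_0(p_{j-1}) + a_{j-1}\,y_{j-1} - b_j\,y_j$ together with $M_0(p_{j-1}) \ge 0$ and $a_{j-1}\,y_{j-1} \ge 0$ forces $y_j \ge 1$, so I can replace $(M_d, Y)$ with $(M_d', Y - \one_{t_j})$ where $M_d' = M_0 + I(Y - \one_{t_j})$. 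The deadlock property survives because $M_d'(p_{j-1}) = M_d(p_{j-1}) + b_j$ remains strictly less than $b_j$ exactly thanks to $M_d(p_{j-1}) < 0$; meanwhile $M_d'(p_j) = M_d(p_j) - a_j < b_{j+1}$ holds trivially since $a_j \ge 0$, and the other entries are unchanged.

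Because $|Y|$ strictly decreases at each step, this descent terminates; at termination no component of $M_d$ can be negative, so $M_d \in \N^{|P|}$. Then $M_d \in PR(S)$, and Corollary~\ref{PRRinLiveWMG} applied to the live \WMGineq{} $S$ forces $M_d \in R(S)$, which contradicts that $M_d$ is a deadlock in a live system. The main obstacle is pinpointing a local reduction that preserves the deadlock condition while strictly shrinking $Y$: selecting $j$ with $M_d(p_{j-1}) < 0$ is precisely what keeps $t_j$ disabled after undoing one of its firings, after which the termination and the appeal to PR-R equality are routine.
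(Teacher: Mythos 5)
Your proof is correct, and while the backward direction coincides with the paper's (the paper compresses the appeal to a reachable deadlock of a non-live circuit into one line, whereas you invoke Theorem~\ref{ReachDeadlockWMG} explicitly), your forward direction takes a genuinely different route. The paper argues directly: it takes a maximal feasible sequence $\sigma'$ with $\Parikh(\sigma')=Y'\le Y$ reaching $M'$, notes that $M'$ enables no transition in $\support(Y-Y')$ by maximality, uses liveness to produce an enabled transition $t\notin\support(Y-Y')$, and derives the contradiction $M'(p)\le M_d(p)<W(p,t)$ from the fact that the unique input place $p$ of $t$ can only gain tokens under $Y-Y'$. You instead run a descent on the cardinality of $Y$: whenever some place is negative in $M_d$, the two-variable-per-place structure of the circuit's state equation forces the corresponding output transition to have been fired, so one of its occurrences can be cancelled while preserving both the deadlock condition and non-negativity of $Y$; at termination you land on a genuine potentially reachable deadlock and conclude via the PR-R equality of Corollary~\ref{PRRinLiveWMG}. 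Your approach cleanly factors the theorem into "eliminate the negative components" plus the already-established PR-R equality, at the cost of depending on Corollary~\ref{PRRinLiveWMG} (no circularity arises, since that corollary rests only on Proposition~\ref{RealizableTvectors}); the paper's argument is self-contained and shorter but leaves a couple of steps implicit (e.g.\ why $Y'\lneqq Y$). One cosmetic caveat: your formulas for the reduction step tacitly assume the input and output place of $t_j$ are distinct; in the degenerate one-place circuit the two increments combine to $M_d'(p_0)=M_d(p_0)+b_1-a_1$, and the inequality $M_d'(p_0)<b_1$ still follows from $M_d(p_0)<0\le a_1$, so nothing breaks, but it is worth a remark.
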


\begin{proof}
\noindent $(\Leftarrow)$ If the system has no such solution, then in particular it has no solution $(M_d,Y) \in \N^{|P|} \times \N^{|T|}$,
hence no feasible sequence leads to a deadlock, thus $S$ is live.

$(\Rightarrow)$ Suppose that $S$ is live and that some solution $(M_d,Y) \in \Z^{|P|} \times \N^{|T|}$ to the system exists.
Denote by $\sigma'$ the sequence of maximal length that is feasible in $S$ and such that $\Parikh(\sigma') = Y' \le Y$,
leading to $M' = M_0 + I \cdot Y'$.
Let us define $Y'' = Y - Y'$, which has a non-empty support since $Y' \lneqq Y$. 
By definition of $Y'$, $M'$ does not enable any transition in the support of $Y''$.
Since $S$ is live, $M'$ enables some transition $t$ not in $\support(Y'')$ %%$T \setminus \support(Y'')$,
with unique input place $p$.
Since $M_d$ is a deadlock, $M_d(p) < W(p,t)$.
By definition of $t$ and $Y''$, $M'(p) \le M_d(p) < W(p,t)$, thus $t$ is not enabled at $M'$, a contradiction.

We deduce the claim.
\end{proof}

We deduce next corollary when conservativeness is assumed, 
allowing to relax the non-negativity constraint not only on the components of potentially reachable markings,
but also on the components of the T-vectors.

\begin{corollary}[Liveness of conservative weighted circuits]
A conservative circuit system $S=(N,M_0)$ with incidence matrix $I$ is live 
iff 
the following system has no solution $(M_d,Y) \in \Z^{|P|} \times \Z^{|T|}$:
$$\begin{cases}
M_d = M_0 + I \cdot Y \\ 
DEAD(M_d) \\ 
\end{cases}$$
%
%%\qed
\end{corollary}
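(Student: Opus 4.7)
The plan is to reduce the corollary directly to Theorem~\ref{CheckLivenessCircuit} by showing that, for a conservative circuit, the relaxation from $Y \in \N^{|T|}$ to $Y \in \Z^{|T|}$ does not enlarge the set of realizable deadlock solutions. The $(\Leftarrow)$ direction is immediate: any pair $(M_d,Y) \in \Z^{|P|} \times \N^{|T|}$ solving the system of Theorem~\ref{CheckLivenessCircuit} already lies in $\Z^{|P|} \times \Z^{|T|}$, so the absence of solutions in the larger set rules out the presence of any in the smaller one, and Theorem~\ref{CheckLivenessCircuit} yields liveness.

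For the $(\Rightarrow)$ direction I would proceed by contraposition. Assume $S$ is live and suppose, for contradiction, that some $(M_d,Y) \in \Z^{|P|} \times \Z^{|T|}$ satisfies $M_d = M_0 + I \cdot Y$ with $DEAD(M_d)$. The idea is to shift $Y$ into the non-negative orthant while preserving $M_d$. Since adding any T-semiflow $Z$ of $N$ to $Y$ leaves $M_0 + I \cdot Y$ unchanged, it suffices to exhibit a T-semiflow $Z \ge \one^{|T|}$, and then pick $k \in \N$ large enough that $Y' := Y + k\,Z \in \N^{|T|}$. The pair $(M_d, Y')$ then belongs to $\Z^{|P|} \times \N^{|T|}$, satisfies the same deadlock condition (the predicate $DEAD$ depends only on the local comparisons $M_d(p)$ vs.\ $W(p,t)$ and is unaffected by the shift), and hence contradicts the liveness of $S$ through Theorem~\ref{CheckLivenessCircuit}.

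The genuine step is the existence of such a strictly positive T-semiflow, which is where conservativeness enters. I would argue that for a weighted circuit, conservativeness (existence of a P-semiflow $X \ge \one^{|P|}$) is equivalent to consistency (existence of a T-semiflow $Z \ge \one^{|T|}$). Writing the circuit as $p_0, t_0, p_1, t_1, \ldots, p_{n-1}, t_{n-1}, p_0$ with $t_i$ consuming $w_i$ tokens from $p_i$ and producing $v_i$ tokens into $p_{(i+1) \bmod n}$, the equations $X^T \cdot I = 0$ and $I \cdot Z = 0$ force, respectively, $X(p_{i+1}) = X(p_i)\, w_i/v_i$ and $Z(t_i) = Z(t_{i-1})\, v_{i-1}/w_i$; traversing the cycle, both reduce to the single rational condition $\prod_i w_i = \prod_i v_i$. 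Hence a strictly positive P-semiflow exists iff a strictly positive T-semiflow exists, and any such $Z$ can be rescaled to satisfy $Z \ge \one^{|T|}$.

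The only delicate point is verifying the equivalence conservativeness $\iff$ consistency for weighted circuits; once this is settled, the shift argument and the appeal to Theorem~\ref{CheckLivenessCircuit} are mechanical. One should also confirm that $DEAD(M_d)$ as used in Theorem~\ref{CheckLivenessCircuit} is a purely local predicate that makes sense over $\Z^{|P|}$, so that the $Y \mapsto Y + k Z$ substitution (which leaves $M_d$ untouched) preserves the full system, legitimising the reduction.
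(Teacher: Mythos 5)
Your proof is correct and follows the same overall mechanism as the paper's: in the $(\Rightarrow)$ direction, shift $Y$ into the non-negative orthant by adding $k$ copies of a T-semiflow with full support (which leaves $M_0 + I\cdot Y$ and hence $M_d$ unchanged), and then fall back on Theorem~\ref{CheckLivenessCircuit}. The one place where you diverge is the justification of the key ingredient, the existence of a T-semiflow $Z \ge \one^{|T|}$. The paper obtains it from liveness: since $S$ is live, conservative (hence bounded) and strongly connected, it is consistent and has a minimal T-semiflow with support $T$ (citing Theorem~8 of the choice-free paper). You instead derive it purely from conservativeness and the circuit structure, via the explicit recurrences $X(p_{i+1}) = X(p_i)\,w_i/v_i$ and $Z(t_i) = Z(t_{i-1})\,v_{i-1}/w_i$, both of which collapse to the single condition $\prod_i w_i = \prod_i v_i$; this correctly shows that for a weighted circuit conservativeness and consistency coincide, and that any nonzero semiflow automatically has full support and can be rescaled to an integer vector $\ge \one$. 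Your route is more elementary and self-contained (it does not use the liveness hypothesis at this step, nor an external structural theorem), at the cost of an explicit computation that only works because the net is a circuit; the paper's route is shorter but leans on liveness and a cited result. A further small difference: you invoke Theorem~\ref{CheckLivenessCircuit} as a black box to close the contradiction once $Y' \in \N^{|T|}$, whereas the paper re-runs the tail of that theorem's proof; both are legitimate, and your reduction is arguably cleaner. Your remark that $DEAD$ is a purely local predicate unaffected by the substitution $Y \mapsto Y + kZ$ is the right sanity check and is consistent with how the paper uses it over $\Z^{|P|}$.
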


\begin{proof}
\noindent $(\Leftarrow)$ This direction is obtained as in the proof of Theorem~\ref{CheckLivenessCircuit}.

\noindent $(\Rightarrow)$ Suppose that $S$ is live and that some solution $(M_d,Y) \in \Z^{|P|} \times \Z^{|T|}$ to the system exists.
If $(M_d,Y) \in \N^{|P|} \times \N^{|T|}$, applying Proposition~\ref{RealizableTvectors},
$M_d$ is reachable, contradicting liveness.
Thus, let us suppose that negative components appear in $M_d$ or $Y$.

Since $S$ is live, conservative (hence bounded) and strongly connected, it has a minimal T-semiflow~$Y$ with support $T$ 
(by consistency and Theorem~8 in \cite{tcs97}).
If $Y$ has some negative component,
then there exists a positive integer $k$ such that $Y_k = Y + k \cdot Y$ has only positive components
and such that $M_d = M_0 + I \cdot Y_k$.
Hence we suppose without loss of generality that $(M_d,Y) \in \Z^{|P|} \times \N^{|T|}$,
where $M_d$ has at least one negative component and $Y$ is a T-vector.
The rest of the proof is the same as in the proof of Theorem~\ref{CheckLivenessCircuit}.
\end{proof}

\subsection{Checking Properties of \WMGineq{}}

\vspace*{2mm}

Structural boundedness means boundedness for each marking,
while bounded systems are not always structurally bounded, even when they are live~\cite{LAT98}.
If a system is unbounded, the underlying net is not structurally bounded. 
We recall the next characterization for this property, which appears in various studies, e.g. in~\cite{LAT98}.

\begin{proposition}[Corollary 16 in \cite{LAT98}]\label{PropCharStructBound}
A net with incidence matrix $I$ is not structurally bounded iff there exists a T-vector $Y \gneqq 0$ such that $I \cdot Y \gneqq 0$.
\end{proposition}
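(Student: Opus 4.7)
The plan is to establish both directions of the equivalence separately, each by a short and essentially constructive argument.

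For the $(\Leftarrow)$ direction, suppose we are given a T-vector $Y\gneqq\zero$ with $I\cdot Y\gneqq\zero$. I would exhibit an initial marking $M_0$ from which some sequence realizing $Y$ is firable, and then iterate. Concretely, set
$$M_0(p) \;:=\; \sum_{t\in T} Y(t)\cdot W(p,t) \qquad \text{for each } p\in P.$$
Pick any linear ordering $\sigma$ of the multiset of transitions encoded by $Y$ (so $\Parikh(\sigma)=Y$). After firing any prefix of $\sigma$, the cumulative consumption at $p$ is at most $M_0(p)$ (productions along the way only add tokens), so each next transition is enabled and $\sigma$ is feasible from $M_0$. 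Firing it yields $M_0+I\cdot Y\gneqq M_0$, and by iterating $k$ times we reach $M_0+k\cdot(I\cdot Y)$, which grows unboundedly on those coordinates where $I\cdot Y$ is strictly positive. Hence $(N,M_0)$ is unbounded, so $N$ is not structurally bounded.

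For the $(\Rightarrow)$ direction, suppose $N$ is not structurally bounded, so there exists some $M_0$ for which $(N,M_0)$ is unbounded. The classical self-covering argument then applies: along any infinite firing sequence witnessing unboundedness, Dickson's lemma (as used in the Karp--Miller construction) provides two reachable markings $M\le M'$ with $M\lneqq M'$ and a firing sequence $\tau$ with $M\,[\tau\rangle\,M'$. Set $Y:=\Parikh(\tau)$. Since $M\neq M'$, the sequence $\tau$ is non-empty, hence $Y\gneqq\zero$; and $I\cdot Y = M'-M\gneqq\zero$, giving the desired T-vector.

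\textbf{Main obstacle.} The sufficiency direction is essentially a one-line linear-algebraic construction once the correct $M_0$ is chosen, and the passage from a self-covering sequence to the T-vector $Y$ in the necessity direction is immediate. The only non-trivial ingredient is the extraction of such a self-covering pair $M\lneqq M'$ from the bare assumption of unboundedness, which relies on the Dickson/Karp--Miller argument; this is the step where a classical external result is actually used.
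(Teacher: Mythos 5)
Your proof is correct. Note, however, that the paper does not prove this statement at all: it is recalled verbatim as Corollary~16 of the cited reference, so there is no internal proof to compare against. Measured against the standard argument in that reference (a paper on linear-algebraic and linear-programming techniques), your route differs on the necessity direction. The usual proof runs by contraposition through the theorem of alternatives: if no $Y\gneqq\zero$ with $I\cdot Y\gneqq\zero$ exists, Farkas-type duality yields a vector $X\ge\one$ with $X^{T}\cdot I\le 0$, whence $X^{T}\cdot M\le X^{T}\cdot M_0$ for every reachable $M$ and every $M_0$, which bounds every place; this keeps the whole proposition inside linear algebra. You instead use the behavioral self-covering argument (K\"onig's lemma on the reachability tree plus Dickson's lemma) to extract $M\,[\tau\rangle\,M'$ with $M\lneqq M'$ and set $Y=\Parikh(\tau)$; this is equally valid and has the merit of producing a $Y$ that is actually realized by a firing sequence, at the cost of invoking the Karp--Miller machinery where the duality proof needs only an alternatives theorem. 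Your sufficiency direction (choosing $M_0(p)=\sum_t Y(t)\,W(p,t)$ so that any linearization of $Y$ is feasible, then iterating by monotonicity) is the same construction both approaches use, and it is carried out correctly. One cosmetic caveat: unboundedness does not hand you a single infinite firing sequence directly; you first need the reachability-tree/K\"onig step to obtain an infinite branch of pairwise distinct markings before Dickson's lemma applies, which is implicit in your appeal to the Karp--Miller construction but worth stating.
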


Consequently, structural boundedness can be checked in polynomial time with linear programming 
(over the rationals, obtaining an integer-valued solution from a rational-valued one) for any weighted Petri net,
contrarily to boundedness, whose checking problem is EXPSPACE-complete.

To check liveness of \WMGineq{}, either Proposition~\ref{LivenessOfCFnets} can be used directly 
(since it is a liveness characterization for CF nets, which contain the \WMGineq{}),
or 
we use the results of the previous subsection;
since using these conditions in a naive way is generally costly,
one can try first the polynomial-time sufficient (and non-necessary) conditions of~\cite{March09}.

To check boundedness of \WMGineq{} using liveness: if it is live, it is bounded iff it is structurally bounded \cite{WTS92,tcs97,STECS}, 
which can be checked in polynomial time via Proposition~\ref{PropCharStructBound}.
Otherwise, suppose it is connected, without source places and non-live: then, by Theorem~\ref{ReachDeadlockWMG}, 
no infinite sequence is feasible
and all feasible sequences of maximal length lead to the same deadlock, thus the \WMGineq{} is necessarily bounded.
However, \WMGineq{} exist that are bounded, and not live nor structurally bounded (e.g. Figure~\ref{DeadWmg}). 

To check reversibility using liveness: suppose it is live. Then if it is bounded, it is reversible~\cite{WTS92}.
Otherwise, it is unbounded, and it is reversible iff a T-sequence (i.e. a sequence containing all transitions and getting back to the same marking) 
is feasible from the initial marking;
thus consistency is a necessary condition for reversibility under the liveness assumption.
More precisely, in live and connected \WMGineq{},
it is necessary and sufficient to check the existence of a T-sequence 
whose Parikh vector equals the unique minimal T-semiflow of the incidence matrix~\cite{WTS92,tcs97}.
By Theorem~{4.10} in~\cite{WTS92}, if $S$ is a consistent, non-deadlocked WTS, then reversibility and liveness of $S$ are equivalent;
this extends readily to \WMGineq{}, since consistency implies that the \WMGineq{} is a WTS.

\section{Systems with shared places not fulfilling the PR-R equality}\label{UnreachforSharedPlaces}

In this section, we provide various examples of systems belonging to the subclasses studied in this paper and
that do not fulfill the PR-R equality, while several other structural and behavioral properties are ensured.
These examples will prove useful to obtain our sufficient conditions of PR-R equality in the subsequent sections
and to show the sharpness of their assumptions.
Notably, they permit to highlight the importance of siphon properties.\\

\noindent {\bf Examples of AMG and H$1$S-\WMGineq{}.} In Figure \ref{CEPRAMG}, 
we provide three examples of AMG satisfying specific conditions, together with two of their P-subsystems:
\begin{itemize}
\item The AMG on the left, which is also a H$1$S-\WMGineq{}, shows that being live, reversible and bounded ($LRB$) 
with only one shared place, while not having a reversible reverse, is not sufficient for ensuring the PR-R equality.

\item The other two AMG in the middle have two shared places (hence are H$2$S-\WMGineq{}): 
each one is the reverse of the other one, both are $LRB$, 
thus in particular fulfill property $\mathcal{R}$, but do not satisfy the PR-R equality.

\item On the right, two live, unbounded P-subsystems of the AMG with two shared places, induced by a minimal siphon, are pictured,
and do not fulfill the PR-R equality.
\end{itemize}

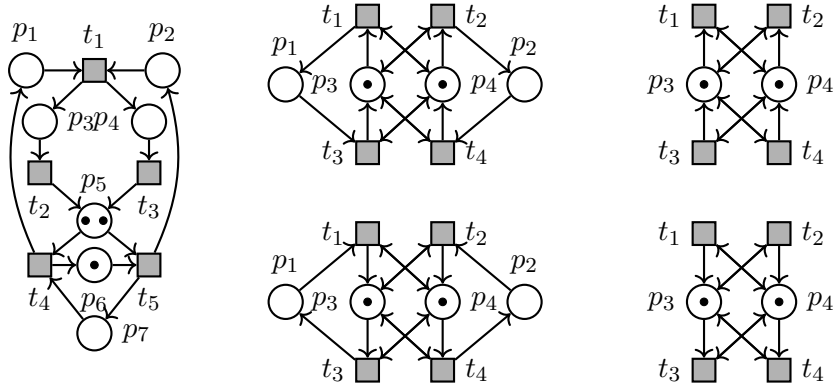
\begin{figure}[!ht]
\centering
\hspace{10mm}
\begin{tikzpicture}[scale=0.9,mypetristyle]
\node (p1) at (1,6) [place] {};
\node [anchor=south] at (p1.north) {$p_1$};
\node (p2) at (3,6) [place] {};
\node [anchor=south] at (p2.north) {$p_2$};
\node (p3) at (1.2,5.25) [place] {};
\node [anchor=west] at (p3.east) {$p_3$};
\node (p4) at (2.8,5.25) [place] {};
\node [anchor=east] at (p4.west) {$p_4$};
\node (p5) at (2,3.8) [place,tokens=2] {};
\node [anchor=south] at (p5.north) {$p_5$};

%\node (p8) at (1.2,3.8) [place,tokens=2] {};
%\node [anchor=east] at (p8.west) {$p_8$};
%
%\node (p9) at (2.8,3.8) [place,tokens=1] {};
%\node [anchor=west] at (p9.east) {$p_9$};

\node (p6) at (2,3.15) [place,tokens=1] {};
\node [anchor=north] at (p6.south) {$p_6$};

\node (p7) at (2,2.15) [place] {};
\node [anchor=west] at (p7.east) {$p_7$};

\node (t1) at (2,6) [transition,thick] {};
\node (t2) at (1.2,4.5) [transition,thick] {};
\node (t3) at (2.8,4.5) [transition,thick] {};
\node (t4) at (1.2,3.15) [transition,thick] {};
\node (t5) at (2.8,3.15) [transition,thick] {};

\node [anchor=south] at (t1.north) {$t_1$};
\node [anchor=north] at (t2.south) {$t_2$};
\node [anchor=north] at (t3.south) {$t_3$};
\node [anchor=north] at (t4.south) {$t_4$};
\node [anchor=north] at (t5.south) {$t_5$};

\draw [->,thick] (t1) to node [] {} (p3);
\draw [->,thick] (t1) to node [] {} (p4);
\draw [->,thick] (p3) to node [] {} (t2);
\draw [->,thick] (p4) to node [] {} (t3);
\draw [->,thick] (t2) to node [] {} (p5);
\draw [->,thick] (t3) to node [] {} (p5);
\draw [->,thick] (p5) to node [] {} (t4);
\draw [->,thick] (p5) to node [] {} (t5);
\draw [->,thick] (t4) to node [] {} (p6);
\draw [->,thick] (p6) to node [] {} (t5);
\draw [->,thick] (t5) to node [] {} (p7);
\draw [->,thick] (p7) to node [] {} (t4);

\draw [->,thick,bend left=20] (t4) to node [] {} (p1);
\draw [->,thick,bend right=20] (t5) to node [] {} (p2);

\draw [->,thick] (p1) to node [] {} (t1);
\draw [->,thick] (p2) to node [] {} (t1);

%\draw [->,thick] (t2) to node [] {} (p8);
%\draw [->,thick] (p8) to node [] {} (t4);
%
%\draw [->,thick] (t3) to node [] {} (p9);
%\draw [->,thick] (p9) to node [] {} (t5);

\end{tikzpicture}
\hspace{7mm}
\raisebox{20mm}{
\begin{minipage}{0.6\linewidth}
\begin{tikzpicture}[scale=0.9,mypetristyle]
\node (p1) at (0,0) [place] {};
\node [anchor=south] at (p1.north) {$p_1$};
\node (p2) at (3.5,0) [place] {};
\node [anchor=south] at (p2.north) {$p_2$};
\node (p3) at (1.2,0) [place,tokens=1] {};
\node [anchor=east] at (p3.west) {$p_3$};
\node (p4) at (2.3,0) [place,tokens=1] {};
\node [anchor=west] at (p4.east) {$p_4$};

\node (t1) at (1.2,1) [transition,thick] {};
\node (t2) at (2.3,1) [transition,thick] {};
\node (t3) at (1.2,-1) [transition,thick] {};
\node (t4) at (2.3,-1) [transition,thick] {};

\node [anchor=east] at (t1.west) {$t_1$};
\node [anchor=west] at (t2.east) {$t_2$};
\node [anchor=east] at (t3.west) {$t_3$};
\node [anchor=west] at (t4.east) {$t_4$};

\draw [->,thick] (t1) to node [] {} (p1);
\draw [->,thick] (p1) to node [] {} (t3);
\draw [->,thick] (t3) to node [] {} (p3);
\draw [->,thick] (p3) to node [] {} (t1);
\draw [->,thick] (p4) to node [] {} (t2);
\draw [->,thick] (t2) to node [] {} (p2);
\draw [->,thick] (p2) to node [] {} (t4);
\draw [->,thick] (t4) to node [] {} (p4);
\draw [->,thick,bend left=0] (p3) to node [] {} (t2);
\draw [->,thick,bend left=0] (t2) to node [] {} (p3);
\draw [->,thick,bend left=0] (p4) to node [] {} (t1);
\draw [->,thick,bend left=0] (t1) to node [] {} (p4);
\draw [->,thick] (p3) to node [] {} (t4);
\draw [->,thick] (t4) to node [] {} (p3);
\draw [->,thick] (p4) to node [] {} (t3);
\draw [->,thick] (t3) to node [] {} (p4);

\end{tikzpicture}
%}
%
%\end{minipage}
%
\hspace{10mm}
%
%\begin{minipage}{0.3\linewidth}
%%\centering
%%
%\raisebox{10mm}{
%%
\begin{tikzpicture}[scale=0.9,mypetristyle]

\node (p3) at (1.2,0) [place,tokens=1] {};
\node [anchor=east] at (p3.west) {$p_3$};
\node (p4) at (2.3,0) [place,tokens=1] {};
\node [anchor=west] at (p4.east) {$p_4$};

\node (t1) at (1.2,1) [transition,thick] {};
\node (t2) at (2.3,1) [transition,thick] {};
\node (t3) at (1.2,-1) [transition,thick] {};
\node (t4) at (2.3,-1) [transition,thick] {};

\node [anchor=east] at (t1.west) {$t_1$};
\node [anchor=west] at (t2.east) {$t_2$};
\node [anchor=east] at (t3.west) {$t_3$};
\node [anchor=west] at (t4.east) {$t_4$};

\draw [->,thick] (t3) to node [] {} (p3);
\draw [->,thick] (p3) to node [] {} (t1);
\draw [->,thick] (p4) to node [] {} (t2);
\draw [->,thick] (t4) to node [] {} (p4);
\draw [->,thick,bend left=0] (p3) to node [] {} (t2);
\draw [->,thick,bend left=0] (t2) to node [] {} (p3);
\draw [->,thick,bend left=0] (p4) to node [] {} (t1);
\draw [->,thick,bend left=0] (t1) to node [] {} (p4);
\draw [->,thick] (p3) to node [] {} (t4);
\draw [->,thick] (t4) to node [] {} (p3);
\draw [->,thick] (p4) to node [] {} (t3);
\draw [->,thick] (t3) to node [] {} (p4);
\end{tikzpicture}\\
~ \\
%}
%\end{minipage}
%%
%}
%\end{minipage}
%
%\hspace{2mm}
%
%\raisebox{22mm}{
%%
%\begin{minipage}{0.32\linewidth}
%
\begin{tikzpicture}[scale=0.9,mypetristyle]
\node (p1) at (0,0) [place] {};
\node [anchor=south] at (p1.north) {$p_1$};
\node (p2) at (3.5,0) [place] {};
\node [anchor=south] at (p2.north) {$p_2$};
\node (p3) at (1.2,0) [place,tokens=1] {};
\node [anchor=east] at (p3.west) {$p_3$};
\node (p4) at (2.3,0) [place,tokens=1] {};
\node [anchor=west] at (p4.east) {$p_4$};

\node (t1) at (1.2,1) [transition,thick] {};
\node (t2) at (2.3,1) [transition,thick] {};
\node (t3) at (1.2,-1) [transition,thick] {};
\node (t4) at (2.3,-1) [transition,thick] {};

\node [anchor=east] at (t1.west) {$t_1$};
\node [anchor=west] at (t2.east) {$t_2$};
\node [anchor=east] at (t3.west) {$t_3$};
\node [anchor=west] at (t4.east) {$t_4$};

\draw [->,thick] (p1) to node [] {} (t1);
\draw [->,thick] (t3) to node [] {} (p1);
\draw [->,thick] (p3) to node [] {} (t3);
\draw [->,thick] (t1) to node [] {} (p3);
\draw [->,thick] (t2) to node [] {} (p4);
\draw [->,thick] (p2) to node [] {} (t2);
\draw [->,thick] (t4) to node [] {} (p2);
\draw [->,thick] (p4) to node [] {} (t4);
\draw [->,thick,bend left=0] (t2) to node [] {} (p3);
\draw [->,thick,bend left=0] (p3) to node [] {} (t2);
\draw [->,thick,bend left=0] (t1) to node [] {} (p4);
\draw [->,thick,bend left=0] (p4) to node [] {} (t1);
\draw [->,thick] (p3) to node [] {} (t4);
\draw [->,thick] (t4) to node [] {} (p3);
\draw [->,thick] (p4) to node [] {} (t3);
\draw [->,thick] (t3) to node [] {} (p4);

\end{tikzpicture}
%}
%
\hspace{10mm}
%
%\end{minipage}
%
\begin{tikzpicture}[scale=0.9,mypetristyle]

\node (p3) at (1.2,0) [place,tokens=1] {};
\node [anchor=east] at (p3.west) {$p_3$};
\node (p4) at (2.3,0) [place,tokens=1] {};
\node [anchor=west] at (p4.east) {$p_4$};

\node (t1) at (1.2,1) [transition,thick] {};
\node (t2) at (2.3,1) [transition,thick] {};
\node (t3) at (1.2,-1) [transition,thick] {};
\node (t4) at (2.3,-1) [transition,thick] {};

\node [anchor=east] at (t1.west) {$t_1$};
\node [anchor=west] at (t2.east) {$t_2$};
\node [anchor=east] at (t3.west) {$t_3$};
\node [anchor=west] at (t4.east) {$t_4$};

\draw [->,thick] (p3) to node [] {} (t3);
\draw [->,thick] (t1) to node [] {} (p3);
\draw [->,thick] (t2) to node [] {} (p4);
\draw [->,thick] (p4) to node [] {} (t4);
\draw [->,thick,bend left=0] (t2) to node [] {} (p3);
\draw [->,thick,bend left=0] (p3) to node [] {} (t2);
\draw [->,thick,bend left=0] (t1) to node [] {} (p4);
\draw [->,thick,bend left=0] (p4) to node [] {} (t1);
\draw [->,thick] (p3) to node [] {} (t4);
\draw [->,thick] (t4) to node [] {} (p3);
\draw [->,thick] (p4) to node [] {} (t3);
\draw [->,thick] (t3) to node [] {} (p4);
\end{tikzpicture}
\end{minipage}
}
%
%\end{tabular}

\vspace*{4mm}

\caption{
On the left,
the AMG system $S=(N,M_0)$ is live, reversible and bounded ($LRB$),
with only one shared place.
Its reverse is $\overline{LR}B$ (where $\overline{Q}$ denotes the negation of property $Q$),
hence does not fulfill property~$\mathcal{R}$.
Indeed, in the reverse $-S$, $t_2$ can be fired two times, leading to the deadlock $M_D=(0,0,2,0,0,1,0)$. %%OAC.
This marking is potentially reachable in $S$ through the Parikh vector $Y=(2,0,2,2,2)$,
although not reachable.
On the top middle,
the AMG $S'$ satisfies property $\mathcal{R}$: it is $LRB$ and its reverse $-S'$, pictured on the bottom middle, is also $LRB$.
$S'$ contains the minimal siphon $D'=\{p_3,p_4\}$ which induces the strongly connected, 
$LR\overline{B}$ subsystem $S_{D'}$
on the top right. On the bottom right, $-S_{D'}$ is $LR\overline{B}$.  
In $S'$, the marking $(1,1,0,0)$ is a potentially reachable deadlock obtained with the Parikh vector $Y=(1,1,0,0)$
that is not reachable.
}
\label{CEPRAMG}
\end{figure}

On the left of Figure \ref{CE2choiceplaces}, we provide an example of an AMG that is H$2$S-\WMGineq{}, 
live and fulfills property $\mathcal{R}$, whose minimal siphons have at most one shared place,
but does not fulfill the PR-R equality.\\

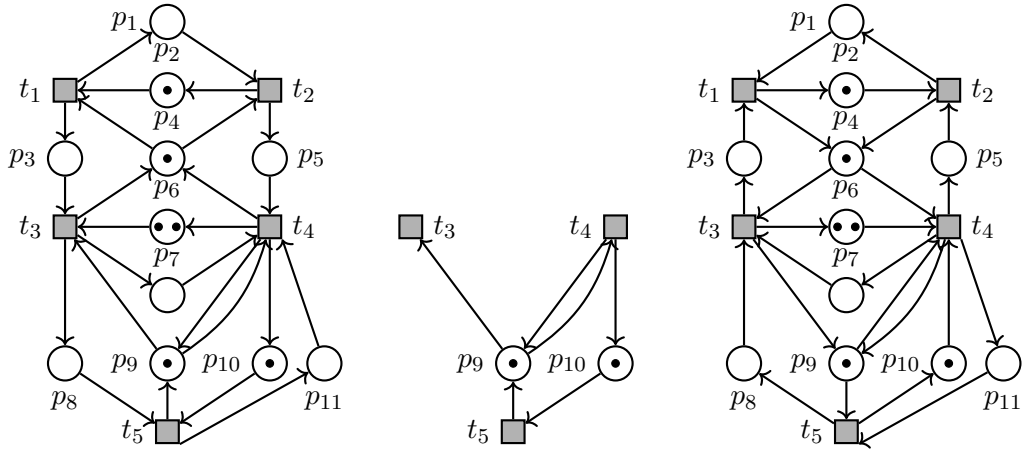
\begin{figure}[!ht]
\centering
\begin{tikzpicture}[scale=0.9,mypetristyle]

\node (p1) at (1.5,6) [place] {};
\node [anchor=east] at (p1.west) {$p_1$};

\node (p2) at (1.5,5) [place,tokens=1] {};
\node [anchor=south] at (p2.north) {$p_2$};

\node (p3) at (0,4) [place] {};
\node [anchor=east] at (p3.west) {$p_3$};

\node (p4) at (1.5,4) [place,tokens=1] {};
\node [anchor=south] at (p4.north) {$p_4$};

\node (p5) at (3,4) [place] {};
\node [anchor=west] at (p5.east) {$p_5$};

\node (p6) at (1.5,3) [place,tokens=2] {};
\node [anchor=south] at (p6.north) {$p_6$};

\node (p7) at (1.5,2) [place] {};
\node [anchor=south] at (p7.north) {$p_7$};

\node (p8) at (0,1) [place] {};
\node [anchor=north] at (p8.south) {$p_8$};

\node (p9) at (1.5,1) [place,tokens=1] {};
\node [anchor=east] at (p9.west) {$p_9$};

\node (p10) at (3,1) [place,tokens=1] {};
\node [anchor=east] at (p10.west) {$p_{10}$};

\node (p11) at (3.8,1) [place] {};
\node [anchor=north] at (p11.south) {$p_{11}$};

\node (t1) at (0,5) [transition,thick] {};
\node (t2) at (3,5) [transition,thick] {};
\node (t3) at (0,3) [transition,thick] {};
\node (t4) at (3,3) [transition,thick] {};
\node (t5) at (1.5,0) [transition,thick] {};

\node [anchor=east] at (t1.west) {$t_1$};
\node [anchor=west] at (t2.east) {$t_2$};
\node [anchor=east] at (t3.west) {$t_3$};
\node [anchor=west] at (t4.east) {$t_4$};
\node [anchor=east] at (t5.west) {$t_5$};

\draw [->,thick] (t1) to node [] {} (p1);
\draw [->,thick] (p1) to node [] {} (t2);
\draw [->,thick] (t2) to node [] {} (p2);
\draw [->,thick] (p2) to node [] {} (t1);
\draw [->,thick] (p5) to node [] {} (t4);
\draw [->,thick] (t4) to node [] {} (p4);
\draw [->,thick] (p4) to node [] {} (t2);
\draw [->,thick] (t2) to node [] {} (p5);
\draw [->,thick] (p3) to node [] {} (t3);
\draw [->,thick] (t3) to node [] {} (p4);
\draw [->,thick] (p4) to node [] {} (t1);
\draw [->,thick] (t1) to node [] {} (p3);
\draw [->,thick] (p6) to node [] {} (t3);
\draw [->,thick] (t3) to node [] {} (p7);
\draw [->,thick] (t4) to node [] {} (p6);
\draw [->,thick] (p7) to node [] {} (t4);
\draw [->,thick] (t4) to node [] {} (p10);
\draw [->,thick] (p10) to node [] {} (t5);
\draw [->,thick] (t5.south east) to node [] {} (p11);
\draw [->,thick] (p11) to node [] {} (t4.south east);
\draw [->,thick] (t5) to node [] {} (p9);
\draw [->,thick] (p9) to node [] {} (t3);
\draw [->,thick] (t3) to node [] {} (p8);
\draw [->,thick] (p8) to node [] {} (t5);
\draw [->,thick,bend right=20] (p9) to node [] {} (t4);
\draw [->,thick] (t4) to node [] {} (p9);
\end{tikzpicture}
\hspace{4mm}
\begin{tikzpicture}[scale=0.9,mypetristyle]

\node (p9) at (1.5,1) [place,tokens=1] {};
\node [anchor=east] at (p9.west) {$p_9$};

\node (p10) at (3,1) [place,tokens=1] {};
\node [anchor=east] at (p10.west) {$p_{10}$};

\node (t3) at (0,3) [transition,thick] {};
\node (t4) at (3,3) [transition,thick] {};
\node (t5) at (1.5,0) [transition,thick] {};

\node [anchor=west] at (t3.east) {$t_3$};
\node [anchor=east] at (t4.west) {$t_4$};
\node [anchor=east] at (t5.west) {$t_5$};

\draw [->,thick] (t4) to node [] {} (p10);
\draw [->,thick] (p10) to node [] {} (t5);

\draw [->,thick] (t5) to node [] {} (p9);
\draw [->,thick] (p9) to node [] {} (t3);
\draw [->,thick,bend right=20] (p9) to node [] {} (t4);
\draw [->,thick] (t4) to node [] {} (p9);
\end{tikzpicture}
\hspace{4mm}
\begin{tikzpicture}[scale=0.9,mypetristyle]

\node (p1) at (1.5,6) [place] {};
\node [anchor=east] at (p1.west) {$p_1$};

\node (p2) at (1.5,5) [place,tokens=1] {};
\node [anchor=south] at (p2.north) {$p_2$};

\node (p3) at (0,4) [place] {};
\node [anchor=east] at (p3.west) {$p_3$};

\node (p4) at (1.5,4) [place,tokens=1] {};
\node [anchor=south] at (p4.north) {$p_4$};

\node (p5) at (3,4) [place] {};
\node [anchor=west] at (p5.east) {$p_5$};

\node (p6) at (1.5,3) [place,tokens=2] {};
\node [anchor=south] at (p6.north) {$p_6$};

\node (p7) at (1.5,2) [place] {};
\node [anchor=south] at (p7.north) {$p_7$};

\node (p8) at (0,1) [place] {};
\node [anchor=north] at (p8.south) {$p_8$};

\node (p9) at (1.5,1) [place,tokens=1] {};
\node [anchor=east] at (p9.west) {$p_9$};

\node (p10) at (3,1) [place,tokens=1] {};
\node [anchor=east] at (p10.west) {$p_{10}$};

\node (p11) at (3.8,1) [place] {};
\node [anchor=north] at (p11.south) {$p_{11}$};

\node (t1) at (0,5) [transition,thick] {};
\node (t2) at (3,5) [transition,thick] {};
\node (t3) at (0,3) [transition,thick] {};
\node (t4) at (3,3) [transition,thick] {};
\node (t5) at (1.5,0) [transition,thick] {};

\node [anchor=east] at (t1.west) {$t_1$};
\node [anchor=west] at (t2.east) {$t_2$};
\node [anchor=east] at (t3.west) {$t_3$};
\node [anchor=west] at (t4.east) {$t_4$};
\node [anchor=east] at (t5.west) {$t_5$};

\draw [->,thick] (p1) to node [] {} (t1);
\draw [->,thick] (t2) to node [] {} (p1);
\draw [->,thick] (p2) to node [] {} (t2);
\draw [->,thick] (t1) to node [] {} (p2);
\draw [->,thick] (t4) to node [] {} (p5);
\draw [->,thick] (p4) to node [] {} (t4);
\draw [->,thick] (t2) to node [] {} (p4);
\draw [->,thick] (p5) to node [] {} (t2);
\draw [->,thick] (t3) to node [] {} (p3);
\draw [->,thick] (p4) to node [] {} (t3);
\draw [->,thick] (t1) to node [] {} (p4);
\draw [->,thick] (p3) to node [] {} (t1);
\draw [->,thick] (t3) to node [] {} (p6);
\draw [->,thick] (p7) to node [] {} (t3);
\draw [->,thick] (p6) to node [] {} (t4);
\draw [->,thick] (t4) to node [] {} (p7);
\draw [->,thick] (p10) to node [] {} (t4);
\draw [->,thick] (t5) to node [] {} (p10);
\draw [->,thick] (p11) to node [] {} (t5.south east);
\draw [->,thick] (t4.south east) to node [] {} (p11);
\draw [->,thick] (p9) to node [] {} (t5);
\draw [->,thick] (t3) to node [] {} (p9);
\draw [->,thick] (p8) to node [] {} (t3);
\draw [->,thick] (t5) to node [] {} (p8);
\draw [->,thick,bend left=20] (t4) to node [] {} (p9);
\draw [->,thick] (p9) to node [] {} (t4);
\end{tikzpicture}

\vspace*{4mm}

\caption{
On the left, a $LRB$ AMG $S=(N,M_0)$ with two shared places.
It contains the ill-formed (\ie non well-formed) P-subnet induced by the minimal siphon $\{p_9, p_{10}\}$,
depicted in the middle.
This siphon cannot become unmarked through any feasible firing sequence,
however it is unmarked at the potentially reachable marking obtained with the Parikh vector $(2,1,2,0,1)$.
It is not reachable in $S$, but becomes reachable if one initial token is added to $p_3$,
through the sequence $t_3 \, t_1 \, t_5 \, t_3  \, t_2 \, t_1$.
Thus, $S$ is live but not m-live (i.e. does not always remain live upon any addition of initial tokens). %TH
The reverse of $S$ is pictured on the right:
it is $LRB$. Thus $S$ fulfills property $\mathcal{R}$,
but does not fulfill the PR-R equality.
}
\label{CE2choiceplaces}
\end{figure}

\noindent {\bf Importance of the structure of P-subsystems induced by minimal siphons.}
In the middle of Figure~\ref{CE2choiceplaces}, we exhibit a minimal siphon that induces
an ill-formed (i.e. non well-formed) P-subsystem of the live AMG system on the left,
explaining the non-reachability of some potentially reachable marking that empties the siphon.
Indeed, no siphon can become unmarked at any reachable marking in live AMG~\cite{ChuXie97}. 

Going back to the examples in the middle of Figure \ref{CEPRAMG},
the unique minimal siphon with shared places contains $2$ shared places
and
induces a live, strongly connected P-subsystem fulfilling property~$\mathcal{R}$.
The two other minimal siphons induce two strongly connected, ($1$-)safe, live and reversible state machines
with one shared place, their reverse having the same properties.
Notice also that the set of siphons of~$S'$ is the same as in $-S'$ in this example,
although it is not the case in general.
Thus,
assuming each minimal siphon of $S'$ and $-S'$ to induce a strongly connected, live and reversible P-subsystem
is not sufficient for ensuring the PR-R equality in AMG, even with only two shared places (resource places),
hence in $2$S-\WMGineq{}.

Besides, we show in Figure \ref{2EWMGnotR} a non-homogeneous $4$S-\WMGineq{} in which each minimal siphon induces a well-formed,
live and reversible P-subsystem
and that does not satisfy the PR-R equality.\\

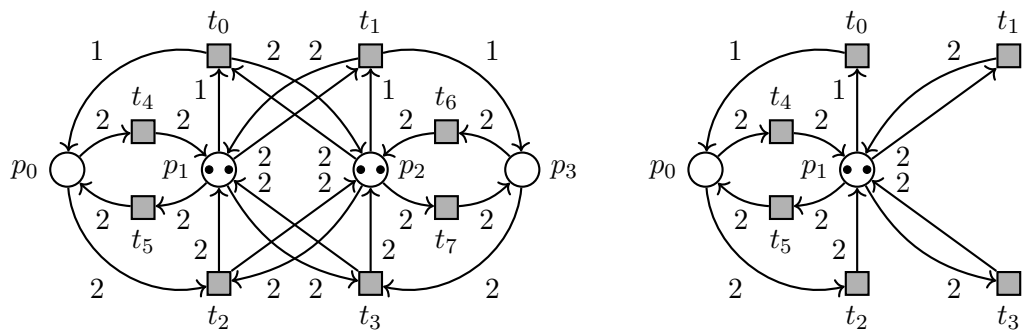
\begin{figure}[!ht]
\centering
\begin{tikzpicture}[scale=1,mypetristyle]

\node (p0) at (0,0) [place] {};
\node [anchor=east] at (p0.west) {$p_0$};

\node (p1) at (2,0) [place,tokens=2] {};
\node [anchor=east] at (p1.west) {$p_1$};

\node (p2) at (4,0) [place,tokens=2] {};
\node [anchor=west] at (p2.east) {$p_2$};

\node (p3) at (6,0) [place] {};
\node [anchor=west] at (p3.east) {$p_3$};

\node (t0) at (2,1.5) [transition,thick] {};
\node (t1) at (4,1.5) [transition,thick] {};
\node (t2) at (2,-1.5) [transition,thick] {};
\node (t3) at (4,-1.5) [transition,thick] {};
\node (t4) at (1,0.5) [transition,thick] {};
\node (t5) at (1,-0.5) [transition,thick] {};
\node (t6) at (5,0.5) [transition,thick] {};
\node (t7) at (5,-0.5) [transition,thick] {};

\node [anchor=south] at (t0.north) {$t_0$};
\node [anchor=south] at (t1.north) {$t_1$};
\node [anchor=north] at (t2.south) {$t_2$};
\node [anchor=north] at (t3.south) {$t_3$};
\node [anchor=south] at (t4.north) {$t_4$};
\node [anchor=north] at (t5.south) {$t_5$};
\node [anchor=south] at (t6.north) {$t_6$};
\node [anchor=north] at (t7.south) {$t_7$};

\draw [->,thick, bend left=20] (p0) to node [above] {$2$} (t4);
\draw [->,thick, bend right=50] (p0) to node [below left] {$2$} (t2);

\draw [->,thick] (p1) to node [above left] {$1$} (t0);
\draw [->,thick, bend left=20] (p1) to node [below] {$2$} (t5);
\draw [->,thick] (p1) to node [below, near start] {$2$} (t1.south west);
\draw [->,thick, bend right=25] (p1) to node [below, near end] {$2$} (t3);

\draw [->,thick] (p2) to node [above right] {$1$} (t1);
\draw [->,thick] (p2) to node [below, near start] {$2$} (t0.south east);
\draw [->,thick, bend left=25] (p2) to node [below, near end] {$2$} (t2);
\draw [->,thick, bend right=20] (p2) to node [below] {$2$} (t7);

\draw [->,thick, bend left=50] (p3) to node [below right] {$2$} (t3);
\draw [->,thick, bend right=20] (p3) to node [above] {$2$} (t6);

\draw [->,thick, bend right=50] (t0) to node [above left] {$1$} (p0);
\draw [->,thick, bend left=25] (t0) to node [above, near start] {$2$} (p2);

\draw [->,thick, bend right=25] (t1) to node [above, near start] {$2$} (p1);
\draw [->,thick, bend left=50] (t1) to node [above right] {$1$} (p3);

\draw [->,thick] (t2) to node [below left] {$2$} (p1);
\draw [->,thick] (t2.north east) to node [above, near end] {$2$} (p2);
\draw [->,thick] (t3.north west) to node [above, near end] {$2$} (p1);
\draw [->,thick] (t3) to node [below right] {$2$} (p2);
\draw [->,thick, bend left=20] (t4) to node [above] {$2$} (p1);
\draw [->,thick, bend left=20] (t5) to node [below] {$2$} (p0);
\draw [->,thick, bend right=20] (t6) to node [above] {$2$} (p2);
\draw [->,thick, bend right=20] (t7) to node [below] {$2$} (p3);

\end{tikzpicture}
\hspace*{5mm}
\begin{tikzpicture}[scale=1,mypetristyle]

\node (p0) at (0,0) [place] {};
\node [anchor=east] at (p0.west) {$p_0$};

\node (p1) at (2,0) [place,tokens=2] {};
\node [anchor=east] at (p1.west) {$p_1$};

\node (t0) at (2,1.5) [transition,thick] {};
\node (t1) at (4,1.5) [transition,thick] {};
\node (t2) at (2,-1.5) [transition,thick] {};
\node (t3) at (4,-1.5) [transition,thick] {};
\node (t4) at (1,0.5) [transition,thick] {};
\node (t5) at (1,-0.5) [transition,thick] {};

\node [anchor=south] at (t0.north) {$t_0$};
\node [anchor=south] at (t1.north) {$t_1$};
\node [anchor=north] at (t2.south) {$t_2$};
\node [anchor=north] at (t3.south) {$t_3$};
\node [anchor=south] at (t4.north) {$t_4$};
\node [anchor=north] at (t5.south) {$t_5$};

\draw [->,thick, bend left=20] (p0) to node [above] {$2$} (t4);
\draw [->,thick, bend right=50] (p0) to node [below left] {$2$} (t2);

\draw [->,thick] (p1) to node [above left] {$1$} (t0);
\draw [->,thick, bend left=20] (p1) to node [below] {$2$} (t5);
\draw [->,thick] (p1) to node [below, near start] {$2$} (t1.south west);
\draw [->,thick, bend right=25] (p1) to node [below, near end] {$2$} (t3);

\draw [->,thick, bend right=50] (t0) to node [above left] {$1$} (p0);

\draw [->,thick, bend right=25] (t1) to node [above, near start] {$2$} (p1);

\draw [->,thick] (t2) to node [below left] {$2$} (p1);
\draw [->,thick] (t3.north west) to node [above, near end] {$2$} (p1);
\draw [->,thick, bend left=20] (t4) to node [above] {$2$} (p1);
\draw [->,thick, bend left=20] (t5) to node [below] {$2$} (p0);

\end{tikzpicture}

\caption{
On the left, 
a non-homogeneous $4$S-\WMGineq{} in which the marking $(1,1,1,1)$ is potentially reachable although not reachable.
It satisfies properties $\mathcal{L}$ and $\mathcal{R}$.
On the right,
the P-subsystem induced by the minimal siphon $\{p_0,p_1\}$ is pictured.
Each minimal siphon that contains $p_0$ must contain $p_1$ and reciprocally.
By symmetry,
each minimal siphon that contains $p_2$ must contain $p_3$ and reciprocally.
Thus, the only minimal siphons of the system are $\{p_0,p_1\}$ and $\{p_2,p_3\}$,
which cover the set of places
and
induce strongly connected, live, reversible and conservative P-subsystems.
}
\label{2EWMGnotR}
\end{figure}

Next, in Section \ref{Sec1HEWMG}, we provide a sufficient condition of PR-R equality in H$1$S-\WMGineq{}, without assumptions on siphons.
Later on, in Section \ref{SecAMG}, 
we take inspiration from these examples to derive a sufficient condition of PR-R equality
in AMG with an arbitrary number of shared places, using properties of the minimal siphons.
We investigate reachability for the new class of \PCMGineq{} in Section~\ref{SecPCMG}.\\

\section{Ensuring the PR-R Equality in H$1$S-\WMGineq{}}\label{Sec1HEWMG}

\vspace*{2mm}

In this section, we first recall a result on the potential reachability graph of H$1$S-\WMGineq{},
and deduce a sufficient condition of PR-R equality in this class.
Then, we recall ways of checking its assumptions of liveness and reversibility in the larger class of H$1$S systems.

\subsection{Potential reachability in H$1$S-\WMGineq{}}

We recall next theorem, illustrated in Figure~\ref{FigOnePlaceCommonReachabilityTheorem},
which applies to H$1$S-\WMGineq{} and exploits liveness, but does not need the reversibility nor boundedness assumptions.

\begin{theorem}[Properties of the potential reachability graph in H$1$S-\WMGineq{}~\cite{ArxivSSP20}]\label{OnePlaceCommonReachability}
Consider a live H$1$S-\WMGineq{} $S=(N,M_0)$. %, such that $-S$ is also $LBR$. 
For any Parikh vector $Y$ and marking $M$ such that $M_0 + I \cdot Y = M$,
there exists a firing sequence $M_0 \xrightarrow{\sigma}{}{} M'$
such that $M'$ is also reached by firing $\sigma \pminus Y$ from $M$, where $\Parikh(\sigma) \ge Y$.
Consequently, $PRG(S)$ is initially directed and $(N,M)$ is live.
\end{theorem}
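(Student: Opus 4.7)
The plan is to observe first that the algebraic part of the statement is essentially free. Indeed, once $\Parikh(\sigma) \ge Y$, the residue $\sigma \pminus Y$ has Parikh vector $\Parikh(\sigma) - Y$ by Definition~\ref{residues.def}, so the state equation gives
\[
M_0 + I \cdot \Parikh(\sigma) \;=\; (M_0 + I \cdot Y) + I \cdot (\Parikh(\sigma) - Y) \;=\; M + I \cdot \Parikh(\sigma \pminus Y),
\]
meaning that both firings necessarily land on the same marking $M'$ as soon as they are both feasible. The substance of the theorem is therefore the firability of $\sigma$ from $M_0$ and of $\sigma \pminus Y$ from $M$.

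As a warm-up, I would dispatch the degenerate case where $N$ has no shared place: then $N$ is a \WMGineq{}, so by liveness some feasible sequence $\alpha$ from $M_0$ satisfies $\Parikh(\alpha) \ge Y$, Proposition~\ref{RealizableTvectors} yields a feasible sequence $\tau$ from $M_0$ with $\Parikh(\tau) = Y$ reaching $M$, and Keller's Theorem~\ref{kellersTheorem} applied to $\tau$ and $\alpha$ (both feasible) produces $\sigma = \tau(\alpha \pminus \tau)$, feasible from $M_0$, whose residue $\sigma \pminus Y$ is precisely $\alpha \pminus \tau$, fireable from $M$ since it is the tail of the Keller completion starting at $M$. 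This already shows how persistence is the key engine.

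For the general H$1$S-\WMGineq{} case, let $p$ be the unique shared place and $w$ its common output weight, guaranteed by homogeneity. Deleting $p$ leaves a \WMGineq{} $N'$ which is structurally persistent, so the only source of conflict in $N$ is the competition at $p$. I would argue by induction on the cardinality $|Y|$, with base case $Y = \zero$ trivial. For the inductive step, liveness gives some feasible sequence at $M_0$ firing each transition of $\support(Y)$; among its prefixes, pick the shortest $\rho\, t$ such that $t \in \support(Y)$. Firing $\rho\, t$ leads to $M_1$ and the state equation becomes $M = M_1 + I\cdot Y_1$ where $Y_1$ is obtained from $Y$ by decrementing $t$ and possibly compensating, via added firings of other transitions, for any excess introduced by $\rho$. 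The homogeneity at $p$ is what allows this compensation to keep $Y_1 \ge \zero$ componentwise at the shared place, because the balance at $p$ modulo $w$ is preserved by firings in $p\lbul$. Invoking the induction hypothesis on $(M_1, Y_1)$ yields a feasible $\sigma_1$ from $M_1$ whose residue $\sigma_1 \pminus Y_1$ is feasible from $M$, and $\sigma = \rho\, t\, \sigma_1$ has all the required properties.

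The main obstacle is precisely this compensation step at the shared place: one must show the inductive measure strictly decreases while maintaining that the remaining residue, read \emph{left-to-right}, is still fireable from $M$. Keller's theorem for the \WMGineq{} obtained after deleting $p$ controls the non-shared places, and homogeneity guarantees that the tokens at $p$ behave as a uniform semaphore whose evolution depends only on the multiset of firings of $p\lbul$, not on their order. Once the construction is complete, initial directedness of $PRG(S)$ follows immediately because $M'$ lies in $R(S) \cap R((N,M))$, and liveness of $(N,M)$ follows by taking any transition $u$: from any marking reachable from $M$, one reaches via $\sigma \pminus Y$ (or a completion thereof) a marking reachable from $M_0$, from which $u$ can be fired by liveness of $S$.
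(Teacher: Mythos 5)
The paper does not actually prove this statement: Theorem~\ref{OnePlaceCommonReachability} is recalled from~\cite{ArxivSSP20} without proof, so your attempt can only be judged on its own merits. Your warm-up (the algebraic identity forcing both firings to the same $M'$, and the shared-place-free case via Proposition~\ref{RealizableTvectors} plus Keller's theorem) is correct. The general case, however, has a genuine gap, and it sits exactly where the difficulty of the theorem lies.

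Concretely, the inductive step does not close. You fire a prefix $\rho\,t$ where $\rho$ contains no transition of $\support(Y)$, reach $M_1$, and want $M = M_1 + I\cdot Y_1$ with $Y_1\ge\zero$ of strictly smaller cardinality. But $Y-\Parikh(\rho\,t)$ is negative on every transition occurring in $\rho$, and the ``compensation via added firings'' is never defined: adding a vector $Z$ with $I\cdot Z=\zero$ to restore non-negativity requires a suitable T-semiflow to exist, and even then $|Y_1| = |Y| - |\Parikh(\rho\,t)| + |Z|$ need not be smaller than $|Y|$, so the induction need not terminate. The remark that the balance at $p$ modulo $w$ is preserved does not repair this. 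Worse, even granting a valid $Y_1$, the sequence your construction must show fireable from $M$ is $\sigma\pminus Y = \rho\,\bigl(\sigma_1\pminus(Y-\one_{t})\bigr)$ (since $\rho$ survives the residue entirely), whereas the induction hypothesis only gives feasibility of $\sigma_1\pminus Y_1$ from $M$ --- a different sequence with a different starting point in the residue; in particular nothing guarantees that $\rho$, which was fired from $M_0$, is enabled at $M$. The closing liveness argument has the same looseness: $\sigma\pminus Y$ is fireable from $M$, not from an arbitrary marking reachable from $M$, so one must re-apply the whole statement to each such marking (which is itself potentially reachable from $M_0$) rather than reuse the same residue. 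The actual proof in~\cite{ArxivSSP20} requires a careful analysis of how the competing output transitions of the unique homogeneous shared place can be scheduled, and that analysis is precisely what is missing here.
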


\begin{figure}[!ht]

\centering

%\begin{minipage}{0.38\linewidth}

\hspace*{1cm}
\begin{tikzpicture}[scale=0.55,mypetristyle]
\node[ltsNode,label=below left:Live~$M_0$](n0)at(0,0){};
\node[ltsNode,label=below right:$M$~is~thus~live](n1)at(3,0){};
\node[ltsNode,label=above:$M'$](n2)at(1.5,2.5){};
\draw[-{>[scale=2.5,length=2,width=2]},bend left=0](n0)to node[auto,swap,above left,near start]{$\sigma$}(n2);
\draw[-{>[scale=2.5,length=2,width=2]},bend left=0](n1)to node[auto,swap,above right,near start]{$\sigma \pminus Y$}(n2);
\draw[-{>[scale=2.5,length=2,width=2]},dashed](n0)to node[auto,swap,below]{$Y$}(n1);
\end{tikzpicture}

%\end{minipage}

\vspace*{2mm}

\caption{
Illustration of the claim of Theorem~\ref{OnePlaceCommonReachability}: for any such T-vector $Y$, there exists $\sigma$ leading to some $M'$ such that $\Parikh(\sigma) \ge Y$
and $\sigma \pminus Y$ leads to $M'$ from $M$.
}
\label{FigOnePlaceCommonReachabilityTheorem}
\end{figure}
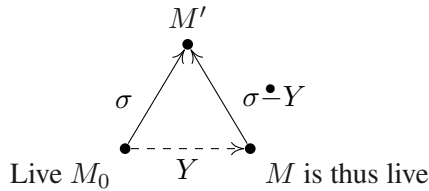

We deduce next result.

\begin{corollary}[Sufficient condition for PR-R equality in H$1$S-\WMGineq{}~\cite{ArxivSSP20}]\label{PRequivRoneChoice}
Let $S=(N,M_0)$ be a live, H$1$S-\WMGineq{} satisfying property $\mathcal{R}$.
Then $R(S) = PR(S)$.
\end{corollary}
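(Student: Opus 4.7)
The plan is to derive this corollary as an immediate composition of the two main preceding results, with no new structural argument needed. Specifically, Theorem~\ref{OnePlaceCommonReachability} already establishes, under the sole assumption of liveness, that the potential reachability graph of a live H$1$S-\WMGineq{} is initially directed. Since $S$ is assumed to be a live H$1$S-\WMGineq{}, I would invoke this theorem directly to conclude that for every $M \in PR(S)$ there is a common marking $M' \in R((N,M_0)) \cap R((N,M))$.

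Having secured initial directedness of $PRG(S)$, I would then apply Theorem~\ref{PReachDirect}, whose hypotheses are precisely (i)~property~$\mathcal{R}$ and (ii)~initial directedness of $PRG(S)$. Hypothesis (i) holds by the assumption of the corollary, and (ii) is exactly what was just obtained. Thus Theorem~\ref{PReachDirect} yields $R(S) = PR(S)$, which is the desired conclusion.

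There is essentially no obstacle here, since the corollary is designed to package the two theorems together: the heavy lifting is already done in Theorem~\ref{OnePlaceCommonReachability} (the residue-style construction giving a common continuation from $M_0$ and from any $M \in PR(S)$) and in Theorem~\ref{PReachDirect} (the reverse-net argument, reflecting a path through $M'$ back to $M$ via reversibility of both $S$ and $-S$). The only subtlety worth noting is that one does \emph{not} need boundedness or any additional well-formedness assumption on the reverse beyond reversibility, which is exactly what the formulation of property~$\mathcal{R}$ provides; this makes the corollary strictly weaker in assumptions than, e.g., requiring the live-bounded-reversible setting from which the reverse is automatically well-behaved in WTS but not in general H$1$S-\WMGineq{}.
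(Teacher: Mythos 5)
Your proposal is correct and matches the paper's own proof exactly: it invokes Theorem~\ref{OnePlaceCommonReachability} to obtain initial directedness of $PRG(S)$ from liveness, and then applies Theorem~\ref{PReachDirect} using property~$\mathcal{R}$. No further comment is needed.
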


\begin{proof}
Since $S$ fulfills the conditions of Theorem~\ref{OnePlaceCommonReachability}, $PRG(S)$ is initially directed.
Moreover, $S$ satisfies property $\mathcal{R}$.
Hence, Theorem~\ref{PReachDirect} applies.
\end{proof}

This corollary is not true in the class of live, H$2$S-\WMGineq{} satisfying property $\mathcal{R}$, 
as examplified in the middle of Figure~\ref{CEPRAMG}.\\

\subsection{Checking liveness and reversibility in H$1$S systems} 

\vspace*{2mm}

Let us recall characterizations of liveness and reversibility for the H$1$S systems, which contain the H$1$S-\WMGineq{}.

A siphon $D$ of a system $S=(N,M_0)$ is said to be {\em deadlocked} if, for each place $p$ in $D$, for each $t \in p^\bullet$, $M_0(p) < W(p,t)$.

\begin{theorem}[Liveness of H$1$S systems~\cite{ArxivSSP20}]\label{LivenessOf1EWMG}
Consider an H$1$S system.
It is live iff no minimal siphon is deadlocked at any reachable marking.
\end{theorem}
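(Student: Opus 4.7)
The plan is to prove the two directions separately. For the \emph{only if} direction, I argue contrapositively: assume some minimal siphon $D$ is deadlocked at a reachable marking $M$, so $M(p) < W(p,t)$ for every $p \in D$ and every $t \in p^\bullet$. Then no transition of $D^\bullet = \bigcup_{p \in D} p^\bullet$ is enabled at $M$, and by the siphon inclusion $^\bullet D \subseteq D^\bullet$, any transition that could add tokens to $D$ must also consume from $D$ and is therefore disabled. Consequently, firing any enabled transition leaves the marking on $D$ unchanged, and an immediate induction shows that $D$ stays deadlocked throughout $[M\rangle$. Assuming $D^\bullet \neq \emptyset$ (which holds for any non-isolated minimal siphon), every transition of $D^\bullet$ is dead from $M$, contradicting liveness.

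For the \emph{if} direction, suppose the system is not live; I want to exhibit a minimal siphon deadlocked at some reachable marking. Starting from a reachable marking at which some transition is dead, I fire a sequence driving the system into a marking $M^\ast$ where the set
\[
T_d = \{ t \in T \mid t \text{ is not fireable at any } M' \in [M^\ast\rangle \}
\]
is maximal and non-empty. The candidate set of places is
\[
D_0 = \bigl\{\, p \in P \;\big|\; p^\bullet \neq \emptyset \text{ and } \forall M' \in [M^\ast\rangle,\; \forall t \in p^\bullet,\; M'(p) < W(p,t) \,\bigr\},
\]
which is deadlocked at $M^\ast$ by construction. I would then extract from $D_0$, via an iterative refinement that drops places whose pre-set contains transitions outside the current post-set, a sub-siphon $D \subseteq D_0$; minimality is then guaranteed by further shrinking while preserving the siphon property, and the deadlock condition carries over automatically since $D \subseteq D_0$.

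The main obstacle is showing that this refinement yields a non-empty siphon. The critical step is: when a transition $t \in {}^\bullet D$ threatens the siphon inclusion, one must identify an input place $q$ of $t$ that belongs to $D$. Because $D \subseteq D_0$ and the outputs of $D$ are ultimately dead, the tokens that $t$ would push into $D$ cannot perpetually help, so $t$ is itself in $T_d$, and some $q \in {}^\bullet t$ is persistently underfilled for $t$. The delicate point is to guarantee that \emph{all} of $q$'s outputs, not just $t$, stay blocked along $[M^\ast\rangle$, so that $q$ qualifies to remain in $D$. This is exactly where the H$1$S hypothesis enters: every non-shared place has a single output, so the requirement collapses; and for the unique shared place $s$, homogeneity makes every output weight of $s$ equal to a common value $w$, so the single inequality $M'(s) < w$ uniformly disables all outputs of $s$. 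Without both ingredients the argument fails, as witnessed by the two-shared-place counter-examples of Section~\ref{UnreachforSharedPlaces}; this is precisely why the classical Commoner-style liveness characterization does not extend cleanly beyond the H$1$S class.
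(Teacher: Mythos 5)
First, a remark on scope: the paper does not prove this theorem — it is recalled verbatim from~\cite{ArxivSSP20} — so there is no in-paper argument to compare yours against, and I assess your proposal on its own. Your \emph{only if} direction is sound (and needs nothing from the H$1$S hypothesis): a deadlocked minimal siphon $D$ has every transition of $D^\bullet$ disabled, the inclusion ${}^\bullet D \subseteq D^\bullet$ freezes the marking on $D$ under any firing, and $D^\bullet$ is therefore dead. The \emph{if} direction, however, has a genuine gap at its pivot. You assert that for a dead transition $t$ ``some $q \in {}^\bullet t$ is persistently underfilled for $t$'' and then concentrate your care on the \emph{next} step (that such a $q$ is underfilled for \emph{all} of its outputs — which is indeed where homogeneity and the single shared place enter). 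But the quoted assertion is itself the hard part, and it is false in general Petri nets: a dead transition may have each input place individually fillable at some marking of $[M^\ast\rangle$ while no single input place stays underfilled throughout $[M^\ast\rangle$. Maximality of $T_d$ does not repair this on its own, because the fillability witnesses for different input places of $t$ can live on incompatible branches of $[M^\ast\rangle$ without any new transition becoming dead.

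Closing this gap requires the H$1$S structure here as well. Every non-shared input place $q$ of a dead transition $u$ satisfies $q^\bullet=\{u\}$, so its token count is non-decreasing from $M^\ast$; hence once filled it stays filled, and the non-shared inputs of $u$ can be filled greedily one after another, with the (non-monotone) shared place $s$ treated last: if $s$ can still be filled from the current marking, $u$ becomes enabled, a contradiction; otherwise $s$ is persistently underfilled from that marking, and by homogeneity for all of $s^\bullet$. This greedy procedure forces you to advance the base marking beyond the one that merely maximizes $|T_d|$: whenever a fillability witness disappears you must move to a deeper reachable marking, and you need a saturation argument (the set of pairs $(u,q)$ with $q$ persistently underfilled for $u$ is monotone under such advances and bounded by $|T|\cdot|P|$) to obtain a single $M^{\ast\ast}$ from which \emph{every} dead transition has a persistently underfilled input place. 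Once that is established, your set $D_0$ computed at $M^{\ast\ast}$ is already a non-empty siphon — the ``iterative refinement'' and the worry that it empties out become moot — and any minimal siphon it contains inherits the per-place deadlock condition. Without this step the siphon construction does not go through, so as written the proof is incomplete.
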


\begin{theorem}[Reversibility of live H$1$S systems~\cite{ArxivSSP20}]\label{ReversibilityOf1EWMG}
Consider a live H$1$S system $S=(N,M_0)$. Then $S$ is reversible iff $S$ enables a T-sequence.
\end{theorem}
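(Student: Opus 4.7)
I would split the proof along the two implications.

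\emph{Forward direction.} Assume $S=(N,M_0)$ is live and reversible. Enumerate $T=\{t_1,\dots,t_m\}$ and build inductively a sequence $\alpha=\beta_1\,t_1\,\beta_2\,t_2\cdots\beta_m\,t_m$ feasible from $M_0$: if $\beta_1 t_1 \cdots \beta_i t_i$ is feasible and reaches $M_i$, liveness of $t_{i+1}$ at $M_i \in R(S)$ supplies a sequence $\beta_{i+1}$ from $M_i$ after which $t_{i+1}$ is enabled. By construction $\support(\Parikh(\alpha))=T$. Reversibility then supplies a sequence $\gamma$ from the endpoint of $\alpha$ back to $M_0$, and the concatenation $\sigma=\alpha\gamma$ satisfies $M_0[\sigma\rangle M_0$, so $I\cdot\Parikh(\sigma)=\zero$; moreover $\support(\Parikh(\sigma))\supseteq T$. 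Hence $\sigma$ is a T-sequence enabled at $M_0$.

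\emph{Backward direction.} Let $\sigma_0$ be a T-sequence enabled at $M_0$ with $Y=\Parikh(\sigma_0)$; so $I\cdot Y=\zero$, $Y(t)\ge 1$ for every $t\in T$, and $M_0[\sigma_0\rangle M_0$. Fix any $M\in R(S)$, reached by some $\tau$ with $X=\Parikh(\tau)$, and choose $k$ large enough that $Z:=kY-X\ge\zero$ (possible since $Y>\zero$ componentwise). Then $I\cdot Z=-I\cdot X=M_0-M$, so $M_0$ is potentially reachable from $M$ via $Z$; the real task is to realize $Z$ as an actual firing sequence from $M$.

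The structural idea is to delete the unique shared place $p$ of $S$, obtaining a subnet $N'$ in which every place has at most one output, hence choice-free and thus structurally persistent. Every sequence feasible in $S$ remains feasible in $S':=(N',\projection{M_0}{P\setminus\{p\}})$, so $\sigma_0^k$ and $\tau$ are both feasible in $S'$. Applying Keller's Theorem~\ref{kellersTheorem} in $S'$ to these two sequences yields that $\tau\,(\sigma_0^k\pminus\tau)$ and $\sigma_0^k\,(\tau\pminus\sigma_0^k)$ are feasible in $S'$ and reach the same marking. Since $kY\ge X$, $\tau\pminus\sigma_0^k=\emptyseq$, so the second sequence is just $\sigma_0^k$, returning to the initial marking of $S'$. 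Hence in $S'$ the residue $\rho:=\sigma_0^k\pminus\tau$ drives $\projection{M}{P\setminus\{p\}}$ back to $\projection{M_0}{P\setminus\{p\}}$, with $\Parikh(\rho)=Z$. I would then lift $\rho$ to a firing sequence feasible in $S$ from $M$: its overall effect on $p$ is already correct, so the only possible obstruction is a transient token shortage at $p$ when an output of $p$ is scheduled. Homogeneity of $p$ makes such shortages uniform (all outputs of $p$ consume the same weight), while liveness of $S$ ensures that from any reachable marking one can always refire a pre-set transition of $p$ to refill it. The plan is thus to repair $\rho$ greedily, inserting refilling firings whenever needed and absorbing them by replaying additional copies of $\sigma_0$, so that the final Parikh vector remains of the form $k'Y-X$ for some $k'\ge k$ and the endpoint is $M_0$.

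\textbf{Main obstacle.} The core difficulty is this last repair step, namely converting feasibility in the persistent projection $S'$ into feasibility in $S$ at the lone shared place $p$. In fully persistent systems Keller's theorem alone would yield reversibility from a feasible T-sequence, but H$1$S systems are only ``almost persistent'', and the conflicts localized at $p$ must be resolved on the fly without disturbing the other places. The interplay between homogeneity at $p$ (uniform consumption by all outputs) and liveness (unrestricted ability to refill $p$) is precisely what makes the repair self-balancing in the $1$S case; the examples in the middle of Figure~\ref{CEPRAMG} show that this strategy already breaks with two shared places, in line with the sharpness of the ``$1$S'' restriction.
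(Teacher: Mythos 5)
The paper contains no proof of this statement: it is recalled from~\cite{ArxivSSP20} as an imported result, so there is no internal proof to compare your route against. Judged on its own terms, your forward direction is correct and standard: liveness lets you build a feasible sequence whose support is all of $T$, reversibility closes it into a cycle, and a feasible cycle with full support is by definition a T-sequence.

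The backward direction, however, has a genuine gap, and it sits exactly where you placed your ``main obstacle''. Everything up to Keller's theorem in the choice-free projection $S'$ is sound, and you correctly obtain a residue $\rho$ with $\Parikh(\rho)=Z=kY-X$ that returns $S'$ to its initial marking. But lifting $\rho$ back to $S$ is the entire content of the theorem, and it is only described as a plan resting on two unjustified claims. First, ``liveness of $S$ ensures that from any reachable marking one can always refire a pre-set transition of $p$ to refill it'' is not what liveness gives you: liveness guarantees that an input transition of $p$ is enabled at some marking reachable from the current one, and reaching that marking may itself require firing output transitions of $p$ --- precisely the firings you are blocked on. Second, ``absorbing'' the inserted refilling firings ``by replaying additional copies of $\sigma_0$'' presupposes that $\sigma_0$ is feasible from the intermediate markings you land on, which you only know from $M_0$; without such a normalization the repaired sequence need not have Parikh vector of the form $k'Y-X$, so there is no reason its endpoint is $M_0$, nor that the greedy repair terminates. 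The counter-examples with two shared places (Figures~\ref{NonRevTwoPlacesBisFirst} and~\ref{NonRevTwoPlaces}) show that an argument exploiting the single shared place in an essential way must carry all the weight here, and that argument is missing: as written, the backward implication is a proof strategy, not a proof.
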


This theorem is no more true in the case of two shared places, as shown in Figure~\ref{NonRevTwoPlacesBisFirst}. 

\begin{figure}[!ht]
\vspace*{1cm}
\centering

\begin{tikzpicture}[scale=1,mypetristyle]

\node (p0) at (-0.5,2.8) [place,tokens=1] {};
\node (p1) at (2.4,2) [place,tokens=1] {};
\node (p2) at (1,0.5) [place,tokens=0] {};
\node (p3) at (3.9,-0.3) [place,tokens=0] {};
\node (p4) at (0.3,1) [place,tokens=1] {};
\node (p5) at (0,2) [place,tokens=1] {};
\node (p6) at (3.4,0.5) [place,tokens=1] {};
\node (p7) at (3,1.5) [place,tokens=1] {};

\node [anchor=east] at (p0.west) {$p_0$};
\node [anchor=south] at (p1.north) {$p_1$};
\node [anchor=north] at (p2.south) {$p_2$};
\node [anchor=west] at (p3.east) {$p_3$};
\node [anchor=south] at (p4.north) {$p_4$};
\node [anchor=south] at (p5.north) {$p_5$};
\node [anchor=north] at (p6.south) {$p_6$};
\node [anchor=north] at (p7.south) {$~~~p_7$};

\node (t0) at (3.9,2.8) [transition] {};
\node (t1) at (1,2) [transition] {};
\node (t2) at (-0.5,-0.3) [transition] {};
\node (t3) at (2.4,0.5) [transition] {};

\node [anchor=west] at (t0.east) {$t_0$};
\node [anchor=south] at (t1.north) {$t_1$};
\node [anchor=east] at (t2.west) {$t_2$};
\node [anchor=north] at (t3.south) {$t_3$};

\draw [->,thick,bend left=0] (t2) to node {} (p2);
\draw [->,thick,bend left=0] (p2) to node {} (t2);

\draw [->,thick] (p2) to node {} (t1);
\draw [->,thick] (t3) to node {} (p2);
\draw [->,thick] (p1) to node {} (t3);
\draw [->,thick] (t1) to node {} (p1);

\draw [->,thick] (t0) to node {} (p7);
\draw [->,thick] (p7) to node {} (t3);

\draw [->,thick,bend right=0] (t0.west) to node {} (p1);
\draw [->,thick,bend left=0] (p1) to node {} (t0.west);

\draw [->,thick, bend right=0] (p0.north east) to node {} (t0.north west);
\draw [->,thick] (t2.north west) to node {} (p0.south west);

%\draw [->,thick] (p0) .. controls  (1.3,2.5) .. (t3.north west);

\draw [->,thick,bend left=0] (t0.south east) to node {} (p3.north east);
\draw [->,thick,bend left=0] (p3.south west) to node {} (t2.south east);

\draw [->,thick] (t2) to node {} (p4);
\draw [->,thick] (p4) to node {} (t1);

\draw [->,thick,bend right=0] (t1) to node {} (p5);
\draw [->,thick,bend right=0] (p5) to node {} (t2.north);

\draw [->,thick,bend right=0] (t3) to node {} (p6);
\draw [->,thick,bend right=0] (p6) to node {} (t0);

\end{tikzpicture}
\hspace*{4mm}
\raisebox{3mm}{
\begin{tikzpicture}[scale=1,mypetristyle]
\node[ltsNode,label=above:$s_0$,minimum width=7pt](s0)at(1,2){};
\node[ltsNode,label=above:$s_1$](s1)at(3,2){};
\node[ltsNode,label=left:$s_2$](s2)at(1,1){};
\node[ltsNode,label=right:$s_3$](s3)at(3,1){};
\node[ltsNode,label=below:$s_4$](s4)at(1,0){};
\node[ltsNode,label=below:$s_5$](s5)at(3,0){};
\node[ltsNode,label=left:$s_6$](s6)at(0,1){};
\node[ltsNode,label=right:$s_7$](s7)at(4,1){};
\draw[-{>[scale=2.5,length=2,width=2]}](s0)to node[above]{$t_3$}(s1);
\draw[-{>[scale=2.5,length=2,width=2]}](s1)to node[left]{$t_1$}(s3);
\draw[-{>[scale=2.5,length=2,width=2]}](s3)to node[left]{$t_0$}(s5);
\draw[-{>[scale=2.5,length=2,width=2]}](s5)to node[below right]{$t_3$}(s7);
\draw[-{>[scale=2.5,length=2,width=2]}](s0)to node[right]{$t_0$}(s2);
\draw[-{>[scale=2.5,length=2,width=2]}](s2)to node[right]{$t_3$}(s4);
\draw[-{>[scale=2.5,length=2,width=2]}](s4)to node[above]{$t_1$}(s5);
\draw[-{>[scale=2.5,length=2,width=2]}](s4)to node[below left]{$t_2$}(s6);
\draw[-{>[scale=2.5,length=2,width=2]}](s6)to node[above left]{$t_1$}(s0);
\draw[-{>[scale=2.5,length=2,width=2]}](s7)to node[above right]{$t_2$}(s1);
\end{tikzpicture}
}
\vspace*{5mm}

\caption{On the left, a unit-weighted, live, structurally bounded system with only two shared places, namely $p_1$ and $p_2$.
The system enables the T-sequence $t_0 \, t_3 \, t_2 \, t_1$ but is not reversible.
On the right, its non strongly connected reachability graph is pictured, with initial state $s_0$.
}
\label{NonRevTwoPlacesBisFirst}
\vspace*{2mm}
\end{figure}
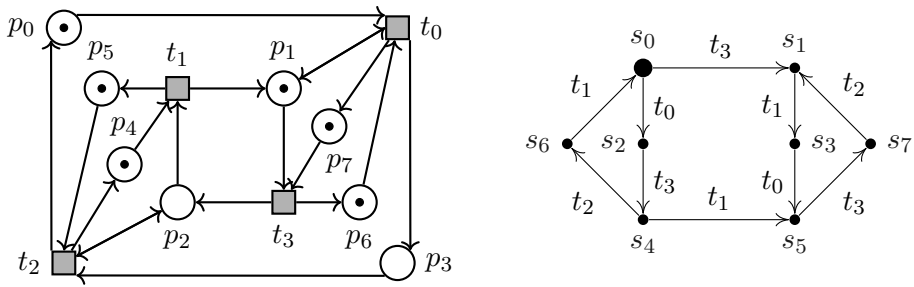

Once such a system, with one shared place, is known to be live, 
checking reversibility thus amounts to checking the existence of a feasible T-sequence.
It applies in particular to H$1$S-\WMGineq{} systems.

However, these characterizations of liveness and reversibility do not provide checking algorithms.
We leave their design as future work.

In the following sections, we study other classes of systems with shared places, namely AMG and \PCMGineq{}.
We investigate conditions inducing the PR-R equality in these systems.

\section{Ensuring the PR-R Equality in Augmented Marked Graphs}\label{SecAMG}

In this section,
we first recall reachability properties of AMG developed in \cite{ChuXie97}.
Then,
we develop results leading to a sufficient condition of PR-R equality.
We also discuss methods for checking the latter condition.

\subsection{Previous results on AMG}

Let us recall some of the known results about AMG.

\begin{proposition}[Invariant number of tokens, Property $24$ in \cite{ChuXie97}]\label{Prop24}
Each resource (i.e. shared) place $r \in R$ together with the places in paths $O_{r_i}$ induces a P-semiflow 
and
for all $M \in R(M_0)$,
$M(r) + \sum_{i \in N_r} \sum_{p \in O_{r_i}} M(p) = M_0(r) + \sum_{i \in N_r} \sum_{p \in O_{r_i}} M_0(p)$.
\end{proposition}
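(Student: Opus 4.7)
The plan is to exhibit an explicit P-semiflow $X$ with support $\{r\} \cup \bigcup_i O_{r_i}$; the proposition's equation is then the standard conservation law $X^T M = X^T M_0$ valid for every reachable $M$. I set $X(r) = 1$, $X(r') = 0$ for every other resource place $r' \in R$, and for each non-resource place $p$, $X(p) = |\{i \in N_r : p \in O_{r_i}\}|$, so that places lying on several of the $O_{r_i}$ are counted with multiplicity. With this choice, $\sum_p X(p)\, M(p)$ rewrites directly as $M(r) + \sum_i \sum_{p \in O_{r_i}} M(p)$, and it only remains to prove $X^T \cdot I = 0$; the usual invariance then follows from $M = M_0 + I \cdot Y$ for some Parikh vector $Y$.

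The core work is to verify, for each transition $t$, that the column balance $X^T \cdot I[\cdot,t]$ vanishes. I would split the sum into the contribution of $r$ and the contributions of the non-resource places. The resource part equals $+1$ if $t \in {}^\bullet r$, $-1$ if $t \in r^\bullet$, and $0$ otherwise; the injectivity hypotheses $a_{r_i} \neq a_{r_j}$ and $b_{r_i} \neq b_{r_j}$ for $i \neq j$ make this well-defined without double counting. For the non-resource part, I exploit the marked-graph structure of $G$: every non-resource place has a unique input and a unique output transition. Fix a path $O_{r_i}$ with $a_{r_i} \neq b_{r_i}$, traversing transitions $a_{r_i} = s_0, s_1, \ldots, s_m = b_{r_i}$ and places $p_1, \ldots, p_m$, with $s_{j-1}$ the unique input of $p_j$ and $s_j$ its unique output. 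A middle transition $s_j$ (with $0 < j < m$) consumes from $p_j$ and produces into $p_{j+1}$, contributing $0$; the start $a_{r_i}$ only produces into $p_1$, contributing $+1$; the end $b_{r_i}$ only consumes from $p_m$, contributing $-1$; any transition outside $O_{r_i}$ is adjacent to no place of $O_{r_i}$, since each such place has its in/out transitions inside the path, and thus contributes $0$. Summing per-path contributions and adding the resource term, every case collapses to $0$: when $t = a_{r_i} \neq b_{r_i}$, the $-1$ from $r^\bullet$ cancels the $+1$ from $O_{r_i}$; when $t = b_{r_i} \neq a_{r_i}$, the $+1$ from ${}^\bullet r$ cancels the $-1$ from $O_{r_i}$; and when $a_{r_i} = b_{r_i} = t$, the path is absent by convention while the two resource contributions cancel directly.

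The main obstacle is the careful bookkeeping when a single transition plays several roles simultaneously: it may be $a_{r_i}$ for one index and $b_{r_j}$ for another, and/or act as a middle transition of yet other paths, while non-resource places can belong to several $O_{r_i}$ at once. Elementariness of each path forbids repeated occurrences within a single $O_{r_i}$, and the marked-graph property forces multi-path occurrences to contribute additively with exactly the weights used in $X$, so each role of $t$ contributes independently and pairs off with its resource-side counterpart. Once $X^T \cdot I = 0$ is in hand, invariance yields the claimed equality for every $M \in R(M_0)$.
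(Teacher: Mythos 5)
Your proof is correct. Note, however, that the paper itself gives no proof of this statement: it is recalled verbatim as Property~24 of the cited reference on augmented marked graphs, so there is no in-paper argument to compare against. Your reconstruction is the natural one and almost certainly the intended one: take $X(r)=1$, $X(p)=|\{i : p\in O_{r_i}\}|$ on non-resource places, $X=0$ elsewhere, and check $X^{T}\cdot I=\zero$ column by column. The two points that actually need care are exactly the ones you address: (i) every place of an $O_{r_i}$ lies in $G$ and therefore has a unique input and a unique output transition, both of which are consecutive nodes of that (elementary) path, so a transition not occurring on $O_{r_i}$ touches none of its places and an interior occurrence contributes $+1-1=0$; and (ii) since the net is ordinary and the $a_{r_i}$ (resp.\ $b_{r_i}$) are pairwise distinct, the contribution of $r$ at a transition $t$ is $[t\in{}^{\bullet}r]-[t\in r^{\bullet}]$, and each unit of it is cancelled by the endpoint contribution of the corresponding path $O_{r_i}$ (or vanishes outright when $a_{r_i}=b_{r_i}$, where no path is present). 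Your handling of a transition playing several roles at once ($a_{r_i}$ for one index, $b_{r_j}$ for another, interior for a third) is the right way to organize the cancellation, and counting shared places with multiplicity in $X$ is what makes the weighted sum agree with the double sum in the statement. The conclusion $X^{T}M=X^{T}M_0$ for all $M\in R(M_0)$ then follows from the state equation as you say.
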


\begin{proposition}[Liveness implies reversibility, Property $25$ in \cite{ChuXie97}]\label{Prop25}
An augmented marked graph is reversible if it is live.
\end{proposition}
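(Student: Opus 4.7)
The plan is to establish that from every reachable marking $M$, the initial marking $M_0$ is reachable. The first step is to verify that the all-ones vector $\one$ is a T-semiflow of the incidence matrix $I$ of the AMG. For any resource place $r \in R$, condition (H2) supplies $k$ distinct pairs $(a_{r_i}, b_{r_i})$ saturating $r^\bullet$ and $^\bullet r$, so since the net is ordinary the $r$-coordinate of $I \cdot \one$ equals $|^\bullet r| - |r^\bullet| = 0$. For any non-resource place $p$, condition (H1) places $p$ inside a marked graph, so $|^\bullet p| = |p^\bullet| = 1$ and the $p$-coordinate of $I \cdot \one$ also vanishes. Hence $I \cdot \one = \zero$.

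Now fix $M \in R(S)$ reached from $M_0$ by a sequence $\sigma$, and set $Y = \Parikh(\sigma)$ and $k = \max_t Y(t)$. The target is to produce a feasible sequence $\tau$ from $M$ with $\Parikh(\tau) = k\one - Y \ge \zero$; indeed $I \cdot (k\one - Y) = -I \cdot Y = M_0 - M$, so firing such a $\tau$ from $M$ reaches exactly $M_0$ and proves reversibility. The question thus reduces to realizing this specific T-vector from $M$.

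To build $\tau$, I would proceed greedily: starting from $M$ with remaining deficit $D = k\one - Y$, at every step fire some enabled transition $t$ with $D(t) > 0$ and decrement $D(t)$. Liveness of $(N, M_0)$ guarantees that, as long as $D \neq \zero$, at least one such $t$ is fireable from the current marking. The main obstacle is to rule out a deadlock of the procedure: it must not happen that only transitions with $D(t) = 0$ are enabled while $D \neq \zero$. I would address this by combining Keller's theorem (Theorem~\ref{kellersTheorem}), applicable to the persistent marked-graph core $G$, with a ledger argument on each resource place: Proposition~\ref{Prop24} keeps the sum $M'(r) + \sum_i \sum_{p \in O_{r_i}} M'(p)$ constant along the firing, so the running difference between the number of $a_{r_i}$-firings and $b_{r_i}$-firings is bounded by the unmarked-path capacity, which in turn ensures that completing the deficit does not demand firing any transition more than $k$ times. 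This balance, together with the MG persistence of $G$, lets me always find a valid $t$ to fire until $D = \zero$, producing the required $\tau$ and hence the return path $M \xrightarrow{\tau} M_0$.
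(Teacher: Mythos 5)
The paper does not reprove this statement --- it is recalled verbatim as Property~25 of~\cite{ChuXie97} --- so your attempt has to be judged on its own. Your first two steps are sound: $\one$ is indeed a T-semiflow (for $r\in R$, (H2) gives $|{}^\bullet r|=|r^\bullet|$, and for the other places (H1) gives in-degree and out-degree one), and consequently $M_0=M+I\cdot(k\one-Y)$ with $k\one-Y\ge\zero$, so $M_0$ is potentially reachable from every reachable $M$. This is exactly the setup the paper itself uses later (compare the proof of Lemma~\ref{PRLiveAMGinitDir}, where the same consistency argument produces a nonnegative vector pointing back to $M_0$).

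The gap is in the realization step. Your greedy procedure rests on the claim that ``liveness of $(N,M_0)$ guarantees that, as long as $D\neq\zero$, at least one such $t$ is fireable from the current marking.'' Liveness guarantees only that each transition can be \emph{eventually} enabled after firing some auxiliary sequence; it does not guarantee that some transition with positive remaining deficit is enabled \emph{now}, and those auxiliary firings would overshoot the target Parikh vector $k\one-Y$. An AMG is not persistent --- the resource places are shared, so firing one output of $r$ can disable another --- so Keller's theorem does not apply to $S$, only to the core $G$, which is not where the firing has to take place. The ``ledger'' remark about Proposition~\ref{Prop24} bounding the imbalance of $a_{r_i}$- versus $b_{r_i}$-firings is not assembled into an argument that the procedure cannot stall, and the conclusion that no transition need fire more than $k$ times is already true by construction of the target vector, so it buys nothing. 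The missing ingredient is precisely what Proposition~\ref{Prop26} supplies: since $(N,M)$ is live and satisfies (H1)--(H3), and since by (H4) the paths $O_{r_i}$ are unmarked at $M_0$ (condition $C2$), the potentially reachable marking $M_0$ is actually reachable from $M$. Routing your step~3 through Proposition~\ref{Prop26} (as the paper does in Lemma~\ref{PRLiveAMGinitDir}, with $M_0$ renamed as $M^*$ and $M$ as the initial marking) closes the proof; without it, or without an honest proof of the non-stalling property, the argument is incomplete.
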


\begin{proposition}[Property $26$ in \cite{ChuXie97}]\label{Prop26}
Let $(N,M_0)$ be a live Petri net system with incidence matrix $I$
and
satisfying the assumptions $H_1$, $H_2$ and $H_3$ of the AMG.
Let $M^* \ge 0$ be a marking satisfying $(C1)$: $\exists Y \in \N^{|T|}$
such that 
$M^* = M_0 + I \cdot Y$,
and
$(C2)$: no place in paths $O_{r_i}$ is marked by $M^*$.
Then $M^*$ is reachable from $M_0$ and $M^*(r) > 0$, $\forall r \in R$. 
\end{proposition}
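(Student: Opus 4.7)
The plan is to split the statement into its two conclusions: first, the strict positivity $M^*(r) > 0$ for every resource $r$, which follows cleanly from the P-semiflow invariant of Proposition~\ref{Prop24} together with liveness; and second, the reachability of $M^*$, which I would reduce to the classical PR-R equality for live marked graphs applied to the underlying graph $G$ obtained by dropping the resources.

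For the positivity of $M^*(r)$, I would invoke Proposition~\ref{Prop24} directly: the vector supported on $\{r\} \cup \bigcup_i O_{r_i}$ is a P-semiflow, so the scalar $M(r) + \sum_i \sum_{p \in O_{r_i}} M(p)$ is preserved by the state equation and therefore takes the same value on $M^*$ as on $M_0$. Using hypothesis $(C2)$ to annihilate the $O_{r_i}$ contribution in $M^*$, this yields $M^*(r) = M_0(r) + \sum_i \sum_{p \in O_{r_i}} M_0(p)$. To see that this common value is strictly positive, I exploit liveness: some $a_{r_i} \in r^\bullet$ must be fireable from some reachable marking $M'$, and since $a_{r_i}$ consumes a token from $r$, one has $M'(r) \ge 1$; applying the invariant at $M'$ then produces the required strict inequality.

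For the reachability of $M^*$, I would first observe that $G = N \setminus R$, equipped with the restricted marking $\projection{M_0}{P \setminus R}$, is a live marked graph by $H_1$ (each non-resource place has a unique input and a unique output) and $H_3$ (every elementary circuit of $G$ is marked, which is the classical liveness criterion for marked graphs). Since $\projection{M^*}{P \setminus R} = \projection{M_0}{P \setminus R} + I_G \cdot Y \ge 0$, the well-known PR-R equality for live marked graphs produces a firing sequence feasible in $(G, \projection{M_0}{P \setminus R})$ with Parikh vector $Y$. The last step, lifting this realization from $G$ to the full system $(N, M_0)$, is the main obstacle. I would approach it by a greedy argument: pick a firing sequence $\tau$ feasible in $N$ that is maximal under the constraint $\Parikh(\tau) \le Y$, assume for contradiction that $Y' := Y - \Parikh(\tau) \gneqq 0$, and denote the reached marking by $M_\tau$. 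Firings of $N$ project to firings of $G$ on $P \setminus R$, so $\projection{M_\tau}{P \setminus R}$ is reachable and hence live in $G$, and the residual $Y'$ is realizable in $G$ from it; Keller's theorem (Theorem~\ref{kellersTheorem}), valid in the persistent marked graph $G$, then furnishes some $t \in \support(Y')$ enabled in $G$ at $\projection{M_\tau}{P \setminus R}$. Such a $t$ is automatically enabled in $N$ except possibly when $t = a_{r_i}$ with $M_\tau(r) = 0$; in that exceptional case the P-semiflow invariant at $M_\tau$ combined with $M^*(r) > 0$ forces some path $O_{r_j}$ to carry a token at $M_\tau$, and the unique output $t'$ of that marked place satisfies $t' \in \support(Y')$ because its token must be removed to reach $M^*$ (which empties $O_{r_j}$ by $(C2)$). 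Iterating this argument along the marked-graph path $O_{r_j}$ ultimately unblocks $r$ via $b_{r_j}$, producing a fireable transition in $\support(Y')$ that contradicts the maximality of $\tau$. A careful well-founded termination argument for this refilling procedure, using the finite token content of the paths $O_{r_j}$ as a progress measure, is the technical point I expect to be the most intricate.
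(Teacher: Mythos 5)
First, a remark on the comparison itself: the paper does not prove this statement. Proposition~\ref{Prop26} is recalled verbatim as Property~26 of~\cite{ChuXie97}, so there is no in-paper proof to measure your proposal against; it has to stand on its own. Your first half does stand: deriving $M^*(r)>0$ from the P-semiflow of Proposition~\ref{Prop24} (whose weighted sum is preserved by any solution of the state equation, not only by reachable markings), condition $(C2)$, and the observation that liveness of some $a_{r_i}\in r^\bullet$ forces $M'(r)\ge 1$ at some reachable $M'$, is correct.

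The reachability half, however, has a genuine gap exactly where you place it, and I do not think it is merely ``technical''. Once you find $t=a_{r_i}$ blocked by an empty resource $r$ and a marked place $p$ on some $O_{r_j}$ whose output transition $t'$ lies in $\support(Y')$, you propose to ``iterate along the marked-graph path $O_{r_j}$'' until $b_{r_j}$ refills $r$. But $t'$ need not be enabled in $N$, and the obstruction need not live on $O_{r_j}$: $t'$ can be a synchronization waiting on an unmarked non-resource place outside $O_{r_j}$, or on a different empty resource $r''$ whose own refilling path may in turn be waiting (directly or transitively) on $r$. The dependency structure is thus a graph of potential circular waits among resources and paths, not a single path, and your proposed progress measure (token content of the $O_{r_j}$'s) is not shown to decrease along the recursion. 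Excluding such circular waits is precisely the hard content of the Chu--Xie results, which is why their liveness analysis (Propositions~\ref{Prop27} and~\ref{Prop29}) is phrased in terms of siphons. A correct completion must combine the backward-chaining step (every empty input place of a transition of $\support(Y')$ has an input transition again in $\support(Y')$, by the state equation and $M^*\ge 0$) with an argument that the chain cannot close into an unmarked circuit; this is easy inside $G$ using $H_3$ and token conservation on marked-graph circuits, but it is not automatic once the chain passes through resource places, where circuit token counts are no longer invariant. As written, the proposal establishes reachability of $M^*$ only modulo this unproved deadlock-freeness claim.
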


\begin{proposition}[Liveness and siphons, Property $27$ in \cite{ChuXie97}]\label{Prop27}
An augmented marked graph is live iff it cannot reach any marking at which some siphon is unmarked. 
\end{proposition}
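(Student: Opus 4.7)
The plan is to prove both directions separately. Direction $(\Rightarrow)$ is the standard siphon-preservation argument, while direction $(\Leftarrow)$, proved by contrapositive, forms the bulk of the work.

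For $(\Rightarrow)$, suppose $S$ is live and, toward a contradiction, let $M \in R(M_0)$ unmark a nonempty siphon $D$. The defining inclusion $\lbul D \subseteq D\lbul$ guarantees that no transition can deposit tokens into $D$ without first consuming from $D$, so $D$ stays unmarked at every marking reachable from $M$. In an AMG every place has at least one output transition (MG places have exactly one by $H_1$, resource places have $k \ge 2$ by $H_2$), so $D\lbul \ne \emptyset$; yet every transition in $D\lbul$ has some input in the unmarked $D$ and is therefore dead from $M$ onward, contradicting liveness.

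For $(\Leftarrow)$, I proceed by contrapositive: assume $S$ is not live and build a reachable marking that unmarks some siphon. First, choose a reachable $M^*$ along some firing sequence $M_0 [\sigma\rangle M^*$ that maximizes the set $T_d$ of dead transitions from $M^*$; such an $M^*$ exists because the dead-transition set is monotone non-decreasing along firings and $T$ is finite. By non-liveness, $T_d \ne \emptyset$. Next define
\[
D \;=\; \{\, p \in P \cup R \,\mid\, p\lbul \subseteq T_d \,\}.
\]
I claim $D$ is a nonempty siphon. For the siphon property, take $t \in \lbul D$ producing into some $p \in D$; if $t$ were live, a suitable firing of $t$ after $M^*$ would inject tokens into $p$, whose outputs are all in $T_d$, and the combined MG/resource structure would eventually re-enable some transition of $T_d$, contradicting maximality of $T_d$; hence $t \in T_d$ and every input place of $t$ must lie in $D$. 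Nonemptiness follows by back-tracing from any $t \in T_d$ through its perpetually starved input places.

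The hardest step is to exhibit a marking reachable from $M^*$ at which every place of $D$ carries zero tokens. Here the AMG structure is decisive: I plan to combine Hypothesis $H_1$ (the MG structure of $G$), Hypothesis $H_3$ on circuit markings, Hypothesis $H_4$ on the unmarked elementary paths $O_{r_i}$, and the P-semiflow invariant of Proposition~\ref{Prop24} that ties each resource $r \in R \cap D$ to the tokens on its $O_{r_i}$ paths. By firing a sequence of transitions from $T_e = T \setminus T_d$ one progressively drains tokens from the MG places of $D$ along non-dead paths; the conservation invariant then forces the resource places of $D$ to shed their tokens as well. The main obstacle I anticipate is ruling out resources in $D$ that retain tokens indefinitely, which requires a careful coupling of the MG-draining argument with a Proposition~\ref{Prop26}-style reachability claim restricted to the $T_e$-subnet, exploiting that no place on any $O_{r_i}$ associated with $r \in R \cap D$ can be permanently blocking.
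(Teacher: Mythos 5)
This proposition is not proved in the paper at all: it is recalled verbatim as Property~27 of \cite{ChuXie97}, so there is no in-paper argument to compare yours against and your attempt has to stand on its own. Your $(\Rightarrow)$ direction does: an unmarked siphon can never regain tokens, every place of an AMG has at least one output transition, so all transitions in $D^\bullet\neq\emptyset$ are dead, contradicting liveness. The problems are concentrated in $(\Leftarrow)$.

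First, your argument that $D=\{p \mid p^\bullet\subseteq T_d\}$ is a siphon rests on the claim that if a non-dead $t\in{}^\bullet D$ fired, the injected tokens ``would eventually re-enable some transition of $T_d$, contradicting maximality of $T_d$.'' This cannot work as stated: a transition dead at $M^*$ is by definition never enabled at any marking reachable from $M^*$, so there is nothing to re-enable, and a token arriving in one input place of a synchronization does not enable it anyway. In a general Petri net your $D$ need not be a siphon (a live producer can pump tokens into a place whose unique output is starved through a different input), so whatever rescues the claim must come from (H1)--(H4), which you never actually invoke at this point. Second, non-emptiness of $D$ via ``back-tracing through perpetually starved input places'' presupposes that a dead transition has an input place that is unmarked at \emph{every} subsequent reachable marking; in general the unmarked input can rotate among the inputs of a synchronization, and excluding that is again exactly where the AMG structure is needed. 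Third, and decisively, the step that actually produces a reachable marking unmarking $D$ --- the entire content of this direction, and the reason the statement is false for arbitrary Petri nets --- is only announced (``I plan to combine\ldots'', ``the main obstacle I anticipate\ldots'') and never carried out. As it stands, $(\Leftarrow)$ is a research plan rather than a proof; one would need to reproduce the Chu--Xie argument (or an equivalent) that couples the marked-graph skeleton, the invariant of Proposition~\ref{Prop24} and the unmarked paths $O_{r_i}$ to show $D$ can actually be drained.
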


\begin{proposition}[Liveness, siphons and home states, Property $29$ in \cite{ChuXie97}]\label{Prop29}
Let $(N,M_0)$ be a Petri net satisfying assumptions $H_1$, $H_2$ and $H_3$. 
If there exists a marking $M^*$ satisfying conditions $C1$ and $C2$ of Proposition~\ref{Prop26},
then $(N,M_0)$ is live iff no siphon is unmarked at any reachable marking. 
Furthermore, $M^*$ is a home state.
\end{proposition}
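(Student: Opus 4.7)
The plan is to prove the two directions of the biconditional separately, since the forward implication is a general Petri net fact while the converse genuinely exploits the AMG-like structure provided by $H_1$, $H_2$, $H_3$ together with the existence of $M^*$.

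For the forward direction I would argue by contraposition, using the standard monotonicity of siphon emptiness: if some reachable marking $M$ leaves a siphon $D$ unmarked, then because $\lbul D \subseteq D\lbul$ no firing can replenish $D$, so $D$ remains empty at every subsequently reachable marking. Every transition in $D\lbul$ is thus permanently disabled, contradicting liveness. This step makes no use of $H_1$, $H_2$, $H_3$, nor of $M^*$.

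The converse is the substantive direction. The idea is to show, under the no-unmarked-siphon assumption, that $M^*$ is both reachable from $M_0$ and a home state; liveness will then follow from the fact that $M^*$, having all paths $O_{r_i}$ empty by $C2$ and, via the P-semiflow of Proposition~\ref{Prop24} (or rather a direct count using the resource-path invariants), having all resource places $r \in R$ marked, plays the role of an AMG initial configuration to which one can always return. Concretely, I would start from $M_0$ and try to realize the Parikh vector $Y$ given by $C1$, scheduling firings so as to drain tokens from the paths $O_{r_i}$ whenever possible. The no-unmarked-siphon hypothesis is then used to rule out the only obstruction to completing $Y$: a stuck intermediate marking $M'$ where some transitions still to be fired are disabled. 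In such a configuration, the set of places that are insufficiently marked at $M'$ for the pending transitions extends, through the backward arcs of the marked-graph backbone $G$ provided by $H_1$, to a genuine siphon that would be unmarked at $M'$, a contradiction.

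The main technical obstacle is to make this ``stuck-marking yields an unmarked siphon'' step rigorous in the AMG setting. The argument must combine $H_1$ (the subnet obtained by deleting $R$ is a persistent marked graph), $H_2$ (matched input/output pairs for each resource, together with the elementary paths from $a_{r_i}$ to $b_{r_i}$ in $G$), $H_3$ (elementary circuits in $G$ initially marked, so the marked-graph backbone never blocks for trivial reasons), and Proposition~\ref{Prop24} to ensure that the frozen places propagate backward into a legitimate siphon rather than dissipate into a dangling fragment. Once $M^*$ has been shown reachable, the very same argument applied with an arbitrary reachable marking $M'$ in place of $M_0$, using the T-vector $Y - \Parikh(\sigma)$ where $\sigma$ leads from $M_0$ to $M'$, shows $M^*$ reachable from every reachable $M'$, that is, $M^*$ is a home state. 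Finally, from $M^*$ one produces a firing sequence covering every transition by traversing each marked-graph component of $G$ in the order dictated by the pairs $(a_{r_i},b_{r_i})$, exploiting that the associated $O_{r_i}$ are empty and $r$ is marked; this yields liveness at $M^*$, and the home-state property then lifts liveness back to $(N,M_0)$.
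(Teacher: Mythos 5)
The paper itself offers no proof of this statement: it is quoted verbatim as Property~29 of the cited reference \cite{ChuXie97} and used as a black box, so there is no in-paper argument to compare yours against. Your forward direction is correct and standard: the net is ordinary and, by $H_1$ and $H_2$, every place has an output, so a siphon $D$ that is empty at a reachable marking stays empty forever (each transition of $\lbul D$ lies in $D\lbul$ and therefore needs a token from $D$), which permanently disables the non-empty set $D\lbul$ and contradicts liveness.

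The converse and the home-state claim are where all the content lies, and there your text is a plan rather than a proof. The pivotal assertion---that an intermediate marking at which the realization of $Y$ gets stuck propagates backward into a genuine siphon that is unmarked at a reachable marking---is precisely the technical heart of Properties~26 and~29 in \cite{ChuXie97}; you explicitly defer it (``the main technical obstacle''), so the substantive direction is not established. Two further points would need repair even granting that step. First, for the home-state part you invoke condition $C1$ from an arbitrary reachable $M'$ with the vector $Y-\Parikh(\sigma)$, which can have negative components; you must first note that $\one$ is a T-semiflow of any net satisfying $H_1$ and $H_2$ (each place of $G$ has one input and one output, each resource place has $k$ inputs and $k$ outputs, all weights being $1$) and replace $Y-\Parikh(\sigma)$ by $Y-\Parikh(\sigma)+m\cdot\one\ge\zero$ for $m$ large enough. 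Second, your claim that every resource place is marked at $M^*$ cannot rest on Proposition~\ref{Prop24} alone, since $H_4$ (initial marking of the resources) is not assumed here; you need the additional observation that $\{r\}\cup\bigcup_i O_{r_i}$ is itself a siphon, so that the no-unmarked-siphon hypothesis, the invariant of Proposition~\ref{Prop24} and condition $C2$ together force $M^*(r)>0$. Finally, a single firing sequence from $M^*$ covering all transitions does not by itself yield liveness at $M^*$; it suffices only in combination with the home-state property, and that combination should be stated explicitly rather than left implicit.
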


\subsection{New results on directedness, strong liveness and the PR-R equality in AMG}

We obtain next lemma, which proves the converse of Lemma~\ref{LiveAndDirected} in the live AMG class,
and even a stronger version since we get directedness instead of initial directedness.

\begin{lemma}[Directedness in strongly live AMG]\label{PRLiveAMGinitDir}
If an AMG $S$ is strongly live, then $PRG(S)$ is directed.
\end{lemma}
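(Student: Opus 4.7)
The plan is to prove something stronger than directedness: under strong liveness, the initial marking $M_0$ itself is reachable from every potentially reachable marking, which immediately yields directedness since any two $M_1, M_2 \in PR(S)$ then share $M_0$ as a common reachable marking.

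First I would verify that for every $M \in PR(S)$, the system $(N, M)$ still satisfies the hypotheses $H_1$, $H_2$, $H_3$ needed by Proposition~\ref{Prop26}. The first two are purely structural and are inherited from $(N, M_0)$. For $H_3$, I would observe that the characteristic vector of any elementary circuit $C$ of the underlying marked graph $G$ is a P-semiflow of the whole net $N$: since $G$ is a marked graph, each place of $C$ has exactly one input and one output transition in $G$, and adding the resource places of $R$ only attaches new arcs to transitions, not to the places of $P$. Hence $\sum_{p \in C} M(p) = \sum_{p \in C} M_0(p) > 0$ for every $M \in PR(S)$, so each elementary circuit of $G$ is still marked at $M$.

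Next I would establish that $M_0 \in PR((N, M))$ for every such $M$. By strong liveness, $(N, M_0)$ is live, hence reversible by Proposition~\ref{Prop25}. Combining liveness and reversibility yields a firing sequence $\alpha$ from $M_0$ back to $M_0$ whose support is all of $T$; its Parikh vector $Y_\alpha = \Parikh(\alpha)$ is thus a strictly positive T-semiflow. Writing $M = M_0 + I \cdot Y$ with $Y \ge 0$, I pick an integer $k$ large enough that $Z := k \, Y_\alpha - Y \ge 0$, which gives $M_0 = M + I \cdot Z$, i.e.\ $M_0$ is potentially reachable from $M$.

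Finally I would invoke Proposition~\ref{Prop26} on $(N, M)$: it is live by strong liveness, satisfies $H_1$, $H_2$, $H_3$ by the first step, and the target marking $M_0$ satisfies C1 (potential reachability from $M$, just established) and C2 (the paths $O_{r_i}$ are unmarked at $M_0$ by hypothesis $H_4$ of the AMG). The proposition then delivers $M_0 \in R((N, M))$, and the argument is complete. I expect the main subtlety to lie in producing the compensating non-negative T-vector $Z$: this rests critically on extracting a strictly positive T-semiflow of $(N, M_0)$, for which the combination of strong liveness with Proposition~\ref{Prop25} is the key input; everything else is an application of results already recalled in the excerpt.
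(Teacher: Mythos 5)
Your proposal is correct and follows essentially the same route as the paper's proof: obtain reversibility from Proposition~\ref{Prop25}, produce a non-negative compensating T-vector showing $M_0$ is potentially reachable from any $M\in PR(S)$, and then apply Proposition~\ref{Prop26} with the roles of $M_0$ and $M^*$ swapped to conclude that $M_0$ is a common reachable marking. Your explicit verification that $H_3$ survives at $M$ (via the circuit P-semiflows of $G$) is a detail the paper leaves implicit, but it does not change the argument.
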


\begin{proof}
The proof is illustrated in Figure~\ref{FigProofPRLiveAMGinitDir}.
By Proposition \ref{Prop25}, $S=(N,M_0)$ is reversible. Consider any solution $(M,Y)$ of the state equation of $S$.
By liveness and reversibility, there exists a sequence $\sigma$ feasible in $S$ such that $\Parikh(\sigma) \ge Y$
and
$\sigma$ leads to the initial marking $M_0$.
By assumption, $(N,M)$ is live, and from the above we deduce
$M_0 = M + I \cdot (\Parikh(\sigma) - Y)$,
i.e. $M_0$ is potentially reachable from $M$ with the vector $\Parikh(\sigma) - Y$.
Now, Proposition~\ref{Prop26} applies by renaming $M_0$ as $M^*$
and
$M$ as $M_0$:
indeed,
$S$ is supposed to be an AMG,
hence fulfills all the conditions $H_1$ to $H_4$,
and
$(N,M)$ is live and fulfills the conditions $H_1$ to $H_3$.
Condition $C1$ is fulfilled by $S$ since $M_0 = M + I \cdot (\Parikh(\sigma) - Y)$,
and
condition $C2$ is fulfilled by $S$ too since it is an AMG.
We deduce that $M_0$ is reachable from $M$ with a firing sequence $\tau$.
Consequently,
for all pairs of potentially reachable markings $(M,M')$, $M_0$ is a marking reachable from both $M$ and $M'$,
implying that $PRG(S)$ is directed. 
\end{proof}

\begin{figure}[!ht]

\centering

%\begin{minipage}{0.38\linewidth}

%\hspace*{1cm}
%
\begin{tikzpicture}[scale=0.5,mypetristyle]
\node[ltsNode,label=below:Live~$M_0$](n0)at(0,0){};
\node[ltsNode,label=above left:Live~$M$](n1)at(-6,2.5){};
\node[ltsNode,label=above right:Live~$M'$](n2)at(6,2.5){};
\draw[-{>[scale=2.5,length=2,width=2]},dashed](n0)to node[above right, near end]{$~Y$}(n1);
\draw[-{>[scale=2.5,length=2,width=2]},dashed](n0)to node[above left, near end]{$Y'$}(n2);
\draw[-{>[scale=2.5,length=2,width=2]},bend right=30](n1)to node[above]{$\tau$}(n0);
\draw[-{>[scale=2.5,length=2,width=2]},bend left=30](n2)to node[above]{$\tau'$}(n0);
\draw[-{>[scale=2.5,length=2,width=2]},loop above,looseness=100](n0)to node[above]{$\sigma \ge_\Parikh Y,Y'$}(n0);
\end{tikzpicture}

%\end{minipage}

\vspace*{5mm}

\caption{
Illustration of the proof of Lemma~\ref{PRLiveAMGinitDir}: for any solution $(M,Y)$ to the state equation associated to $S=(N,M_0)$, 
some sequence $\sigma$ is feasible at $M_0$ and leads to $M_0$ such that $\Parikh(\sigma) \ge Y$;
using $\sigma$ and $Y$,
we deduce that some sequence $\tau$ is feasible at $M$ and leads to $M_0$.
Thus, $M_0$ is reachable from each potentially reachable marking, hence the directedness of $PRG(S)$, 
illustrated for two solutions $(M,Y)$ and $(M',Y')$ to the state equation.
}
\label{FigProofPRLiveAMGinitDir}
\end{figure}
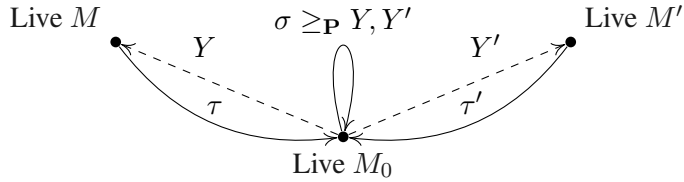

\begin{lemma}\label{AMGconsPRisLive}
Consider a live AMG $S=(N,M_0)$ in which each minimal siphon induces a conservative P-subnet,
and
consider any potentially reachable marking $M$.
Then, $(N,M)$ is live, thus $S$ is strongly live.
\end{lemma}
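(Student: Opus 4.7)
The plan is to apply Proposition~\ref{Prop29} to the system $(N,M)$. To do so, I must verify three things: that $(N,M)$ satisfies the structural assumptions $H_1,H_2,H_3$; that there exists a marking $M^*$ satisfying $C_1$ (potentially reachable from $M$) and $C_2$ (no place of any $O_{r_i}$ marked by $M^*$); and that no siphon of $N$ is unmarked at any marking reachable from $M$. Structural assumptions $H_1,H_2$ are inherited from $S$ without effort, and $H_4$ plays no role in Proposition~\ref{Prop29}, so the real content is the siphon condition and the existence of $M^*$.

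The key technical step is to show that $M$ marks every minimal siphon of $N$. Let $D$ be a minimal siphon. By hypothesis, the P-subnet induced by $D$ is conservative, so it admits a P-semiflow $X$ with $X(p)\ge 1$ for every $p\in D$. Extending $X$ by zero to $P\setminus D$ yields a P-semiflow of the full net $N$: for transitions $t\in {}^\bullet D\cup D^\bullet$ this is $X^T\cdot I'=\zero$ in the P-subnet, while for $t\in T\setminus({}^\bullet D\cup D^\bullet)$ the weights $I(p,t)$ vanish on $D$, so $\sum_{p\in D} X(p)\,I(p,t)=0$. The invariant $X^T\cdot M = X^T\cdot M_0$ then holds for every potentially reachable marking~$M$. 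Since $S$ is live, Proposition~\ref{Prop27} guarantees that $M_0$ itself marks every siphon, so $X^T\cdot M_0\ge 1$ and consequently $X^T\cdot M\ge 1$, forcing some place of $D$ to be marked by $M$. Every siphon contains a minimal siphon, so every siphon of $N$ is marked by $M$; in particular every elementary circuit of $G$ (which is a siphon of $N$, since adding resource places does not modify the pre- and post-sets of non-resource places) is marked by $M$, giving $H_3$ for $(N,M)$.

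For the existence of $M^*$, the natural choice is $M^*=M_0$. Since $S$ is live and an AMG, Proposition~\ref{Prop25} gives reversibility of $S$, hence a feasible T-sequence at $M_0$ with Parikh vector $Z\ge\one$ satisfying $I\cdot Z=\zero$. Writing $M=M_0+I\cdot Y$ and choosing $k$ large enough that $kZ\ge Y$ gives $M_0 = M + I\cdot(kZ-Y)$, so $M_0$ is potentially reachable from $M$, i.e.\ $C_1$ holds for $M_0$ viewed from $M$. Condition $C_2$ for $M_0$ is exactly part of hypothesis $H_4$ of the AMG~$S$. Finally, any marking reachable from $M$ is potentially reachable from $M_0$, so the invariant argument of the previous paragraph applies verbatim to show that every siphon remains marked. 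Proposition~\ref{Prop29} applied to $(N,M)$ then yields liveness of $(N,M)$, and since $M$ was arbitrary in $PR(S)$, $S$ is strongly live.

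The main obstacle is the very first step: justifying the extension of the P-semiflow from the induced P-subnet to the full net $N$, which requires the explicit check that transitions outside ${}^\bullet D\cup D^\bullet$ have identically zero columns when restricted to~$D$. Everything else follows mechanically from the Chu–Xie machinery (Propositions~\ref{Prop25}, \ref{Prop27}, \ref{Prop29}) once the siphon-markedness invariant is secured.
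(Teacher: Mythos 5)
Your proof is correct and follows essentially the same route as the paper's: use conservativeness of the siphon-induced P-subnets to show every potentially reachable marking keeps every siphon marked, use liveness plus reversibility (Proposition~\ref{Prop25}) to make $M^*=M_0$ potentially reachable from $M$, and then invoke Proposition~\ref{Prop29} with the roles of $M_0$ and $M^*$ swapped. You are in fact slightly more careful than the paper in spelling out the zero-extension of the P-semiflow and the verification of $H_3$ for $(N,M)$, and your final step (strong liveness directly from the definition, since $M$ was arbitrary) is cleaner than the paper's detour through the directedness lemmas.
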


\begin{proof}
Since $S$ is live, each minimal siphon is initially marked.
Besides, it is reversible.
By conservativeness, each marking that is potentially reachable in the P-subsystem induced by any minimal siphon
necessarily marks this siphon, hence also each non-minimal siphon that contains it.
In particular, each marking that is potentially reachable in $S$ marks each siphon.
By liveness and reversibility,
$M_0$ is potentially reachable from $M$.
Now, in Proposition~\ref{Prop29}, let us rename $M_0$ as $M$ and $M^*$ as $M_0$.
Applying it,
since no marking reachable from $(N,M)$ empties any siphon, $(N,M)$ is live.
By Lemma~\ref{PRLiveAMGinitDir}, $PRG(S)$ is directed. Lemma~\ref{LiveAndDirected} applies and $S$ is strongly live.
\end{proof}

We deduce next theorem.

\begin{theorem}[Potential reachability in AMG]\label{PRinAMG}
Let $S$ be a live AMG system satisfying property $\mathcal{R}$,
in which each minimal siphon induces a conservative P-subnet.
Then $PR(S) = R(S)$.
\end{theorem}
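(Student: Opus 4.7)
The plan is to chain together the three preceding results (Lemma \ref{AMGconsPRisLive}, Lemma \ref{PRLiveAMGinitDir}, Theorem \ref{PReachDirect}) so that the hypothesis structure flows from siphon conservativeness to strong liveness, from strong liveness to (initial) directedness of the potential reachability graph, and finally from initial directedness plus property $\mathcal{R}$ to the PR-R equality.

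More concretely, I would proceed as follows. First, since $S$ is a live AMG in which each minimal siphon induces a conservative P-subnet, Lemma \ref{AMGconsPRisLive} applies directly and yields that $S$ is strongly live, i.e.\ every potentially reachable marking is live in $(N,\cdot)$. Second, strong liveness of an AMG is exactly the hypothesis of Lemma \ref{PRLiveAMGinitDir}, which gives that $PRG(S)$ is directed; in particular $PRG(S)$ is initially directed (take one of the two potentially reachable markings to be $M_0$ itself). Third, we now have all the ingredients of Theorem \ref{PReachDirect}: $S$ satisfies property $\mathcal{R}$ by assumption, and $PRG(S)$ is initially directed by the previous step. Applying that theorem delivers $R(S) = PR(S)$, which is the desired conclusion.

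The argument is essentially a clean assembly of earlier lemmas, so there is no substantial new calculation to perform. The only point that deserves a moment of care is checking that the hypotheses of each invoked result are genuinely satisfied and are not implicitly stronger than what we have: in particular that Lemma \ref{AMGconsPRisLive} needs only liveness and conservativeness of the P-subnets induced by minimal siphons (no use of property $\mathcal{R}$), and conversely that Theorem \ref{PReachDirect} needs only initial directedness (not full directedness) together with property $\mathcal{R}$. Both checks are immediate from the statements as given. The main conceptual obstacle, already handled by the earlier lemmas, is the non-trivial fact that conservativeness of each minimal-siphon P-subnet is precisely what prevents any potentially reachable marking from emptying a siphon, thereby enabling the use of Proposition \ref{Prop29} inside Lemma \ref{AMGconsPRisLive}; once that is accepted, the present theorem follows by composition.
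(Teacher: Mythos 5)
Your proposal is correct and follows exactly the same route as the paper's proof: Lemma~\ref{AMGconsPRisLive} gives strong liveness, Lemma~\ref{PRLiveAMGinitDir} then gives directedness (hence initial directedness) of $PRG(S)$, and Theorem~\ref{PReachDirect} combined with property $\mathcal{R}$ yields $PR(S)=R(S)$. Your added checks that each invoked result's hypotheses are exactly met are sound and match the paper's (more terse) argument.
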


\begin{proof}
Applying Lemma~\ref{AMGconsPRisLive}, for each potentially reachable marking $M$, $(N,M)$ is live.
Now, Lemma~\ref{PRLiveAMGinitDir} applies: $PRG(S)$ is directed, thus also initially directed.
By Theorem~\ref{PReachDirect}, $PR(S) = R(S)$.
\end{proof}

\noindent Figure~\ref{CE2choiceplaces} provides a counter-example when a minimal siphon induces a non-conservative P-subsystem, which is the only assumption relaxed.

\subsection{Checking Properties of Augmented Marked Graphs}

\vspace*{2mm}

%We show how to check that an AMG is $LRB$ or not, and how properties of an AMG are related to properties of its reverse.

Proposition \ref{Prop27} states that an AMG~$S$ is live iff it cannot reach any marking at which some siphon is unmarked.
Now, if each minimal siphon of $S$ and of $-S$ induces a conservative P-subsystem,
$S$~fulfills property~$\mathcal{L}$, since both $S$ and $-S$ are AMG and each siphon is initially marked.
Applying Proposition~\ref{Prop25}, $S$~also fulfills property $\mathcal{R}$.
Then, Theorem~\ref{PRinAMG} can be exploited.

In addition to these remarks, we obtain below a result relating the behavior of an AMG to the behavior of its reverse.

\begin{lemma}[Properties of the reverse AMG] 
Let us suppose that the AMG $A$ is live, bounded and reversible ($LBR$).
Then:
\begin{enumerate}
\item If $-A$ is live, then $-A$ is $LBR$.
\item If the underlying MG $G$ of $A$ is bounded, then $-A$ is bounded.
%
%\item If $-A$ is not live nor reversible (and is not a deadlock), then $-A$ is bounded ?

%
\end{enumerate}

\end{lemma}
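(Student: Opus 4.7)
Proof plan.

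\emph{Part 1.} The approach is to show that the reverse of an AMG is again an AMG, apply Proposition~\ref{Prop25} to the reverse system to obtain its reversibility, and then invoke Lemma~\ref{RevReverseIsBounded} to transfer boundedness. Checking the AMG axioms for $(-N,M_0)$ is straightforward: $H_1$ holds because reversing a marked graph gives a marked graph (each place still has exactly one input and one output); $H_2$ follows by swapping every pair $(a_{r_i},b_{r_i})$ to $(b_{r_i},a_{r_i})$ in agreement with the swapped pre- and post-sets of each resource place, and by reversing the elementary paths in $G$ to obtain elementary paths in $-G$; $H_3$ holds because the elementary circuits of $-G$ are exactly the reversed circuits of $G$, visiting the same places with the same initial tokens; $H_4$ holds because resource places keep their initial marking and the reversal of an unmarked elementary path $O_{r_i}$ of $G$ is an unmarked elementary path of $-G$ from $b_{r_i}$ to $a_{r_i}$. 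Hence $-A$ is an AMG, and together with the hypothesis that $-A$ is live, Proposition~\ref{Prop25} yields reversibility of $-A$. Property $\mathcal{R}$ now holds for $A$, so Lemma~\ref{RevReverseIsBounded} gives boundedness of $-A$ from that of $A$, finishing Part~1.

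\emph{Part 2.} I would bound the reachable markings of $-A$ place by place, treating resource and MG places separately. For a resource place $r$, Proposition~\ref{Prop24} provides a P-semiflow $X_r$ of $N$ supported on $r$ and the places of the paths $O_{r_i}$; since $I_{-N}=-I$, the same $X_r$ is a P-semiflow of $-N$, so $X_r^{\top} M = X_r^{\top} M_0$ remains invariant throughout $R(-A)$ and bounds $M(r)$ by the constant $M_0(r)+\sum_i\sum_{p\in O_{r_i}} M_0(p)$. For an MG place $p$, I would observe that any firing sequence feasible in $-A$ from $M_0$ restricts to a firing sequence feasible in $(-G,\projection{M_0}{P\setminus R})$, because the resource-place enabling conditions in $-A$ only forbid firings relative to $-G$; hence the MG component of any marking of $R(-A)$ lies in $R((-G,\projection{M_0}{P\setminus R}))$. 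It then suffices to bound reachable markings of $-G$. Using that $G$ is bounded, every place of $G$ is covered by an elementary circuit (otherwise one could pump tokens unboundedly through an uncovered region), and by Axiom $H_3$ each such circuit is initially marked. Reversing this circuit cover produces a cover of $-G$ by circuits on the same places with the same initial markings, and the characteristic vectors of these circuits are P-semiflow invariants of $-G$ that uniformly bound all its places.

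The main obstacle will be the MG-boundedness step of Part~2: transferring dynamic boundedness of $(G,\projection{M_0}{P\setminus R})$ to its reverse. Dynamic boundedness of a marked graph is strictly weaker than structural boundedness in full generality, so the argument must exploit AMG-specific structure, in particular axiom $H_3$ (every elementary circuit of $G$ is marked at $M_0$) and the liveness of $A$, which together rule out the pathological situations (dead transitions, uncovered sink regions) in which the circuit-cover argument would fail. Once those are excluded, the circuit cover of $G$ translates directly into a P-semiflow cover of $-G$ with the required invariants.
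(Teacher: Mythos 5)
Your proof takes essentially the same route as the paper's: Part~1 combines the observation that $-A$ is again an AMG with Proposition~\ref{Prop25} (liveness implies reversibility) and Lemma~\ref{RevReverseIsBounded}, and Part~2 bounds the resource places via the P-semiflows of Proposition~\ref{Prop24} (which remain P-semiflows of $-I$) and the remaining places via the circuit cover, i.e.\ conservativeness, of the live and bounded $G$, which transfers to $-G$. The differences are only presentational --- you spell out the verification of axioms $H_1$--$H_4$ for $-A$ and phrase conservativeness as a circuit cover where the paper invokes structural boundedness of $-G$ --- so the proposal is correct.
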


\begin{proof}

\noindent (1.) Applying Proposition~\ref{Prop25}, and noticing that the reverse of an AMG is also an AMG, $-A$ is reversible. By Lemma~\ref{RevReverseIsBounded}, $-A$ is bounded, hence it is $LBR$.\\

\noindent (2.) Since $A$ is live, $G$ is live \cite{WTS92,tcs97}. 
Since $G$ is also bounded, it is conservative and structurally bounded \cite{WTS92,tcs97}.
By Proposition~\ref{Prop24}, each resource place of $-A$ (which is also an AMG) is bounded.
Since the other places belong to $-G$, which is structurally bounded since its reverse is a conservative MG,
each place of $-A$ is bounded, hence the claim.
\end{proof}

\section{Reachability in \PCMGineq{}}\label{SecPCMG}

In this section, we focus on well-structured \PCMGineq{}.
First, in this class,
we exhibit the structure of minimal siphons when the undirected graph $G$ is acyclic;
under the same constraint, we develop a characterization of liveness in terms of marked siphons.
Then, we develop a characterization of reversibility under the liveness assumption, without the acyclicity constraint.
Finally,
assuming that a live \PCMGineq{} system $S$ is obtained from an acyclic undirected graph $G$, 
we show that $S$ is reversible and fulfills the PR-R equality.

\subsection{Structure of siphon-induced P-subnets in well-structured \PCMGineq{} obtained from an acyclic graph $G$}

\vspace*{2mm}

Next theorem highlights the state machine structure of siphon-induced P-subnets when $G$ is acyclic.

\begin{theorem}[Structure of minimal siphons]\label{SiphonsOfPCMG}
Consider a well-structured \PCMGineq{} $N$ obtained from an acyclic, connected, undirected graph $G$.
Then each place belongs to a minimal siphon and a minimal trap of $N$, and each minimal siphon,
each minimal trap, induces a (strongly) connected state machine P-subnet of $N$.
\end{theorem}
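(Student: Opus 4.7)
The plan is to exploit the strong connectivity and marked-graph structure of each component $C_i$ together with the acyclic shape of $G$. First I record the following structural facts. Each transition of $N$ lies in a unique component $C_i$ and inherits its $C_i$-pre-set and $C_i$-post-set through the merging (shared places of $N$ correspond to chosen places of $C_i$). Because each $C_i$ is a strongly connected marked graph, the minimal siphons and minimal traps of $C_i$ are exactly the place sets of its elementary P-T-circuits, each inducing a strongly connected state-machine P-subnet of $C_i$.

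To show that every place $p$ of $N$ lies in a minimal siphon, I would construct one explicitly by a depth-first exploration on $G$. Pick a component $C_{i_1}$ containing (the pre-image of) $p$ and an elementary P-T-circuit $\Gamma_{i_1}$ of $C_{i_1}$ through $p$; initialize $D$ with the place set of $\Gamma_{i_1}$. As long as $D$ contains a shared place $q$ at a vertex $v_j$ such that some component $C_{i'}$ incident to $v_j$ has not yet been processed, pick an elementary P-T-circuit of $C_{i'}$ through $q$ and adjoin its places to $D$. Because $G$ is a tree, each edge is traversed at most once and the procedure halts. The siphon property of $D$ is immediate: any transition $t$ with $t^\bullet \cap D \neq \emptyset$ lies in some processed $C_i$ and, by the property $|{}^\bullet p| = 1$ in the MG $C_i$, must be a transition on the chosen circuit $\Gamma_i$, whose predecessor place lies in $D$.

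The state-machine property of the induced P-subnet follows because every transition of the subnet lies on exactly one $\Gamma_i$ (transitions are not duplicated by merging), contributing one input and one output in $D$; any additional adjacencies of $t$ in $C_i$ would be either non-shared places outside $\Gamma_i$ or shared places at vertices the exploration did not reach, so they are absent from $D$. Strong connectivity of the induced subnet follows from the strong connectivity of each $\Gamma_i$ glued at shared places in a pattern inheriting the connectivity of the explored sub-tree of $G$. For minimality, I would argue that for any siphon $D^* \subseteq D$ and each processed component $C_i$, the restriction $D^* \cap C_i$ is a siphon of $C_i$ contained in the elementary circuit $\Gamma_i$, hence equal to $\Gamma_i$ or to $\emptyset$; a tree-based propagation (retaining any shared place forces the full circuit on each side of each incident processed edge) then forces $D^* = D$.

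The trap statement follows by duality applied to the reverse net $-N$, which is again a well-structured \PCMGineq{} from $G$ (strong connectivity and well-formedness of each MG component are preserved under arc reversal), and in which traps of $N$ correspond to siphons of $-N$. The main obstacle is establishing, for the second part of the theorem, that \emph{every} minimal siphon $D$ of $N$ has the structure produced by the above construction, i.e.\ that $D$ restricted to each involved component is a single elementary P-T-circuit rather than a strictly larger siphon of that component. This requires carefully exploiting the tree structure of $G$ to argue that any extraneous places in some $D \cap C_i$ can be dropped without breaking the siphon property in other components, using that shared places connecting distinct components are bridge-like thanks to the acyclicity of $G$.
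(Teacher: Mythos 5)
Your approach is genuinely different from the paper's: you build minimal siphons bottom-up by a tree traversal of $G$, choosing one elementary circuit per visited MG component and gluing them at shared places, whereas the paper argues top-down by strong induction on the number of shared places: it takes an \emph{arbitrary} minimal siphon $D$, picks a shared place of $D$ at a vertex $v$, splits $G$ at $v$ into the branches through each incident edge, observes that the projection of $D$ onto each branch is a minimal siphon of a smaller well-structured \PCMGineq{}, and glues the state machines supplied by the inductive hypothesis back together at $v$. Your construction, the verification of the siphon property (using that each place of an MG has a unique input transition, and that the tree structure of $G$ prevents the exploration from re-entering a component through its second designated place), and the minimality argument (a siphon of a strongly connected MG contained in the place set of an elementary circuit is empty or the whole circuit, then propagate along the explored subtree) are sound; they establish the first claim, that every place lies in a minimal siphon and a minimal trap, and that the siphons you construct induce strongly connected state-machine P-subnets.

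The gap is the one you flag yourself: the theorem asserts that \emph{every} minimal siphon induces a strongly connected state machine, and your argument only covers the siphons produced by your construction. This is not a formality. Given an arbitrary minimal siphon $D$, the restriction $D \cap C_i$ is indeed a non-empty siphon of the strongly connected MG $C_i$ and therefore contains the place set of some elementary circuit; but the backward-walk that extracts that circuit may close up without passing through the designated shared place of $C_i$ that ties $D$ together across components, so you cannot simply rerun your construction inside $D$ and invoke minimality of $D$ to conclude. The paper closes exactly this point with its induction: the projection of $D$ onto each branch at a shared vertex is again a minimal siphon of a \PCMGineq{} with strictly fewer shared places, so the state-machine structure on each branch comes from the inductive hypothesis rather than from a fresh circuit choice. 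Until you supply an argument of this kind (or another way to normalize an arbitrary minimal siphon into your canonical form), the second half of the statement --- the half that Theorem~\ref{LiveWellPCMG} later relies on --- remains unproven. Your duality step for traps is fine.
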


\begin{proof}
Consider any P-subnet $N_D$ induced by a minimal siphon $D$, then $N_D$ is connected, since otherwise $D$ would not be minimal.
We prove the claim by strong induction on the number $n$ of shared places in $N$, for $n \ge 0$. 

\noindent $-$ Base case: $n=0$, $N$ is either empty, or an isolated place, or a strongly connected and well-formed marked graph.
In the first two cases, the claim is trivially true. 
In the third case, each place belongs to an elementary circuit P-subnet, hence the claim.

\noindent $-$ Inductive case: $n \ge 1$, $N$ has some shared place. 
If $D$ does not contain any shared place, it induces an elementary circuit P-subnet of $N$.
Otherwise, $D$ contains at least one shared place $p$ corresponding to a vertex $v$ of $G$. 
Consider any edge $e$ having $v$ as an extremity.

In the subgraph of $G$ obtained by deleting all the edges adjacent to $v$ except $e$,
denote by $G'$ the maximal connected component containing $e$.
The associated \PCMGineq{} $\projection{N}{G'}$ is strongly connected, well-structured and has a strictly smaller number of shared places.
The subset $D'$ of $D$ obtained by projection on $\projection{N}{G'}$ is a siphon,
and it is minimal in $\projection{N}{G'}$ since otherwise $D$ would not be minimal in~$N$.

Hence, the inductive hypothesis applies to $\projection{N}{G'}$:
the siphon $D'$ induces a strongly connected state machine P-subnet $N_{D'}$ of $\projection{N}{G'}$.

Now, we apply the same reasoning to every other edge adjacent to $v$.
We deduce that $D$ induces a strongly connected state machine P-subnet of $N$.

We proved the base case and the inductive case: the property is true for each $n \ge 0$. Hence the claim for the siphon case.

Using the same reasoning for traps or noticing that the reverse of such a well-structured \PCMGineq{} is also a well-structured \PCMGineq{}
obtained from the same undirected, acyclic graph $G$,
we deduce the claim for the trap case.
\end{proof}

If the acyclicity assumption is dropped, this result is no more true, as examplified in Figure~\ref{nonStrucLiveTriangle}.\\

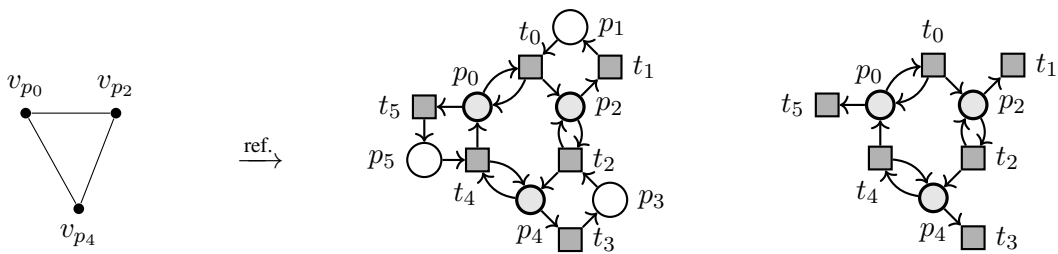
\begin{figure}[!ht]
 \centering
\begin{tabular}{cccc}
 
\begin{minipage}{0.2\linewidth}
\centering
\begin{tikzpicture}[scale=0.7,mypetristyle]
%\draw(0.2,2.3)node{$TS_{\theexampleTScounter}\iso RG(\Sigma_{\theexampleTScounter})\iso RG(\Sigma_{\theexampleTScounter}')$};
%\node[minimum size=0cm]()at(0,0){};\node[minimum size=0cm]()at(2,0){};
%\node[minimum size=0cm]()at(2,0){};\node[minimum size=0cm]()at(2,2.5){};
%\node[ltsnst](M0)at(0,1)[label=above left:$M_0$]{};
%\draw (M0) circle (1.5mm);
\node[ltsNode](n0)at(0,1.2){};
\node[ltsNode](n1)at(1.7,1.2){};
\node[ltsNode](n2)at(1,-0.6){};
\draw[-](n0)to node[auto,swap]{}(n1);
\draw[-](n1)to node[auto,swap]{}(n2);
\draw[-](n2)to node[auto,swap]{}(n0);
\node [anchor=south] at (n0.north) {$v_{p_0}$};
\node [anchor=south] at (n1.north) {$v_{p_2}$};
\node [anchor=north] at (n2.south) {$v_{p_4}$};
\end{tikzpicture}
%\hspace*{1cm}

\end{minipage}
&
$\xrightarrow{\textrm{ref.}}{}{}$
&
\begin{minipage}{0.3\linewidth}
\centering
\raisebox{7mm}{
\begin{tikzpicture}[scale=0.7,mypetristyle]

\node (p0) at (0,0) [petriNode] {};
\node (p1) at (1.75,1.5) [place,thick] {};
\node (p2) at (1.75,0) [petriNode] {};
\node (p3) at (2.5,-1.75) [place,thick] {};
\node (p4) at (1,-1.75) [petriNode] {};
\node (p5) at (-1,-1) [place,thick] {};

\node [anchor=south] at (p0.north west) {$p_0$};
\node [anchor=west] at (p1.east) {$p_1$};
\node [anchor=west] at (p2.east) {$p_2$};
\node [anchor=west] at (p3.east) {$p_3$};
\node [anchor=north] at (p4.south) {$p_4$};
\node [anchor=east] at (p5.west) {$p_5$};

\node (t0) at (1,0.75) [transition,thick] {};
\node (t1) at (2.5,0.75) [transition,thick] {};
\node (t2) at (1.75,-1) [transition,thick] {};
\node (t3) at (1.75,-2.5) [transition,thick] {};
\node (t4) at (0,-1) [transition,thick] {};
\node (t5) at (-1,0) [transition,thick] {};

\node [anchor=south] at (t0.north) {$t_0$};
\node [anchor=west] at (t1.east) {$t_1$};
\node [anchor=west] at (t2.east) {$t_2$};
\node [anchor=west] at (t3.east) {$t_3$};
\node [anchor=north] at (t4.south west) {$t_4$};
\node [anchor=east] at (t5.west) {$t_5$};

\draw [->,thick, bend left=30] (p0) to node [] {} (t0);
\draw [->,thick, bend left=30] (t0) to node [] {} (p0);
\draw [->,thick, bend left=0] (p1) to node [] {} (t0);
\draw [->,thick, bend left=0] (t1) to node [] {} (p1);
\draw [->,thick, bend left=0] (p2) to node [] {} (t1);
\draw [->,thick, bend left=0] (t0) to node [] {} (p2);

\draw [->,thick, bend left=30] (p2) to node [] {} (t2);
\draw [->,thick, bend left=30] (t2) to node [] {} (p2);
\draw [->,thick, bend left=0] (p3) to node [] {} (t2);
\draw [->,thick, bend left=0] (t3) to node [] {} (p3);
\draw [->,thick, bend left=0] (p4) to node [] {} (t3);
\draw [->,thick, bend left=0] (t2) to node [] {} (p4);

\draw [->,thick, bend left=30] (p4) to node [] {} (t4);
\draw [->,thick, bend left=30] (t4) to node [] {} (p4);
\draw [->,thick, bend left=0] (p5) to node [] {} (t4);
\draw [->,thick, bend left=0] (t5) to node [] {} (p5);
\draw [->,thick, bend left=0] (p0) to node [] {} (t5);
\draw [->,thick, bend left=0] (t4) to node [] {} (p0);

\end{tikzpicture}
}
\end{minipage}
&
\begin{minipage}{0.3\linewidth}
\centering
\raisebox{7mm}{
\begin{tikzpicture}[scale=0.7,mypetristyle]

\node (p0) at (0,0) [petriNode] {};
\node (p2) at (1.75,0) [petriNode] {};
\node (p4) at (1,-1.75) [petriNode] {};

\node [anchor=south] at (p0.north west) {$p_0$};
\node [anchor=west] at (p2.east) {$p_2$};
\node [anchor=north] at (p4.south) {$p_4$};

\node (t0) at (1,0.75) [transition,thick] {};
\node (t1) at (2.5,0.75) [transition,thick] {};
\node (t2) at (1.75,-1) [transition,thick] {};
\node (t3) at (1.75,-2.5) [transition,thick] {};
\node (t4) at (0,-1) [transition,thick] {};
\node (t5) at (-1,0) [transition,thick] {};

\node [anchor=south] at (t0.north) {$t_0$};
\node [anchor=west] at (t1.east) {$t_1$};
\node [anchor=west] at (t2.east) {$t_2$};
\node [anchor=west] at (t3.east) {$t_3$};
\node [anchor=north] at (t4.south west) {$t_4$};
\node [anchor=east] at (t5.west) {$t_5$};

\draw [->,thick, bend left=30] (p0) to node [] {} (t0);
\draw [->,thick, bend left=30] (t0) to node [] {} (p0);
\draw [->,thick, bend left=0] (p2) to node [] {} (t1);
\draw [->,thick, bend left=0] (t0) to node [] {} (p2);

\draw [->,thick, bend left=30] (p2) to node [] {} (t2);
\draw [->,thick, bend left=30] (t2) to node [] {} (p2);
\draw [->,thick, bend left=0] (p4) to node [] {} (t3);
\draw [->,thick, bend left=0] (t2) to node [] {} (p4);

\draw [->,thick, bend left=30] (p4) to node [] {} (t4);
\draw [->,thick, bend left=30] (t4) to node [] {} (p4);
\draw [->,thick, bend left=0] (p0) to node [] {} (t5);
\draw [->,thick, bend left=0] (t4) to node [] {} (p0);

\end{tikzpicture}
}
\end{minipage}

\end{tabular}

\vspace*{-5mm}

\caption{An undirected graph $G$ with place labels on the left, from which the \PCMGineq{} in the middle is derived by refinement.
This \PCMGineq{} is well-structured and not structurally live.
It contains the minimal siphon $\{p_0, p_2, p_4\}$ inducing the P-subnet depicted on the right,
which is not a state machine.
In the reverse net, the minimal siphon becomes a minimal trap.
}

\label{nonStrucLiveTriangle}

\end{figure}

\subsection{Liveness of well-structured \PCMGineq{} obtained from an acyclic graph $G$, in PTIME}

\vspace*{2mm}

Next theorem provides a characterization of liveness when $G$ is acyclic.
Noticing that each minimal siphon of the nets considered is a trap (by Theorem~\ref{SiphonsOfPCMG}),
the result can be seen as a variant of Commoner's theorem 
and the Home Marking theorem developed for free-choice nets (see e.g. \cite{DesEsp}).
We derive its polynomial-time complexity as a corollary.

\begin{theorem}[Liveness of well-structured \PCMGineq{} with acyclic graph]\label{LiveWellPCMG}
Consider a well-structured \PCMGineq{} system $S=(N,M_0)$ obtained from 
an acyclic undirected graph $G$ and having at least one transition. 
Then $S$ is live iff each minimal siphon--equivalently each minimal trap--of $N$ is marked by $M_0$.
\end{theorem}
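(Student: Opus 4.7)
My plan is to prove the two directions separately, relying heavily on Theorem~\ref{SiphonsOfPCMG}.

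\textbf{Preliminary observation and forward direction.} I would first verify that in this setting minimal siphons and minimal traps \emph{coincide as sets}. If $D$ is a minimal siphon, Theorem~\ref{SiphonsOfPCMG} tells us its induced P-subnet $N_D$ is a strongly connected state machine, so every transition adjacent to $D$ has exactly one input and one output in $D$; hence $\lbul D = D\lbul$, which makes $D$ a trap as well, and strong connectedness of $N_D$ forces minimality as a trap. The symmetric argument starts from a minimal trap. This justifies the ``equivalently'' in the statement and implies the key fact that every marked minimal siphon remains marked at every reachable marking. The forward direction is then classical: if some minimal siphon $D$ were unmarked at $M_0$, it would stay unmarked forever, and every transition in $D\lbul$ would be dead. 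The set $D\lbul$ is non-empty since $N_D$ contains at least one transition, which follows from the fact that each MG component has at least two places and is strongly connected, so every place of $N$ has at least one output adjacent to $D$. This contradicts liveness.

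\textbf{Converse direction.} I plan a strong induction on the number of edges of the tree $G$. The base case (one edge) reduces $N$ to a strongly connected well-formed MG, where minimal siphons are exactly the elementary-circuit place sets, so the classical characterization ``live iff every elementary circuit is marked'' closes the case. For the inductive step, I pick a leaf $v$ of $G$ with its unique adjacent edge $e=\{v,v'\}$ and associated component $C_e$, and I consider the sub-\PCMGineq{} $N'$ obtained by deleting $v$ together with edge $e$ from $G$. The interface between $N'$ and $C_e$ is the shared place $p_{v'}$, whose pre- and post-sets in $N$ split naturally into a $C_e$-side and a $N'$-side. I would show that every minimal siphon of $N'$ either is, or is contained in, some marked minimal siphon of $N$, so the inductive hypothesis applies to $N'$ and yields its liveness. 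Then I would argue that tokens repeatedly arriving in $p_{v'}$ through liveness of $N'$, combined with the MG structure of $C_e$ and the markedness of the elementary circuits internal to $C_e \cup \{p_{v'}\}$ (which are state-machine siphons of $N$ by Theorem~\ref{SiphonsOfPCMG}), keep every transition of $C_e$ live.

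\textbf{Main obstacle and fallback.} The delicate step is the siphon bookkeeping in the inductive case: removing edge $e$ shrinks the pre-set and post-set of $p_{v'}$, so minimal siphons in $N'$ need not be in bijection with those in $N$. I will have to check carefully that no ``new'' empty minimal siphon appears in $N'$, i.e., that every minimal siphon of $N'$ is covered by a marked minimal siphon of $N$. Should this bookkeeping become unwieldy, my fallback is to avoid induction and instead combine two ingredients directly: (i) the deadlock-freeness implied by the fact that every siphon is marked at every reachable marking (a consequence of minimal siphons being traps), and (ii) a Keller-style argument using the state-machine P-invariant cover, the persistence properties inherited from the \WMGineq{} components, and the well-formedness of each MG component, in order to bootstrap deadlock-freeness into full liveness.
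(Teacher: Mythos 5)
Your forward direction and the identification of minimal siphons with minimal traps are fine and match the paper. The converse direction, however, has a genuine gap at the step you yourself flag as delicate, and the gap is fatal to the argument as written rather than mere bookkeeping. You delete a leaf $v$ with edge $e=\{v,v'\}$ and want to apply the inductive hypothesis to the sub-\PCMGineq{} $N'$ on $G-v$ \emph{at the restricted initial marking}. But ``every minimal siphon of $N'$ is contained in a marked minimal siphon of $N$'' does not imply that the minimal siphons of $N'$ are themselves marked: the witnessing token may lie entirely in the part of the siphon that was cut away with $C_e$. Concretely, take $G$ to be the path $v_1 - v_2 - v_3$ with each edge refined into a two-place elementary circuit, so that $N$ consists of two circuits sharing the place $p_{v_2}$, and put the single token on $p_{v_3}$. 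The unique minimal siphon of $N$ is $\{p_{v_1},p_{v_2},p_{v_3}\}$, it is marked, and $N$ is live; yet deleting the leaf $v_3$ leaves the circuit on $\{p_{v_1},p_{v_2}\}$ with no tokens, which is dead. So the inductive hypothesis simply does not apply to $(N',\projection{M_0}{P'})$, and your step~4 never gets off the ground. Your fallback does not rescue this: these nets are not persistent (shared places have several output transitions), so a Keller-style argument is unavailable, and deadlock-freeness does not bootstrap to liveness here by any standard route.

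The paper avoids this trap by running the induction inside a proof by contradiction \emph{at a reachable marking}. Assuming a transition $t$ is dead at some reachable $M_t$, it locates an elementary circuit $C_t$ of the MG component of $t$ that is unmarked at $M_t$, deletes that specific edge $e_t$ (not an arbitrary leaf), and shows that the minimal siphons of the resulting components are marked \emph{at $M_t$}: any unmarked one could be completed, by adjoining the places of $C_t$ (and, in the two-shared-place case, a second unmarked siphon from the other side), into a minimal siphon of $N$ unmarked at $M_t$ — impossible because minimal siphons are marked traps. The inductive hypothesis then gives liveness of the component T-subsystems at $M_t$, which lets a token be sent to the shared place(s) of $C_t$, contradicting deadness of $t$. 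If you want to keep your leaf-deletion scheme, you would need to replace your step~3 by an argument of exactly this kind, carried out at the reachable marking where liveness allegedly fails rather than at $M_0$.
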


\begin{proof}
By Theorem~\ref{SiphonsOfPCMG}, each minimal siphon $D$ induces a strongly connected state machine P-subnet $N_D$, 
and each place belongs to some minimal siphon.
If a siphon is not initially marked, then $S$ cannot be live. Let us prove the other direction.

We prove the claim by strong induction on the number $n$ of edges, $n \ge 1$ since we assumed $N$ to have at least one transition. 

Base cases: $n=1$, $N$ is a well-formed marked graph with at least one transition,
in which each elementary circuit P-subnet is initially marked, from which liveness is derived (by Corollary~\ref{CorWTS}).

Inductive case: $n>1$. We suppose the claim to be true for each $n' < n$.

Assume that each $D$ is initially marked and that a transition $t$ is dead at a reachable marking $M_t$.
Denote by $N_t=(P_t,T_e,W_t)$ the well-formed MG T-subnet containing $t$ and induced by the associated edge $e_t$.
Since $t$ is dead at $M_t$, $(N_t,\projection{M_t}{P_t})$ is deadlockable,
so that at least one (elementary) circuit P-subnet $C_t$ of $N_t$ is unmarked by $M_t$.
%
%Denote by $M_{dt}$ the marking obtained by deadlocking $(N_t,\projection{M_t}{P_t})$ 
%
Since each minimal siphon of $N$ is initially marked, $C_t$ contains one or two shared places of $N$.

If $C_t$ contains exactly one shared place $p$, associated to vertex $v$ in $G$, consider the subgraph of $G$ obtained by deleting $e_t$
and 
denote by $G'$ the maximal connected component containing~$v$.
Let $D$ be a minimal siphon of $N$ containing the places $P_C$ of $C$, then $D \setminus (P_C \setminus \{p\})$ 
is a minimal siphon of $\projection{N}{G'}$ that is marked by $M_t$.
Consider any minimal siphon $D'$ of $\projection{N}{G'}$ containing $p$:
$D'$ is marked by $M_t$, 
since otherwise $D' \cup P_C$ is a minimal siphon of $N$ unmarked by $M_t$, which is impossible (by Theorem~\ref{SiphonsOfPCMG}).
Each other minimal siphon of $\projection{N}{G'}$ not containing $p$ is also a minimal siphon of $N$ and is marked by $M_t$.
Applying the inductive hypothesis, the T-subsystem $(\projection{N}{G'}, \projection{M_t}{G'})$ of $(N,M_t)$ is live 
and enables a sequence sending a token to $p$.

This reasoning applies symmetrically to the other extremity of $e_t$.

Now, if $C_t$ contains exactly two shared places $p$ and $p'$, associated to $v$ and $v'$ in $G$, 
consider the subgraph of $G$ obtained by deleting $e_t$
and 
denote by $G_v$ the maximal connected component containing~$v$, by $G_{v'}$ the maximal connected component containing $v'$.
If $D_v$ is a minimal siphon of $G_v$ unmarked by $M_t$ and containing $v$,
and if $D_{v'}$ is a minimal siphon of $G_{v'}$ unmarked by $M_t$ and containing $v'$ (they exist by Theorem~\ref{SiphonsOfPCMG}),
then $D_v \cup D_{v'} \cup P_C$ is a minimal siphon of $N$ unmarked by $M_t$, which is impossible.
Consequently, $D_v$ of $D_{v'}$ is marked by $M_t$, 
and $(\projection{N}{G_v}, \projection{M_t}{G_v})$ or $(\projection{N}{G_{v'}}, \projection{M_t}{G_{v'}})$
is a live T-subsystem of $(N,M_t)$ in which
a token can be sent to $p$ or $p'$.

We deduce that a marking $M_t'$ is reachable from $M_t$ such that $(N_t,\projection{M_t'}{P_t})$ is live.
This contradicts the fact that $t$ is dead at $M_t$. Thus, $S$ is live.

We proved the base cases and the inductive case, so that the claim is true for each $n$.
Hence the result for the siphon case.

The trap case is derived directly from the above, since, in the class of nets considered, 
each minimal siphon is a minimal trap and reciprocally, by Theorem~\ref{SiphonsOfPCMG}.
\end{proof}

\begin{corollary}[Polynomial-time complexity of Theorem~\ref{LiveWellPCMG}]\label{PCMGlivePTIME}
Checking the liveness of a well-structured, connected \PCMGineq{} obtained from an acyclic graph, is a polynomial-time problem.
\end{corollary}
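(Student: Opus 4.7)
The plan is to combine the siphon characterization of liveness just proved (Theorem~\ref{LiveWellPCMG}) with a classical polynomial-time procedure that, given any subset of places of a net, extracts its maximal contained siphon.

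First, I would invoke Theorem~\ref{LiveWellPCMG}: under the hypotheses of the corollary, $S=(N,M_0)$ is live if and only if every minimal siphon of $N$ is marked by $M_0$. The key observation is that this quantification over \emph{minimal} siphons can be replaced by a quantification over all nonempty siphons, because any unmarked siphon necessarily contains an unmarked minimal one, and conversely a minimal unmarked siphon witnesses non-liveness. Hence liveness is equivalent to the statement: the set $U=\{p\in P\mid M_0(p)=0\}$ of initially unmarked places contains no nonempty siphon.

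Next, I would use the standard fact that, given any subset $U\subseteq P$, the union of all siphons contained in $U$ is itself a siphon (the \emph{maximal siphon contained in $U$}), and that this set can be computed in polynomial time by the following iterative procedure: starting from $U$, as long as there exists a place $p$ in the current set whose pre-set in $N$ contains a transition $t$ not in the post-set of the current set, remove $p$; terminate when no such place remains. Each iteration is polynomial in $|P|+|T|+|W|$ and at most $|P|$ iterations occur, so the whole computation is polynomial. The result is empty exactly when $U$ contains no nonempty siphon.

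Putting these steps together: compute $U$ from $M_0$ in linear time, run the above fixed-point procedure to obtain the maximal siphon $D^\star\subseteq U$, and return ``live'' iff $D^\star=\emptyset$. Checking that the input is indeed a well-structured \PCMGineq{} over an acyclic graph (acyclicity of $G$, and well-formedness plus strong connectedness of each MG component, which for MGs reduces to strong connectedness and initial marking of every elementary circuit of each component) is also polynomial by standard graph and marked-graph results. The only nontrivial ingredient is the reduction from ``every minimal siphon marked'' to ``no unmarked nonempty siphon exists'', which is immediate; everything else is either Theorem~\ref{LiveWellPCMG} or a textbook siphon-extraction fixed point, so no delicate step is expected.
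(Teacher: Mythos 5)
Your proof is correct and follows essentially the same route as the paper: both reduce ``every minimal siphon is marked'' to ``the set of initially unmarked places contains no nonempty siphon'' and then compute the maximal siphon contained in that set with the standard polynomial-time fixed-point procedure (the paper cites the proof of Theorem~8.12 in~\cite{DesEsp} for this step), concluding liveness iff the result is empty. The only cosmetic difference is that the paper phrases the case analysis on whether $Q_{\mathrm{max}}$ is itself minimal or contains a proper minimal unmarked siphon, whereas you state the equivalent observation that any unmarked siphon contains an unmarked minimal one.
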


\begin{proof}
Denote by $Q$ the set of all the places that are not marked by the initial marking $M_0$.
Computing the unique maximal trap or siphon $Q_{\mathrm{max}}$ included in $Q$ is done in polynomial-time, 
as detailed in the proof of Theorem 8.12 in~\cite{DesEsp}.
Using Theorem~\ref{LiveWellPCMG}, we have to check that each minimal siphon and trap is initially marked.
If $Q_{\mathrm{max}}$ is empty, then each of them is marked and the system is live.
Otherwise, either $Q_{\mathrm{max}}$ is minimal and the system is not live,
or it is not minimal, meaning that it contains a proper, minimal, non-empty and unmarked siphon or trap, 
implying the system is not live: 
the result is the same in both cases, so that we do not have to compute any minimal siphon or trap.
\end{proof}

\subsection{Live and well-structured \PCMGineq{} are not always reversible}

\vspace*{2mm}

The existence of a feasible T-sequence is yet not known to be sufficient for reversibility 
in the class of live and well-structured \PCMGineq{}, which are not included in the class of HFC nets.
Also, the class of AMG %as defined in \cite{chu1997deadlock},
benefits from conditions ensuring liveness, boundedness and reversibility~\cite{chu1997deadlock},
but does not contain the \PCMGineq{}, hence the answer cannot be deduced directly from them.

In the class of well-structured \PCMGineq{},
reversibility is not necessarily deduced from liveness,
as shown in Figure~\ref{nonRevTriangle}.\\

\begin{figure}[!ht]
 \centering
\begin{tabular}{ccc}
 
\begin{minipage}{0.2\linewidth}
\centering
\begin{tikzpicture}[scale=0.7,mypetristyle]
%\draw(0.2,2.3)node{$TS_{\theexampleTScounter}\iso RG(\Sigma_{\theexampleTScounter})\iso RG(\Sigma_{\theexampleTScounter}')$};
%\node[minimum size=0cm]()at(0,0){};\node[minimum size=0cm]()at(2,0){};
%\node[minimum size=0cm]()at(2,0){};\node[minimum size=0cm]()at(2,2.5){};
%\node[ltsnst](M0)at(0,1)[label=above left:$M_0$]{};
%\draw (M0) circle (1.5mm);
\node[ltsNode](n0)at(0,1.2){};
\node[ltsNode](n1)at(1.7,1.2){};
\node[ltsNode](n2)at(1,-0.6){};
\draw[-](n0)to node[auto,swap]{}(n1);
\draw[-](n1)to node[auto,swap]{}(n2);
\draw[-](n2)to node[auto,swap]{}(n0);
\node [anchor=south] at (n0.north) {$v_{p_0}$};
\node [anchor=south] at (n1.north) {$v_{p_2}$};
\node [anchor=north] at (n2.south) {$v_{p_4}$};
\end{tikzpicture}
%\hspace*{1cm}

\end{minipage}
&
$\xrightarrow{\textrm{refinement}}{}{}$
&
\begin{minipage}{0.3\linewidth}
\centering
\raisebox{7mm}{
\begin{tikzpicture}[scale=0.7,mypetristyle]

\node (p0) at (0,0) [petriNode] {};
\node (p1) at (1.75,1.5) [place,thick,tokens=1] {};
\node (p2) at (1.75,0) [petriNode] {};
\node (p3) at (2.5,-1.75) [place,thick,tokens=1] {};
\node (p4) at (1,-1.75) [petriNode] {};
\node (p5) at (-1,-1) [place,thick,tokens=1] {};

\node [anchor=south] at (p0.north west) {$p_0$};
\node [anchor=west] at (p1.east) {$p_1$};
\node [anchor=west] at (p2.east) {$p_2$};
\node [anchor=west] at (p3.east) {$p_3$};
\node [anchor=north] at (p4.south) {$p_4$};
\node [anchor=east] at (p5.west) {$p_5$};

\node (t0) at (1,0.75) [transition,thick] {};
\node (t1) at (2.5,0.75) [transition,thick] {};
\node (t2) at (1.75,-1) [transition,thick] {};
\node (t3) at (1.75,-2.5) [transition,thick] {};
\node (t4) at (0,-1) [transition,thick] {};
\node (t5) at (-1,0) [transition,thick] {};

\node [anchor=south] at (t0.north) {$t_0$};
\node [anchor=west] at (t1.east) {$t_1$};
\node [anchor=west] at (t2.east) {$t_2$};
\node [anchor=west] at (t3.east) {$t_3$};
\node [anchor=north] at (t4.south west) {$t_4$};
\node [anchor=east] at (t5.west) {$t_5$};

\draw [->,thick, bend left=30] (p0) to node [] {} (t0);
\draw [->,thick, bend left=30] (t0) to node [] {} (p0);
\draw [->,thick, bend left=0] (t0) to node [] {} (p1);
\draw [->,thick, bend left=0] (p1) to node [] {} (t1);
\draw [->,thick, bend left=0] (t1) to node [] {} (p2);
\draw [->,thick, bend left=0] (p2) to node [] {} (t0);

\draw [->,thick, bend left=30] (p2) to node [] {} (t2);
\draw [->,thick, bend left=30] (t2) to node [] {} (p2);
\draw [->,thick, bend left=0] (t2) to node [] {} (p3);
\draw [->,thick, bend left=0] (p3) to node [] {} (t3);
\draw [->,thick, bend left=0] (t3) to node [] {} (p4);
\draw [->,thick, bend left=0] (p4) to node [] {} (t2);

\draw [->,thick, bend left=30] (p4) to node [] {} (t4);
\draw [->,thick, bend left=30] (t4) to node [] {} (p4);
\draw [->,thick, bend left=0] (t4) to node [] {} (p5);
\draw [->,thick, bend left=0] (p5) to node [] {} (t5);
\draw [->,thick, bend left=0] (t5) to node [] {} (p0);
\draw [->,thick, bend left=0] (p0) to node [] {} (t4);

\end{tikzpicture}
}
\end{minipage}

\end{tabular}

\vspace*{-5mm}

\caption{An undirected graph $G$ on the left from which the \PCMGineq{} system on the right is derived by refinement.
The latter is well-structured, live and non-reversible.
It is not an AMG since $p_1$, $p_3$ and $p_5$ are initially marked,
hence condition ($H_4$) is not fulfilled.
}

\label{nonRevTriangle}

\end{figure}
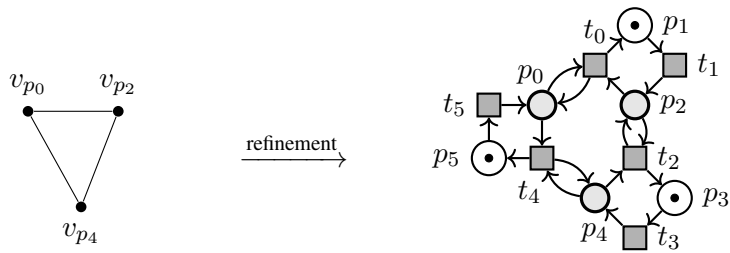

\subsection{A characterization of reversibility for well-structured and live \PCMGineq{}}

\vspace*{2mm}

We provide the characterization through the next theorem and the subsequent corollary.
The theorem studies the reversing of a single transition firing,
and
the corollary generalizes the result to finite sequences of arbitrary length.

\begin{theorem}[Reversing the action of a single transition]\label{ReverseOneTransition}
Consider a live and well-structured \PCMGineq{} system $S=(N,M_0)$
in which a T-sequence is feasible.
After the firing of any single transition in $S$, a feasible sequence exists that leads to $M_0$.
\end{theorem}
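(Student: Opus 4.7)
The plan is as follows. Let $\tau$ denote the given feasible T-sequence at $M_0$ and set $Y_\tau := \Parikh(\tau)$, so that $I \cdot Y_\tau = \zero$ and $Y_\tau \ge \one$. Firing any transition $t$ from $M_0$ produces $M_1 = M_0 + I \cdot \one_t$; since $Y_\tau(t) \ge 1$, the vector $Y := Y_\tau - \one_t$ is a non-negative T-vector satisfying $M_1 + I \cdot Y = M_0$. Hence it suffices to exhibit a firing sequence $\sigma$ from $M_1$ with $\Parikh(\sigma) = Y$, and the theorem reduces to a scheduling problem: realise $Y_\tau$ from $M_0$ by a feasible sequence whose first transition is $t$.

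To attack this scheduling problem, I would exploit the decomposition of $N$ into well-formed, strongly connected marked-graph T-subnets $C_e$, one per edge $e$ of the underlying graph $G$, glued together at the shared places that correspond to vertices. Each $C_e$ is persistent, so Keller's theorem (Theorem~\ref{kellersTheorem}) applies locally. Writing $e_t$ for the edge whose MG component contains $t$, I would first use persistence of $C_{e_t}$ together with the fact that $\tau$ realises $Y_\tau$ at $M_0$ to rearrange the $C_{e_t}$-restriction of $\tau$ into a locally feasible sequence that begins with $t$. From $M_1$, the remaining firings are then scheduled greedily: at each step fire some globally enabled transition whose residual count in $Y$ is still positive, advancing firings in whichever component is currently unblocked at the relevant shared places.

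The principal obstacle lies at the shared places, since moving $t$ to the front of $C_{e_t}$'s firings may desynchronise token production and consumption between different components. To circumvent this, the plan is to argue that the scheduler is never permanently blocked: whenever a targeted firing is disabled because a shared place lacks tokens, liveness of $S$ forces the producing component to admit a feasible firing that remains within the target $Y$, since otherwise one could build a reachable marking from which every still-to-be-fired transition of $\support(Y)$ is jointly disabled, contradicting liveness of $S$. Termination and exactness of the greedy schedule would then follow from an induction on $\sum_{t'} Y(t')$, with the identity $I \cdot Y_\tau = \zero$ ensuring that the marking balance at every place, shared or not, closes precisely at $M_0$. Formalising this progress argument so that it never overshoots $Y$ is the technical heart of the proof.
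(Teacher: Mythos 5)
There is a genuine gap, and it sits exactly where you locate ``the technical heart of the proof''. Your progress argument claims that if the greedy scheduler is blocked, one reaches a marking at which every transition with positive residual count in $Y$ is disabled, ``contradicting liveness of $S$''. That is not a contradiction: liveness only guarantees that each transition can be re-enabled after firing \emph{some} further sequence, not a sequence that stays within the remaining budget $Y$. The paper's Figures~\ref{FigCamposModif} and~\ref{NonRevTwoPlaces} exhibit live systems, obtained by gluing well-formed marked graphs at shared places, that enable a T-sequence and are nevertheless not reversible; in Figure~\ref{NonRevTwoPlaces}, firing the enabled transition $t_3$ first and then greedily consuming the residual of $Y_\tau$ leads to a marking where the last remaining transition is disabled and can only be re-enabled by overshooting, and indeed $M_0$ is simply not reachable after that first firing. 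Your argument nowhere uses the defining restriction of \PCMGineq{} --- each MG component meets at most two shared places, arranged according to an undirected graph --- so it would ``prove'' the theorem for those counterexamples as well. A secondary weakness is the opening reduction: realizing \emph{exactly} $Y_\tau-\one_t$ from $M_1$ is an unproven strengthening of the statement, and the paper's own return sequence generally has a strictly larger Parikh vector (it performs auxiliary round trips in neighbouring components and may end with $\alpha^k\pminus\sigma_r$ for some $k>1$), so even the target of your scheduling problem is not known to be attainable.

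For comparison, the paper's proof is essentially structural. In the hard case where an input place of $t$ is shared and the component $S_t$ of $t$ deadlocks, it grows a tree $G'$ of edges of $G$ whose associated components are blocked waiting for a token at a vertex, uses liveness to show that some frontier edge $e_r$ can produce a token at its vertex \emph{using only its own transitions}, and then propagates a token back along an elementary path of $G'$ made of live-and-reversible T-subsystems so as to cancel the firing of $t$, before undoing the auxiliary firings. This token-routing along the undirected-graph topology is the ingredient your proposal is missing; without a replacement for it, the liveness-based progress claim cannot be repaired.
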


\begin{proof}
The proof is illustrated in Figure~\ref{FigProofReverseOneTransition}.
Denote by $G$ the undirected graph from which $S$ is obtained.
Denote by $\alpha$ some T-sequence feasible in $S$.
Consider the firing of some transition $t$ from $M_0$ leading to the marking $M_t$.

If no input place $p$ of $t$ is a shared place,
meaning $p\lbul = \{t\}$,
then $\alpha \pminus t$ is feasible from $M_t$ and reaches $M_0$,
in which case we deduce the claim.
Otherwise,
some place $p$ in $\lbul t$ is a shared place,
in which case
we prove in the following that a sequence leading to $M_0$ from $M_t$ also exists.

The transition $t$ belongs necessarily to a marked graph T-subsystem $S_t$ of $S$
corresponding to an edge $e_t=(v,v')$ of $G$ (i.e. the component $\beta(e_t)$), 
where $v$ is associated to $p$ and $v'$ to $p'$ (i.e. $\gamma(e_t)=(p,p')$).
If $S_t$ is reversible, then the initial marking can be reached trivially.
Otherwise, $S_t$ deadlocks at some marking $M_{dt}$ reachable from $M_t$ (since $S_t$ is a strongly connected and well-formed MG,
non-reversibility implies deadlockability).
In this case, $S_t$ necessarily contains at least one synchronization
and at least one unmarked elementary circuit in which $p'$ occurs (by Proposition~\ref{LiveWTS}).
Since $S$ is live, $p'$ is a shared place.
The following algorithm builds a tree $G'$ that is a subgraph of $G$.

Initially, $G'=(V',E')$ contains only the edge $e_t=(v,v')$, i.e. $V'=\{v,v'\}$ and $E'=e_t$.
Until some edge $e_r=(v_r,v_r')$ exists such that $v_r$ belongs to $V'\setminus \{v\}$
and the marked graph T-subsystem $S_{e_r}$ of $(N,M_{dt})$ associated to $e_r$
enables a sequence sending a token to $v_r$,
we iterate the following: %%until some edge $e_r=(v_r,v_r')$ is visited and added to $G'$ such that $v_r$ belongs to $V'\setminus \{v\}$
%%and the marked graph T-subsystem $S_{e_r}$ enables a sequence sending a token to $v_r$:
%
add to $G'$ each non-visited edge $e=(v_1,v_2)$ such that $v_1$ belongs to $V'\setminus \{v\}$, but $v_2$ does not,
%$v_2$ has not yet been visited 
the MG T-subsystem $S_e$ of $(N,M_{dt})$ induced by $e$ is not live and becomes live if a token is added to $v_2$.

The loop terminates since each edge is visited at most once,
%each iteration increases the size of $G'$ while $G$ is finite,
and the non-existence of the edge $e_r$ above means that the T-subsystem of $S$ corresponding to $G'$ can deadlock as a T-subsystem 
and cannot receive tokens from other transitions of the T-subsystem associated to $G \setminus G'$, contradicting liveness.

Hence, a token can be sent to $v_r$ by firing only in $S_{e_r}$, leading to a new marking $M$ through some sequence $\sigma_r$.
Some elementary sequence of edges $\mu = e_1 \ldots e_k$ exists in $G'$ from $v'$ to $v_r$
and represents a live and reversible T-subsystem $S'_\mu$ of $(N,M)$, by construction.

Walking along this path backwards, i.e. firing successively in 
the marked graph T-subsystems associated to $\beta(e_k) \ldots \beta(e_1)$,
a token is sent to $v'$ through some sequence $\sigma$, leading to $M_\sigma$.
The first firing of $t$ in $S_t$ can thus be canceled, 
leading to $M'$ such that for each place $p'' \neq v,v'$ of $P_{e_t}$, $M'(p'') = M_0(p'')$.
The T-subsystem $S'_\mu$ associated to $\mu$ and marked by $M'$ is live and reversible.
Thus $\sigma$ can be canceled in $S'_\mu$, leading to $M''$.
At $M''$, only places of the marked graph $N_r$ associated to $e_r$ (i.e. $N_r = \beta(e_r)$) might be marked differently from $M_0$.
Either $(N_r,\projection{M_t}{P_{e_r}})$ is live and reversible, in which case it is also live and reversible at $M''$
and $M_0$ is reachable from $M''$,
or $(N_r,\projection{M_t}{P_{e_r}})$ deadlocks, thus deadlocks also from $M''$, meaning that the sequence $\sigma_r$ that sent tokens to $v_r$
using only transitions of $N_r$ did not use any token from any shared place:
consequently,
the initial marking $M_0$ can be reached from $M''$ by firing $\alpha^k \pminus \sigma_r$ for the smallest positive integer $k$
such that $\Parikh(\alpha^k) \ge \Parikh(\sigma_r)$.
\end{proof}

\begin{figure}[!ht]
\vspace*{1cm}
 \centering
%\begin{tabular}{ccc}
 
\begin{minipage}{0.25\linewidth}
%\centering
\begin{tikzpicture}[scale=0.4,mypetristyle]
\node[ltsNode,label=below:$v_{p_3}$](s0)at(0,0){};
\node[ltsNode,label=above:$v_{p_0}$](s1)at(0,2){};
\node[ltsNode,label=below:$v_{p_2}$](s2)at(2,1){};
\node[ltsNode,label=below:$v_{p_5}$](s3)at(4,1){};
\node[ltsNode,label=above:$v_{p_7}$](s4)at(6,2){};
\node[ltsNode,label=below:$v_{p_9}$](s5)at(6,0){};

\draw[-](s0)to node[]{}(s1);
\draw[-](s1)to node[above]{$e_t$}(s2);
\draw[-](s2)to node[]{}(s0);
\draw[-](s2)to node[above]{$e_1$}(s3);
\draw[-](s3)to node[above]{$e_2$}(s4);
\draw[-](s4)to node[right]{$e_r$}(s5);
\draw[-](s5)to node[]{}(s3);

\end{tikzpicture}
%\hspace*{1cm}

\end{minipage}
%&
$\xrightarrow{\textrm{ref.}}{}{}$
%&
\begin{minipage}{0.65\linewidth}
%\centering
\raisebox{7mm}{
\begin{tikzpicture}[scale=0.75,mypetristyle]

\node (p0) at (0,4) [petriNode,tokens=1] {};
\node (p1) at (2.25,3.5) [place] {};
\node (p2) at (3,2) [petriNode] {};
\node (p3) at (0,0) [petriNode] {};
\node (p4) at (-0.75,2) [place] {};
\node (p5) at (5,2) [petriNode] {};
\node (p6) at (6,4) [place,tokens=1] {};
\node (p7) at (8,4) [petriNode] {};
\node (p8) at (8.75,2) [place,tokens=1] {};
\node (p9) at (8,0) [petriNode] {};
\node (p10) at (6.2,0.35) [place,tokens=1] {};

\node [anchor=south] at (p0.north west) {$p_0$};
\node [anchor=south west] at (p1.north east) {$p_1$};
\node [anchor=north] at (p2.south) {$p_2$};
\node [anchor=north] at (p3.south) {$p_3$};
\node [anchor=east] at (p4.west) {$p_4$};
\node [anchor=north] at (p5.south) {$p_5$};
\node [anchor=south] at (p6.north) {$p_6$};
\node [anchor=south] at (p7.north) {$p_7$};
\node [anchor=west] at (p8.east) {$p_8$};
\node [anchor=north] at (p9.south) {$p_9$};
\node [anchor=north] at (p10.south) {$p_{10}$};

\node (t1) at (1.25,4) [transition] {};
\node (t2) at (2.25,2.75) [transition] {};
\node (t3) at (1.9,0.25) [transition] {};
\node (t4) at (1.6,1) [transition] {};
\node (t5) at (0,3) [transition] {};
\node (t6) at (-0.75,1) [transition] {};
\node (t7) at (4,2.5) [transition] {};
\node (t8) at (4,1.5) [transition] {};
\node (t9) at (5,3) [transition] {};
\node (t10) at (7,4) [transition] {};
\node (t11) at (8.75,3) [transition] {};
\node (t12) at (8,1) [transition] {};
\node (t13) at (7,0) [transition] {};
\node (t14) at (5.75,1.25) [transition] {};

\node [anchor=south] at (t1.north) {$t_1=t$};
\node [anchor=east] at (t2.west) {$t_2$};
\node [anchor=west] at (t3.east) {$t_3$};
\node [anchor=east] at (t4.west) {$t_4$};
\node [anchor=east] at (t5.west) {$t_5$};
\node [anchor=east] at (t6.west) {$t_6$};
\node [anchor=south] at (t7.north) {$t_7$};
\node [anchor=north] at (t8.south) {$t_8$};
\node [anchor=south] at (t9.north) {$t_9$};
\node [anchor=south] at (t10.north) {$t_{10}$};
\node [anchor=west] at (t11.east) {$t_{11}$};
\node [anchor=west] at (t12.east) {$t_{12}$};
\node [anchor=north] at (t13.south) {$t_{13}$};
\node [anchor=west] at (t14.east) {$t_{14}$};

\draw [->, bend left=0] (p0) to node [] {} (t1);
\draw [->, bend left=0] (t1) to node [] {} (p1);
\draw [->, bend left=0] (p1) to node [] {} (t2);
\draw [->, bend left=0] (t2) to node [] {} (p0);
\draw [->, bend left=20] (t2) to node [] {} (p2);
\draw [->, bend left=20] (p2) to node [] {} (t2);
\draw [->, bend left=0] (p3) to node [] {} (t4);
\draw [->, bend left=0] (t4) to node [] {} (p2);
\draw [->, bend left=0] (p2) to node [] {} (t3);
\draw [->, bend left=0] (t3) to node [] {} (p3);

\draw [->, bend left=20] (p0) to node [] {} (t5);
\draw [->, bend left=20] (t5) to node [] {} (p0);
\draw [->, bend left=0] (p3) to node [] {} (t5);
\draw [->, bend left=0] (t5) to node [] {} (p4);

\draw [->, bend left=0] (p4) to node [] {} (t6);
\draw [->, bend left=0] (t6) to node [] {} (p3);

\draw [->, bend left=0] (p2) to node [] {} (t7);
\draw [->, bend left=0] (t7) to node [] {} (p5);
\draw [->, bend left=0] (p5) to node [] {} (t8);
\draw [->, bend left=0] (t8) to node [] {} (p2);

\draw [->, bend left=0] (p5) to node [] {} (t9);
\draw [->, bend left=0] (t9) to node [] {} (p6);

\draw [->, bend left=0] (p6) to node [] {} (t10);
\draw [->, bend left=0] (t10) to node [] {} (p5);
\draw [->, bend left=20] (p5) to node [] {} (t14);
\draw [->, bend left=20] (t14) to node [] {} (p5);

\draw [->, bend left=0] (p7) to node [] {} (t12);
\draw [->, bend left=0] (t12) to node [] {} (p8);
\draw [->, bend left=0] (p8) to node [] {} (t11);
\draw [->, bend left=0] (t11) to node [] {} (p7);

\draw [->, bend left=20] (t12) to node [] {} (p9);
\draw [->, bend left=20] (p9) to node [] {} (t12);

\draw [->, bend left=20] (p5) to node [] {} (t14);

\draw [->, bend left=20] (p7) to node [] {} (t10);
\draw [->, bend left=20] (t10) to node [] {} (p7);

\draw [->, bend left=0] (p9) to node [] {} (t13);
\draw [->, bend left=0] (t13) to node [] {} (p10);

\draw [->, bend left=0] (p10) to node [] {} (t14);
\draw [->, bend left=0] (t14) to node [] {} (p9);

\end{tikzpicture}
}
\end{minipage}

%\end{tabular}

%\vspace*{-5mm}

\caption{
Illustration of the proof of Theorem~\ref{ReverseOneTransition}.
The graph $G$ is pictured on the left, with $\mu = e_1 e_2$.
The \PCMGineq{} $S$ obtained from $G$ is pictured on the right and contains cycles.
}

\label{FigProofReverseOneTransition}
\vspace*{1cm}
\end{figure}
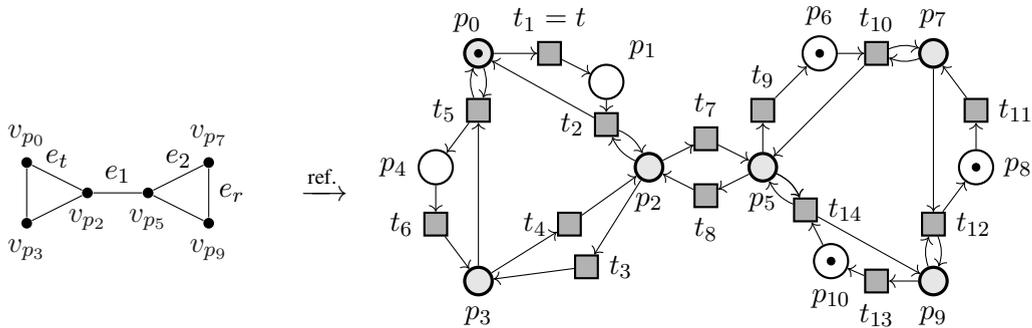

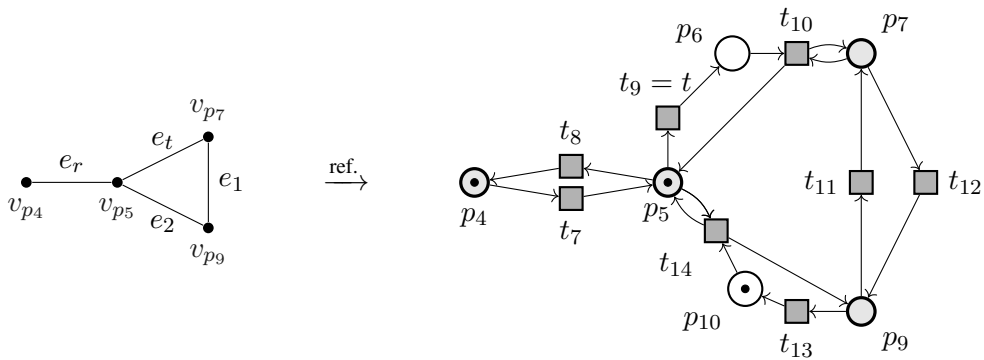
\begin{figure}[!ht]
 \centering
%\begin{tabular}{ccc}
 
\begin{minipage}{0.32\linewidth}
\centering
\begin{tikzpicture}[scale=0.6,mypetristyle]

\node[ltsNode,label=below:$v_{p_4}$](s2)at(2,1){};
\node[ltsNode,label=below:$v_{p_5}$](s3)at(4,1){};
\node[ltsNode,label=above:$v_{p_7}$](s4)at(6,2){};
\node[ltsNode,label=below:$v_{p_9}$](s5)at(6,0){};

\draw[-](s2)to node[above]{$e_r$}(s3);
\draw[-](s3)to node[above]{$e_t$}(s4);
\draw[-](s4)to node[right]{$e_1$}(s5);
\draw[-](s5)to node[below]{$e_2$}(s3);

\end{tikzpicture}
%\hspace*{1cm}

\end{minipage}
%&
$\xrightarrow{\textrm{ref.}}{}{}$
%&
\begin{minipage}{0.57\linewidth}
\centering
\raisebox{0mm}{
\begin{tikzpicture}[scale=0.85,mypetristyle]

\node (p4) at (2,2) [petriNode,tokens=1] {};
\node (p5) at (5,2) [petriNode,tokens=1] {};
\node (p6) at (6,4) [place] {};
\node (p7) at (8,4) [petriNode] {};
\node (p9) at (8,0) [petriNode] {};
\node (p10) at (6.2,0.35) [place,tokens=1] {};

\node [anchor=north] at (p4.south) {$p_4$};
\node [anchor=north] at (p5.south west) {$p_5$};
\node [anchor=south east] at (p6.west) {$p_6$};
\node [anchor=south west] at (p7.north east) {$p_7$};
\node [anchor=north west] at (p9.south east) {$p_9$};
\node [anchor=north east] at (p10.south west) {$p_{10}$};

\node (t7) at (3.5,1.75) [transition] {};
\node (t8) at (3.5,2.25) [transition] {};
\node (t9) at (5,3) [transition] {};
\node (t10) at (7,4) [transition] {};
\node (t11) at (8,2) [transition] {};
\node (t12) at (9,2) [transition] {};
\node (t13) at (7,0) [transition] {};
\node (t14) at (5.75,1.25) [transition] {};

\node [anchor=north] at (t7.south) {$t_7$};
\node [anchor=south] at (t8.north) {$t_8$};
\node [anchor=south] at (t9.north west) {$t_9 = t$};
\node [anchor=south] at (t10.north) {$t_{10}$};
\node [anchor=east] at (t11.west) {$t_{11}$};
\node [anchor=west] at (t12.east) {$t_{12}$};
\node [anchor=north] at (t13.south) {$t_{13}$};
\node [anchor=north east] at (t14.south west) {$t_{14}$};

\draw [->, bend left=0] (p4) to node [] {} (t7);
\draw [->, bend left=0] (t7) to node [] {} (p5);

\draw [->, bend left=0] (p5) to node [] {} (t8);
\draw [->, bend left=0] (t8) to node [] {} (p4);

\draw [->, bend left=0] (p5) to node [] {} (t9);
\draw [->, bend left=0] (t9) to node [] {} (p6);

\draw [->, bend left=0] (p6) to node [] {} (t10);
\draw [->, bend left=0] (t10) to node [] {} (p5);
\draw [->, bend left=20] (p5) to node [] {} (t14);
\draw [->, bend left=20] (t14) to node [] {} (p5);

\draw [->, bend left=0] (p7) to node [] {} (t12);
\draw [->, bend left=0] (t11) to node [] {} (p7);

\draw [->, bend left=0] (t12) to node [] {} (p9);
\draw [->, bend left=0] (p9) to node [] {} (t11);

\draw [->, bend left=20] (p5) to node [] {} (t14);

\draw [->, bend left=20] (p7) to node [] {} (t10);
\draw [->, bend left=20] (t10) to node [] {} (p7);

\draw [->, bend left=0] (p9) to node [] {} (t13);
\draw [->, bend left=0] (t13) to node [] {} (p10);

\draw [->, bend left=0] (p10) to node [] {} (t14);
\draw [->, bend left=0] (t14) to node [] {} (p9);

\end{tikzpicture}
}
\end{minipage}

%\end{tabular}

\vspace*{2mm}

\caption{
Another illustration of the proof of Theorem~\ref{ReverseOneTransition} with $\mu = e_1 e_2$,
where $e_2$, $e_t$ and $e_r$ share a node.
}

\label{FigProofReverseOneTransition2}

\end{figure}

\begin{corollary}[A characterization of reversibility for well-structured, live \PCMGineq{}]\label{CharRevPCMG}
Consider a live and well-structured \PCMGineq{} system $S=(N,M_0)$.
It is reversible iff it enables a T-sequence.
\end{corollary}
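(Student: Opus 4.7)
The plan is to prove the two implications separately. The forward direction is routine; the backward direction is a strong induction on firing-sequence length, with Theorem~\ref{ReverseOneTransition} providing both the engine of the induction step and the effective base case.

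For $(\Rightarrow)$, suppose $S=(N,M_0)$ is reversible. By liveness, I iteratively extend the empty sequence into a feasible sequence $M_0 [\alpha\rangle M$ whose support is the entire transition set $T$: at each step, liveness provides a feasible extension enabling some yet-unfired transition. By reversibility, some $\tau$ satisfies $M [\tau\rangle M_0$. The composition $\alpha\tau$ is then feasible from $M_0$, contains every transition, and returns to $M_0$, hence $I \cdot \Parikh(\alpha\tau) = 0$; thus $\alpha\tau$ is a T-sequence.

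For $(\Leftarrow)$, suppose $S$ is live, well-structured, and enables a T-sequence $\alpha$. I would prove by strong induction on $n \ge 0$ the following \emph{uniform} statement: for every live, well-structured \PCMGineq{} system $(N, M^*)$ enabling a T-sequence, every marking reached from $M^*$ via a sequence of length at most $n$ can reach $M^*$. The case $n=0$ is immediate. For $n \ge 1$, let $M^* [\sigma\rangle M$ with $|\sigma|=n$ and write $\sigma = t\sigma'$, so $M^* [t\rangle M_1 [\sigma'\rangle M$ with $|\sigma'| = n-1$. The system $(N, M_1)$ inherits liveness from $(N, M^*)$ and well-structuredness from the shared underlying net. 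Crucially, $(N, M_1)$ also enables a T-sequence: Theorem~\ref{ReverseOneTransition} applied to $(N,M^*)$ yields a firing sequence $\rho$ with $M_1 [\rho\rangle M^*$, and then $\rho\alpha t$ is feasible from $M_1$, returns to $M_1$ (so its Parikh vector is a consistency vector), and has support $T$ (because $\alpha$ does). Applying the induction hypothesis at $n-1$ to $(N,M_1)$ and $\sigma'$ yields a firing sequence $\tau$ with $M [\tau\rangle M_1$, and $\tau\rho$ is the desired path from $M$ to $M^*$. Specialising to $M^* = M_0$ establishes reversibility.

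The only delicate point lies in choosing the right form of the induction hypothesis: it must be uniform in the starting marking so that it can be reapplied to the intermediate marking $M_1$ met along $\sigma$. Once the statement is phrased that way, verifying that $(N, M_1)$ still satisfies the three required hypotheses (liveness, well-structuredness, feasibility of a T-sequence) reduces essentially to invoking Theorem~\ref{ReverseOneTransition}; without that theorem, the induction would have no means of transporting the hypothesis along a single firing step, which is why the earlier, more technical result about reversing one transition is the substantive content of this corollary.
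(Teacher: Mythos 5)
Your proposal is correct and follows essentially the same route as the paper: the forward direction is the standard liveness-plus-reversibility concatenation, and the backward direction is the paper's induction on the length of a feasible sequence, using Theorem~\ref{ReverseOneTransition} both to return the one-step marking to $M_0$ and to transport the T-sequence hypothesis to the intermediate marking. Your explicit insistence on a starting-marking-uniform induction hypothesis, and the concatenation $\rho\alpha t$ witnessing a T-sequence at $M_1$, merely spell out details the paper leaves implicit.
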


\begin{proof}
As recalled earlier, the left to right direction ($\Rightarrow$) is clear.
To prove the other direction ($\Leftarrow$), we reason by induction on the length $n$ of any feasible sequence $\sigma$, as follows.

Base case: $n=0$, $\sigma$ is the empty sequence, hence $M_0$ is trivially reachable.

Inductive case:
$n > 0$ and we suppose the claim to be true for $n-1$.
The sequence is of the form $\sigma = t \sigma_0$ and leads to a marking $M$.
The firing of $t$ leads to the system $(N,M_t)$, which is live and enables a T-sequence by Theorem~\ref{ReverseOneTransition}.
By the induction hypothesis, some sequence $\sigma_0'$ is feasible at $M$ that leads to $M_t$.
Thus, $M_0$ can be reached from $M$.
We proved the base and inductive cases,
hence the claim is true for each length $n$, thus any feasible sequence.
\end{proof}

Figure \ref{FigCamposModif} pictures\footnote{The system is inspired from Figure~21 in~\cite{HDM2016}.} a well-formed, non-reversible system allowing a T-sequence,
obtained by merging three pairs of places in a well-formed marked graph: the system obtained is not a \PCMGineq{}, 
since there does not exist any undirected graph describing its topology.
It has three shared places, whereas in each \PCMGineq{} each $N_i$ contains at most two shared places.
Since a \PCMGineq{} can be seen as the result of place-merging subsets of places in a non-connected marked graph, 
with additional constraints reducing the set of mergeable subsets,
this example can be considered as close to a \PCMGineq{}.
Thus, relaxing the definition of \PCMGineq{} easily leads to examples invalidating Corollary~\ref{CharRevPCMG}.
Another similar counter-example, with an isomorphic reachability graph and only two shared places, is pictured in Figure~\ref{NonRevTwoPlaces}.
Notice that both figures use the same MG, with a different layout. 

\begin{figure}[!ht]
\vspace*{1cm}
%\begin{tabular}{cc}

%\begin{minipage}{0.35\linewidth}
	
\centering

\begin{tikzpicture}[scale=1,mypetristyle]

\node (p0') at (0,2.2) [place,draw=black!60,fill=black!15] {};
\node (p0'') at (0,2.8) [place,draw=black!60,fill=black!15] {};
\node (p1') at (2.4,2) [place,draw=black!60,fill=black!15] {};
\node (p1'') at (2.4,2.55) [place,draw=black!60,fill=black!15] {};
\node (p2') at (1,0.5) [place,draw=black!60,fill=black!15] {};
\node (p2'') at (1,0) [place,draw=black!60,fill=black!15] {};
\node (p3) at (3.4,-0.3) [place] {};
\node (p4) at (3.9,-0.45) [place] {};
\node (p5) at (0.5,1.1) [place] {};
\node (p6) at (2.8,1.5) [place] {};

\node [anchor=east] at (p0'.west) {$p_0'$};
\node [anchor=east] at (p0''.west) {$p_0''$};
\node [anchor=north] at (p1'.south west) {$p_1'$};
\node [anchor=east] at (p1''.west) {$p_1''$};
\node [anchor=south west] at (p2'.east) {$p_2'$};
\node [anchor=west] at (p2''.east) {$p_2''$};
\node [anchor=east] at (p3.north west) {$p_3$};
\node [anchor=south] at (p4.north east) {$~p_4$};
\node [anchor=south east] at (p5.north) {$p_5$};
\node [anchor=north] at (p6.south east) {$~~p_6$};

\node (t0) at (3.4,2.8) [transition] {};
\node (t1) at (1,2) [transition] {};
\node (t2) at (0,-0.3) [transition] {};
\node (t3) at (2.4,0.5) [transition] {};

\node [anchor=west] at (t0.east) {$t_0$};
\node [anchor=north west] at (t1.south east) {$t_1$};
\node [anchor=east] at (t2.west) {$t_2$};
\node [anchor=west] at (t3.east) {$t_3$};

\draw [->,thick,bend left=0] (t2.north east) to node {} (p2'');
\draw [->,thick,bend left=0] (p2'') to node {} (t2.north east);

\draw [->,thick] (p2') to node {} (t1);
\draw [->,thick] (t3) to node {} (p2');
\draw [->,thick] (p1') to node {} (t3);
\draw [->,thick] (t1) to node {} (p1');
\draw [->,thick] (p0') to node {} (t1.north west);

\draw [->,thick,bend right=20] (t0.west) to node {} (p1'');
\draw [->,thick,bend left=20] (p1'') to node {} (t0.west);

\draw [->,thick, bend right=0] (t0.north west) to node {} (p0''.north east);
\draw [->,thick] (t2.north west) to node {} (p0'.south west);

\draw [->,thick] (p0'') .. controls  (1.4,2.4) .. (t3.north west);

\draw [->,thick,bend left=0] (t0) to node {} (p3);
\draw [->,thick,bend left=0] (p3) to node {} (t2);

\draw [->,thick,bend right=0] (t2.south east) to node {} (p4.south west);
\draw [->,thick,bend right=0] (p4) to node {} (t0.south east);

\draw [->,thick,bend right=0] (t1.south west) to node {} (p5);
\draw [->,thick,bend right=0] (p5) to node {} (t2.north);

\draw [->,thick,bend right=0] (t3.north east) to node {} (p6);
\draw [->,thick,bend right=0] (p6) to node {} (t0.south west);

\end{tikzpicture}
%
%\end{minipage}
%
\begin{tikzpicture}[scale=1,mypetristyle]

\node (p0) at (0,2.8) [place,tokens=3] {};
\node (p1) at (2.4,2) [place,tokens=1] {};
\node (p2) at (1,0.5) [place,tokens=0] {};
\node (p3) at (3.4,-0.3) [place,tokens=1] {};
\node (p4) at (3.9,-0.45) [place] {};
\node (p5) at (0.5,1.1) [place,tokens=1] {};
\node (p6) at (2.8,1.5) [place] {};

\node [anchor=east] at (p0.west) {$p_0$};
\node [anchor=south east] at (p1.north west) {$p_1$};
\node [anchor=south west] at (p2.north east) {$p_2$};
\node [anchor=east] at (p3.north west) {$p_3$};
\node [anchor=south] at (p4.north east) {$~p_4$};
\node [anchor=south east] at (p5.north) {$p_5$};
\node [anchor=north] at (p6.south east) {$~~p_6$};

\node (t0) at (3.4,2.8) [transition] {};
\node (t1) at (1,2) [transition] {};
\node (t2) at (0,-0.3) [transition] {};
\node (t3) at (2.4,0.5) [transition] {};

\node [anchor=west] at (t0.east) {$t_0$};
\node [anchor=north west] at (t1.south east) {$t_1$};
\node [anchor=east] at (t2.west) {$t_2$};
\node [anchor=west] at (t3.east) {$t_3$};

\draw [->,thick,bend left=0] (t2.north east) to node {} (p2);
\draw [->,thick,bend left=0] (p2) to node {} (t2.north east);

\draw [->,thick] (p2) to node {} (t1);
\draw [->,thick] (t3) to node {} (p2);
\draw [->,thick] (p1) to node {} (t3);
\draw [->,thick] (t1) to node {} (p1);
\draw [->,thick] (p0) to node {} (t1.north west);

\draw [->,thick,bend right=20] (t0.west) to node {} (p1.north);
\draw [->,thick,bend left=20] (p1.north) to node {} (t0.west);

\draw [->,thick, bend right=0] (t0.north west) to node {} (p0.north east);
\draw [->,thick] (t2.north west) to node {} (p0.south west);

\draw [->,thick] (p0) .. controls  (1.3,2.5) .. (t3.north west);

\draw [->,thick,bend left=0] (t0) to node {} (p3);
\draw [->,thick,bend left=0] (p3) to node {} (t2);

\draw [->,thick,bend right=0] (t2.south east) to node {} (p4.south west);
\draw [->,thick,bend right=0] (p4) to node {} (t0.south east);

\draw [->,thick,bend right=0] (t1.south west) to node {} (p5);
\draw [->,thick,bend right=0] (p5) to node {} (t2.north);

\draw [->,thick,bend right=0] (t3.north east) to node {} (p6);
\draw [->,thick,bend right=0] (p6) to node {} (t0.south west);

\end{tikzpicture}
\raisebox{0mm}{
\begin{tikzpicture}[scale=0.95,mypetristyle]
\node[ltsNode,label=above:$s_0$,minimum width=7pt](s0)at(1,2.3){};
\node[ltsNode,label=above:$s_1$](s1)at(2.5,2.3){};
\node[ltsNode,label=right:$s_2$](s2)at(1,1){};
\node[ltsNode,label=left:$s_3$](s3)at(2.5,1){};
\node[ltsNode,label=below:$s_4$](s4)at(1,-0.3){};
\node[ltsNode,label=below:$s_5$](s5)at(2.5,-0.3){};
\node[ltsNode,label=left:$s_6$](s6)at(0.5,1){};
\node[ltsNode,label=right:$s_7$](s7)at(3,1){};
\draw[-{>[scale=2.5,length=2,width=2]}](s0)to node[above]{$t_3$}(s1);
\draw[-{>[scale=2.5,length=2,width=2]}](s1)to node[left]{$t_1$}(s3);
\draw[-{>[scale=2.5,length=2,width=2]}](s3)to node[left]{$t_0$}(s5);
\draw[-{>[scale=2.5,length=2,width=2]}](s5)to node[below right]{$t_3$}(s7);
\draw[-{>[scale=2.5,length=2,width=2]}](s0)to node[right]{$t_0$}(s2);
\draw[-{>[scale=2.5,length=2,width=2]}](s2)to node[right]{$t_3$}(s4);
\draw[-{>[scale=2.5,length=2,width=2]}](s4)to node[above]{$t_1$}(s5);
\draw[-{>[scale=2.5,length=2,width=2]}](s4)to node[below left]{$t_2$}(s6);
\draw[-{>[scale=2.5,length=2,width=2]}](s6)to node[above left]{$t_1$}(s0);
\draw[-{>[scale=2.5,length=2,width=2]}](s7)to node[above right]{$t_2$}(s1);
\end{tikzpicture}
}
	  
%\end{tabular}

\vspace*{3mm}

\caption{
On the left, a well-formed marked graph is pictured with highlighted subsets of places $\{p_0',p_0''\}$, $\{p_1',p_1''\}$ and $\{p_2',p_2''\}$;
place-merging each pair leads to the underlying net of the system in the middle.
The latter is $1$-conservative, consistent, well-formed, live, non-reversible and enables the T-sequence $t_3 \, t_2 \, t_1 \, t_0$. 
Its reachability graph is depicted on the right. %in Figure~\ref{FigCamposModifBisReach}.
}

\label{FigCamposModif}

\end{figure}
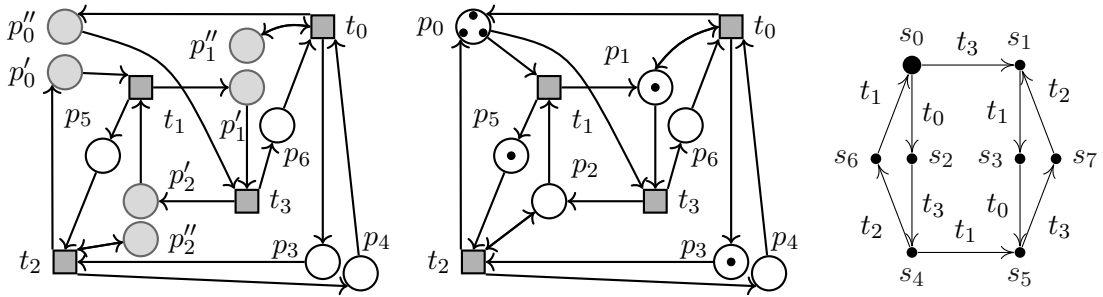

\begin{figure}[!ht]
\vspace*{1cm}
\centering

\hspace*{2mm}
\begin{tikzpicture}[scale=0.9,mypetristyle]

\node (p0) at (-0.5,2.8) [place,tokens=0] {};
\node (p1) at (1.9,2) [place,tokens=0,draw=black!60,fill=black!15] {};
\node (p1') at (2.5,2.15) [place,tokens=0,draw=black!60,fill=black!15] {};
\node (p2) at (1.1,0.5) [place,tokens=0,draw=black!60,fill=black!15] {};
\node (p2') at (0.5,0.35) [place,tokens=0,draw=black!60,fill=black!15] {};
\node (p3) at (3.5,-0.3) [place,tokens=0] {};
\node (p4) at (0.3,1) [place,tokens=0] {};
\node (p5) at (0,2) [place,tokens=0] {};
\node (p6) at (3,0.5) [place,tokens=0] {};
\node (p7) at (2.6,1.5) [place,tokens=0] {};

\node [anchor=south] at (p0.north) {$p_0$};
\node [anchor=south] at (p1.north) {$p_1$};
\node [anchor=south] at (p1'.north) {$p_1'$};
\node [anchor=north] at (p2.south) {$p_2$};
\node [anchor=north] at (p2'.south) {$p_2'$};
\node [anchor=north] at (p3.south) {$p_3$};
\node [anchor=south] at (p4.north) {$p_4$};
\node [anchor=south] at (p5.north) {$p_5$};
\node [anchor=north] at (p6.south) {$p_6$};
\node [anchor=north] at (p7.south) {$~~~p_7$};

\node (t0) at (3.5,2.8) [transition] {};
\node (t1) at (1.1,2) [transition] {};
\node (t2) at (-0.5,-0.3) [transition] {};
\node (t3) at (1.9,0.5) [transition] {};

\node [anchor=south] at (t0.north) {$t_0$};
\node [anchor=south] at (t1.north) {$t_1$};
\node [anchor=north] at (t2.south) {$t_2$};
\node [anchor=north] at (t3.south) {$t_3$};

\draw [->,thick,bend left=0] (t2) to node {} (p2');
\draw [->,thick,bend left=0] (p2') to node {} (t2);

\draw [->,thick] (p2) to node {} (t1);
\draw [->,thick] (t3) to node {} (p2);
\draw [->,thick] (p1) to node {} (t3);
\draw [->,thick] (t1) to node {} (p1);

\draw [->,thick] (t0) to node {} (p7);
\draw [->,thick] (p7) to node {} (t3);

\draw [->,thick,bend right=0] (t0.west) to node {} (p1');
\draw [->,thick,bend left=0] (p1') to node {} (t0.west);

\draw [->,thick, bend right=0] (p0.north east) to node {} (t0.north west);
\draw [->,thick] (t2.north west) to node {} (p0.south west);

%\draw [->,thick] (p0) .. controls  (1.3,2.5) .. (t3.north west);

\draw [->,thick,bend left=0] (t0.south east) to node {} (p3.north east);
\draw [->,thick,bend left=0] (p3.south west) to node {} (t2.south east);

\draw [->,thick] (t2) to node {} (p4);
\draw [->,thick] (p4) to node {} (t1);

\draw [->,thick,bend right=0] (t1) to node {} (p5);
\draw [->,thick,bend right=0] (p5) to node {} (t2.north);

\draw [->,thick,bend right=0] (t3) to node {} (p6);
\draw [->,thick,bend right=0] (p6) to node {} (t0);

\end{tikzpicture}
\hspace*{4mm}
\begin{tikzpicture}[scale=0.9,mypetristyle]

\node (p0) at (-0.5,2.8) [place,tokens=1] {};
\node (p1) at (1.9,2) [place,tokens=1] {};
\node (p2) at (1.1,0.5) [place,tokens=0] {};
\node (p3) at (3.5,-0.3) [place,tokens=0] {};
\node (p4) at (0.3,1) [place,tokens=1] {};
\node (p5) at (0,2) [place,tokens=1] {};
\node (p6) at (3,0.5) [place,tokens=1] {};
\node (p7) at (2.6,1.5) [place,tokens=1] {};

\node [anchor=south] at (p0.north) {$p_0$};
\node [anchor=south] at (p1.north) {$p_1$};
\node [anchor=north] at (p2.south) {$p_2$};
\node [anchor=north] at (p3.south) {$p_3$};
\node [anchor=south] at (p4.north) {$p_4$};
\node [anchor=south] at (p5.north) {$p_5$};
\node [anchor=north] at (p6.south) {$p_6$};
\node [anchor=north] at (p7.south) {$~~~p_7$};

\node (t0) at (3.5,2.8) [transition] {};
\node (t1) at (1.1,2) [transition] {};
\node (t2) at (-0.5,-0.3) [transition] {};
\node (t3) at (1.9,0.5) [transition] {};

\node [anchor=south] at (t0.north) {$t_0$};
\node [anchor=south] at (t1.north) {$t_1$};
\node [anchor=north] at (t2.south) {$t_2$};
\node [anchor=north] at (t3.south) {$t_3$};

\draw [->,thick,bend left=0] (t2) to node {} (p2);
\draw [->,thick,bend left=0] (p2) to node {} (t2);

\draw [->,thick] (p2) to node {} (t1);
\draw [->,thick] (t3) to node {} (p2);
\draw [->,thick] (p1) to node {} (t3);
\draw [->,thick] (t1) to node {} (p1);

\draw [->,thick] (t0) to node {} (p7);
\draw [->,thick] (p7) to node {} (t3);

\draw [->,thick,bend right=0] (t0.west) to node {} (p1);
\draw [->,thick,bend left=0] (p1) to node {} (t0.west);

\draw [->,thick, bend right=0] (p0.north east) to node {} (t0.north west);
\draw [->,thick] (t2.north west) to node {} (p0.south west);

%\draw [->,thick] (p0) .. controls  (1.3,2.5) .. (t3.north west);

\draw [->,thick,bend left=0] (t0.south east) to node {} (p3.north east);
\draw [->,thick,bend left=0] (p3.south west) to node {} (t2.south east);

\draw [->,thick] (t2) to node {} (p4);
\draw [->,thick] (p4) to node {} (t1);

\draw [->,thick,bend right=0] (t1) to node {} (p5);
\draw [->,thick,bend right=0] (p5) to node {} (t2.north);

\draw [->,thick,bend right=0] (t3) to node {} (p6);
\draw [->,thick,bend right=0] (p6) to node {} (t0);

\end{tikzpicture}
%
%\hspace*{3mm}%
%
\raisebox{3mm}{
\begin{tikzpicture}[scale=0.95,mypetristyle]
\node[ltsNode,label=above:$s_0$,minimum width=7pt](s0)at(1,2.3){};
\node[ltsNode,label=above:$s_1$](s1)at(2.5,2.3){};
\node[ltsNode,label=right:$s_2$](s2)at(1,1){};
\node[ltsNode,label=left:$s_3$](s3)at(2.5,1){};
\node[ltsNode,label=below:$s_4$](s4)at(1,-0.3){};
\node[ltsNode,label=below:$s_5$](s5)at(2.5,-0.3){};
\node[ltsNode,label=left:$s_6$](s6)at(0.5,1){};
\node[ltsNode,label=right:$s_7$](s7)at(3,1){};
\draw[-{>[scale=2.5,length=2,width=2]}](s0)to node[above]{$t_3$}(s1);
\draw[-{>[scale=2.5,length=2,width=2]}](s1)to node[left]{$t_1$}(s3);
\draw[-{>[scale=2.5,length=2,width=2]}](s3)to node[left]{$t_0$}(s5);
\draw[-{>[scale=2.5,length=2,width=2]}](s5)to node[below right]{$t_3$}(s7);
\draw[-{>[scale=2.5,length=2,width=2]}](s0)to node[right]{$t_0$}(s2);
\draw[-{>[scale=2.5,length=2,width=2]}](s2)to node[right]{$t_3$}(s4);
\draw[-{>[scale=2.5,length=2,width=2]}](s4)to node[above]{$t_1$}(s5);
\draw[-{>[scale=2.5,length=2,width=2]}](s4)to node[below left]{$t_2$}(s6);
\draw[-{>[scale=2.5,length=2,width=2]}](s6)to node[above left]{$t_1$}(s0);
\draw[-{>[scale=2.5,length=2,width=2]}](s7)to node[above right]{$t_2$}(s1);
\end{tikzpicture}
}
\vspace*{4mm}

\caption{On the left, a unit-weighted, well-formed MG. Place-merging its subsets $\{p_1,p_1'\}$ and $\{p_2,p_2'\}$
leads to the system in the middle, which is unit-weighted, live, structurally bounded with only two shared places, namely $p_1$ and $p_2$.
The system obtained, with the initial marking pictured, enables the T-sequence $t_0 \, t_3 \, t_2 \, t_1$ but is not reversible.
On the right, its non strongly connected reachability graph is given, with initial state $s_0$.
}
\label{NonRevTwoPlaces}
\vspace*{1cm}
\end{figure}
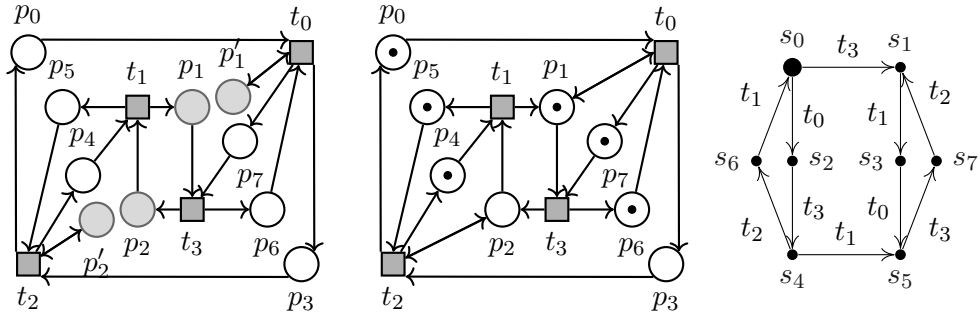

\subsection{Reversibility and the PR-R equality in the acyclic case} 

\vspace*{2mm}

We obtain next theorem.

\begin{theorem}[Directedness and reversibility in a subclass of \PCMGineq{}]\label{DirectedAcyclicPCMG}
Consider a well-structured and live \PCMGineq{} $S=(N,M_0)$ obtained from an acyclic, undirected and connected graph $G$. 
Then $S$ is reversible and fulfills the PR-R equality.
\end{theorem}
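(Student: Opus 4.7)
The plan is to invoke Theorem~\ref{PReachDirect}, so I need to verify property $\mathcal{R}$ and initial directedness of $PRG(S)$; the reversibility of $S$ will emerge from the first half.

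To establish property $\mathcal{R}$: by Theorem~\ref{SiphonsOfPCMG}, each minimal siphon of $N$ induces a strongly connected state-machine P-subnet, whose characteristic vector is a P-semiflow; since these siphons cover $P$, $N$ is conservative and $S$ is bounded. Each MG component of $N$ is strongly connected and well-formed, hence admits a full-support T-semiflow, and summing over the disjoint transition sets yields a T-semiflow $Z$ of $N$ with full support. Using liveness together with boundedness, I would construct a feasible T-sequence from $M_0$: iterate liveness-driven firings covering all transitions until a marking repeats by finiteness of $R(S)$, then extract the intermediate cycle (its Parikh vector is a consistency vector, which can be padded via further liveness firings to attain the full support of $Z$). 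Corollary~\ref{CharRevPCMG} then gives reversibility of $S$. For the reverse, reversing preserves strongly connected well-formed MGs, so $-N$ is again a well-structured \PCMGineq{} over $G$; and in a strongly connected state machine the only non-empty siphon equals the only non-empty trap, so the minimal siphons of $N$ coincide with its minimal traps and therefore with the minimal siphons of $-N$. Liveness of $S$ then gives liveness of $-S$ by Theorem~\ref{LiveWellPCMG}, and the same T-sequence construction yields reversibility of $-S$ via Corollary~\ref{CharRevPCMG}. Property $\mathcal{R}$ holds.

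For initial directedness, take $M = M_0 + I \cdot Y \in PR(S)$. Conservativeness invariance forces every minimal siphon to be marked by $M$, so $(N,M)$ is live by Theorem~\ref{LiveWellPCMG}; reapplying the T-sequence construction and Corollary~\ref{CharRevPCMG} makes $(N,M)$ reversible, and symmetrically so is $(-N,M)$. To finish I need $M_0 \in R((N,M))$, equivalently $M \in R(-S)$ by sequence reversal; and $M \in PR(-S)$ via the witness $k \cdot Z - Y \ge 0$ for sufficiently large $k$. Since the PR-R equality for $-S$ is symmetric to that for $S$, I would establish both simultaneously by induction on the cardinality of $Y$: the base case $|Y|=0$ is trivial, and for $|Y|>0$ I would pick $t \in \support(Y)$ and reduce to a strictly smaller witness by peeling a leaf edge of the tree $G$ carrying $t$, then reorganizing firings across shared places using Keller-type cancellation within the MG components, in the spirit of the proof of Theorem~\ref{ReverseOneTransition}. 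The main obstacle is precisely this reduction step: matching a feasible sequence from $M$ with the witness $Y$ requires a careful synchronization at shared places, for which the acyclicity of $G$ is essential, as illustrated by the non-reversible example on a cyclic $G$ in Figure~\ref{nonRevTriangle}. Once the induction is complete, Theorem~\ref{PReachDirect} closes the argument.
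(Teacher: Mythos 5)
There are two genuine gaps, one in each half of your plan.

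\textbf{Reversibility.} Your construction of a feasible T-sequence does not deliver what you claim. Iterating liveness-driven firings that cover all transitions until a marking repeats produces a cycle that is feasible from some intermediate marking $M_i\in R(S)$, not from $M_0$; Corollary~\ref{CharRevPCMG} requires a T-sequence enabled \emph{at $M_0$}, and "some reachable marking enables a T-sequence'' does not imply "$M_0$ does'' (nor does reversibility of $(N,M_i)$ imply $M_0\in R((N,M_i))$). This is not a cosmetic issue: the system of Figure~\ref{nonRevTriangle} is live, well-structured and conservative (the all-ones vector is a P-semiflow), hence bounded, yet non-reversible; your chain ``live $+$ bounded $\Rightarrow$ T-sequence from $M_0$ $\Rightarrow$ reversible'' would prove it reversible. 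The acyclicity of $G$ must therefore enter precisely in anchoring the T-sequence at $M_0$, and your argument does not use it there. The paper's proof obtains this by induction on the number of shared places: it peels a leaf edge $e$ of $G$, pushes a maximal number of tokens onto the leaf's shared place, applies the induction hypothesis to the \PCMGineq{} over $G-e$, and assembles a T-sequence feasible at $M_0$ (or at a marking mutually reachable with $M_0$).

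\textbf{PR-R equality.} The detour through Theorem~\ref{PReachDirect} is circular as organized: to get initial directedness you reduce $M_0\in R((N,M))$ to the PR-R equality for $-S$, which is the statement being proved (for the reversed, equally acyclic system). Your proposed escape --- a simultaneous induction on the cardinality of $Y$ --- is exactly the hard content of the theorem, and you explicitly leave its reduction step unresolved. In these non-persistent nets there is in general no transition of $\support(Y)$ whose firing makes progress toward $M$; this is precisely what the counterexamples of Section~\ref{UnreachforSharedPlaces} exploit, so ``peeling one occurrence of $t$'' cannot be taken for granted. The paper again inducts on the number of shared places rather than on $|Y|$: it splits $Y$ into its restriction to the transitions of the leaf component $T_e$ and to $T_{G-e}$, realizes the latter by the induction hypothesis and the former by Proposition~\ref{RealizableTvectors} inside the (live) marked-graph component, with a separate case analysis on whether the shared place is emptied. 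Your observations that Theorem~\ref{SiphonsOfPCMG} yields conservativeness and boundedness, and that every $M\in PR(S)$ keeps all minimal siphons marked and hence is live by Theorem~\ref{LiveWellPCMG}, are correct and useful, but they do not bridge either gap.
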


\begin{proof}
The proof is illustrated in Figure~\ref{FigProofDirectedAcyclicPCMG}.
We reason by induction on the number $n$ of shared places.\\
$-$ Base case: $n=0$, $S$ is a live and bounded marked graph, 
and Proposition~\ref{RealizableTvectors} applies.\\
$-$ Inductive case: $n>0$. We suppose the claim to be true for $n-1$.
Since $G$ is acyclic, consider an edge $e$ of $G$ having only one shared extremity;
denote by $p$ the shared place of $S$ associated to this extremity.
Denote by $S_e$ the T-subsystem $(N_e,\projection{M_0}{P_e})$ of $S$, where $N_e =(P_e,T_e,W_e)$
is the marked graph associated to $e$, $N_e = \beta(e)$,
and denote by $Y$ a T-vector such that $M = M_0 + I \cdot Y$.
The transition set $T_e$ of $N_e$ is disjoint from the one associated to $G - e$, by definition of \PCMGineq{}.
In the following, for any \PCMGineq{} system $S$ obtained from a graph $G$, for any subgraph $G'$ of $G$,
we denote by $\projection{S}{G'}$ the T-subsystem of $S$ corresponding to $G'$
and
by $\projection{Y}{G'}$ the projection of the T-vector $Y$ on the set of transitions associated to $G'$.

Fire a finite sequence $\sigma_e$ in $S_e$ leading to a marking $M'$ in $R(S)$
that maximizes the amount of tokens in $p$, 
i.e. such that $p$ is $M'(p)$-bounded in $S_e$.    
Since $(N_e,\projection{M'}{P_e})$ cannot produce additional tokens in $p$,
since $S$ is live and since $\projection{(N,M')}{G - e}$ is well-structured,
Theorem~\ref{LiveWellPCMG} applies and
we deduce that $\projection{(N,M')}{G - e}$ is a live, well-structured \PCMGineq{}. 
Thus, the inductive hypothesis applies to the latter:
$\projection{(N,M')}{G - e}$ is also reversible
and every marking that is potentially reachable in $\projection{(N,M')}{G - e}$ is reachable in it.

We show first that some T-sequence is feasible in $S$, allowing to apply Theorem~\ref{DirectedAcyclicPCMG}.
At each marking $M''$ reachable in $S$, if $M''(p) \ge 1$ then $(N_e,\projection{M''}{P_e})$ is live and reversible.
By liveness and reversibility of $\projection{(N,M')}{G - e}$, 
the latter enables a sequence $\tau$ that visits all transitions in $\projection{N}{G - e}$
and reaches a marking $M_p$ such that $M_p(p) \ge 1$;
from the above, the marked graph T-subsystem $(N_e,\projection{M_p}{P_e})$
enables a $\mathrm{T_e}$-sequence $\tau_e$, leading back to $M_p$;
then a sequence $\tau'$ leads back to $M'$.
We deduce that $\alpha = \tau \tau_e \tau'$ is a T-sequence feasible in $(N,M')$.
Now, either $M'(p) > 0$ so that $S_e$ is reversible and $M_0$ can be reached trivially from $M'$,
or $M'(p) = 0$ so that $\alpha$ is also feasible in $S$,
the intermediate marking $M'$ being replaced by $M_0$ in the reasoning above.
In both cases, applying Theorem~\ref{DirectedAcyclicPCMG}, $S$ is reversible.

Now, let us show that $M$ is reachable in $S$. Since $M$ is a marking, we have $M(p) \ge 0$.
By definition of $M'$, $M'(p) \ge M(p)$.
Let us denote by $T_{G-e}$ the set of transitions of $\projection{N}{G-e}$.
We define the T-vector $Y'$ as follows: for each transition $t$, if $t$ belongs to $T_{G-e}$ then $Y'(t) = Y(t)$, otherwise $Y'(t) = 0$.
Then, $M_{Y'} = M' + I \cdot Y'$ is a marking potentially reachable in $(N,M')$, 
and the marking $\projection{M_{Y'}}{G-e}$ is potentially reachable in $\projection{(N,M')}{G - e}$,
thus is reachable in the latter. %%through some sequence $\sigma'$ whose Parikh vector equals $\projection{Y'}{G-e}$ (from the inductive hypothesis).

We have two cases: either $M_{Y'}(p) > 0$ or $M_{Y'}(p) = 0$.
Let us define the T-vector $Z = k \cdot \one^{T_e} - \Parikh(\sigma_e) + \projection{Y}{e}$,
where $k$ is the smallest positive integer $k$ such that $k \cdot \one^{T_e} \ge \Parikh(\sigma_e)$.

In the first case, $(N_e, \projection{M_{Y'}}{e})$ is a live, well-formed MG T-subsystem of $(N,M_{Y'})$,
thus
Proposition~\ref{RealizableTvectors} applies: since $M = M' + I \cdot Z$, some sequence with Parikh vector $Z$
is feasible in $(N_e,\projection{M'}{e})$, thus also in $(N,M')$ and leads to $M$, hence the claim.

In the second case, a sequence $\sigma_r$ is feasible in $(N,M_{Y'})$, leading to a marking $M_r$ such that $M_r(p) = 1$.
Since $\projection{(N,M_{Y'})}{G - e}$ is reversible, a sequence $\sigma_r'$ is feasible in $\projection{(N,M_r)}{G - e}$,
thus also in $(N,M_r)$, that leads back to $M_{Y'}$.
Let us define $M_r' = M_r + I \cdot Z$, then $(N_e,\projection{M_r'}{e})$ is live, Proposition~\ref{RealizableTvectors} applies
and $\projection{M_r'}{e}$ is reachable in the latter T-subsystem, hence $M_r'$ is reachable from $(N,M_r)$.
Moreover, $M_r'(p) \ge 1$ since otherwise we would have $M(p) < 0$, which is impossible.
Consequently, since the sequence $\sigma_r'$ is feasible in $(N,M_r)$ 
and 
since $\projection{M_r'}{G-e} \ge \projection{M_r}{G-e}$,
it is also feasible in $(N,M_r')$ and leads to $M$, which is thus reachable from $M_0$.

We proved the property to be true for every number $n$ of shared places. We get the claim.
\end{proof}

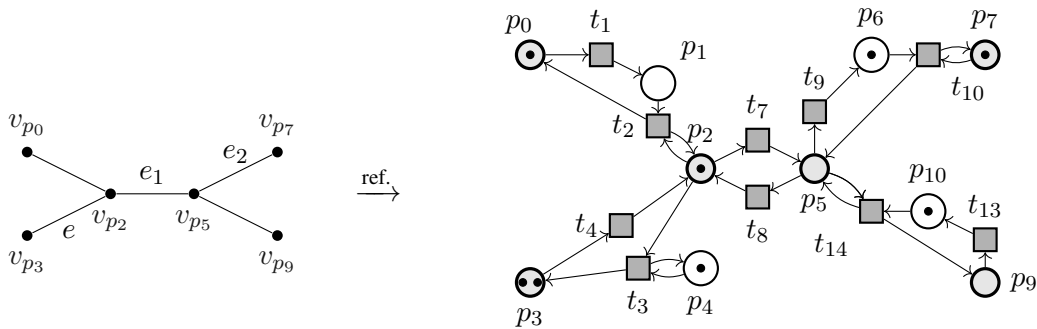
\begin{figure}[!ht]
 \centering
%\begin{tabular}{ccc}
 
\begin{minipage}{0.33\linewidth}
\centering
\begin{tikzpicture}[scale=0.55,mypetristyle]
\node[ltsNode,label=below:$v_{p_3}$](s0)at(0,0){};
\node[ltsNode,label=above:$v_{p_0}$](s1)at(0,2){};
\node[ltsNode,label=below:$v_{p_2}$](s2)at(2,1){};
\node[ltsNode,label=below:$v_{p_5}$](s3)at(4,1){};
\node[ltsNode,label=above:$v_{p_7}$](s4)at(6,2){};
\node[ltsNode,label=below:$v_{p_9}$](s5)at(6,0){};

\draw[-](s1)to node[]{}(s2);
\draw[-](s0)to node[below]{$e$}(s2);
\draw[-](s2)to node[above]{$e_1$}(s3);
\draw[-](s3)to node[above]{$e_2$}(s4);
\draw[-](s5)to node[]{}(s3);

\end{tikzpicture}
%\hspace*{1cm}

\end{minipage}
%&
$\xrightarrow{\textrm{ref.}}{}{}$
%&
\begin{minipage}{0.6\linewidth}
\centering
\raisebox{7mm}{
\begin{tikzpicture}[scale=0.75,mypetristyle]

\node (p0) at (0,4) [petriNode,tokens=1] {};
\node (p1) at (2.25,3.5) [place] {};
\node (p2) at (3,2) [petriNode,tokens=1] {};
\node (p3) at (0,0) [petriNode,tokens=2] {};
\node (p4) at (3,0.25) [place,tokens=1] {};
\node (p5) at (5,2) [petriNode] {};
\node (p6) at (6,4) [place,tokens=1] {};
\node (p7) at (8,4) [petriNode,tokens=1] {};
%\node (p8) at (8.75,2) [place,tokens=1] {};
\node (p9) at (8,0) [petriNode] {};
\node (p10) at (7,1.25) [place,tokens=1] {};

\node [anchor=south] at (p0.north west) {$p_0$};
\node [anchor=south west] at (p1.north east) {$p_1$};
\node [anchor=south] at (p2.north) {$p_2$};
\node [anchor=north] at (p3.south) {$p_3$};
\node [anchor=north] at (p4.south) {$p_4$};
\node [anchor=north] at (p5.south) {$p_5$};
\node [anchor=south] at (p6.north) {$p_6$};
\node [anchor=south] at (p7.north) {$p_7$};
%\node [anchor=west] at (p8.east) {$p_8$};
\node [anchor=west] at (p9.east) {$p_9$};
\node [anchor=south] at (p10.north) {$p_{10}$};

\node (t1) at (1.25,4) [transition] {};
\node (t2) at (2.25,2.75) [transition] {};
\node (t3) at (1.9,0.25) [transition] {};
\node (t4) at (1.6,1) [transition] {};
%\node (t5) at (0,3) [transition] {};
%\node (t6) at (-0.75,1) [transition] {};
\node (t7) at (4,2.5) [transition] {};
\node (t8) at (4,1.5) [transition] {};
\node (t9) at (5,3) [transition] {};
\node (t10) at (7,4) [transition] {};
%\node (t11) at (8.75,3) [transition] {};
%\node (t12) at (8,1) [transition] {};
\node (t13) at (8,0.75) [transition] {};
\node (t14) at (6,1.25) [transition] {};

\node [anchor=south] at (t1.north) {$t_1$};
\node [anchor=east] at (t2.west) {$t_2$};
\node [anchor=north] at (t3.south) {$t_3$};
\node [anchor=east] at (t4.west) {$t_4$};
%\node [anchor=east] at (t5.west) {$t_5$};
%\node [anchor=east] at (t6.west) {$t_6$};
\node [anchor=south] at (t7.north) {$t_7$};
\node [anchor=north] at (t8.south) {$t_8$};
\node [anchor=south] at (t9.north) {$t_9$};
\node [anchor=north west] at (t10.south east) {$t_{10}$};
%\node [anchor=west] at (t11.east) {$t_{11}$};
%\node [anchor=west] at (t12.east) {$t_{12}$};
\node [anchor=south] at (t13.north) {$t_{13}$};
\node [anchor=north east] at (t14.south west) {$t_{14}$};

\draw [->, bend left=0] (p0) to node [] {} (t1);
\draw [->, bend left=0] (t1) to node [] {} (p1);
\draw [->, bend left=0] (p1) to node [] {} (t2);
\draw [->, bend left=0] (t2) to node [] {} (p0);
\draw [->, bend left=20] (t2) to node [] {} (p2);
\draw [->, bend left=20] (p2) to node [] {} (t2);
\draw [->, bend left=0] (p3) to node [] {} (t4);
\draw [->, bend left=0] (t4) to node [] {} (p2);
\draw [->, bend left=0] (p2) to node [] {} (t3);
\draw [->, bend left=0] (t3) to node [] {} (p3);

%\draw [->, bend left=20] (p0) to node [] {} (t5);
%\draw [->, bend left=20] (t5) to node [] {} (p0);
%\draw [->, bend left=0] (p3) to node [] {} (t5);
%\draw [->, bend left=0] (t5) to node [] {} (p4);

%\draw [->, bend left=0] (p4) to node [] {} (t6);
%\draw [->, bend left=0] (t6) to node [] {} (p3);

\draw [->, bend left=0] (p2) to node [] {} (t7);
\draw [->, bend left=0] (t7) to node [] {} (p5);
\draw [->, bend left=0] (p5) to node [] {} (t8);
\draw [->, bend left=0] (t8) to node [] {} (p2);

\draw [->, bend left=0] (p5) to node [] {} (t9);
\draw [->, bend left=0] (t9) to node [] {} (p6);

\draw [->, bend left=0] (p6) to node [] {} (t10);
\draw [->, bend left=0] (t10) to node [] {} (p5);
\draw [->, bend left=20] (p5) to node [] {} (t14);
\draw [->, bend left=20] (t14) to node [] {} (p5);

%\draw [->, bend left=0] (p7) to node [] {} (t12);
%\draw [->, bend left=0] (t12) to node [] {} (p8);
%\draw [->, bend left=0] (p8) to node [] {} (t11);
%\draw [->, bend left=0] (t11) to node [] {} (p7);

%\draw [->, bend left=20] (t12) to node [] {} (p9);
%\draw [->, bend left=20] (p9) to node [] {} (t12);

\draw [->, bend left=20] (p5) to node [] {} (t14);

\draw [->, bend left=20] (p7) to node [] {} (t10);
\draw [->, bend left=20] (t10) to node [] {} (p7);

\draw [->, bend left=0] (p9) to node [] {} (t13);
\draw [->, bend left=0] (t13) to node [] {} (p10);

\draw [->, bend left=0] (p10) to node [] {} (t14);
\draw [->, bend left=0] (t14) to node [] {} (p9);

\draw [->, bend left=20] (p4) to node [] {} (t3);
\draw [->, bend left=20] (t3) to node [] {} (p4);

\end{tikzpicture}
}
\end{minipage}

%\end{tabular}

%\vspace*{-5mm}

\caption{
Illustration of the proof of Theorem~\ref{DirectedAcyclicPCMG}.
The graph $G$ on the left is acyclic and is labeled with place names.
The \PCMGineq{} on the right is obtained from $G$.
}

\label{FigProofDirectedAcyclicPCMG}
\vspace*{1cm}
\end{figure}

This theorem is no more true if homogeneous weights are allowed, as shown in Figure~\ref{SsystemNonRev}.\\

\begin{figure}[!h]
 \vspace*{1cm}
%\begin{minipage}[b]{0.15\linewidth}

\centering

\raisebox{6mm}{
\begin{tikzpicture}[scale=1,mypetristyle]
\node[ltsNode](n0)at(0,0){};
\node[ltsNode](n1)at(1.5,0){};
\node[ltsNode](n2)at(3,0){};
\draw[-](n0)to node[above]{}(n1);
\draw[-](n1)to node[above]{}(n2);
\node [anchor=north] at (n0.south) {$p_0$};
\node [anchor=north] at (n1.south) {$p_1$};
\node [anchor=north] at (n2.south) {$p_2$};
\end{tikzpicture}
}
\hspace*{1cm}
\begin{tikzpicture}[mypetristyle,scale=1]

\node (p0) at (0,0) [petriNode,tokens=1] {};
\node (p1) at (2,0) [petriNode] {};
\node (p2) at (4,0) [petriNode,tokens=1] {};

\node [anchor=east] at (p0.west) {$v_{p_0}$};
\node [anchor=north] at (p1.south) {$~~v_{p_1}$};
\node [anchor=west] at (p2.east) {$v_{p_2}$};

\node (t1) at (1,0.5) [transition,thick] {};
\node (t2) at (1,-0.5) [transition,thick] {};
\node (t3) at (3,0.5) [transition,thick] {};
\node (t4) at (3,-0.5) [transition,thick] {};

\node [anchor=south] at (t1.north) {$t_1$};
\node [anchor=north] at (t2.south) {$t_2$};
\node [anchor=south] at (t3.north) {$t_3$};
\node [anchor=north] at (t4.south) {$t_4$};

\draw [->,thick,bend right=20] (p1) to node [above] {$2$} (t1);
\draw [->,thick,bend left=20] (p1) to node [above] {$2$} (t3);
\draw [->,thick,bend right=20] (p0) to node [below] {} (t2);
\draw [->,thick,bend left=20] (p2) to node [below] {} (t4);

\draw [->,thick,bend right=20] (t1) to node [above] {$2$} (p0);
\draw [->,thick,bend left=20] (t3) to node [above] {} (p2);
\draw [->,thick,bend right=20] (t2) to node [below] {} (p1);
\draw [->,thick,bend left=20] (t4) to node [below] {} (p1);

\end{tikzpicture}

%\end{minipage}

\caption{On the right, a weighted, homogeneous state machine that could be seen as a weighted, live and well-structured \PCMGineq{},
obtained from the graph on the left.
It is live, but non reversible; from some reachable marking (e.g. after a single firing of $t_2$), 
the initial marking is potentially reachable but not reachable.
}

\label{SsystemNonRev}

\end{figure}
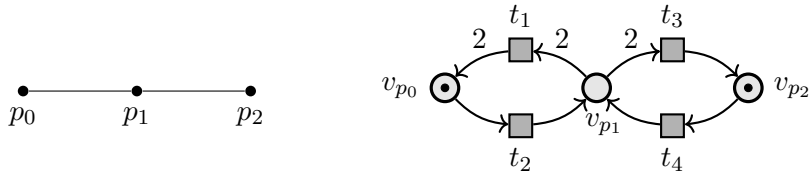

We are now able to derive next corollary.

\begin{corollary}[Property $\mathcal{R}$ and PR-R equality in acyclic, live and well-structured \PCMGineq{}]\label{PCMGacyclicPRR}
Consider a live and well-structured \PCMGineq{} $S$ obtained from an acyclic undirected graph. 
Then $S$ fulfills $\mathcal{R}$ and the PR-R equality.
\end{corollary}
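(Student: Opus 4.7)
The plan is to reduce the corollary to two applications of Theorem~\ref{DirectedAcyclicPCMG}: one to $S$, yielding both reversibility of $S$ and the PR-R equality, and one to $-S$, yielding reversibility of $-S$. The whole argument therefore hinges on showing that $-S$ satisfies the hypotheses of Theorem~\ref{DirectedAcyclicPCMG}, i.e.\ that $-S$ is itself a live and well-structured \PCMGineq{} obtained from an acyclic, connected, undirected graph.

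First I would observe that the reverse of a well-structured \PCMGineq{} is again a well-structured \PCMGineq{} arising from the \emph{same} undirected graph $G$. Indeed, reversing all arcs transforms each MG refinement component $C_i$ of $S$ into its reverse, which is again a strongly connected, well-formed marked graph (reversing preserves structural liveness and structural boundedness of MGs). The place-merging operation commutes with arc reversal, and since $G$ is undirected, the same mappings $\beta$ and $\gamma$ witness that $-S$ is a \PCMGineq{} over $G$. In particular $-S$ is well-structured and is obtained from the acyclic, connected graph $G$.

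Next I would establish liveness of $-S$. By Theorem~\ref{LiveWellPCMG}, since $S$ is live, well-structured, and obtained from an acyclic graph, every minimal siphon of the underlying net $N$ is marked by $M_0$, and equivalently every minimal trap of $N$ is marked by $M_0$. The key combinatorial step is that the minimal siphons of $-N$ are exactly the minimal traps of $N$ (and vice versa), because reversing the arcs interchanges the inclusions $\lbul D \subseteq D\lbul$ and $D\lbul \subseteq \lbul D$ defining siphons and traps. Combined with Theorem~\ref{SiphonsOfPCMG}, which tells us that minimal siphons and minimal traps coincide in this class, every minimal siphon of $-N$ is marked by $M_0$. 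Applying Theorem~\ref{LiveWellPCMG} in the other direction to $-S$ then gives that $-S$ is live.

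Finally I apply Theorem~\ref{DirectedAcyclicPCMG} twice: once to $S$, yielding that $S$ is reversible and $R(S)=PR(S)$; and once to $-S$, yielding that $-S$ is reversible. Together, reversibility of both $S$ and $-S$ is exactly property~$\mathcal{R}$, and the PR-R equality has already been obtained, completing the proof. I do not anticipate a real obstacle here: the only slightly delicate point is the identification of $-S$ as a \PCMGineq{} over $G$, which relies essentially on the undirected nature of $G$ in the definition of \PCMGineq{}; everything else is bookkeeping combined with the siphon/trap duality under arc reversal.
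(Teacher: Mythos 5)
Your proof is correct, but it takes a genuinely different route from the paper. The paper's proof is a two-line combination of Theorem~\ref{DirectedAcyclicPCMG} with Lemma~\ref{LRPRrevReverse}: since $S$ is live, reversible and satisfies $R(S)=PR(S)$, every potentially reachable marking of $S$ is live and reversible, and Lemma~\ref{LRPRrevReverse} then delivers liveness and reversibility of $-S$ without ever examining the structure of $-S$. You instead verify directly that $-S$ satisfies the hypotheses of Theorem~\ref{DirectedAcyclicPCMG}: closure of the class under arc reversal over the same graph $G$ (a fact the paper itself invokes inside the proof of Theorem~\ref{SiphonsOfPCMG}), plus liveness of $-S$ obtained from the siphon/trap duality under reversal together with the marking characterization of Theorem~\ref{LiveWellPCMG}. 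Both arguments are sound. The paper's route is shorter and more robust, since Lemma~\ref{LRPRrevReverse} applies to arbitrary Petri nets and needs no re-verification of structural hypotheses for the reverse; your route buys the strictly stronger conclusion that $-S$ also fulfills the PR-R equality, and makes the closure under reversal explicit. One small attribution slip: the coincidence of ``all minimal siphons marked'' with ``all minimal traps marked'' is what the statement of Theorem~\ref{LiveWellPCMG} records (via its ``equivalently''), rather than something literally asserted in the statement of Theorem~\ref{SiphonsOfPCMG}; the substance of your step is nonetheless fine.
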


\begin{proof}
We get the claim by combining Theorem~\ref{DirectedAcyclicPCMG} with Lemma~\ref{LRPRrevReverse}.
\end{proof}

By Corollary~\ref{PCMGlivePTIME}, the liveness of a well-structured \PCMGineq{} obtained from an acyclic graph
can be checked in polynomial-time.
Then, Corollary~\ref{PCMGacyclicPRR} derives property $\mathcal{R}$ and the PR-R equality.

\section{Related work}\label{SecRelatedWork}

Results connected to potential reachability in WMG, \WMGineq{} and larger classes, %(which do not contain ours),
together with the behavioral properties investigated in this paper, have been developed notably 
in~\cite{WTS92,STECS,tcs97,DH18,DH19FI}.

As far as we know, the only work dedicated to the H$1$S class is~\cite{ArxivSSP20}.

AMG and their properties, such as liveness and reversibility, together with compositional methods,
have been investigated in \cite{ChuXie97,cheung2008augmented,huang2003property}.

Concerning refinement, synthesis, composition (node merging, fusion...) and abstraction techniques,
the previous works closest to our notion introduced for \PCMGineq{}
are \cite{Stepwise83,WellStruct98,huang2003property,Jiao2004,Surapprox2005,JiaoComposition2005,Knitting06,JiaoSharing2008}.
Numerous other studies provide algorithms for synthetizing, 
from a given labeled transition system, a Petri net with isomorphic reachability graph,
see e.g. \cite{DH18} for the synthesis of \WMGineq{}; such techniques are out of scope for this paper.
%
%Refinement, synthesis and abstraction techniques 

Other classes with shared places, such as S$^4$PR, PC$^2$R and L-S$^3$PR, which do not contain the H$1$S-\WMGineq{}, the AMG nor the \PCMGineq{}, 
have also been extensively studied in~\cite{WellStruct98,LenderProc2006,LopezGrao2013}
and benefit from strong properties related to reachability, notably to directedness, deadlockability, liveness and reversibility.
%
%The class L-S$^3$PR does not contain all AMG but benefits from strong properties.
%
In~\cite{DSSP98,SCECS96}, DSSP and $\{$SC$\}^*$ECS systems, which allow weights and shared places in a restricted fashion,
benefit from structural and behavioral properties.
Generalizations of AC nets have been studied in~\cite{BeyondAC1998}.

\section{Conclusions and Perspectives}\label{SecConclu}

The reachability problem has attracted a lot of attention since the introduction of Petri nets in the~60's by Carl Adam Petri.
It is indeed a central model-checking problem that reduces to numerous other fundamental ones.
%in Petri net model-checking
Bounds on its complexity have been obtained and refined over the years.
Recently, a non-elementary lower bound has been uncovered.

In this paper, for weighted Petri nets, we provided several sufficient conditions ensuring the PR-R equality,
i.e. the equality of the set of reachable markings and the set of potentially reachable ones,
the latter being described by the solutions of the state equation.
When this equality is fulfilled, asking for the reachability of a marking reduces to solving the state equation for this marking,
allowing to use integer linear programming.
This avoids a brute-force analysis of the state space and trims down the complexity to NP.
The main contributions of this paper are summarized as follows.

First, we developed results connecting some behavioral properties of a system to the same properties in the reverse system.
Using the notion of directedness,
we provided a general condition ensuring the PR-R equality in weighted Petri nets,
based on liveness, reversibility in the system and its reverse.
We deduced a sufficient condition of PR-R equality for homogeneous free-choice nets.

Then, we delineated several subclasses of weighted Petri nets generalizing the marked graphs,
which have been extensively studied and fruitfully used in the modeling and analysis of various real-world applications.
We recalled some use-cases of these larger classes,
extracted from previous studies and the Petri Net Model Checking Contest database.
In these classes, we proposed several sufficient conditions for PR-R equality,
based on the reversibility of the systems and their reverse, and on liveness.

We also recalled and gathered methods for checking the sufficient conditions,
notably relating the structure to liveness and reversibility,
and highlighted cases in which these methods are more efficient.

As a perspective, we believe that our methods may be extended to other classes of Petri nets, notably modular ones.
Also, more efficient methods checking the assumptions of the conditions,
such as liveness, reversibility and properties of the siphons, 
might be uncovered for the classes of our study and more expressive ones.

\bibliographystyle{fundam}
\bibliography{IC19}

\end{document}